\documentclass{arxiv}
\usepackage[shortcuts]{extdash}
\usepackage{thm-restate}

\newtheoremstyle{std}{}{}{\itshape}{}{\bfseries}{.}{.5em}{\thmname{#1}\thmnumber{ #2}\thmnote{ (#3)}}
\newtheoremstyle{noteonly}{}{}{\itshape}{}{\bfseries}{.}{.5em}{\thmnote{#3}}

\theoremstyle{std}
\newtheorem{theorem}{Theorem}
\newtheorem{lemma}[theorem]{Lemma}
\newtheorem{proposition}[theorem]{Proposition}

\newtheorem*{claim}{Claim}
\newtheorem{corollary}[theorem]{Corollary}
\newtheorem{remark}[theorem]{Remark}
\newtheorem{definition}[theorem]{Definition}

\theoremstyle{noteonly}
\newtheorem*{unnumbered}{}

\AddToHook{env/lemma/begin}{\crefalias{theorem}{lemma}}
\AddToHook{env/proposition/begin}{\crefalias{theorem}{proposition}}
\AddToHook{env/observation/begin}{\crefalias{theorem}{observation}}
\AddToHook{env/corollary/begin}{\crefalias{theorem}{corollary}}
\AddToHook{env/remark/begin}{\crefalias{theorem}{remark}}
\AddToHook{env/definition/begin}{\crefalias{theorem}{definition}}

\crefname{theorem}{Theorem}{Theorems}
\crefname{lemma}{Lemma}{Lemmas}
\crefname{proposition}{Proposition}{Propositions}
\crefname{observation}{Observation}{Observations}
\crefname{claim}{Claim}{Claims}
\crefname{corollary}{Corollary}{Corollaries}
\crefname{remark}{Remark}{Remarks}
\crefname{definition}{Definition}{Definitions}

\newcommand{\tOh}{\widetilde{O}}
\newcommand{\eq}{\operatorname{eq}}
\newcommand{\type}{\operatorname{type}}
\newcommand{\bag}{\operatorname{bag}}
\newcommand{\tw}{\operatorname{tw}}
\newcommand{\fctw}{\operatorname{fc-tw}}
\newcommand{\subw}{\operatorname{subw}}
\newcommand{\fcsubw}{\operatorname{fc-subw}}
\newcommand{\rows}{\operatorname{rows}}
\newcommand{\sol}{\mathcal{S}}
\newcommand{\mm}{\mathrm{MM}}
\newcommand{\alfa}{{\widetilde{\alpha}}}
\newcommand{\tup}[1]{\mathbf{#1}}
\newcommand{\clique}[1]{\mathrm{Clique}(#1)}

\makeatletter
\newenvironment{pseudocode}{
\LinesNotNumbered
\renewcommand{\algocf@style}{plain}
\par\medskip\begin{algorithm}[H]
}{\end{algorithm}\medskip}
\makeatother

\SetKwFunction{algolist}{list}
\SetKwFunction{algorecover}{recover}
\SetKwFunction{algomult}{multiply}
\SetKwFunction{algolight}{light}
\SetKwFunction{algoreach}{reach}
\SetKwFunction{algocount}{count}
\SetKwFunction{algodcount}{dcount}
\SetKwFunction{algocontract}{contractions}
\SetKwFunction{algosplit}{split}
\SetKwFunction{algosearch}{search}
\SetKwFunction{algotrivlist}{trivialList}
\SetKwFunction{algobudgetlist}{budgetedList}
\SetKwFunction{algoparbudgetlist}{pBudgetedList}
\SetKwFunction{algocheatenum}{cheatEnum}
\SetKwFunction{algoparcheatenum}{pCheatEnum}

\title{Efficiently Listing Projected Trees, \\ and Equivalence of Listing and Enumeration}
 \author{
     Karl Bringmann\footnote{ETH Zurich, Switzerland, \email{\{karl.bringmann,yanheng.wang\}@inf.ethz.ch}. Part of this work was finished while affiliated to Saarland University and Max Planck Institute for Informatics, Saarbrücken, Germany, where this work was part of the project TIPEA that has received funding from the European Research Council (ERC) under the European Unions Horizon 2020 research and innovation programme (grant agreement No. 850979).} \and
     Nick Fischer\footnote{Max Planck Institute for Informatics, Saarbrücken, Germany, \email{nfischer@mpi-inf.mpg.de}.} \and
     Yanheng Wang\textsuperscript{\textasteriskcentered}
 }

\begin{document}
\maketitle

\begin{abstract}
\noindent
The subgraph isomorphism problem and its generalizations such as conjunctive queries, where some nodes are projected, are among the most fundamental problems in graph algorithms and database theory. In this paper, we study the listing and enumeration variants of these problems and present two main results.

(1) We present the first algorithms for enumerating projected trees with polynomial preprocessing time ($\tOh(n^{17.42})$) and polylogarithmic delay ($\polylog(n)$). Prior to this work, all algorithms in the literature required time $\Omega(n^{\Omega(k)} + t)$ or $t \cdot n^{\Omega(1)}$ to list all copies of a $k$-node tree with projections, where $t$ is the number of solutions. Our result generalizes to arbitrary projected hypergraphs, achieving enumeration in preprocessing time $\tOh(m^{17.42 \cdot \subw(H)})$ and polylogarithmic delay, where $\subw(H)$ is the submodular width of the pattern hypergraph $H$. We heavily rely on fast (rectangular and output-sensitive) matrix multiplication, which we complement by fine-grained lower bounds indicating that any algorithm beating time $\Omega(n^{\Omega(k)} + t)$ must rely on fast matrix multiplication.

(2) As our second main result, we present a generic enumeration-to-listing reduction, establishing that listing and enumeration are equivalent under natural assumptions. For (colored) subgraph isomorphism, our reduction transforms any listing algorithm running in time $O(f(n,m) + t \cdot g(n,m))$ into an enumeration algorithm with preprocessing time $O((f(n,m)+g(n,m)+m) \log^2 n)$ and delay $O(g(n,m))$. We utilize this equivalence as a tool for proving our first main result, and we expect that our generic reduction will find many future applications.
\end{abstract}

\thispagestyle{empty}
\setcounter{page}{0}
\clearpage

\tableofcontents
\thispagestyle{empty}
\setcounter{page}{0}
\clearpage

\section{Introduction}

The subgraph isomorphism problem and its variants are among the most fundamental graph problems, with a large number of applications in various areas such as databases, network motifs, statistical physics, and probabilistic inference. Within theoretical computer science, it is primarily studied by two communities: algorithm design and database theory. Although closely related, these communities adopt slightly different viewpoints, and our results apply to both.

The algorithms community typically studies the \emph{Uncolored $H$-Subgraph} problem (e.g.~\cite{CN85,Kowalik03,ABKZ22,ABF23,JX23,AKLS23,DMVX24,JVZ24,VW25,BG25}). Here, $H$ is a fixed pattern graph, the input is a large host graph $G$, and any copy of $H$ in $G$ is considered a solution. In contrast, the database theory community usually studies the \emph{Colored $H$-Subgraph} and \emph{Colored $H$-Subhypergraph} problems, also known as (self-join-free) \emph{join queries} (e.g.,~\cite{Yannakakis81,NRR13,Veldhuizen14,NPRR18,JR18,Durand20,Hu25,KHS25}). Here, writing $V(H) =[k] = \set{1,\dots,k}$, the given host \makebox{(hyper-)}graph is equipped with a partition $V(G) = V_1 \cup \cdots \cup V_k$, and a solution is an $H$-copy that respects this partition.
See \Cref{sec:preliminaries} for formal problem definitions.
While the color-coding technique~\cite{AYZ95} provides a reduction from the Uncolored to the Colored problem, in general these variants are not equivalent. (An exception is the triangle, the most intensely studied pattern~\cite{IR78,Patrascu10,BPVZ14,KPP16,VX20}, for which Colored and Uncolored are equivalent.)

In this paper our focus lies on the well-studied generalization of these problems where some nodes in $H$ are \emph{projected}. Formally, let $I \subseteq V(H)$. In the (Colored or Uncolored) \emph{$(H,I)$-Subgraph} problem, a solution is a tuple of vertices $(v_i)_{i \in I}$ in $G$ that can be extended to a (colored or uncolored) copy of $H$. The accordingly defined Colored \emph{$(H,I)$-Subhypergraph} problem is also known as \emph{conjunctive queries} or \emph{join-project queries} and constitutes one of the most important problems in database theory, because it formalizes the core functionality of SQL queries (e.g.,~\cite{BDG07,KNS17,BGS20,CK20,DHK20,DZFK24,DHK25,KNS25,KNS26}); see, e.g.,~\cite{Marx13,DRW19} for work on this problem in the algorithm theory community. 

For decision algorithms, which decide whether there exists a solution, there is clearly no difference between join queries and conjunctive queries (as projections are equivalent to existential quantifiers). The problems only diverge for more complex problem variants, e.g., asking to count~\cite{Yannakakis81,DRW19}, list, or enumerate solutions. In this paper, we study listing and enumeration. Formally, for any of the problems~$\mathcal{P}$ defined before, in $\mathcal{P}$-Listing the task is to compute the set of all solutions, and the running time analysis involves the total number of solutions $t$. In $\mathcal{P}$-Enumeration, the task is also to compute the set of all solutions, but we bound the time until the first solution is printed (called preprocessing time) and the maximum time between printing two consecutive solutions (called delay). The time after printing the last solution can also be assumed to be bounded by the delay.
Both listing (e.g.~\cite{CN85,AKLS23,JX23,ABKZ22,ABF23,DMVX24,JVZ24,BG25,VW25}) and enumeration (e.g.~\cite{KR95,Uno97,Strozecki19,HHMW20,CS21,JX23,MM24,BG25}) are widely studied in the algorithm theory community, and the same holds for the database community.

Instead of $(H,I)$-Subhypergraph Listing we write $(H,I)$-Listing for short; similarly for Enumeration. In this paper, we study Colored and Uncolored $(H,I)$-Listing and Enumeration. Unless further specified, we consider the algorithmically harder Colored variant.
Throughout, the pattern $H$ is fixed, and we focus on the complexity in terms of the number of vertices $n$ and the number of edges~$m$ of the host graph $G$. That is, $O$-notation suppresses factors of the form $f(k)$, or more generally any factors depending only on~$(H,I)$; this is called ``data complexity'' in the database community. We will also use $\tOh$-notation to hide factors of the form $\polylog(n)$, or equivalently $\polylog(m)$.

We present two main results, which we discuss in more detail below:

\begin{enumerate}
\item When $H$ is a tree, we show that Colored and Uncolored $(H,I)$-Listing are in time $\tOh(n^{17.42} + t)$. The best previously known running time was $\Omega(n^{\Omega(k)} + t)$ or $t \cdot n^{\Omega(1)}$.
For general hypergraphs~$H$, we show that Colored and Uncolored $(H,I)$-Listing are in time $\tOh(n^{17.42 \cdot \subw(H)} + t)$, where $\subw(H)$ is the submodular width of $H$. We obtain analogous improvements for enumeration.

\item Any enumeration algorithm with preprocessing time $T_P$ and delay $T_D$ yields a listing algorithm running in time $O(T_P + (t+1) \cdot T_D)$, by simply running the enumeration algorithm until it has printed all $t$ solutions and terminates. We are the first to prove a generic converse to this statement, an \emph{enumeration-to-listing reduction}, for a class of problems that includes Colored $(H,I)$-Subhypergraph. We use this as a tool to prove the previous item.
\end{enumerate}

\subsection{Polynomial Preprocessing for Projected Trees} \label{sec:intro:trees}

Let us start by discussing \emph{projected trees}, i.e., in this subsection $H$ is always a tree.
Without projections, Yannakakis' seminal work~\cite{Yannakakis81} provides an algorithm for Colored $H$-Listing in linear time $O(m+t)$, where $m$ is the number of edges of the host graph and $t$ is the number of solutions. Variants of this algorithm also solve Colored $H$-Enumeration with preprocessing time $O(m)$ and delay $O(1)$, and count the number of trees in time $O(m)$~\cite{Yannakakis81,BDG07}.
These algorithms can be converted to the uncolored setting by color coding~\cite{AYZ95}, at the cost of replacing $O$ by $\tOh$ (and increasing the hidden dependence on $k$ from polynomial to exponential), see \Cref{sec:uncoloredalgo} for details.

With projections, Bagan, Durand, and Grandjean~\cite{BDG07} classified which projected trees can be enumerated with preprocessing time $O(m)$ and delay $O(1)$: This is possible if $(H,I)$ is \emph{free-connex}, i.e., the induced subgraph $H[I]$ is a connected subtree of $H$. Otherwise it is impossible, assuming a believable hypothesis on matrix multiplication.\footnote{The hypothesis is that matrix multiplication cannot be solved in time $O(\textup{in} + \textup{out})$, where $\textup{in}$ and $\textup{out}$ are the number of non-zero entries in the input and output matrices, respectively. Variants of this hypothesis are widely used in fine-grained classification results in database theory~\cite{BDG07,BGS20,CK21,CS23,BC25}. An argument in favor of this hypothesis was found in~\cite[Theorem 1.10]{ABFK24}. Note that it is more believable than the hypothesis that the exponent of matrix multiplication is $\omega > 2$, since an algorithm running in time $O(\textup{in} + \textup{out})$ implies $\omega = 2$.}
The same classification also applies to listing in linear time $O(m+t)$.

Hence, for general projected trees linear time is impossible, and instead we need to relax the goal and ask: \emph{What is the smallest $c = c(H,I)$ such that $(H,I)$-Listing is in time $\tOh(m^c + t)$?} As our first main result, we prove that $c \le 17.42$. 

\begin{restatable}{theorem}{MainOneTheorem}
\label{thm:main1}
For every tree $H$ and every $I$, (Colored or Uncolored) $(H,I)$-Listing is in time $\tOh(n^{17.42} + t)$ and (Colored or Uncolored) $(H,I)$-Enumeration is in preprocessing time $\tOh(n^{17.42})$ and delay $\tOh(1)$. If $n \times n$ matrix multiplication is in time $\tOh(n^2)$, then the $n^{17.42}$ term improves to $n^3$. For the Colored problems, the dependence on $k$ hidden by $\tOh$-notation is polynomial $\poly(k)$.
\end{restatable}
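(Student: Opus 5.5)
The plan is to reduce $(H,I)$ to a tree on the projected vertices, compute that structure in polynomial time with fast matrix multiplication, and then invoke free-connex enumeration. The uncolored case then follows by color coding (\Cref{sec:uncoloredalgo}), and listing follows from enumeration; alternatively, we can first design a listing algorithm and convert it with the enumeration-to-listing reduction (second main result).

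\textbf{Contraction.} First contract projected-away structure. Every maximal connected component $C$ of $H - I$ attaches to a set $N(C)\subseteq I$ of free vertices. Since $H$ is a tree, distinct components interact only through $I$. For each component we want the relation $R_C\subseteq \prod_{i\in N(C)} V_i$ of tuples extendable into $C$. The solutions are exactly the colored homomorphic images of the hypergraph on $I$ with edges $E(H[I])\cup\{N(C)\}$, joined over these relations. This hypergraph is acyclic: contracting each $C$ into a hyperedge keeps a tree-like structure, and a join tree can be read off $H$. That gives a free-connex acyclic query, solvable by \cite{BDG07} with linear preprocessing in the size of the relations and constant delay.

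\textbf{Sparsifying the relations.} The difficulty is that $R_C$ can have size $n^{|N(C)|}$, unbounded in $n$-degree, while the join's output may be far smaller. So we cannot materialize $R_C$. Instead we decompose each $C$ hierarchically. Root $C$'s Steiner tree and handle degree-two paths between free vertices as Boolean matrix products, possibly with projected pendant subtrees first pruned to unary filters in $O(m)$ by Yannakakis-style semijoins. Branching vertices of $C$ with $d$ free neighbors-in-subtree yield relations of arity $d$.

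The key idea is a heavy/light split on the witnessing central vertex. Tuples whose witness has small degree $\le\Delta$ are enumerated through the witness with a bounded number of duplicates. Tuples from heavy witnesses (at most $m/\Delta$ of them) are handled by rectangular matrix multiplication over the $O(\log)$-balanced partitions, or by output-sensitive multiplication to produce pairwise relations. Recursively, we interleave with the outer join so that only tuples extendable to full solutions are produced. Concretely, I would first prove a lemma showing that for any tree, after semijoin reduction, each intermediate relation built this way has size at most $\tOh(n^{c}+t)$. The exponent $17.42$ then comes from optimizing $\Delta$ against the current rectangular $\omega(1,\mu,1)$ bounds; with $\omega=2$, the same balancing gives $n^3$.

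\textbf{Main obstacle.} The hard step is guaranteeing no wasted work: tuples of $R_C$ that do not extend to a global solution must not be charged to $t$. I would handle this by processing the free-connex join tree bottom-up, filtering each $R_C$ against the already-reduced neighboring relations before expanding it. I would also use the enumeration-to-listing reduction so that only a listing bound of the form $\tOh(n^c+t\cdot\polylog)$ per component is needed; the reduction then yields preprocessing $\tOh(n^{17.42})$ and polylog delay. Finally, I would check that the dependence on $k$ stays polynomial in the colored case, since each step is a constant number of passes per tree node.
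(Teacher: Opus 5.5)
Your contraction step is sound and is essentially the paper's \Cref{lem:cleanse,lem:decompose}. After deleting dangling vertices (\Cref{lem:remove-ghost}), every tuple of $R_C$ extends to a global solution, so $|R_C|\le t$. Everything then reduces to listing a projected tree whose projected nodes are exactly its leaves, in time $\tOh(n^c+t)$. That residual problem is the whole content of the theorem, and your sketch does not solve it.

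Light witnesses do not produce each tuple a bounded number of times: one tuple can have up to $n$ light witnesses. So this phase costs up to $n\Delta^{|N(C)|}$. That stays below a $k$-independent polynomial only when $\Delta\le n^{O(1/k)}$, and then essentially every witness may be heavy. For stars, the degree-uniformized balanced product from the technical overview already handles all witnesses at once, so the split buys nothing. For general trees you give no concrete mechanism for gluing through projected internal nodes, which is where the difficulty lies.

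Your proposed size lemma also targets the wrong quantity. By the same gluing argument, every relation on the free vertices of a connected piece has size at most $t$ after dedangling. So size is not the problem, and neither is ``filtering against neighbors''. The problem is time. To glue the leaf tuples below a projected node $i$ to those above it, you need the incidence matrix between partial leaf tuples and $V_i$. This matrix can have $\Theta(n\cdot t)$ non-zeros, e.g., when every $v_i$ has the same $n^2$ lower extensions and one common upper extension. Output-sensitive multiplication then costs about $n\cdot t$, not $\tOh(n^c+t)$.

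The missing idea is the thresholding in \Cref{thm:tree}. Partial leaf tuples are classified by their number of global extensions, with threshold $\kappa=\Theta(t^{2/3})$; all scales run in parallel since $t$ is unknown. Tuples with at most $\kappa$ extensions are listed immediately, via the \texttt{light} subroutine and the sparse output-sensitive product of \Cref{lem:sparse-mm}. That product returns only $\kappa$-light rows in time $\tOh(|A|+|B|+s+ys^{1-\alfa}(x^\alfa+\kappa^\alfa))$, independent of the column dimension. \Cref{lem:mm-restriction} justifies discarding heavy inner indices. Only heavy tuples survive, as residual sets $R_a,R_b$ of size $<t/\kappa\le t^{1/3}$, so $|R_a\times R_b|\le\kappa$.

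The resulting terms $n^2t^{1/3}$, $n^{2-\alfa}t^{2/3}$, $n^2t^{2(1-\alfa)/3}$ and $nt^{1-\alfa/3}$ are balanced against $t$ by weighted AM-GM, giving $\tOh(t+n^{3/\alfa})$. So $17.42$ is $3/\alfa$ for Coppersmith's $\alfa=0.17227$, which has only polylogarithmic overhead. It does not come from optimizing $\Delta$ against $\omega(1,\mu,1)$; the usual bounds there carry an $n^{\epsilon}$ overhead that would also break polylogarithmic delay.

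Finally, the free-connex enumeration route cannot give preprocessing $\tOh(n^{17.42})$, because materializing the $R_C$ costs up to $t$. The enumeration-to-listing reduction (\Cref{thm:main2-simplified}) is required, not optional.
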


This extends the classic results for listing and enumerating trees~\cite{Yannakakis81,BDG07,AYZ95} to projected trees, at the cost of relaxing the preprocessing time from $O(m)$ to $\tOh(n^{17.42})$ and the delay from $O(1)$ to $\tOh(1)$.
Using $n \le O(m)$, which holds after removing isolated vertices, the $n^{17.42}$ terms can also be replaced by $m^{17.42}$.

\subsubsection{Comparison with Prior Work} \label{sec:intro:priorwork}
Let us discuss why our result is extremely surprising given the state of the art. For any tree on $k$ nodes, $(H,I)$-Listing is trivially in time $O(n^k)$ and $O(m^{k-1})$, since the search space size is trivially bounded by $n^k$ and, slightly less trivially, by $m^{k-1}$. 
Some prior work considered listing algorithms in time $t \cdot m^{\Omega(1)}$ or enumeration algorithms with delay $m^{\Omega(1)}$ (e.g.~\cite{Hu24,DHK25}), but in this paper we focus on polylogarithmic delay $\tOh(1)$ or output-linear listing time $\tOh(m^c + t)$. 
In this setting, to the best of our knowledge no algorithm in the literature unconditionally beats the trivial time bounds. Specifically, consider the \emph{projected-$(k-1)$-star} query:
\[ \text{pattern $(H,I)$ with } V(H) = \set{0,1,\dots,k-1},\; E(H) = \Set{\set{0,i} : i \in [k-1]},\; I = [k-1]. \]
No algorithm for projected-$(k-1)$-Listing in the literature unconditionally runs in time $\tOh(n^{k-\eps} + t)$ or $\tOh(m^{k-1-\eps} + t)$, for any $\eps > 0$. In \Cref{sec:subw} we verify this claim for some algorithms for which it is particularly difficult to check. Generally, we will argue below that fast matrix multiplication is necessary to beat the trivial time bounds, so since most algorithms do not use fast matrix multiplication they cannot beat the trivial time bounds. Fast matrix multiplication is a frequent tool in algorithm theory (e.g.~\cite{BPVZ14,DMVX24}), and has also been explored in the database community~\cite{DHK20,Hu24,KHS25,DHK25}. However, these works also do not beat the trivial time bounds on the projected-$(k-1)$-star, except for an algorithm by Hu~\cite{Hu24} which solves projected-$(k-1)$-star-Listing in time $\tOh(t + m \cdot t^{2/3 - 4/(9k - 12)}) \le \tOh(t + m^{1 + 2/3 \cdot (k-1) - 4(k-1)/(9k - 12)})$ assuming $\omega = 2$ (where we used $t \le m^{k-1}$). This is the first algorithm beating the trivial time bound---however, it is conditioned on $\omega = 2$, which is conjectured by some researchers but far from proven. Using currently known matrix multiplication algorithms, the algorithm's running time deteriorates to $t \cdot m^{\Omega(1)}$, which is not output-linear.
Summarizing, no prior work unconditionally solves projected-$(k-1)$-star-Listing faster than the trivial time bounds, and even conditioned on $\omega = 2$ prior work required time $\Omega(m^{2/3 \cdot (k-1)} + t)$. 

In contrast, our running time is a fixed polynomial in $n$ and it works with current knowledge on matrix multiplication. This is an astonishing improvement: 
Conditioned on $\omega = 2$, we improve from the non-fixed-parameter-tractable time $\Omega(n^{\Omega(k)} + t)$ to the fixed-parameter tractable time $\tOh(n^3 + t)$. 
Unconditionally, we improve from the trivial time bounds $O(n^k)$ and $O(m^{k-1})$ to the fixed-parameter tractable preprocessing time $\tOh(n^{17.42})$ and polylogarithmic delay. For the colored setting, our running times are even polynomial in $k$.

We remark that the use of fast matrix multiplication normally improves the running time exponent by at most a factor 2/3. Indeed, if $\omega = 2$ then fast matrix multiplication improves over trivial matrix multiplication by a factor 2/3 in the exponent, so this factor is what we can hope to gain for applications of fast matrix multiplication. In this light, Hu's~\cite{Hu24} improvement from $m^{k-1}$ to $m^{2/3 \cdot (k-1) + O(1)}$ is very natural. In contrast, our improvement from $n^k$ to $n^{17.42}$ is extremely surprising and in fact at first sight seems impossible. We have not seen a similar effect for any other problem setting. In our case, this stark improvement is possible because we insist on listing in output-linear time $\tOh(n^c + t)$ (or enumeration with polylogarithmic delay). 

Finally, let us compare with the Colored $(H,I)$-Counting problem, where the task is to count the number of solutions. Without projections, a variant of Yannakakis'~\cite{Yannakakis81} classic algorithm solves this problem in time $O(m)$. With projections, a parameterized complexity classification was shown in~\cite{DRW19}; this includes a fine-grained lower bound ruling out that Colored projected-$(k-1)$-star-Counting can be solved in time $O(n^{k-1-\eps})$ for any $\eps > 0$. That is, allowing projections makes counting trees intractable. In contrast, we show that listing and enumerating trees remains tractable with projections.

\subsubsection{Discussion of Drawbacks}

In comparison with the classic results that trees without projections can be listed in time $O(m+t)$ and enumerated in preprocessing time $O(m)$ and delay $O(1)$, our result has three drawbacks, most of which are necessary:

\paragraph{Superlinear Preprocessing Time} 
Unlike the linear preprocessing time $O(m)$ for trees without projections, we need preprocessing time $\tOh(n^{17.42}) \le \tOh(m^{17.42})$, and conditioned on very fast matrix multiplication we still need preprocessing time $\tOh(n^3) \le \tOh(m^3)$. This is a necessary relaxation, because our algorithm works for all projected trees $(H,I)$, in particular also for non-free-connex~$(H,I)$, for which $O(m)$ preprocessing time and constant (or polylogarithmic) delay is impossible by the known fine-grained classification mentioned above~\cite{BDG07}.

\paragraph{Polylogarithmic Delay} 
While constant delay is possible for trees without projections, we need polylogarithmic delay $\tOh(1)$. At the core, this comes from polylogarithmic factors in the running time of a fast rectangular matrix multiplication algorithm that we are using. We leave it open whether the delay can be improved to $O(1)$ without significantly worsening the preprocessing time.

\paragraph{Fast Matrix Multiplication}
Finally, our algorithms are impractical due to the use of fast (rectangular) matrix multiplication. However, we show that any algorithm that breaks the trivial time bounds $n^k$ or $m^{k-1}$ \emph{must} be impractical, as it \emph{must} use fast matrix multiplication. Specifically, we consider combinatorial algorithms, which intuitively are algorithms that do not use fast matrix multiplication, and show that if projected-$(k-1)$-star-Listing has a fast combinatorial algorithm then so does $k$-Clique, contradicting the Combinatorial $k$-Clique Hypothesis. See \Cref{sec:lower-bounds} for more discussion.

\begin{restatable}{theorem}{LowerBoundCombinatorialTheorem}
\label{thm:lower-bound-combinatorial}
	Let $k \ge 3$. Assuming the Combinatorial $k$-Clique Hypothesis, no combinatorial algorithm solves (Colored or Uncolored) projected-$(k-1)$-star-Listing in time $\tOh(n^{k-\eps} + t)$ or $\tOh(m^{k-1-\eps} + t)$, for any $\eps > 0$. 
\end{restatable}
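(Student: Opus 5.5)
We reduce $k$-Clique Detection to projected-$(k-1)$-star-Listing by a combinatorial reduction. We use the Combinatorial $k$-Clique Hypothesis in its standard form: no combinatorial algorithm detects a $k$-clique in an $n$-vertex graph in time $O(n^{k-\eps})$. We may assume the clique instance is $k$-partite, $V = V_1 \cup \dots \cup V_k$ with $|V_i| = n$.

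\textbf{Construction.} Build a host graph $G'$ for the colored projected star, with center color class $V_0' := V_k$ and leaf classes $V_i' := V_i$ for $i \in [k-1]$. For every $v \in V_k$ and $u \in V_i$, add the edge $\{v,u\}$ exactly when $\{v,u\}$ is an edge of the clique instance. Then a tuple $(u_1,\dots,u_{k-1})$ is a solution exactly when some $v \in V_k$ is adjacent to all $u_i$. Run the listing algorithm on $G'$. For each listed tuple, check in $O(k^2)$ time with an adjacency matrix whether $u_1,\dots,u_{k-1}$ are pairwise adjacent. A $k$-clique exists if and only if some listed tuple passes this check.

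\textbf{Bounding $t$.} The difficulty is that $t$ can be as large as $n^{k-1}$, which would already cost too much. The fix is to stop the listing algorithm early. We may assume the algorithm's running time is a known function $c\,(n^{k-\eps} + t)$ up to polylog factors. Run it with a time budget of $c\,(n^{k-\eps} + L)$, where $L$ is a threshold chosen below. If it finishes, we have all solutions and check them. If it runs out of time, we know $t > L$, and we want to conclude that a clique is present, or else reduce further.

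That conclusion is not automatic, because many common-neighbourhood tuples need not contain a clique. So instead I would use a self-reduction that splits the instance. Partition each $V_i$ for $i \le k-1$ into $n/s$ blocks of size $s$. For each combination of blocks, solve a subinstance containing all of $V_k$ and one block per leaf class. A subinstance has $t \le s^{k-1}$ and $n' \le n$, costing about $n^{k-\eps} + s^{k-1}$. There are $(n/s)^{k-1}$ subinstances, so the total is $(n/s)^{k-1}(n^{k-\eps} + s^{k-1})$, which is too much.

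Better is to split the center as well, and to charge solutions only for listed tuples. Split $V_k$ into $g$ groups of size $n/g$. By the edge bound, each group gives $m' \le (n/g)\cdot (k-1)n$ and $t \le$ the number of tuples dominated by that group. This is still possibly $n^{k-1}$ per group, so splitting does not help directly.

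\textbf{Main obstacle and the route I would try.} Controlling the number of non-clique solutions is where I expect the real difficulty, and I would handle it by an amortization argument. Each listed tuple is either a clique witness, in which case we stop immediately, or a \emph{non-clique}, meaning some pair $u_i,u_j$ is non-adjacent. The plan is to have the listing algorithm produce only tuples that are already consistent with the clique instance. To do this, encode the pairwise edges among leaves by attaching to the tuple extra projected gadget structure. Since the pattern is a fixed star, I would instead put the pair constraints into the center: replace the center by an encoding of $(v, \text{sub-choice})$ so that adjacency fails for inconsistent pairs. For the $m^{k-1-\eps}$ version, I would take the center set to be edges or small cliques so that $m = O(n^2)$, and balance parameters so that $m^{k-1-\eps}$ becomes $n^{k-\eps'}$. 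Finally, I would verify that the reduction is combinatorial: it uses only adjacency checks, so it contradicts the hypothesis.
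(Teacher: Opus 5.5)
Your construction is the right one: keep only the edges incident to $V_k$, list the projected stars, and keep a tuple only if its vertices are pairwise adjacent. This is exactly the paper's reduction. The gap is that you then abandon it because of an obstacle that does not exist, so the proposal never reaches a complete argument.

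The bound $t \le n^{k-1}$ is not too expensive. The Combinatorial $k$-Clique Hypothesis only forbids time $O(n^{k-\delta})$ for some $\delta>0$. Your instance has $O(n)$ vertices and at most $n^{k-1}$ solutions. A listing algorithm running in $\tOh(n^{k-\varepsilon}+t)$ therefore gives total time $\tOh(n^{k-1}+n^{k-\varepsilon}) = O(n^{k-\delta})$ with $\delta=\min\{1,\varepsilon\}/2$. So no time budgeting, block splitting, or gadget is needed for the $n$-version. Your last paragraph, about moving pair constraints into the center and amortizing, is both unnecessary and too vague to count as a proof.

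For the $m$-version, the idea you considered and discarded is the one that works. Taking the center to be edges or small cliques with $m=O(n^2)$ goes the wrong way, since then $m^{k-1-\varepsilon}$ is far above $n^k$. What you need is $m$ barely superlinear.

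Split $V_k$ into $r=n^{1-\varepsilon/k}$ groups of size $n^{\varepsilon/k}$, and run the reduction on each group separately. This is correct because every $k$-clique has its $V_k$-vertex in exactly one group. Each subinstance has $O(n^{1+\varepsilon/k})$ edges and up to $n^{k-1}$ solutions. The total cost is
\[ r\cdot\tOh\bigl(n^{k-1}+n^{(1+\varepsilon/k)(k-1-\varepsilon)}\bigr)=\tOh\bigl(n^{k-\varepsilon/k}\bigr), \]
since $r\,n^{k-1}=n^{k-\varepsilon/k}$ and $1-\varepsilon/k+(1+\varepsilon/k)(k-1-\varepsilon)=k-2\varepsilon/k-\varepsilon^2/k$. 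You dismissed this because a single group may already have $n^{k-1}$ solutions. But $r$ times that is still polynomially below $n^k$.

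You also never treat the uncolored variant. On the same host graph, the spurious uncolored solutions are of two kinds: the center lies in some $V_i$ with $i<k$ and all leaves lie in $V_k$, or two leaves lie in the same part. There are only $O(n^{k-1})$ of them. All of them fail the adjacency check, because two vertices in the same part of a $k$-partite graph are non-adjacent.
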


\noindent
This lower bound is similar to \cite[Theorem 11]{FanKZ23} and \cite[Theorem 8.5]{DZFK24}.

\subsection{Towards Further Improvements for Projected Trees}

At this point one might ask: Can the preprocessing time $\tOh(n^{17.42})$ be further improved? Can one prove fine-grained lower bounds that limit how much further it can be improved?

We cannot hope to prove a lower bound ruling out preprocessing time $\tOh(n^3)$ or higher, because the preprocessing time of \Cref{thm:main1} would improve to $\tOh(n^3)$ if $n \times n$ matrix multiplication was in time $\tOh(n^2)$, which currently cannot be ruled out.
Preprocessing time $O(n^{2-\eps})$ is easily ruled out, because the input size can be up to $\Theta(n^2)$, so in time $O(n^{2-\eps})$ one cannot read the whole input.
In the following theorem, we rule out some superquadratic time if $\omega > 2$, assuming the $k$-Clique Hypothesis. See \Cref{sec:lower-bounds} for background on this and other hypotheses used in this paper.

\begin{restatable}{theorem}{LowerBoundCliqueNTheorem}
\label{thm:lower-bound-clique-n}
    For every $\eps > 0$ there exist sufficiently large $k,k' \in \N$ such that assuming the $k$-Clique Hypothesis no algorithm solves Colored projected-$(k'-1)$-star-Listing in time $\Otilde\left( t + n^{\frac{\omega}{3-\omega} - \eps} \right)$.
\end{restatable}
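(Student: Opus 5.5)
We reduce $k$-Clique detection to Colored projected-$(k'-1)$-star-Listing. The idea is that star listing with a free center can compute a Boolean product that is nearly dense: the listed tuples are exactly the $(k'-1)$-tuples of leaves having a common neighbor. The reduction follows the Nešetřil--Poljak approach. Take a $k$-partite instance $G$ on $N$ vertices with $k = 3\ell$. We build auxiliary ``super-vertices'' for $\ell$-cliques, in three groups $A,B,C$. Then $G$ has a $k$-clique iff some triple $(a,b,c)$ forms a triangle in the auxiliary graph. Deciding this is essentially a Boolean matrix product $A\times B$ followed by intersection with the $A$--$C$ edges. The $k$-Clique Hypothesis rules out time $N^{\omega k/3-\eps'}$, so it suffices to compute this product in time $O(N^{\ell(\omega-\delta)})$.

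\textbf{Step 1: encoding the product via unbalanced splitting.}
Splitting each $\ell$-clique into pieces gives more freedom than a plain $3$-way split. Let $B$ index $\beta$-sized cliques and let $A$ and $C$ index $\alpha$-sized cliques (in units of $N$). The projected star instance is built as follows.
\begin{itemize}
\item The center ranges over the $B$-side cliques, $n_0 \approx N^{\beta}$.
\item The $k'-1$ leaves come in color classes. Each leaf class holds the cliques of some small part (say single vertices, or $\gamma$-cliques) of the $A$- or $C$-side clique.
\item Each edge leaf--center records that the leaf's part is fully adjacent to the center's clique.
\end{itemize}
A listed tuple is then a pair of partial cliques $(a,c)$ with a common neighbor $b$. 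The number of host vertices is $n = \max(N^{\beta}, N^{\gamma})$, which is much smaller than $N^{\ell}$. Cross-edges between leaves cannot be encoded (the star has no leaf--leaf edges), so pairs consistent within $a$ and $c$ are checked after listing. The total work is $t$ plus this checking, where $t$ is at most the number of tuples, about $N^{(k-\beta)}$-ish.

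\textbf{Step 2: the parameter choice yielding $\omega/(3-\omega)$.}
We choose the part sizes so that the output-size term $t$ matches the clique-hypothesis budget $N^{\omega k/3}$, and so that $n^{c}$ with $c = \frac{\omega}{3-\omega}-\eps$ also stays below that budget. I expect the balance to be the following. The center part has size $\beta$ and the leaves together cover the remaining $k-\beta$ vertices, so $t \le N^{k-\beta}$. Setting $k-\beta = \omega k/3$ gives $\beta = (3-\omega)k/3$. Then $n = N^{\beta}$, and $n^{\omega/(3-\omega)} = N^{\omega k/3}$, which is exactly the threshold. So an algorithm with time $t + n^{\omega/(3-\omega)-\eps}$ would yield a clique algorithm with time $N^{\omega k/3 - \eps''}$.

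For this we need $\beta$ to be approximately integral, which holds for large $k$. This is why $k$ and $k'$ must be sufficiently large. We also need each leaf class to have at most $n$ vertices. This holds if we split the $k-\beta$ leaf vertices into groups of size at most $\beta$, which gives $k'-1 \approx (k-\beta)/\beta$ leaves.

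\textbf{Main obstacle.}
The hard step is that a solution tuple need not correspond to a clique: leaf parts may be non-adjacent to each other. The output size $t$ must still be bounded by about $N^{k-\beta}$, and all spurious tuples must be filtered within budget. Two things need to be checked.
\begin{enumerate}
\item Any tuple of leaf cliques counts at most once, so $t \le \prod(\text{class sizes}) \le N^{k-\beta}$. This is automatic.
\item Filtering takes $O(1)$ per tuple after precomputed adjacency tables. With that, the final check of whether some listed tuple $(a,c)$ is a clique extending to one with some $b$ costs $O(t)$.
\end{enumerate}
The remaining subtlety is that the listed tuple forgets which $b$ was the witness. We can instead require that the leaves jointly form a $(k-\beta)$-clique, checked in $O(1)$ each. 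Any such tuple listed has a common neighbor $b$, and then $a\cup c\cup b$ is a $k$-clique. So detection reduces to scanning the list.
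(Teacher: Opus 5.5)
Your reduction is the paper's: the center part consists of the cliques on one block of $\beta$ parts, the leaves are the remaining parts, and a listed leaf tuple is accepted only if it is itself a clique. This is exactly \Cref{lem:lower-bound-generic} applied to the compressed graph $G'$ in the proof of \Cref{thm:lower-bound-clique}.

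The gap is in Step~2. You set $k-\beta = \omega k/3$, so the only bound on the output term is $t \le N^{k-\beta} = N^{\omega k/3}$. This bound is attained on clique-free inputs, e.g.\ when the center block is complete to everything but there are no edges between two leaf classes. The hypothetical listing algorithm then solves $k$-Clique only in time $\tOh(N^{\omega k/3})$, which the $k$-Clique Hypothesis permits. Your claimed bound $N^{\omega k/3-\eps''}$ holds for the $n^{\omega/(3-\omega)-\eps}$ term, but not for the $t$ term.

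The missing step is to make the center block strictly larger than $(3-\omega)k/3$, e.g.\ $\beta := \lfloor (3-\omega)k/3\rfloor + 1$. Then $k-\beta < \omega k/3$ and $t$ is polynomially below the budget. This costs you in the other term: with $n \approx N^{\beta}$ you get $n^{\omega/(3-\omega)-\eps} \le N^{\omega k/3 + \omega/(3-\omega) - \eps(3-\omega)k/3}$. That is polynomially below $N^{\omega k/3}$ only once $\eps(3-\omega)k/3 > \omega/(3-\omega)$. This trade-off, not approximate integrality of $\beta$, is the real reason $k$ must be large.

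The paper does precisely this with $k' = \lceil 4\omega^2/(\eps(3-\omega)^2)\rceil$ and $k = 1+\lceil 3k'/\omega\rceil$. The leaves are $k'-1$ single parts, so $t \le N^{k'-1}$, while the budget is $N^{\omega k/3} \ge N^{k'+\omega/3}$. The $m$-term, and hence also the $n$-term since $n \le m$, is at most $N^{k'}$.

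Two smaller points. Keep the leaves as single vertices ($\gamma = 1$). You never account for the cost of building the host graph, and with leaf groups of size $\gamma$ close to $\beta$, the roughly $N^{\gamma+\beta}$ leaf--center edges already exceed $N^{\omega k/3}$ unless $\omega > 2$. Also, the Ne\v{s}et\v{r}il--Poljak framing with groups $A,B,C$ and a Boolean matrix product plays no role in the argument and can be dropped.
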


This is based on a standard fine-grained hypothesis, but does not match our upper bound (no matter how fast matrix multiplication can be solved). Next, we show a barrier that would match the $\tOh(n^3)$ preprocessing time of \Cref{thm:main1} if $n \times n$ matrix multiplication would be in time $\tOh(n^2)$.

\begin{restatable}{theorem}{LowerBoundLopsidedTheorem}
\label{thm:lower-bound-lopsided}
	If Colored projected-3-star-Listing is in time $\tOh(n^{3-\eps} + t)$ for some $\eps > 0$, then 4-Clique on a 4-partite graph whose parts have sizes $n,n,n,n^{1+\eps/3}$ can be solved in time $\tOh(n^3)$.
\end{restatable}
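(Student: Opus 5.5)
The plan is a direct reduction. Take a 4-partite graph $G$ with parts $A,B,C$ of size $n$ and $D$ of size $N = n^{1+\eps/3}$. I will build an instance of Colored projected-3-star-Listing whose solutions are exactly the triangles $(a,b,c)\in A\times B\times C$ that have a common neighbor in $D$. The instance has center color class $V_0 = D$ and leaf classes $V_1=A$, $V_2=B$, $V_3=C$, and it keeps only the edges of $G$ between $D$ and $A\cup B\cup C$. Its solutions are then the triples $(a,b,c)$ with a common neighbor $d\in D$.

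Listing all of them is too expensive, since there may be up to $n^3$ of them, but $n^3$ is already our time budget. The reduction is therefore:
\begin{enumerate}
\item List all solutions $(a,b,c)$.
\item For each one, test in $O(1)$ whether $ab$, $bc$ and $ac$ are edges of $G$.
\item Report a 4-clique if any test succeeds.
\end{enumerate}
The number of solutions satisfies $t \le n^3$, so listing plus checking costs $\tOh(n_{\text{inst}}^{3-\eps} + n^3)$. Here $n_{\text{inst}} = 3n + N = \Theta(n^{1+\eps/3})$ is the number of vertices of the listing instance.

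The key calculation is $n_{\text{inst}}^{3-\eps} = n^{(1+\eps/3)(3-\eps)} = n^{3 + \eps - \eps - \eps^2/3} = n^{3-\eps^2/3} \le n^3$. The total time is therefore $\tOh(n^3)$. Building the instance takes $O(nN) = O(n^{2+\eps/3})$ time, which is within $n^3$ for $\eps<3$; for $\eps\ge3$ the hypothesis would give sub-constant exponents, and we may assume $\eps<3$ without loss of generality. Correctness is immediate. A 4-clique $(a,b,c,d)$ yields the solution $(a,b,c)$ witnessed by $d$, and that solution passes the triangle test. Conversely, any solution that passes the test, together with its witness $d$, is a 4-clique.

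The step that needs the most care is bookkeeping rather than ideas. The listing algorithm's bound is stated in terms of the vertex count of its own instance, so the parameter must be $n_{\text{inst}}$ and not $n$; the exponent $1+\eps/3$ was chosen exactly so that the $\eps$ saving cancels the blow-up of $D$ up to the $\eps^2/3$ slack. If the listing algorithm requires the $I$-part to be nonempty or the classes to be of equal size, I will pad with isolated vertices, which keeps $n_{\text{inst}}=O(n^{1+\eps/3})$.
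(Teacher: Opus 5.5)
Your proposal is correct and is the same argument as the paper's. The paper applies its generic reduction (\Cref{lem:lower-bound-generic}) with $N = n^{1+\eps/3}$: keep only the edges incident to the large part, list the projected 3-stars centered there, and test each listed triple for being a triangle. It then uses the same exponent calculation $(1+\eps/3)(3-\eps) = 3 - \eps^2/3 \le 3$ to bound the listing time by $\tOh(n^3)$.
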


Note that this does not violate any standard fine-grained hypothesis. However, the conclusion of the above theorem would be surprising, as known 4-Clique algorithms construct an $n^2 \times n^{1+\eps/3}$ matrix and thus require time $\Omega(n^{3+\eps/3})$ for such instances. Thus, the theorem shows a barrier to improving the preprocessing time of \Cref{thm:main1} to subcubic, which makes it plausible that our preprocessing time in terms of $n$ is near-optimal.

\medskip
So far we have discussed running time in terms of $n$, now consider $m$. Assuming $n \le O(m)$, which can be ensured in a near-linear-time preprocessing, our preprocessing time is $\tOh(m^{17.42})$, which would improve to $\tOh(m^3)$ if $n \times n$ matrix multiplication was in time $\tOh(n^2)$.
We show a (non-matching) lower bound, ruling out preprocessing time $O(m^{2-\eps})$ in case $\omega = 2$. 

\begin{restatable}{theorem}{LowerBoundCliqueTheorem}
\label{thm:lower-bound-clique}
    For every $\eps > 0$ there exist sufficiently large $k,k' \in \N$ such that assuming the $k$-Clique Hypothesis no algorithm solves Colored projected-$(k'-1)$-star-Listing in time $\Otilde\left( t + m^{\frac{\omega}{3-\omega} - \eps} \right)$.
\end{restatable}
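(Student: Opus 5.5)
The plan is a reduction from $k$-Clique detection. Recall that the $k$-Clique Hypothesis says that detecting a $k$-clique in an $n$-vertex ($k$-partite) graph needs time $n^{\omega k/3 - o(1)}$. The idea is to make the star's center range over $a$-cliques and its $L = k'-1$ leaves range over single vertices, so that $k = a + L$. The parameters will be tuned so that both the output size $t$ and the term $m^{\frac{\omega}{3-\omega}-\eps}$ stay below $n^{\omega k/3}$.

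\emph{Construction.} Take a $k$-partite instance with parts $V_1,\dots,V_a,U_1,\dots,U_L$. Build a colored host graph as follows.
\begin{itemize}
\item The center color class $X_0$ is the set of all $a$-cliques $C$ with one vertex in each $V_j$, so $|X_0| \le n^a$.
\item For each $i \in [L]$, the leaf color class $X_i$ is a copy of $U_i$.
\item Connect $C \in X_0$ to $u \in X_i$ whenever $C \cup \set{u}$ is a clique.
\end{itemize}
This instance has $m \le L n^{a+1}$ edges. It can be built in time $O(n^{a+1})$ by enumerating all $a$-cliques in time $O(n^a)$ and testing each against all leaf vertices.

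\emph{Correctness and output size.} A solution is a tuple $(u_1,\dots,u_L)$ with $u_i \in U_i$ such that some $a$-clique $C$ is fully adjacent to every $u_i$. The original graph has a $k$-clique if and only if some listed tuple is itself an $L$-clique. We therefore run the listing algorithm and check each output tuple in time $O(L^2)$. The number of solutions is trivially $t \le n^L$.

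\emph{Parameter choice.} This is the main step. Let $\gamma = \frac{\omega}{3-\omega}$, and choose $L = \lfloor \gamma a \rfloor$ (or slightly smaller). Then $a + L \approx a(1+\gamma) = \frac{3a}{3-\omega}$, and so
\[
\frac{\omega k}{3} \approx \frac{\omega a}{3-\omega} = \gamma a .
\]
We need every cost term to be polynomially below $n^{\omega k/3}$.
\begin{itemize}
\item Output and checking: $t \le n^L$ with $L \le \gamma a$. Choosing $L = \lfloor \gamma a - \delta a\rfloor$ for a tiny $\delta$ (and $a$ large) gives a polynomial gap, while $\omega k/3 = \omega(a+L)/3$ moves by only $O(\delta a)$.
\item Algorithm term: $m^{\gamma - \eps} \le \tOh(n^{(a+1)(\gamma-\eps)})$. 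This is at most $n^{\gamma a - \eps a + \gamma}$, which is below $n^{\omega k/3 - \Omega(1)}$ once $a$ is large compared with $\gamma/\eps$ and $\delta \ll \eps$.
\item Construction: the $O(n^{a+1})$ cost is negligible, since $\gamma \ge 2$ makes $a+1 < \gamma a - \eps a$ for large $a$.
\end{itemize}
Hence a $\tOh(t + m^{\gamma-\eps})$ algorithm detects $k$-cliques in time $n^{\omega k/3 - \Omega(1)}$, which contradicts the hypothesis. Here $k = a + L$ and $k' = L+1$ both depend only on $\eps$ and $\omega$.

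The delicate points are twofold. First, the floor and the irrationality of $\omega$ must be absorbed by the slack $\delta$. Second, the instance has $n^a + Ln$ vertices and $m \approx n^{a+1}$ edges, and the bound must be stated in terms of $m$; the $m$-bound also implies the corresponding $n$-version, Theorem~\ref{thm:lower-bound-clique-n}, up to similar bookkeeping. If needed, one can pad or sparsify so that $m$ is exactly the parameter in the hypothesis.
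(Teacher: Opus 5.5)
Your proposal is correct and is essentially the paper's proof. The paper builds the same instance: the star's center ranges over the cliques of $G[V_{k'},\dots,V_k]$ (your $a = k-k'+1$ parts), and the $k'-1$ leaves range over single parts. It then filters the listed tuples for $(k'-1)$-cliques via \Cref{lem:lower-bound-generic}, with the same balance $k'-1 \approx \frac{\omega}{3-\omega}(k-k'+1)$. The only difference is bookkeeping: the paper fixes $k' = \lceil \frac{4\omega^2}{\eps(3-\omega)^2}\rceil$ and $k = 1 + \lceil 3k'/\omega\rceil$ and bounds all costs by $n^{k'} \le n^{\omega(k-1)/3}$, whereas you parametrize by $a$ and use an explicit slack $\delta a$.
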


In case $\omega = 2$, we do not know whether the optimal preprocessing time is $\tOh(m^3)$ or $\tOh(m^2)$. We show that the latter can be achieved for projected-$(k-1)$-stars. Note that in contrast to the preceding theorems, the following presents an algorithm, not a lower bound.

\begin{theorem}
    \label{thm:star-listing}
    (Colored or Uncolored) projected-$(k-1)$-star-Listing is in time $\tOh(m^{11.61} + t)$, and (Colored or Uncolored) projected-$(k-1)$-star-Enumeration is in preprocessing time $\tOh(m^{11.61})$ and delay $\tOh(1)$. If $n \times n$ matrix multiplication is in time $\tOh(n^2)$, then the $m^{11.61}$ term improves to $m^2$.
\end{theorem}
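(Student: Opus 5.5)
By the enumeration-to-listing reduction announced as our second main result, it suffices to give a listing algorithm for Colored projected-$(k-1)$-star-Listing running in time $\tOh(m^{c} + t \cdot \polylog)$. This yields enumeration with preprocessing $\tOh(m^{c})$ and delay $\tOh(1)$. The uncolored case then follows by color coding as in \Cref{sec:uncoloredalgo}.

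So fix colour classes $V_0$ (centers) and $V_1,\dots,V_{k-1}$. A solution is a tuple $(v_1,\dots,v_{k-1})$ such that some center $c\in V_0$ is adjacent to all $v_i$. I would proceed by degree splitting with a threshold $\Delta = m^{\delta}$.

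\textbf{Low-degree centers.} For a center $c$ with degree at most $\Delta$, the tuples it witnesses number at most $\Delta^{k-1}$. These tuples might be heavily duplicated across centers, so I would not enumerate them directly. Instead I would induct on $k$. Fix $v_1$ and recurse on the projected $(k-2)$-star restricted to $N(v_1)\cap V_0$. To avoid paying for wasted branches, I would use a decision oracle at every node of the search tree: ``does a given partial tuple $(v_1,\dots,v_j)$ extend to a solution?'' This is the standard flashlight search, costing $\polylog$ per output as long as extendability queries are cheap.

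\textbf{Extendability oracle.} The natural formulation is in terms of the set $S(v_1,\dots,v_j)=\bigcap_i N(v_i)\cap V_0$. We ask whether this set contains a center adjacent to some vertex in each remaining class. Since every center has such neighbours after pruning, the question reduces to whether $S(v_1,\dots,v_j)$ is nonempty. For $j=2$, this is answered for all pairs by one Boolean product of $V_1\times V_0$ with $V_0\times V_2$, costing $\mm(n)$. For larger $j$ we need common-neighbourhood tests of $j$-tuples. The key observation is that we only query prefixes of actual solutions, and solutions are at most $t$.

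Heavy centers number at most $m/\Delta$. For prefixes, I would handle heavy centers by rectangular matrix products: a $(\text{prefixes of length } j-1)\times V_0^{\text{heavy}}$ matrix times a $V_0^{\text{heavy}}\times V_j$ matrix. Here the row set is the already-listed prefixes, which I would bound by $t$ plus a polynomial. I would use output-sensitive rectangular multiplication so that the cost is $\tOh(m^{c}+t)$.

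Light centers contribute at most $m\Delta^{j-1}$ candidate prefixes in total. These can be generated explicitly with a dictionary, giving $\tOh(m\Delta^{k-2})$, which is too much for large $k$. So I would instead apply the light-center recursion only up to a constant depth, and show that beyond that depth the number of prefixes is dominated by $t$.

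\textbf{Main obstacle.} I expect the crux to be making the total cost independent of $k$: a polynomial $m^{11.61}$ rather than $m^{\Theta(k)}$. The idea I would try first is this. Any prefix of length $j$ which extends to a solution is itself counted by the output, since each solution has $k-1$ prefixes. Hence the number of extendable prefixes over all lengths is at most $(k-1)t$. The oracle therefore only needs to batch-check \emph{candidate} extensions of extendable prefixes: each extendable prefix $p$ together with each $v\in V_{j+1}$.

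This is a Boolean product of a (prefixes $\times V_0$) matrix with a ($V_0\times V_{j+1}$) matrix, with output-sensitivity in the number of true entries. Those true entries are again extendable prefixes. Using output-sensitive rectangular matrix multiplication with cost about $\tOh(\text{in}\cdot \text{out}^{\alpha}+\dots)$ and balancing sizes gives time $m^{c}$ per level plus near-linear in the output. Optimizing the exponents of current rectangular matrix multiplication should give $c\approx 11.61$. It would give $c=2$ if $\omega=2$, since then each level is essentially $\tOh(m^2+\text{out})$.

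The delicate part is that the row dimension (prefixes) can be as large as $t$. This requires splitting rows into batches of size $\approx m$ so that each batch costs $\tOh(m^{c-1})$ amortised against its guaranteed outputs, and I expect this batching analysis to be where the exponent $11.61$ is determined.
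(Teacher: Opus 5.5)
Your outer reductions match the paper: enumeration via \Cref{thm:enumtolist}, and the uncolored case via color coding. The core listing step, however, has a genuine gap, and batching does not close it.

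\textbf{Why the per-level product fails.} Computing the extendable prefixes of length $j+1$ as a Boolean product of a (prefixes $\times V_0$) matrix with the $V_0\times V_{j+1}$ adjacency matrix cannot be output-linear in general. Row $p$ of the left factor is the common neighbourhood of $p$ in $V_0$. That row can have $\Theta(n)$ ones even though $p$ has only one extension.
\begin{itemize}
\item Example: let $|V_0|=|V_1|=\dots=|V_{k-2}|=n$ and $V_{k-1}=\{z\}$, with every center adjacent to every leaf vertex. Then $m=\Theta(kn^2)$ and $t=n^{k-2}$, but at the last level the left factor has $n^{k-1}=t\cdot n$ ones.
\item Reordering the leaves does not help. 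Take $k-1$ blocks of $n$ centers, where block $\ell$ is complete to all $V_i$ with $i\neq\ell$ and adjacent to a single vertex $z_\ell\in V_\ell$. Whichever leaf is processed last, every prefix has at least $n$ common neighbours, while $t=O(kn^{k-2})$.
\item Merely writing this factor down costs $tn$. That exceeds $\tOh(t+m^c)$ for every fixed $c$ once $k>2c+1$, already at $k\ge 6$ against $m^2$.
\item The output-sensitive bound $\tOh(s+y s^{1-\alfa}(x^{\alfa}+z^{\alfa}))$ with $x\approx s\approx t$ and $y=n$ evaluates to $\tOh(nt)$. Even if $\omega=2$, $\mm(t,n,n)$ costs $\Omega(tn)$, so your claim of $\tOh(m^2+\text{out})$ per level is false.
\item A batch of $\approx m$ rows can cost $mn$ but is guaranteed only $\approx m$ outputs, since an extendable prefix need only have one extension. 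That is $n$ per output, not polylogarithmic.
\item Your heavy/light discussion bounds the number of prefixes, which is already at most $t$. The problem is the cost per prefix.
\end{itemize}
As a result, the exponent $11.61$ and the $m^2$ bound are asserted rather than derived.

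\textbf{What is missing.} The AM--GM trick only yields $\tOh(t+\text{poly}(m))$ when the product's outer dimensions multiply to roughly the output size.
\begin{itemize}
\item Instead of extending one coordinate at a time, the paper splits the leaves into two balanced groups $I\cup\overline I$. It multiplies an $N_I\times N_0$ matrix by an $N_0\times N_{\overline I}$ matrix.
\item Under the density condition $N_I\le N_0\Delta^{|I|}$ this costs $\tOh(N_{[d]}+(N_0\Delta)^{2/\alfa})$ (\Cref{lem:dense}, \Cref{lem:dense-star-counting}). This is where $2/\alfa\approx 11.61$ comes from. Degree uniformization ensures $N_0\Delta=\tOh(m)$.
\item Since $N_{[d]}=\prod_i|V_i|$ can be far larger than $t$, the paper then densifies. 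It halves one coordinate and recurses, obtaining a candidate set of size at most $2t$ that contains all solutions.
\item It compresses the universe with $\tOh(1)$ contractions of volume $\tOh(|P|)$ that isolate most candidates (\Cref{thm:contract}). The volume bound rests on Shearer's inequality and Farkas' lemma.
\item It solves each contracted dense instance and recovers exact values by subtracting already-recovered colliding tuples. This recovery is why the paper computes the counts $\mu(\mathbf v)$ rather than Boolean extendability: hashing merges tuples, and only additive quantities can be separated afterwards.
\end{itemize}
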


The proof of \Cref{thm:star-listing} requires considerable technical effort. We leave open whether it can be generalized from projected stars to all projected trees.

\subsection{Generalizations to Graphs and Hypergraphs}

Now let us broaden our perspective from trees to general (hyper)graphs $H$.
For $H$-Detection, where the task is to decide whether there exists a solution, Marx in his seminal work~\cite{Marx13} defined the submodular width $\subw(H)$ and showed that the problem can be solved in time $n^{O(\subw(H))}$, and requires time at least $n^{\subw(H)^{\Omega(1)}}$ assuming the Exponential Time Hypothesis. 
The PANDA algorithm~\cite{KNS17,KNS25} improved the running time to $\tOh(m^{\subw(H)})$, and generalized the result to $H$\=/Listing in time $\tOh(m^{\subw(H)} + t)$ and $H$-Enumeration in preprocessing time $\tOh(m^{\subw(H)})$ and delay $O(1)$ (see also~\cite{KNS26,KC26}).

While projections are irrelevant for the detection problem, it remained open whether the known listing and enumeration algorithms can be generalized to incorporate projections. As discussed in \Cref{sec:intro:priorwork}, already on projected trees prior work had a worst-case preprocessing time of $\Omega(n^k)$ or $\Omega(m^{k-1})$, or a delay of $m^{\Omega(1)}$. (For example, the PANDA algorithm~\cite{KNS17,KNS25} solves $(H,I)$-Listing in time $\tOh(m^{\fcsubw(H,I)} + t)$, where $\fcsubw(H,I)$ is the free-connex submodular width of $(H,I)$, which is $k-1$ for the projected-$(k-1)$-star; see \Cref{sec:subw}.)

We show that $(H,I)$-Enumeration is in preprocessing time $n^{O(\subw(H))}$ and delay $\tOh(1)$, thus fully generalizing Marx' result from detection to listing and enumeration of projected hypergraphs.

\begin{restatable}{theorem}{AlgoHypergraphsTheorem}
\label{thm:algohypergraphs}
  For every $H,I$, (Colored or Uncolored) $(H,I)$-Listing is in time $\tOh(m^{17.42 \cdot \subw(H)} + t)$, and (Colored or Uncolored) $(H,I)$-Enumeration is in preprocessing time $\tOh(m^{17.42 \cdot \subw(H)})$ and delay $\tOh(1)$.
\end{restatable}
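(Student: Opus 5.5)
The plan is to reduce \Cref{thm:algohypergraphs} to the tree case, \Cref{thm:main1}, via Marx's decomposition technique behind the submodular width. This is done in the PANDA style, so that $m$ rather than $n$ appears in the base. First, by color coding (\Cref{sec:uncoloredalgo}) it suffices to treat the Colored problem. The uncolored variant then follows with only $\tOh$ overhead, because the solution sets of different colorings can be merged. Merging requires deduplicating projected tuples across colorings. For listing I would handle this with a dictionary. For enumeration I would handle it with the enumeration-to-listing reduction (our second main result), which turns an output-linear listing algorithm into one with polylogarithmic delay. So the core task is Colored $(H,I)$-Listing in time $\tOh(m^{17.42\cdot\subw(H)}+t)$.

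Next, I would invoke the PANDA/Marx partitioning theorem. It splits the input instance into $\tOh(1)$ subinstances $G_1,\dots,G_\ell$ whose solution sets (before projection) partition the solution set of $G$. Each $G_j$ comes with a tree decomposition $(T_j,\beta_j)$ of $H$. For every bag $B$ of $T_j$, the set $R_{j,B}$ of partial solutions on $B$ has size $\tOh(m^{\subw(H)})$ and can be computed in that time, and every tuple in it extends consistently along the tree. On each piece I would build a new colored host graph $G'_j$ whose vertex classes are indexed by the bags of $T_j$ and whose vertices are the tuples of $R_{j,B}$. I would connect two tuples of adjacent bags when they agree on their common variables. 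After the standard semijoin reduction, which removes tuples that do not extend, copies of the tree $T_j$ in $G'_j$ correspond bijectively to solutions of $G_j$. The projection set $I$ is handled by first refining $T_j$: for each $i\in I$ I attach a leaf bag $\{i\}$ to some bag containing $i$. The projected nodes $I'$ are these leaves. Then a copy of $T_j$ projected to $I'$ is exactly a projected solution. This reduction is the main reason to project onto single-variable leaves, since projecting onto whole bags would wrongly distinguish different extensions.

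Now $G'_j$ has $n'=\tOh(m^{\subw(H)})$ vertices, so \Cref{thm:main1} lists the projected tree solutions in time $\tOh(n'^{17.42}+t_j)=\tOh(m^{17.42\,\subw(H)}+t_j)$. The pattern $T_j$ depends only on $H$, so the $k$-dependence is harmless. One issue remains: the same projected tuple may arise from several pieces $G_j$, since the partition is over full solutions and not projected ones. However, there are only $\tOh(1)$ pieces, so each projected tuple appears at most $\tOh(1)$ times. I would therefore deduplicate with a dictionary, keeping the total time $\tOh(m^{17.42\subw(H)}+t)$. The enumeration version then follows from the listing algorithm via the enumeration-to-listing reduction, since Colored $(H,I)$-Subhypergraph is in the class it covers. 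The resulting preprocessing time is $\tOh(m^{17.42\subw(H)})$ with polylogarithmic delay.

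The main obstacle I expect is the partitioning step. It must be checked that PANDA indeed yields a partition of the instance (disjoint solution sets, or at least bounded overlap) together with fully reduced bag relations of the stated size. It must also be checked that the tree-decomposition encoding, including the attached leaves, correctly preserves projections without blowing up the number of vertices beyond $\tOh(m^{\subw(H)})$. If PANDA only guarantees a cover rather than a partition, bounded multiplicity still suffices for deduplication. What then needs care is the claim that the number of pieces is polylogarithmic in $m$.
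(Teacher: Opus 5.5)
Your treatment of the colored case follows the paper's route. PANDA (\Cref{thm:PANDA}) yields a constant number of acyclic pieces whose solution sets cover $\sol$. So your partition-versus-cover and polylog-number-of-pieces concerns are moot. Each piece is encoded as a tree pattern over bag tuples with $\tOh(m^{\subw(H)})$ host vertices, and \Cref{thm:main1} is applied to it. Projected tuples are then deduplicated, and \Cref{thm:main2-simplified} turns colored listing into colored enumeration. Attaching singleton leaf bags $\{i\}$ for $i \in I$ and projecting onto those leaves is a correct way to carry the projection over to the tree pattern. In fact it fills in a point that the paper's \Cref{lem:connected-acyclic-to-tree} leaves implicit, since that lemma only gives a bijection of full copies.

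The step that fails is uncolored enumeration. You propose to feed the uncolored listing algorithm into the enumeration-to-listing reduction. However, \Cref{thm:enumtolist} needs an $s$-splitting algorithm: fixing one bit of the search space must again give an instance of the same problem. Uncolored $(H,I)$-Subhypergraph has no such self-reduction, because restricting the image of one pattern node $i$ to vertices whose $j$-th bit is $b$ cannot be expressed as an uncolored instance. The paper lists exactly this as an open direction.

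The paper instead applies \Cref{thm:enumtolist} only to the colored problem and then transfers enumeration via \Cref{lem:enum_colored_to_uncolored}. It runs the colored enumerators for the $r = O(\log n)$ partitions from \Cref{lem:color-coding} one after another. This gives a ``cheating'' enumerator: its $j$-th output appears within time $O(r f + (j+r) g)$, and each solution is printed at most $r$ times. The Cheater's Lemma, Variant II (\Cref{lem:cheaterII}), then buffers the output and skips duplicates with a search tree, giving preprocessing $\tOh(m^{17.42\cdot\subw(H)})$ and delay $\tOh(1)$.

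Alternatively, your plan can be repaired by generalizing the uncolored problem with per-node candidate sets $D_i \subseteq V$. That version is splittable and can still be listed via color coding, but this has to be stated and checked.
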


This has the same drawbacks as in the case of projected trees, most of which are again necessary: 
(1) In contrast to the $\tOh(m^{\subw(H)})$ preprocessing time for $H$-Enumeration, for $(H,I)$-Enumeration we only obtain preprocessing time $m^{O(\subw(H))}$. However, preprocessing time $\tOh(m^{\subw(H)})$ is impossible, by the fine-grained classification of the special case where $H$ is a tree and thus $\subw(H) = 1$~\cite{BDG07}.
(2) In contrast to PANDA's delay of $O(1)$ we only achieve delay $\tOh(1)$; we do not know whether this can be improved.
(3) Our algorithm is impractical due to the use of fast (rectangular) matrix multiplication; this is necessary already for projected stars.

\subsection{Equivalence of Listing and Enumeration}

Listing and enumeration are widely studied in the database community and in the algorithm theory community.
So far they have been studied separately. Here we show that under natural assumptions listing and enumeration are \emph{equivalent}.

Recall that reducing listing to enumeration is trivial: If we have an enumeration algorithm with preprocessing time $T_P$ and delay $T_D$ then we obtain a listing algorithm in time $O(T_P + (t+1) \cdot T_D)$ by running the enumeration algorithm until it has printed all $t$ solutions and terminates. 
Reducing enumeration to listing is the difficult direction. So far, all attempts at this direction were restricted to specific classes of listing algorithms (e.g.~\cite{Uno03} and \cite[Section 3.3]{CS24} presented a conversion of so-called flashlight listing algorithms).
As our second main result, we present the first \emph{enumeration-to-listing reduction} which makes no assumptions on the structure of the listing algorithm. It makes some natural assumptions on the listing problem, which are typically satisfied for colored problems. Applying our reduction to Colored $(H,I)$-Sub(hyper)graph yields the following result.

\begin{restatable}{theorem}{MainSimplifiedTheorem}
\label{thm:main2-simplified}
  Let $f,g$ be monotone functions such that $f(n,m)$ and $g(n,m)$ can be computed in time $O(f(n,m) + g(n,m))$.
  If Colored $(H,I)$-Listing is in time $O(f(n,m) + t \cdot g(n,m))$, then Colored $(H,I)$-Enumeration is in preprocessing time $O((f(n,m) + g(n,m) + n + m) \log^2 n)$ and delay $O(g(n,m))$.
\end{restatable}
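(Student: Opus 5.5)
The plan is to turn the listing algorithm $\mathcal{A}$ into an enumeration algorithm by recursive splitting of the solution space, using $\mathcal{A}$ itself, with a budget cutoff, as a probe. The key property of Colored $(H,I)$-Listing is \emph{self-reducibility}. For a projected node $i \in I$ and a subset $S \subseteq V_i$, restricting $V_i$ to $S$ (and deleting incident edges) gives an instance whose solutions are exactly the original solutions with $v_i \in S$. These subinstances are no larger than the original, so by monotonicity of $f,g$ the same time bound applies.

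\textbf{Budgeted probe.} Run $\mathcal{A}$ on an instance with budget $B$, i.e.\ abort after $O(f(n,m) + B \cdot g(n,m))$ steps. This budget can be computed because $f$ and $g$ are computable in time $O(f+g)$. The probe has two possible outcomes:
\begin{itemize}
\item $\mathcal{A}$ terminates, and we hold the full solution set;
\item $\mathcal{A}$ is aborted, which certifies that there are more than $B$ solutions.
\end{itemize}
Each solution output by $\mathcal{A}$ is kept in a buffer.

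\textbf{Splitting and scheduling.} Halve $V_i$ for some $i \in I$ to get two subinstances; the solutions partition along this split, so there is no double counting. Recursing over the coordinates of $I$ gives a binary tree of depth $O(\log n)$ per coordinate, whose leaves fix all projected vertices. Each leaf therefore has at most one solution, and that solution can be decided with a single call of cost $O(f+g)$.

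The enumeration is a depth-first traversal of this tree, interleaved with printing. I would use the standard amortization trick to turn the total work into a delay guarantee. The first step is to run $\mathcal{A}$ with budget $B = \Theta(\log^2 n)$ on the whole instance. If it finishes, we have few solutions and print them with the cost charged to preprocessing. Otherwise we know there are many solutions, and a solution buffer lets us pay for the expensive probes of later subtrees. Concretely, when we expand a node we probe a child; if the probe completes, its solutions go into the queue to be printed. Each expensive $O(f)$ probe is paid for either by preprocessing or by printing at least one solution it discovers.

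\textbf{Main obstacle.} The hard part is bounding the delay by $O(g)$ rather than $O(f)$. The $f$-cost of probes that find no solutions, or only a few, must be charged to preprocessing time $O((f+g+n+m)\log^2 n)$, and probes must be deamortized so that no gap between outputs exceeds $O(g)$.

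My first attempt would be to run two processes in lockstep. One process prints from the buffer at rate one solution per $\Theta(g)$ steps. The other process is the tree exploration, advanced a constant number of steps per print. The invariant to maintain is that the buffer always holds enough solutions to cover the exploration's remaining work before it produces its next batch.

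To keep this invariant I would use the budget doubling from the probe, so that a subtree is only fully explored when it is guaranteed to have many solutions. The intended accounting is that each level of the $O(\log n)$-depth tree costs $O(f \log n)$ of front-loaded work, giving the stated $\log^2 n$ overhead. A subtree explored with budget $B$ either terminates, with its work paid by its own outputs at $g$ per solution, or it certifies more than $B$ solutions, which pre-pays for the $O(f)$ cost of splitting it further. Making this accounting rigorous, in particular at nodes where the split is very unbalanced, is the step I expect to need the most care.
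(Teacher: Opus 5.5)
There is a genuine gap: the step that makes the delay $O(g)$ instead of $O(g\log n)$ is the one you leave open, and the accounting you sketch cannot close it. The rest of your setup matches the paper's: the halving self-reduction, the budgeted run of $\mathcal{A}$ that either lists everything or certifies more than $B$ solutions, and the buffer-based deamortization. The last of these is the paper's Cheater's Lemma, which turns ``the $j$-th solution appears by time $F+jG$'' into preprocessing $O(F)$ and delay $O(G)$.

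Your scheme gives each node one budgeted probe on itself and splits it when the probe aborts. Consider a chain of $\Theta(\log n)$ nested, very unbalanced splits, for instance when all solutions share a long common prefix. All these probes abort on essentially the same $\approx B$ solutions, and each aborted probe costs $\Theta(f+Bg)\ge \Theta(Bg)$. So those solutions must absorb $\Omega(Bg\log n)$ work, which is $\Omega(g\log n)$ each, whatever $B$ you pick. Two further points in your sketch do not hold up:
\begin{itemize}
\item Paying an $O(f)$ probe with ``at least one solution'' only gives delay $O(f)$.
\item The initial budget $B=\Theta(\log^2 n)$ is on the wrong scale: an aborted probe must certify $\Omega(f/g)$ solutions to pay for itself at rate $g$.
\end{itemize}
The per-level claim of $O(f\log n)$ work is also unjustified, since the number of started calls per level grows like $j/B$. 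This one-probe-per-node scheme is exactly the paper's warm-up ($\beta=1$), which yields only preprocessing $O(f\log n)$ and delay $O(g\log n)$.

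The missing idea is to probe the two children \emph{in parallel}. Dovetail $\mathcal{A}$ on $I_0$ and $I_1$ with a common budget $b=\lceil (f/g)\cdot d\rceil$, where $d=|I|\lceil\log_2 n\rceil$. This costs $O(f+\min\{t(I_0),t(I_1),b\}\cdot g)$ and has two outcomes:
\begin{itemize}
\item It fully lists one child. You print those solutions and recurse only into the other child.
\item It certifies that both children have more than $b$ solutions. You then recurse into both.
\end{itemize}
Under this scheme a non-branching node pays $f$ plus $g$ per solution it actually outputs. Only branching nodes and the current root-to-node path pay $bg$.

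For the counting, take each leaf of the started recursion tree and its lowest ancestor whose parent is branching. These ancestors are pairwise incomparable, and each has more than $b$ solutions. Hence there are at most $j/b+1$ leaves and at most that many branching nodes. There are also at most $d(j/b+1)$ started calls before the $j$-th output.

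Summing gives time $O(fd^2+jg)$ until the $j$-th solution is printed. The Cheater's-Lemma conversion then gives preprocessing $O((f+g+n+m)\log^2 n)$ and delay $O(g)$. Without this device, or an equivalent one that makes the cost of an unbalanced split proportional to its smaller side, your proposal does not establish the $O(g)$ delay.
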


We make use of this theorem in the proof of \Cref{thm:main1}: At the core, we design a fast algorithm for Colored $(H,I)$-Listing, running in time $O(f(n,m) + t \cdot g(n,m))$ for $f(n,m) = \tOh(m^{17.42 \cdot \subw(H)})$ and $g(n,m) = \tOh(1)$. Since $H$ is fixed, $\subw(H)$ is fixed, and thus $f(n,m)$ and $g(n,m)$ can be computed in time $O(1)$. Thus, the above theorem is applicable and yields an algorithm for Colored $(H,I)$-Enumeration with preprocessing time $\tOh(m^{17.42 \cdot \subw(H)})$ and delay $\tOh(1)$. Then we use color coding to convert both algorithms to the uncolored setting. This exemplifies the usefulness of our enumeration-to-listing reduction. We expect that this result will find many applications in the future.

Notably, our listing-to-enumeration reduction applies in a significantly more general form, for any problem over a discrete search space $\set{0, 1}^d$ that allows an efficient self-reduction into two pieces. For this general formulation see \Cref{thm:enumtolist} in \Cref{sec:enumtolist}.

\section{Technical Overview}
In this section we describe the high-level ideas behind our results. We start by describing our algorithms for Colored $(H,I)$-Listing.

\paragraph{Matrix Multiplication Tools} 
At the core of our work, we rely on two tools for multiplying matrices. 
Denote by $\mm(x,y,z)$ the time complexity of multiplying an $x \times y$ with a $y \times z$ matrix, both with nonnegative integer entries.
The first tool is \emph{rectangular} matrix multiplication: $\mm(n,n^{\alfa},n) \le \tOh(n^2)$ for $\alfa = 0.17227$~\cite{Coppersmith82,Williams11}.\footnote{Note that $\alfa$ differs from the usual dual matrix multiplication exponent $\alpha$. The latter is defined such that \smash{$\mm(n, n^\alpha, n) \le O(n^{2+\epsilon})$} for any constant $\epsilon > 0$, and the state-of-the-art bound is $\alpha \geq 0.321334$~\cite{VXXZ24}. In our work, we want to avoid the $n^\epsilon$ overhead and insist on a polylogarithmic overhead, so we settle for $\alfa \geq 0.17227$ defined as above.} 
An implication of this result, which follows from a simple blocking trick, is that $\mm(x,y,z) \le \tOh(xz + y \cdot (xz)^{1-\alfa} (x^{\alfa} + z^{\alfa}))$.

The second tool is \emph{output-sensitive} matrix multiplication. Write $\mm(x,y,z; t)$ for the time complexity of matrix multiplication where the product has $t$ non-zero entries. It was shown in \cite{ABFK24} that
\[\mm(x,y,z; t) \le \tOh\bigg(\max_{\substack{x' \le x, z' \le z\\x'z' \le 8t}} \mm(x',y,z')\bigg). \]
Combining this with rectangular matrix multiplication yields
\begin{equation} \label{eq:rect-mm}
    \mm(x,y,z; t) \le \tOh\left(t + y \cdot t^{1-\alfa} (x^{\alfa} + z^{\alfa})\right).
\end{equation}
See \Cref{sec:mm} for details and a further improvement on the time bound.

\paragraph{A Simple Algorithm for $d$-Stars}
To illustrate the main ideas behind our algorithm, we first describe a particularly simple and clean algorithm for the special case of listing projected-$d$-stars. Recall that the projected-$d$-star is the pattern $(H,I)$ where $H$ is the star graph with center node $0$ and leaf nodes $1, \dots, d$, and we are projecting to the leaves $I = \set{1,\ldots,d}$. Since for now we focus on the colored problem, the host graph $G$ has a correspondingly partitioned vertex set $V = V_0 \cup V_1 \cup \ldots \cup V_d$.

Suppose that there are numbers $D_1,\ldots,D_d$ such that each $v_0 \in V_0$ has between $D_i$ and $2D_i$ neighbors in $V_i$. This can be achieved by splitting $V_0$ into $\log^d n$ classes. (This is sometimes called \emph{degree uniformization}, and it can be avoided, but in this overview we use it for simplicity.)

We partition the leaves $\set{1,\ldots,d}$ into $L \cup R$ such that the products of degrees are roughly balanced. By a greedy assignment, one can ensure
\begin{equation*}
    \max \left( \prod_{i \in L} D_i,\; \prod_{i \in R} D_i \right) \;\le\; \sqrt{ n \cdot \vphantom{\prod} \smash{\prod_{i \in [d]}} D_i } \;\;=:\; B.
\end{equation*}
We write down the left and right incidence matrices: $A_L$ has dimensions $\prod_{i \in L} V_i$ and $V_0$, and entry $A_L[(v_i)_{i \in L}, v_0]$ is 1 if $v_0$ has an edge to each $v_i, i \in L$, and 0 otherwise. Similarly, $A_R$ encodes incidences between $V_0$ and $\prod_{i \in R} V_i$. 
Observe that (the support of) the product $A_L \cdot A_R$ encodes all projected-$d$-star solutions.
In particular, the product has $t$ non-zero entries, where $t$ is the number of projected-$d$-stars. We thus compute the matrix product $A_L A_R$ by output-sensitive matrix multiplication, and read off the list of all projected-$d$-stars.

It remains to analyze the running time. Since each of the at most $n$ vertices in $V_0$ is connected to at most $2^d B$ tuples $\prod_{i \in L} V_i$, the matrix $A_L$ has at most $n \cdot 2^d B = O(n B)$ non-zero rows. We remove the all-zeros rows from $A_L$ to ensure that $A_L$ has at most $O(n B)$ rows. We proceed similarly with~$A_R$ to reduce its number of columns to $O(n B)$. Now observe that any single vertex in $V_0$ generates at least~$\prod_{i=1}^d D_i$ solutions. In other words, we have $t \ge \prod_{i=1}^d D_i$, and thus $B \le (n t)^{1/2}$. Hence, the number of rows in $A_L$ and the number of columns in~$A_R$ are at most $O(n^{3/2} t^{1/2})$, and therefore the matrix multiplication runs in time $O(\mm(n^{3/2} t^{1/2}, n, n^{3/2} t^{1/2}; t))$. Using~\eqref{eq:rect-mm}, we bound this by
\begin{equation*}
    \tOh\left(t + n \cdot t^{1-\alfa} (n^{3/2} t^{1/2})^{\alfa} \right)
    = \tOh\left(t + n^{1 + 3 \alfa / 2} t^{1 - \alfa / 2}\right)
    = \tOh\left(t + n^{2/\alfa + 3} \right),
\end{equation*}
where in the last step we applied the weighted AM-GM inequality.\footnote{Specifically, we used that $a^{\beta} b^{1-\beta} \leq \beta a + (1 - \beta)b$ for the values $\beta = \alfa/2$, $a = n^{(1 + 3 \alfa / 2)/\beta} = n^{2/\alfa + 3}$, and $b = t$.}
The resulting algorithm runs in time $\tOh(t + n^{14.61})$ since $\alfa = 0.17227$. If $n \times n$ matrix multiplication were in time $\tOh(n^2)$, we could use $\alfa = 1$ and the running time would improve to $\tOh(t + n^5)$.

\paragraph{Listing a Restricted Class of Projected Trees}
The core of our work is to generalize the above simple algorithm from stars to a restricted class of projected trees $(H,I)$, where $H$ is a rooted full binary tree (i.e., a rooted tree where each node has either zero or two children) and $I$ is exactly the set of leaves of $H$. 

In our generalized algorithm, for each split of $H$ into two subtrees $H_0, H_1$ intersecting in exactly one non-leaf node $i \notin I$, we list (some) $(H_0,I_0)$-Subgraphs and $(H_1,I_1)$-Subgraphs, where $I_b := (I \cup \{i\}) \cap V(H_b)$. We then write down the incidence matrix $A_b$ between the $(H_b,I_b)$-Subgraphs and the vertices $V_i$, for $b \in \set{0,1}$. By computing the matrix product $A_0 A_1^{\mathrm{T}}$, we effectively glue these subgraphs together and project away $i$, thereby listing (some) $(H,I)$-Subgraphs.
We cannot afford to list all $(H_0,I_0)$-Subgraphs and $(H_1,I_1)$-Subgraphs, since their numbers can be much bigger than~$t$. However, we can list the few ones that have ``many'' extensions to global solutions, and then eliminate them from future considerations. In this way, and by processing all splits $H_0, H_1$ in an appropriate order, we can guarantee that all $(H,I)$-Subgraphs are covered. See \Cref{sec:tree} for details. To make this work, we also need to extend the matrix multiplication tools to compute only the $\kappa$-sparse rows of the product matrix, at a faster running time; see \Cref{sec:mm}.

\paragraph{From Restricted Trees to Arbitrary Trees}
From this point on, the remainder of our algorithm is a sequence of reductions, reducing more and more general problems to the algorithm for restricted trees from the previous paragraph.

Let us consider an arbitrary tree $H$ and an arbitrary $I \subseteq V(H)$. 
First, we observe that leaves not in $I$ can be removed by a simple, linear-time reduction. Second, for each $i \in I$ that is not a leaf, we can split the problem at $i$ into subproblems on the connected components of $H-i$. These two reductions ensure that $I$ is exactly the set of leaves. Finally, after picking a suitable root, we note that nodes with more than two children can be replaced by a path of nodes each of which has two children, and that for a node with one child we can attach a dummy child. (For each new edge in $H$ we also introduce a perfect matching in the host graph.) This turns the tree into a rooted full binary tree.
Altogether, the three reductions turn the tree $H$ and the subset $I$ into a restricted projected tree where the algorithm from the last paragraph applies. This extends the applicability of the algorithm and proves \Cref{thm:main1} for colored listing. See \Cref{sec:reductions1} for details.

\paragraph{From Trees to Hypergraphs}
Now consider an arbitrary hypergraph $H$. The seminal PANDA algorithm~\cite{KNS17,KNS25} for conjunctive queries shows its full power when phrased as a reduction~\cite{Hu24}. In this form, it allows us to reduce general hypergraphs to acyclic hypergraphs (in particular, in the graph setting it reduces to trees). This reduction lets us completely bypass submodular width in our work, and instead we only need to refer to the PANDA algorithm.

From acyclic hypergraphs it is easy to further reduce to trees, by replacing each hyperedge with a node and the intersection of two hyperedges with an edge; see \Cref{sec:reductions2} for details. Altogether, we have turned an arbitrary hypergraph $H$ to a tree, where the algorithm from the last paragraph applies. This extends the applicability of the algorithm and proves \Cref{thm:algohypergraphs} for colored listing.

\paragraph{From Listing to Enumeration} 
So far we discussed Colored $(H,I)$-Listing.
We can extend the algorithm to Colored $(H,I)$-Enumeration by applying our enumeration-to-listing reduction. This reduction roughly works as follows. Suppose there is a listing algorithm that solves an instance $I$ with $t$ solutions in time $f(I) + t \cdot g(I)$. We let this algorithm run with a time budget of $2 f(I)$. If it uses up its time budget and aborts, then we branch on the first bit of the search space, recursively solving subinstances $I_0,I_1$. In the analysis, we use that each inner node of the recursion tree has more than $b := f(I)/g(I)$ solutions, which bounds the size of the recursion tree. Specifically, we show that at the time of printing the $j$-th solution, the size of the recursion tree is $O(d + d j / b)$, where $d = O(\log n)$ bounds the recursion depth. Since each recursive call has a direct time budget of $O(f(I))$, the total time before printing the $j$-th solution is $O(f(I) \cdot d + j \cdot g(I) \cdot d)$.
Thus, we arrived at an ``amortized'' enumeration algorithm, and a well-known contruction converts this into a proper enumeration algorithm with preprocessing time $O(f(I) \cdot d)$ and delay $O(g(I) \cdot d)$.
Here we described a variant of our \Cref{thm:main2-simplified}; the general tradeoff (in particular, the algorithm with preprocessing time $O(f(I) \cdot d^2)$ and delay $O(g(I))$ as claimed in \Cref{thm:main2-simplified}) requires additional ideas and a careful charging argument, see \Cref{thm:enumtolist} in \Cref{sec:enumtolist}.

\paragraph{From Colored to Uncolored Problems} 
So far we discussed the colored problems.
The classic color coding technique~\cite{AYZ95} converts algorithms from the colored setting to the uncolored setting. Specifically, it replaces a given uncolored problem instance by several uncolored problem instances, such that the union of the solution sets of the constructed uncolored instances is the solution set of the original uncolored instance. This readily reduces uncolored listing to colored listing: Run the colored listing algorithm on all colored problem instances constructed by color coding, then compute the union of the listed solutions, e.g., by sorting and deduplicating. For enumeration, by following the same recipe we obtain a ``cheating'' enumeration algorithm which may print the same solution multiple times. The Cheater's Lemma~\cite{CK21} then converts such a cheating enumeration algorithm into a proper enumeration algorithm, see \Cref{sec:uncoloredalgo}.

This finishes the outline of the proof of \Cref{thm:algohypergraphs}.

\paragraph{Faster Algorithm for Projected-$d$-Stars}
For projected-$d$-star-Listing we obtain an improved algorithm in terms of~$m$ (\Cref{thm:star-listing}). This is the most elaborate result of this paper. It starts from a more delicate balancing procedure than we used in the simple projected-$d$-star algorithm described earlier, which solves the ``dense'' case efficiently. In order to apply this procedure, we utilize Shearer's Lemma to identify the dense case. Finally, we use hashing to densify a general instance, and iterative recovery techniques to read out solutions from the desified instance. For more details see \Cref{sec:star-m2}.

\bigskip
This finishes the overview of our algorithmic results. The fine-grained lower bounds presented in this paper are simple reductions from $k$-Clique, varying only in the cardinalities of the central part versus the leaf parts, see \Cref{sec:lower-bounds}. 

\section{Preliminaries}
\label{sec:preliminaries}

We write $[k] := \set{1, \dots, k}$. By $\tOh$-notation we hide factors of the form $\polylog(n)$, or equivalently $\polylog(m)$.

\begin{unnumbered}[Projections and Extensions]
    Let $I \subseteq J$ be sets of indices. For a tuple $\tup{v} = (v_j)_{j \in J}$, we denote $\tup{v}_I := (v_i)_{i \in I}$. We say that $\tup{v}_I$ is $\tup{v}$ \EMPH{projected to} $I$. Interchangeably, we say that $\tup{v}$ \EMPH{extends} $\tup{v}_I$.
    For a tuple $\tup{u} = (u_i)_{i \in I}$ and a set of tuples $\mathcal{S}$, we write $\mathcal{S}[\tup{u}] := \Set{\tup{v} \in \mathcal{S} : \tup{v}_I = \tup{u}}$.
\end{unnumbered}

\begin{unnumbered}[$H$-Copies]
    Let $H = ([k], \mathcal{E})$ be a pattern hypergraph. An \EMPH{uncolored $H$-copy} in a host hypergraph $G = (V,E)$ is a tuple $\tup{v} = (v_1, \dots, v_k) \in V^k$ of distinct vertices such that $\tup{v}_e \in E$ for all $e \in \mathcal{E}$. A \EMPH{colored $H$-copy} in a host hypergraph $G = (V,E)$ with a partition $V = V_1 \cup \cdots \cup V_k$ is a tuple $\tup{v} = (v_1, \dots, v_k) \in \prod_{i \in [k]} V_i$ such that $\tup{v}_e \in E$ for all $e \in \mathcal{E}$.
\end{unnumbered}

To avoid confusion, in the pattern we speak of \emph{nodes}, whereas in the host we speak of \emph{vertices}. 

\begin{unnumbered}[Problem Definitions]
    Let $H = ([k], \mathcal{E})$ be a pattern hypergraph, and let $I \subseteq [k]$ be a set of nodes.
    In the \EMPH{Uncolored $(H,I)$-Subhypergraph} problem, we are given a host hypergraph $G = (V,E)$, and a \EMPH{solution} is any uncolored $H$-copy projected to $I$.
    In the \EMPH{Colored $(H,I)$-Subhypergraph} problem, we are given a host hypergraph $G = (V,E)$ with partition $V = V_1 \cup \ldots \cup V_k$, and a \EMPH{solution} is any colored $H$-copy projected to $I$.

    When $I = [k]$ is the full set of nodes, we abbreviate the ``\,$(H,I)$'' in the problem names to ``$H$''. When $H$ and $G$ are graphs, we further abbreviate ``Subhypergraph'' to ``Subgraph''.
\end{unnumbered}

Our algorithms and reductions will use the following subroutine that filters out all vertices of the host graph that do not participate in a solution. This is achieved by a variant of Yannakakis' classic algorithm~\cite{Yannakakis81}, for details see the \texttt{DeDangle} procedure in~\cite{Hu24}.

\begin{lemma}
    \label{lem:remove-ghost}
    Let $H = ([k], \mathcal{E})$ be a tree. Given a host graph $G = (V_1 \cup \ldots \cup V_k, E)$, in time $O(n + m)$ we can list all vertices $v \in V(G)$ that do not extend to any colored $H$-copy in $G$.
\end{lemma}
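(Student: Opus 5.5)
The plan is a Yannakakis-style semijoin reduction with two passes over a rooted version of $H$. Root $H$ at node $1$. For each node $i$ let $T_i$ denote the subtree of $H$ rooted at $i$.

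\textbf{Bottom-up pass.} We process the nodes in post-order. For a leaf $i$, every vertex of $V_i$ is marked alive-down. For an inner node $i$ with children $c_1,\dots,c_r$, a vertex $v\in V_i$ is marked alive-down if and only if, for every child $c_j$, $v$ has at least one neighbor in $V_{c_j}$ that is alive-down. We compute this by scanning all edges between $V_i$ and $V_{c_j}$ and marking each $v$ that has an alive-down neighbor on that side; this is done for every child edge. By induction on the height of $i$, $v\in V_i$ is alive-down exactly when $v$ extends to a colored copy of $T_i$ with $v$ placed at node $i$. The induction works because distinct child subtrees share no node, and the colored setting imposes no distinctness constraints, so the per-child extensions can be chosen independently and combined. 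Each edge of $G$ lies between $V_a$ and $V_b$ for at most one tree edge $\{a,b\}$ of $H$, since the pair of colors determines it. Edges whose color pair is not a tree edge are ignored. So this pass takes time $O(n+m)$, with the dependence on $k$ absorbed into the constant.

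\textbf{Top-down pass.} We process the nodes in pre-order. At the root, alive equals alive-down. For a child $c$ of $p$, a vertex $u\in V_c$ is alive if it is alive-down and has a neighbor in $V_p$ that is alive. We again compute this by scanning the edges between $V_p$ and $V_c$.

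\textbf{Correctness.} We claim that a vertex is alive exactly when it extends to a full colored $H$-copy. The direction from a copy to aliveness is immediate: the copy witnesses each condition. For the converse, take $u\in V_c$ alive with an alive neighbor $w\in V_p$. By induction from the root, $w$ extends to a full copy $\phi$. Restrict $\phi$ to the nodes outside $T_c$, and glue it to the $T_c$-extension of $u$ guaranteed by alive-down. The edge $\{w,u\}$ connects the two parts, and since $H-T_c$ and $T_c$ share only the tree edge $\{p,c\}$, the glued map is a valid colored copy. Here we again use that $H$ is a tree and that colored copies do not require distinct vertices across colors. Finally, we output all vertices that are not alive.

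\textbf{Expected obstacle.} The only delicate point is the gluing argument in the correctness proof: it relies on the fact that the tree decomposes into edge-disjoint parts at each vertex, which is why the restriction to trees matters. With adjacency lists bucketed by color, which can be built in $O(n+m)$ time by a single pass, the overall time bound $O(n+m)$ follows.
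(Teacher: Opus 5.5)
Your proposal is correct and takes the same approach as the paper, which gives no proof of its own but defers to a variant of Yannakakis' algorithm (the \texttt{DeDangle} procedure of~\cite{Hu24}); that is exactly your bottom-up/top-down semijoin reduction over the rooted tree. Your gluing argument for the top-down pass and your $O(n+m)$ accounting via color-bucketed edge lists fill in the details the paper leaves to that reference.
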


\section{Fast Matrix Multiplication}
\label{sec:mm}
Our algorithms exploit fast (rectangular and output-sensitive) matrix multiplication. Every matrix considered in this paper is non-negative, integral, and represented sparsely as a list of non-zero entries ordered by rows. For a matrix $A$, we write $\supp(A)$ for the set of non-zero entries in $A$, and let $|A| := |\supp(A)|$. We denote by $\mm(x,y,z)$ the time complexity of (deterministically) multiplying an $x \times y$ matrix with an $y \times z$ matrix.
In general form, the exponent of matrix multiplication is defined as $\omega(a,b,c) := \inf \set{ d : \mm(n^a, n^b, n^c) \le O(n^d)}$. The exponent of square matrix multiplication is $\omega := \omega(1,1,1)$, and the currently best bound is $2 \leq \omega < 2.3714$ \cite{ADVXXZ25}.

\emph{Rectangular} matrix multiplication normally refers to the dual exponent of matrix multiplication $\alpha := \sup \set{ d : \omega(1,d,1) = 2}$, and the currently best bound is $0.321334 \le \alpha \le 1$~\cite{VXXZ24}. 
Note that $\alpha$ is defined in such a way that \smash{$\mm(n, n^\alpha, n) \le O(n^{2+\eps})$} for any constant $\eps > 0$. In this paper, we need to avoid the $n^\eps$ overhead as it would translate into $m^{\eps}$ delay for our enumeration algorithms.
Therefore, instead of using $\alpha$, we start from a value $\alfa$ such that $\mm(n,n^\alfa,n) \le \tOh(n^2)$. 
Williams~\cite[Appendix~B]{Williams11} verified that Coppersmith's rectangular matrix multiplication algorithm~\cite{Coppersmith82} runs in time $O(n^2 \log^2 n) = \tOh(n^2)$, so we can pick $\alfa := 0.17227$ as the constant of Coppersmith's algorithm.\footnote{Williams~\cite{Williams11} only states a value of $0.1$, but the argument also works for Coppersmith's original constant $0.17227$.} The state-of-the-art improvements of $\alpha$ (e.g.~\cite{VXXZ24}) do not seem to lead to algorithms running in time $\tOh(n^2)$.
In later calculations, we will use that $3/\alfa \le 17.42$ for $\alfa = 0.17227$. Furthermore, if $\mm(n,n,n) = \tOh(n^2)$, then we could pick $\alfa = 1$ and would obtain $3/\alfa \le 3$.

Rectangular matrix multiplication implies the following, via a simple blocking trick.
\begin{proposition}
    \label{prp:mm}
    $\mm(x,y,z) \le \tOh\left( xz + y (xz)^{1-\alfa} (x^{\alfa} + z^{\alfa}) \right)$.
\end{proposition}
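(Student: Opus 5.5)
Without loss of generality assume $x \le z$; the case $x \ge z$ is symmetric, and the two cases together give the sum $x^{\alfa} + z^{\alfa}$. The plan is to reduce everything to square-ish products with inner dimension matched to Coppersmith's bound $\mm(n,n^{\alfa},n) \le \tOh(n^2)$. The first step is to handle the unbalanced outer dimensions. We split the $y \times z$ matrix into $\lceil z/x \rceil$ column blocks, each of width at most $x$. This reduces the problem to $\lceil z/x\rceil$ products of shape $x \times y$ times $y \times x$.

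The second step handles the inner dimension for a single $x \times y \times x$ product. If $y \le x^{\alfa}$, we pad the inner dimension to $x^{\alfa}$ and apply Coppersmith's algorithm directly, in time $\tOh(x^2)$. Otherwise, we cut the inner dimension into $\lceil y/x^{\alfa}\rceil$ chunks of size $x^{\alfa}$. Each chunk gives an $x \times x^{\alfa} \times x$ product costing $\tOh(x^2)$, and we add up the partial products. Adding $y/x^{\alfa}$ dense $x\times x$ matrices costs $O(x^2)$ per chunk, which is within budget. Hence one block costs $\tOh(x^2 + y x^{2-\alfa})$.

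The third step sums over the blocks. Multiplying by $z/x$ (the rounding up contributes at most one extra block, absorbed since $z \ge x$) gives
\[ \tOh\left(xz + y\, z\, x^{1-\alfa}\right). \]
It remains to compare this with the target. We have $y (xz)^{1-\alfa} z^{\alfa} = y\, x^{1-\alfa} z$, so the expression is exactly the $z^{\alfa}$ term of the claimed bound. When $x\ge z$ we block the other way and obtain the $x^{\alfa}$ term.

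The remaining work is bookkeeping for the sparse representation. The inputs are given as sorted lists of nonzeros, so we first convert them to dense arrays. This costs $O(xy + yz)$, and I should check that this is dominated by the claimed bound. For $x\le z$, the term $yz \le y x^{1-\alfa}z$ is fine. The term $xy$ is also fine, since $xy \le y x^{1-\alfa}z$ because $x^{\alfa}\le z$. Finally, we write the output sparsely by scanning the dense result, which costs $O(xz)$. The main obstacle I expect is only this care with rounding and with the small cases, for example $x^{\alfa} < 1$-type degeneracies or $y < x^{\alfa}$. I would handle these by padding and by using $\lceil\cdot\rceil \le 1 + \cdot$, with the resulting additive overheads absorbed into the $xz$ term.
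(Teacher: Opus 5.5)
Your proof is correct and follows the paper's argument: assuming $x \le z$, you block $A$ into $x \times \lfloor x^{\alfa}\rfloor$ pieces and $B$ into $\lfloor x^{\alfa}\rfloor \times x$ pieces, which gives $O(z/x + yz/x^{1+\alfa})$ Coppersmith products of cost $\tOh(x^2)$ each, and you handle $x \ge z$ symmetrically. Your extra accounting of the $O(xy+yz)$ sparse-to-dense conversion, which the paper leaves implicit, is correct and is indeed dominated by the claimed bound.
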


\begin{proof}
    Suppose first that $x \leq z$. Given input matrices $A, B$, we break $A$ into blocks of shape $x \times \floor{x^\alfa}$, and $B$ into blocks of shape $\floor{x^\alfa} \times x$, padding with zeros when necessary. This way, computing $AB$ reduces to computing $\ceil{y / \floor{x^\alfa}} \cdot \ceil{z/x}$ products of an $x \times \floor{x^\alfa}$ matrix with an $\floor{x^\alfa} \times x$ matrix, each of which can be computed in time $\mm(x, \floor{x^\alfa}, x) \le \tOh(x^2)$. Therefore,
    \[
        \mm(x,y,z) \leq O\left( \frac{z}{x} + \frac{yz}{x^{1+\alfa}} \right) \cdot \tOh(x^2)
        = \tOh(xz + y (xz)^{1-\alfa} z^\alfa).
    \]
    In the symmetric case $x \geq z$, the same argument applies by swapping $x$ and $z$. 
\end{proof}

The second tool that we will use is \emph{output-sensitive} matrix multiplication. In \cite[Lemma 3.10]{ABFK24}, it was shown that multiplying $A \in \N^{x \times y}$ with $B \in \N^{y \times z}$ is in time
\[ \tOh\Bigg( |A| + |B| + \max_{\substack{x' \le x, z' \le z\\x'z' \le 8s}} \mm(x',y,z')\Bigg) \]
where $s := |AB|$. Combining with \Cref{prp:mm} gives a bound $\tOh\left(|A| + |B| + s + y s^{1-\alfa} (x^{\alfa} + z^{\alfa})\right)$. Unfortunately, this is insufficient for our purpose. We will need a generalization that only computes sparse rows of the product at an improved running time. To this end, we need to repeat parts of the algorithm of \cite{ABFK24}. The starting point is a variant of \cite[Lemmas 3.2]{ABFK24} where we impose a sparsity condition $\kappa$ on each row of the product matrix, which allows us to bound the compressed dimension $z'$ by $4\kappa$.

\begin{restatable}{lemma}{RecoverProduct}
    \label{lem:recover}
    There is a deterministic algorithm \algorecover{$A,B,S,\kappa$} that, given $A \in \N^{x \times y}$, $B \in \N^{y \in z}$, $S \supseteq \supp(AB)$ and $\kappa \geq \max_{i \in [x]} \Card{\set{j \in [z] : (i,j) \in S}}$, computes $AB$ in time
    \[
        \tOh\Bigg( |A| + |B| + |S| +
        \max_{\substack{x' \leq x, z' \leq 4\kappa,\\ x'z' \leq 4|S|}}
        \mm(x', y, z') \Bigg).
    \]
\end{restatable}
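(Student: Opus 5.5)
We will follow the compression-and-recovery scheme behind \cite[Lemma 3.2]{ABFK24}, changing only how the compressed column dimension is chosen so that it depends on the row-sparsity bound $\kappa$ and not on $z$. The plan is to hash the $z$ columns of $B$ into $z' = O(\kappa)$ buckets with a deterministic family of maps $h \colon [z] \to [z']$. For each map we compute the compressed product $A \cdot B_h$, where $B_h[\ell, c] := \sum_{j : h(j) = c} B[\ell, j]$ (and analogously a weighted variant $\sum_{j: h(j)=c} j \cdot B[\ell,j]$). An entry $(i,j) \in S$ is \emph{isolated} under $h$ if no other $j'$ with $(i,j') \in S$ satisfies $h(j') = h(j)$. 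For an isolated entry, the bucket $(i,h(j))$ of $A B_h$ equals $(AB)[i,j]$, since all other contributions in that bucket come from columns $j'$ with $(AB)[i,j'] = 0$ because $S \supseteq \supp(AB)$. So we can read off $(AB)[i,j]$ directly. Since we know $S$, we know which entries are isolated under $h$, and there is no need for the weighted trick to identify positions; the plain sums suffice.

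The steps, in order, are as follows. (1) Fix the hash family. We take $h_p(j) = j \bmod p$ for the primes $p$ in a range $[4\kappa, \Theta(\kappa \log z)]$. Two distinct columns $j, j' \le z$ collide modulo only $O(\log z/\log \kappa)$ such primes. Row $i$ has at most $\kappa$ entries of $S$, so averaging over $\Theta(\log z)$ primes in the range shows that every $(i,j) \in S$ is isolated for at least half of them. The alternative is an iterative peeling approach in the style of ABFK24: after one round, subtract the recovered contributions and recurse on the unrecovered entries of $S$. Either way, $O(\log)$ many maps suffice, each with $z' \le 4\kappa$ up to logarithmic factors. We can avoid even those factors by splitting the buckets further, so that $z' \le 4\kappa$ exactly as stated. (2) For each map, compute $A B_h$ restricted to the rows that still contain unrecovered entries of $S$. (3) Read off the isolated entries, and at the end return all recovered nonzero values.

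For the running time, building $B_h$ costs $O(|B|)$. The product $A' B_h$ uses only rows $i$ with at least one entry in $S$, so the number of such rows is $x' \le \min(x,|S|)$. We must also ensure $x' z' \le 4|S|$. To do this, we group rows by their count $s_i = |\{j : (i,j) \in S\}|$ into dyadic classes $s_i \in [2^q, 2^{q+1})$. For class $q$ we use $z' = 4 \cdot 2^{q+1}$ buckets, capped at $4\kappa$. The class contains at most $|S|/2^q$ rows, so $x' z' = O(|S|)$, which is within the stated maximum after adjusting constants. There are $O(\log)$ classes and $O(\log)$ hash maps per class, and each product costs at most $\mm(x',y,z')$ with $x' \le x$, $z' \le 4\kappa$ and $x'z' \le 4|S|$. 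Finally, reading off the isolated entries costs $O(|S|)$ per map, and forming the sparse inputs costs $O(|A|+|B|)$.

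The main obstacle is step (1): obtaining a \emph{deterministic} hash family in which every entry of $S$ is isolated at least once, while keeping the bucket count within the constant $4\kappa$ (or $4 \cdot 2^{q+1}$ per class) and losing only polylogarithmic factors overall. Our first attempt will be the mod-prime argument with the iterative peeling from ABFK24. After peeling, the remaining load per row halves, so the constant in $z'$ can stay fixed, and the loss is limited to $O(\log)$ rounds.
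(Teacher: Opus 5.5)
\textbf{There is a genuine gap in step (1), the construction of the isolating hash family.} The rest of your proposal is the paper's architecture: dyadic row classes by $z_i=|\{j:(i,j)\in S\}|$, compressing the columns of $B$ into $O(2^q)$ buckets for class $q$, computing the compressed product on that class, reading off the isolated entries of $S$, and the accounting $z'=O(\kappa)$, $x'z'=O(|S|)$. Step (1) is the only non-routine ingredient, and neither of your two routes supplies it.

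\emph{The mod-prime route fails.} Fix $L=\Theta(\log z)$ primes $p_1,\dots,p_L\ge 4\kappa$. A row may contain $j$ together with $j+p_1,\dots,j+p_L$, which is allowed as soon as $\kappa>L$. Then $(i,j)$ is isolated under none of your maps. Your averaging only bounds the number of bad primes for $(i,j)$ by $(\kappa-1)\cdot O(\log z/\log\kappa)$. So for every entry to be good for half the primes, you need $\Omega(\kappa\log z/\log\kappa)$ primes, essentially all primes in your range. That is polynomially many maps, each costing a full product $\mm(x',y,z')$, so the running time grows by a factor of roughly $\kappa$.

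\emph{Peeling does not rescue it.} Subtracting recovered values and isolating only with respect to the remaining points is valid. It reduces the task to the following: in each round, deterministically find one map isolating a constant fraction of the remaining points, within $\tOh(|S|)$ time. Averaging over all primes shows such a prime exists, but you give no way to find it. Testing one candidate costs $\Theta(|S|)$, and nothing prevents you from testing $\Theta(\kappa)$ bad candidates first.

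\emph{What is missing.} You need a deterministic construction of a small isolating family. The paper takes it as a black box from \cite[Lemma 3.6]{ABFK24}, restated as \Cref{lem:isolation-2d}. For rows with fewer than $s$ points of $S$, it gives at most $\log|S|+1$ maps $[z]\to[2s]$ such that every point is isolated under one of them, computable in time $\tOh(z+|S|)$. The key is a derandomization of the random-hash argument (compare the method of conditional expectations in the proof of \Cref{lem:isolation}), not a fixed number-theoretic family. The paper applies it to class $I_l$ with $s=2^{l+1}$, giving $z'=2s\le 4\kappa$ and $x'z'\le 4|S|$. With that lemma in place, your proof coincides with the paper's.

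A minor point: to reduce a compression with $O(\kappa\log z)$ buckets to $z'\le 4\kappa$, split the product into column blocks, using $\mm(x',y,cz')\le\lceil c\rceil\,\mm(x',y,z')$. Do not ``split buckets'', which would not reduce the bucket count.
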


The proof is essentially the same as in \cite{ABFK24}; see \Cref{sec:omitted-proofs} for details. Building on this lemma, we extend \cite[Lemma 3.10]{ABFK24} to the following:
\begin{lemma}[Sparse Output-Sensitive Matrix Multiplication]
    \label{lem:sparse-mm}
    There is a deterministic algorithm \algomult{$A,B,\kappa$} that, given $A \in \N^{x \times y}$, $B \in \N^{y \times z}$ and $\kappa \geq 1$, outputs the product matrix $AB$ restricted to \EMPH{light} rows, i.e., rows that contain at most $\kappa$ non-zeros. Denoting $s = \min\set{|AB|, x\kappa}$, the algorithm runs in time 
    \[
        \tOh\Bigg( |A| + |B| + s + \max_{\substack{x' \leq x, z' \leq 8\kappa,\\ x'z' \leq 8s}}
         \mm(x', y, z') \Bigg).
    \]
    In particular, its running time can be bounded by $\tOh\left(|A| + |B| + s + y s^{1-\alfa} (x^\alfa + \kappa^\alfa)\right)$.
\end{lemma}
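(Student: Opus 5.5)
The plan follows the structure of \cite[Lemma 3.10]{ABFK24}: first estimate the support of each row of $AB$ by cheap means, then use \algorecover from \Cref{lem:recover} with a support superset $S$ that has at most $O(\kappa)$ entries per row. Concretely, we iterate over a geometric sequence of guesses $\kappa' = 1, 2, 4, \dots, \kappa$ (only $O(\log \kappa)$ rounds). In the round with guess $\kappa'$ we handle the rows whose true number of non-zeros lies in $(\kappa'/2, \kappa']$, together with rows already known to be at most this sparse.

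The key ingredient is a procedure that, for a given set of rows $X$, decides which rows have at most $\kappa'$ non-zeros and computes those rows exactly. For this we use a hashing or compression step, as in \cite[Lemmas 3.3--3.9]{ABFK24}. Compress the column dimension $[z]$ to $O(\kappa')$ buckets (deterministically, via the small-support recovery/verification tools of that paper). Compute the compressed product $A \cdot B'$, where $B'$ sums the columns of $B$ within each bucket; this is a product of dimensions $|X| \times y \times O(\kappa')$. From it we recover candidate supports of size at most $O(\kappa')$ per row and verify them, for instance by comparing row sums $A(B\mathbf 1)$, using that all entries are non-negative so no cancellation occurs. Rows that fail verification are declared heavy for this round and passed to the next, larger guess. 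Rows that survive at guess $\kappa$ are exactly the light rows, and rows still failing are discarded. Their contribution to the output size is only charged as $x\kappa$, which is why the statement uses $s=\min\{|AB|,x\kappa\}$.

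For the running time, in each round the rows still active have at least $\kappa'/4$ non-zeros each, or are processed in a batch whose total candidate support is at most $O(s)$. So the number $x'$ of active rows satisfies $x'\kappa' \le O(s)$. Each round therefore costs $\mm(x', y, O(\kappa'))$ with $x' \le x$, $\kappa' \le 8\kappa$ and $x'\kappa'\le 8s$, plus $\tOh(|A|+|B|+s)$ for building compressed matrices and calling \algorecover with $|S| = O(s)$. Summing over the $O(\log \kappa)$ rounds gives the claimed maximum. The ``in particular'' bound then follows by plugging \Cref{prp:mm} into $\mm(x',y,z')$ with $z'\le 8\kappa$ and $x'z'\le 8s$:
\[ \tOh\left(x'z' + y(x'z')^{1-\alfa}(x'^{\alfa}+z'^{\alfa})\right) \le \tOh\left(s + y s^{1-\alfa}(x^{\alfa}+\kappa^{\alfa})\right). \]

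The main obstacle is the bookkeeping that keeps the matrix products balanced, i.e.\ ensuring $x'z'\le 8s$ in every round. One must guarantee that a row is only carried into a round with guess $\kappa'$ if it provably has $\Omega(\kappa')$ non-zeros, or is light and cheaply finished. This requires a threshold test on the compressed product that is deterministic and that never discards a light row; I would adapt the ``detect heavy rows'' argument of \cite{ABFK24} directly.
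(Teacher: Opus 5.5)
There is a genuine gap, and it is exactly the step you defer to \cite{ABFK24}. The per-round subroutine you need is this: given a set of rows and a threshold $\kappa'$, identify the rows of $AB$ with at most $\kappa'$ non-zeros and compute them, at the cost of one product of width $O(\kappa')$. That is just the lemma itself with $\kappa'$ in place of $\kappa$, so the geometric loop over $\kappa'$ reduces nothing. Moreover, the mechanism you sketch for it cannot work, for three reasons.

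\emph{Deterministic compression.} The deterministic isolating families that are available (\Cref{lem:isolation-2d}) take the support superset $S$ as \emph{input}. Without already knowing (a superset of) $\supp(AB)$, you have no deterministic way to choose a compression $[z]\to[O(\kappa')]$ that separates the non-zeros of each row.

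\emph{Reading off supports.} A non-zero entry of the bucketed product only tells you that some column among the up to $z/\kappa'$ columns of that bucket is non-zero. So no candidate support of size $O(\kappa')$ can be read off.

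\emph{Verification.} The row-sum check is vacuous. Every column lands in exactly one bucket, so $(AB')\mathbf{1} = A(B\mathbf{1})$ holds identically, and a collision can never be detected this way. If you instead evaluate a non-superset candidate set via \texttt{recover}, collisions with uncovered non-zeros inflate the values and can make the sum check pass wrongly.

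The missing idea is how to produce, before knowing the product, a set $S\supseteq\supp(AB)$ on the relevant rows with at most $O(\kappa)$ entries per row and $|S|=O(s)$. The paper does this by recursive column folding.
\begin{itemize}
\item Pad $z$ to a power of two and let $B'_{k,j}=B_{k,2j-1}+B_{k,2j}$, so that $(AB')_{i,j}=(AB)_{i,2j-1}+(AB)_{i,2j}$.
\item By non-negativity, each row of $AB'$ has at most as many non-zeros as the same row of $AB$. Hence every light row of $AB$ is light in $AB'$.
\item Recursively compute $AB'$ on its light rows $L'$ (with the same $\kappa$).
\item Unfold each non-zero $(i,j)$ into $(i,2j-1),(i,2j)$. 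This gives $S\supseteq\supp(A_{L'}B)$ with at most $2\kappa$ entries per row and $|S|\le 2\min\{|AB|,x\kappa\}=2s$.
\item Call \texttt{recover}$(A_{L'},B,S,2\kappa)$ and return the light rows of the result.
\end{itemize}
\Cref{lem:recover} then gives $z'\le 8\kappa$ and $x'z'\le 4|S|\le 8s$ directly, and the recursion has depth $\log z$. The dyadic grouping of rows by candidate count happens inside \texttt{recover}, so no guessing of $\kappa'$ is needed.

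Your running-time bookkeeping would be fine given such an $S$, and your derivation of the ``in particular'' bound from \Cref{prp:mm} is correct. What is missing is the construction of $S$.
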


\begin{proof}
    By padding the matrix $B$ with all-zero columns, we may assume that $z$ is a power of two. We write $M_L$ for the restriction of a matrix $M$ to rows $L$. The algorithm works recursively:

    \begin{pseudocode}
    \Function{\algomult{$A,B,\kappa$}}{
        \If{$z = 1$}{
            compute $C := AB$ by the naive algorithm\;
        }\Else{
            construct $B' \in \N^{y \times (z/2)}$ where $B'_{k,j} = B_{k,2j-1} + B_{k,2j}$\;
            $C' := \text{\algomult{$A,B',\kappa$}}$\;
            $L' := \rows(C')$\;
            $S := \bigcup_{(i,j) \in \supp(C')} \Set{(i,2j-1), (i,2j)}$\;
            $C := \text{\algorecover{$A_{L'}, B, S, 2\kappa$}}$\;
        }
        let $L$ be the set of light rows of $C$\;
        \Return $C_L$\;
    }
    \end{pseudocode}

    \smallskip
    We claim the following:
    \begin{enumerate}[(i),topsep=5pt,parsep=2pt]
        \item the computed set $L$ is the set of light rows of $AB$; and
        \item the returned matrix is $A_L B$.
    \end{enumerate}
    
    If $z = 1$ then the claim holds trivially. Now consider $z \geq 2$. We inductively assume that the claim holds for the recursive call. Here are some simple observations:
    \begin{enumerate}[(i),resume]
        \item By the definition of $B'$, we have $(AB')_{i,j} = (AB)_{i,2j-1} + (AB)_{i,2j}$. Hence, $(i,j) \in \supp(AB')$ if and only if $(i,2j-1) \in \supp(AB)$ or $(i,2j) \in \supp(AB)$.
        
        \item $S = \bigcup_{(i,j) \in \supp(A_{L'} B')} \Set{(i,2j-1), (i,2j)}$ by the inductive assumption (ii). Hence $S \supseteq \supp(A_{L'} B)$ by (iii).

        \item $2\kappa \geq \max_{i \in L'} \Card{\set{j \in [z] : (i,j) \in S}}$. Indeed, each row $i \in L'$ in $AB'$ is light by the inductive assumption (i). In other words, it contains at most $\kappa$ non-zeros. Each non-zero in row $i$ contributes two elements in $\set{j \in [z] : (i,j) \in S}$, while each zero does not contribute any. So the set has cardinality at most $2\kappa$.
    \end{enumerate}
    Observations (iv) and (v) verify the preconditions of \algorecover{$A_{L'}, B, S, 2\kappa$}, so it computes $C = A_{L'} B$ by \Cref{lem:recover}. Hence, the returned matrix is $C_L = A_L B$, which establishes (ii).
    To establish (i), note that if row $i$ is light in $AB$, then it is also light in $AB'$ by~(iii), so $i \in L'$ by the induction assumption (i). Consequently, every light row of $AB$ appears in $C$, so $L$ (the set of light rows in $C = A_{L'} B$) is exactly the set of light rows of $AB$. The induction is complete.

    Let us bound the running time. As we descend in the recursion, the parameters $x, \kappa$ stay the same, and we have $|AB'| \leq |AB|$ by (iii). Hence the parameter $s := \set{|AB|, x\kappa}$ does not increase. Besides, the parameter $z$ halves on each level, so the depth of the recursion is $\log z = \tOh(1)$.
    
    Now we focus on a fixed recursion level. In the case $z = 1$, computing $C$ by the trivial algorithm takes time $O(|A|)$ because $B$ is just a column vector. In the case $z \geq 2$, we can bound $|S| = 2|C'| \leq 2|AB'| \leq 2|AB|$ by definition, and $|S| \leq 2x\kappa$ by (v). Hence $|S| \leq 2s$. Constructing the matrix $B'$ takes time $O(|B|)$. Computing $S$ takes time $O(|S|) \leq O(s)$. Computing $C$ takes time
    \[
        \tOh \Bigg( |A| + |B| + s + \max_{\substack{x' \leq x, z' \leq 8\kappa,\\ x'z' \leq 8s}}
        \mm(x', y, z') \Bigg)
    \]
    by \Cref{lem:recover}. Computing $L$ takes time $O(|L'|) \leq O(|S|) \leq O(s)$. Summing all these terms over all $\tOh(1)$ levels, the total running time is as claimed.
    
    The ``in particular'' part of the lemma follows by plugging in the bound from Proposition~\ref{prp:mm}.
\end{proof}

\section{Listing Restricted Projected Trees}
\label{sec:tree}

In this section, we present an algorithm that solves $(H,I)$-Listing for any rooted full binary tree $H$ and $I$ being its set of leaves. 
A \emph{rooted tree} is a tree with a designated root node. We think of the edges being directed away from the root, although in reality all graphs in this paper are undirected. Accordingly the \emph{children} of a node are its neighbors that have larger distance to the root than the node itself. The \emph{leaves} of a rooted tree are the nodes without children. Note that the leaves of a rooted tree are almost the same as the leaves of the corresponding unrooted tree, except that if the root has exactly one child then it becomes a leaf in the unrooted tree, while the root is no leaf of the rooted tree (unless the tree is a single node).
A \emph{rooted binary tree} is a rooted tree where every node has at most two children.
A \emph{rooted full binary tree} is a rooted tree where every node has either no child or exactly two children.

Our correctness proofs will use a simple fact about non-negative integer matrix products.
\begin{lemma}
    \label{lem:mm-restriction}
    Let $A \in \N^{x \times y}$, $B \in N^{y \times z}$ and $\kappa \geq 1$; we say that a row is light if it contains at most $\kappa$ non-zeros. Let $L \subseteq [y]$ be the set of light rows of $B$. Let $B'$ be the matrix $B$ restricted to rows in $L$, and let $A'$ be the matrix $A$ restricted to columns in $L$. Then the set of light rows of $AB$ is exactly \smash{$\Lambda = \Set{i \in [x] : \text{$i$ is a light row of $A'B'$ and $A_{ik} = 0$ for all $k \notin L$}}$.} Moreover, $AB$ and $A'B'$ agree on rows~$\Lambda$.
\end{lemma}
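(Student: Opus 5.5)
The plan is to exploit nonnegativity: a sum of nonnegative terms is nonzero exactly when some term is nonzero. Row $i$ of $AB$ is $\sum_k A_{ik} B_{k,\cdot}$, a nonnegative combination of rows of $B$. Hence its support satisfies
\[ \supp((AB)_{i,\cdot}) = \bigcup_{k : A_{ik} \neq 0} \supp(B_{k,\cdot}), \]
and the analogous identity for $A'B'$ restricts the union to $k \in L$. I would write down both identities first, since everything follows from them.

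\emph{Light rows of $AB$ lie in $\Lambda$.} Let $i$ be a light row of $AB$. If some $k \notin L$ had $A_{ik} \neq 0$, the support of row $i$ of $AB$ would contain the support of the heavy row $B_{k,\cdot}$. That row has more than $\kappa$ non-zeros, contradicting lightness. So $A_{ik} = 0$ for all $k \notin L$. Consequently row $i$ of $AB$ equals $\sum_{k \in L} A_{ik} B_{k,\cdot}$, which is row $i$ of $A'B'$. That row is therefore light, and $i \in \Lambda$.

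\emph{$\Lambda$ consists of light rows of $AB$.} Let $i \in \Lambda$. Since $A_{ik} = 0$ for all $k \notin L$, the same computation shows that row $i$ of $AB$ equals row $i$ of $A'B'$. The latter is light by the definition of $\Lambda$, so row $i$ of $AB$ is light. This argument also gives the ``moreover'' statement: $AB$ and $A'B'$ agree on every row in $\Lambda$.

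No step is a real obstacle. The only point needing care is that nonnegativity prevents cancellation, which is what makes the support of a row of $AB$ a union of supports of rows of $B$.
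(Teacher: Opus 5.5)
Your proposal is correct and follows essentially the same route as the paper. Both arguments first show that if $A_{ik}=0$ for all $k \notin L$ then row $i$ of $AB$ equals row $i$ of $A'B'$, and both rule out $A_{ik} \neq 0$ for a heavy row $k$ by noting that row $i$ of $AB$ would then have at least as many non-zeros as row $k$ of $B$. The paper phrases that second step as $(AB)_{ij} \geq A_{ik}B_{kj} \geq B_{kj}$ using integrality, whereas your support-union identity needs only nonnegativity.
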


\begin{proof}
    We make the following claim: If $A_{ik} = 0$ for all $k \notin L$, then $(AB)_{ij} = (A'B')_{ij}$ for all $j \in [z]$.
    Indeed, we have
    \[ (A'B')_{ij} = \sum_{k \in L} A'_{ik} B'_{kj} = \sum_{k \in L} A_{ik} B_{kj} = \sum_{k \in [y]} A_{ik} B_{kj} \]
    where the last step used that $A_{ik} = 0$ for all $k \notin L$.
    This shows that $AB$ and $A'B'$ agree on $\Lambda$. It remains to prove that the set of light rows in $AB$ is exactly $\Lambda$.
    
    For one direction, consider any $i \in \Lambda$. Since $A_{ik}=0$ for all $k \notin L$, the claim implies that row $i$ has the same number of non-zeros in $AB$ and $A'B'$. Since $i$ is a light row in $A'B'$, it is also a light row in~$AB$.
    
    For the other direction, consider any light row $i$ of $AB$. First, suppose that there exists $k \notin L$ with $A_{ik} > 0$. Then for all $j \in [y]$ we have $(AB)_{ij} \geq A_{ik} B_{kj} \geq B_{kj}$ since the matrix is integral. Hence the number of non-zeros of row $i$ in $AB$ is at least the number of non-zeros of row $k$ in $B$. Since $k \notin L$, row $k$ in $B$ has more than $\kappa$ non-zeros. Hence, $i$ is not a light row of $AB$, a contradiction.
    
    Therefore, we can now assume that $A_{ik} = 0$ for all $k \notin L$. Then the claim implies that row $i$ has the same number of non-zeros in $AB$ and $A'B'$. Since $i$ is light in $AB$, it is also light in $A'B'$.
\end{proof}

Before we can present our algorithm, we discuss two subroutines.

\begin{lemma}
    \label{lem:partial-list}
    There is an algorithm \algolight{$H,I,\rho \mid G, \kappa$} such that:
    \begin{itemize}
        \item As preconditions, $H$ is a rooted binary tree with root $\rho$ and leaves $I$; $G$ is a graph with $n$ vertices that has a part $V_i$ for each $i \in V(H)$ (it might have more parts but they are irrelevant); and $\kappa \geq 1$.
        \item Writing $\sol$ for the set of solutions to $(H,I\cup\set{\rho})$-Listing on $G$ (or, more precisely, $G[V_l : l \in V(H)]$), the algorithm outputs the set $L_\rho = \Set{v_\rho \in V_\rho : |\sol[v_\rho]| \leq \kappa}$ along with $\sol[v_\rho]$ for each $v_\rho \in L_\rho$.
        \item The algorithm runs in time $\Otilde\left(n^{2-\alfa} \kappa + n^2 \kappa^{1-\alfa}\right)$. The dependence on $k$ hidden by $\Otilde$-notation is polynomial.
    \end{itemize}
\end{lemma}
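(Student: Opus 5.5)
We proceed by recursion on the rooted binary tree $H$, from the leaves up, computing for every node $i$ of $H$ the set $L_i$ of light vertices of $V_i$ together with their solution sets, where the solutions are taken with respect to the subtree $H_i$ rooted at $i$. For a leaf $\rho$ (so $H$ is a single node and $I=\set{\rho}$), every $v_\rho$ has exactly the one solution $(v_\rho)$, so $L_\rho = V_\rho$ and the output is trivial in time $O(n)$.

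Next let $\rho$ have a single child $c$. A solution for $v_\rho$ is any solution of $H_c$ at some neighbour $v_c \in V_c$ of $v_\rho$. Let $E$ be the $V_\rho\times V_c$ adjacency matrix and $M_c$ the $V_c \times \sol_c$ matrix with $M_c[v_c,\tup{u}]=1$ iff $\tup{u}\in\sol_c[v_c]$. Then $\sol[v_\rho]$ is exactly the support of row $v_\rho$ of $EM_c$. We call \algomult{$E, M_c, \kappa$} from \Cref{lem:sparse-mm} to obtain the light rows. The problem is that $M_c$ is only available on the light rows $L_c$. By \Cref{lem:mm-restriction}, however, it suffices to multiply $E$ restricted to columns $L_c$ by $M_c$ restricted to rows $L_c$, and then keep only those rows $v_\rho$ that are light in this product and have no neighbour in $V_c\setminus L_c$. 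These are exactly the light rows of the true product, with the correct entries.

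If $\rho$ has two children $c_1,c_2$, a solution is a pair $(\tup{u}_1,\tup{u}_2)$ with $\tup{u}_b\in\sol_{c_b}[v_{c_b}]$ for neighbours $v_{c_b}$ of $v_\rho$. We first compute, as above, the matrices $P_b$ (rows $V_\rho$, columns the partial solutions of $H_{c_b}$), restricted to their $\kappa$-light rows. A vertex $v_\rho$ is light in the whole tree iff it has at least one extension on each side and $|\sol_1[v_\rho]|\cdot|\sol_2[v_\rho]|\le\kappa$. Vertices without an extension on one side contribute no solutions at all; they can be removed beforehand by \Cref{lem:remove-ghost}, or simply detected as empty rows. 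If both sides are nonempty, a heavy side forces $v_\rho$ to be heavy. So we take the light rows of both $P_b$, keep those rows whose product of sizes is at most $\kappa$, and output the Cartesian products, in time $O(\kappa)$ per vertex. Since the tree has $\poly(k)$ nodes, this recursion gives the claimed polynomial dependence on $k$.

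The running time is governed by each call to \Cref{lem:sparse-mm}. The row dimension is $x\le n$, the inner dimension is $y = |L_c| \le n$, the output size is $s\le n\kappa$, and the input sizes are $|E| \le n^2$ and $|M_c|\le n\kappa$. The bound $\tOh(|A|+|B|+s+ys^{1-\alfa}(x^\alfa+\kappa^\alfa))$ therefore gives $\tOh(n^2 + n\kappa + n(n\kappa)^{1-\alfa}n^\alfa)=\tOh(n^2+n^{2}\kappa^{1-\alfa})$. The term $n^{2-\alfa}\kappa$ arises when $\kappa > n$ and the $\kappa^\alfa$ summand dominates, giving $n\cdot(n\kappa)^{1-\alfa}\kappa^\alfa=n^{2-\alfa}\kappa$. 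The main obstacle is handling the column dimension honestly, since the columns of $M_c$ are indexed by tuples. We relabel only the nonempty columns, of which there are at most $n\kappa$, but the lemma's bound is independent of $z$ anyway. A second point of care is verifying that the preconditions of \Cref{lem:mm-restriction} hold exactly when light rows are known only partially.
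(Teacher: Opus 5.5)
Your proposal follows the paper's proof essentially step for step: the same leaf / one-child / two-child recursion, \Cref{lem:mm-restriction} combined with the sparse output-sensitive multiplication of \Cref{lem:sparse-mm} on the restriction to $L_c$ in the one-child case, the two-child case reduced to two one-child calls rooted at $\rho$ followed by Cartesian products, and the same cost bound $\tOh(n^2 + n\kappa + n(n\kappa)^{1-\alfa}(n^\alfa+\kappa^\alfa))$. The one slip is in the two-child case: a vertex with no extension on some side has $|\sol[v_\rho]| = 0 \le \kappa$, so it belongs to $L_\rho$ and must be output with an empty solution set rather than removed. Your ``iff'' characterization of light vertices needs amending accordingly, and the paper handles this by outputting $U_\rho \cup W_\rho$ with $U_\rho$ from \Cref{lem:remove-ghost}. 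This matters downstream, since \Cref{thm:tree} returns $V_i \setminus L$ as the heavy set and relies on it containing only vertices with more than $\kappa$ solutions.
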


\begin{proof}
    We implement \algolight{$H,I,\rho \mid G, \kappa$} as a recursive algorithm.
    \begin{description}[leftmargin=5pt]
        \item[Case 0:] $\rho$ is a leaf (thus it is the only node in $H$). The algorithm simply outputs $V_\rho$, and for each $v_\rho \in V_\rho$, outputs the singleton set $\set{v_\rho}$. This is correct because all vertices $v_\rho \in V_\rho$ satisfy $\sol[v_\rho] = \set{v_\rho}$, and in particular, $|\sol[v_\rho]| = 1 \leq \kappa$. The running time is $O(n)$.
        
        \item[Case 1:] $\rho$ has only one child $\theta$.
        \begin{itemize}[--]
            \item Let $H' = H - \rho$. Recursively run \algolight{$H',I,\theta \mid G,\kappa$}. Writing $\sol'$ for the set of solutions to $(H',I \cup \{\theta\})$-Listing on $G$ (or, more precisely, on $G[V_l : l \in V(H')]$), the call returns the subset $L_\theta = \Set{v_\theta \in V_\theta : |\sol'[v_\theta]| \leq \kappa}$ along with $\sol'[v_\theta]$ for each $v_\theta \in L_\theta$.
            \item Let $A \in \set{0,1}^{V_\rho \times L_\theta}$ be the adjacency matrix of $G$ between $V_\rho$ and $L_\theta$.
            \item Let $B \in \set{0,1}^{L_\theta \times (\prod_{i \in I} V_i)}$ where $B_{v_\theta, \tup{w}} = 1$ if and only if $(v_\theta,\tup{w}) \in \sol'[v_\theta]$.
            \item Compute $C := \text{\algomult{$A,B,\kappa$}}$.
            \item Output $L_\rho := \Set{v_\rho \in \rows(C) : \text{$v_\rho$ does not have neighbor in $V_\theta \setminus L_\theta$}}$. Then for each $v_\rho \in L_\rho$, output $\Set{(v_\rho,\tup{w}) : C_{v_\rho, \tup{w}} > 0}$.
        \end{itemize}
        
        For the correctness analysis, we extend the matrix $A$ to $A^+ \in \set{0,1}^{V_\rho \times V_\theta}$ and the matrix $B$ to $B^+ \in \set{0,1}^{V_\theta \times (\prod_{i \in I} V_i)}$, with the same entry definition as written in the algorithm. It is clear that $\supp(A^+ B^+) = \sol$.
        
        On the other hand, $A,B$ are the restrictions of $A^+,B^+$ to $L_\theta$. Since $L_\theta$ is the set of light rows in $B^+$, we can apply \Cref{lem:mm-restriction} to deduce:
        \begin{enumerate}[(i)]
            \item the set of light rows in $A^+B^+$ is $\set{v_\rho : \text{$v_\rho$ is a light row in $AB$ and $A_{v_\rho, v_\theta} = 0$ for all $v_\theta \in V_\theta \setminus L_\theta$}}$;
            \item $AB$ and $A^+B^+$ agree on these rows.
        \end{enumerate}
        The algorithm computes a matrix $C$ that is equal to $AB$ restricted to its light rows by \Cref{lem:sparse-mm}. It follows from (i) that the output set $L_\rho$ is exactly the set of light rows in $A^+B^+$, or in other words, $\Set{v_\rho \in V_\rho : |\sol[v_\rho]| \leq \kappa}$. It also follows from (ii) that the algorithm outputs all non-zero entries for each light row, or in other words, $\sol[v_\rho]$ for each $v_\rho \in L_\rho$. The correctness is established.
        
        Let us analyze the running time. The recursive call runs in the stated time by induction. The matrix $A$ has shape $n \times n$ and can be constructed in time $O(m) \leq O(n^2)$. The matrix $B$ has $n$ rows and each row contains at most $\kappa$ non-zeros by definition of $L_\theta$, so it can be constructed in time $O(n\kappa)$. With these bounds, \algomult runs in time $\Otilde\left(n^2 + n\kappa + n (n\kappa)^{1-\alfa} (n^\alfa + \kappa^\alfa)\right) = \Otilde\left(n^{2-\alfa} \kappa + n^2 \kappa^{1-\alfa}\right)$ by \Cref{lem:sparse-mm}. Altogether the time bound is as claimed.
        
        \item[Case 2:] $\rho$ has two children $\theta_1$ and $\theta_2$.
        \begin{itemize}[--]
            \item For each $\theta \in \set{\theta_1, \theta_2}$, let $J_\theta$ be the set of nodes in the subtree rooted at $\theta$, $H_\theta := H[J_\theta \cup \set{\rho}]$ and $I_\theta := I \cap J_\theta$. Recursively run \algolight{$H_\theta, I_\theta, \rho \mid G, \kappa$}. Writing $\sol^\theta$ for the set of solutions to $(H_\theta, I_\theta \cup \{\rho\})$-listing on $G$ (or, more precisely, on $G[V_l : l \in V(H_\theta)]$), the recursive call returns $L_\rho^\theta \subseteq \Set{v_\rho \in V_\rho : |\sol^\theta[v_\rho]| \leq \kappa}$ together with $\sol^\theta[v_\rho]$ for each $v_\rho \in L_\rho^\theta$.
            \item Compute the vertices $U_\rho \subseteq V_\rho$ that do not extend to any solution, via \Cref{lem:remove-ghost}.
            \item Compute $W_\rho := \Set{v_\rho \in L_\rho^{\theta_1} \cap L_\rho^{\theta_2} : 1 \leq |\sol^{\theta_1}[v_\rho]| \cdot |\sol^{\theta_2}[v_\rho]| \leq \kappa}$.
            \item Output $L_\rho := U_\rho \cup W_\rho$. Then, for each $v_\rho \in L_\rho$, output $\sol^{\theta_1}[v_\rho] \times \sol^{\theta_2}[v_\rho]$.
        \end{itemize}
        
        For the correctness analysis, we note that $\sol[v_\rho] = \sol^{\theta_1}[v_\rho] \times \sol^{\theta_2}[v_\rho]$ for all $v_\rho \in V_\rho$. In particular, $|\sol[v_\rho]| = |\sol^{\theta_1}[v_\rho]| \cdot |\sol^{\theta_2}[v_\rho]|$. Hence $1 \leq |\sol[v_\rho]| \leq \kappa$ if and only if both terms in the product are in the range $[1,k]$ and the product value is also in the range; that is, $v_\rho \in W_\rho$. On the other hand, $|\sol[v_1]| = 0$ if and only if $v_\rho \in U_\rho$. Putting them together, we have $|\sol[v_\rho]| \leq \kappa$ if and only if $v_\rho \in L_\rho$. Therefore, the algorithm is correct.
        
        Regarding time complexity, the two recursive calls run in the stated time by induction; computing $U_\rho$ takes time $O(m) \leq O(n^2)$; computing $W_\rho$ takes time $O(n)$; outputting for each $v_\rho \in L_\rho$ takes time $O(|\sol[v_\rho]|) \leq O(\kappa)$, so over all $v_\rho$ we need time $O(n\kappa)$. The total time is thus as claimed.
    \end{description}
    It can be checked that the dependence on $k$ hidden by $\Otilde$-notation is polynomial.
\end{proof}

\begin{lemma}
    \label{lem:reachability}
    Let $H$ be a rooted binary tree with root $\rho$ and leaves $I$. Given an $m$-edge host graph $G$ for $H$ and a tuple $\tup{w} \in \prod_{i \in I} V_i$, there is an algorithm \algoreach{$H,I,\rho \mid G,\tup{w}$} that runs in time $O(m)$ and computes the set of all $v_\rho \in V_\rho$ such that $(v_\rho,\tup{w})$ is a solution to $(H,I \cup \set{\rho})$-Listing on $G$.
    
    There is no dependence on $k$ in the constant hidden by $O$-notation.
\end{lemma}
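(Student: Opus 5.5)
The plan is a simple restriction-and-filter argument: fix the leaf vertices to $\tup{w}$, then run a bottom-up (Yannakakis-style) semijoin pass that marks which vertices of each internal part admit a consistent extension below them. The set we want is exactly the marked part of $V_\rho$.

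Concretely, for each leaf $i \in I$ we restrict $V_i$ to the singleton $\set{w_i}$. Then we process the nodes of $H$ in post-order, i.e., children before parents, and compute for each node $l$ a set $R_l \subseteq V_l$. For a leaf $l \in I$ we set $R_l := \set{w_l}$. For an internal node $l$ with children $c_1$ (and possibly $c_2$, since $H$ is binary), we set
\[ R_l := \Set{v_l \in V_l : \text{for each child } c \text{ of } l \text{, } v_l \text{ has a neighbor in } R_c}. \]
To compute $R_l$, we mark the vertices in each $R_c$ with a boolean array. We then scan the edges of $G$ between $V_l$ and $V_c$, and for each $v_l$ we count how many children $c$ yield a marked neighbor, counting each child at most once. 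This takes time linear in the number of edges between $V_l$ and the parts of its children. Every host edge lies between two parts $V_l, V_c$ for exactly one tree edge $\set{l,c}$, so the total time is $O(m + n)$, and this becomes $O(m)$ after discarding isolated vertices or noting that only vertices incident to scanned edges matter. The output is $R_\rho$.

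Correctness follows by induction over the post-order. The invariant is that $v_l \in R_l$ if and only if there is a colored copy of the subtree $H_l$ rooted at $l$ that maps $l$ to $v_l$ and each leaf $i$ of $H_l$ to $w_i$. Leaves satisfy this by definition. For the inductive step, subtrees of distinct children share only $l$ (the tree structure gives this), so extensions chosen independently for each child can be combined. Applying the invariant at $\rho$ shows that $R_\rho$ is exactly the set of $v_\rho$ for which $(v_\rho, \tup{w})$ extends to a colored $H$-copy, which is the claim.

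The only delicate point is the claim that no $k$-dependence is hidden. We keep all arrays indexed by vertex rather than by (node, vertex), and since the parts are disjoint a single marker array suffices. Each edge is touched $O(1)$ times overall because it belongs to a unique tree edge. The post-order itself costs $O(k)$, and we absorb this since $k \le n$ for any nontrivial instance; alternatively, we assume each part is nonempty, as otherwise the answer is trivially empty.
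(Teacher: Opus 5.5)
Your proposal is correct and follows the paper's proof: the paper recurses on each child's subtree and returns the vertices of $V_\rho$ that have a neighbor in every child's reachable set, which is your post-order computation of $R_l$ written recursively. Its running-time argument is also the same as yours: each host edge between $V_\rho$ and a child part $V_\theta$ is scanned once. Your extra bookkeeping on the additive $O(n)$ and $O(k)$ terms is a harmless refinement of a point the paper glosses over.
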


\begin{proof}
    We implement \algoreach{$H,I,\rho \mid G,\tup{w}$} as a recursive algorithm.

    If $\rho$ is a leaf, then we have $\tup{w} = (v_\rho)$ for some $v_\rho \in V_\rho$. Return the singleton $\set{v_\rho}$.
    
    Otherwise, for every child $\theta$ of $\rho$, let $J_\theta$ be the nodes in the subtree rooted by $\theta$. Let $H_\theta := H[J_\theta]$ and $I_\theta := I \cap J_\theta$. Compute $U_\theta := \text{\algoreach{$H_\theta, I_\theta, \theta \mid G, \tup{w}_{I_\theta}$}}$. Return the set of all $v_\rho \in V_\rho$ such that for each $\theta$ the vertex $v_\rho$~has a neighbor in $U_\theta$.
    
    The correctness of the algorithm is clear. The running time is dominated by the last line, which iterates over all edges between $V_\rho$ and $V_\theta$, for every neighbor $\theta$ of $\rho$. Over all levels of the recursion, the total time is linear in the number of edges of $G$.
\end{proof}

Now we are ready to solve the main problem $(H,I)$-Listing. The key idea is to grow an initially empty subset of leaves $J \subseteq I$, while maintaining a small \emph{residual set} $R \subseteq \prod_{j \in J} V_j$ such that all solutions not extending $R$ have been listed.

\begin{theorem}[\Cref{thm:main1} for Colored Listing on Restricted Trees]
    \label{thm:tree}
    For any rooted full binary tree~$H$ and $I$ being its set of leaves, Colored $(H,I)$-Listing can be solved in time $\Otilde(t + n^{3/\alfa})$. 
    The dependence on $k$ hidden by $\Otilde$-notation is polynomial $\poly(k)$.
\end{theorem}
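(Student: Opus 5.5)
We follow the hint before the statement. We grow a set of leaves $J \subseteq I$, one at a time, and maintain a residual set $R \subseteq \prod_{j \in J} V_j$ with two invariants: every solution whose restriction to $J$ is not in $R$ has already been listed, and every $\tup{r} \in R$ has many extensions, so that $|R|$ is small relative to $t$. Initially $J = \emptyset$ and $R = \set{()}$. At the end $J = I$, and we list the remaining solutions, which are exactly the tuples of $R$ that are solutions. Each such tuple can be verified with \algoreach (and \Cref{lem:remove-ghost}) in time $O(m)$; we avoid paying this per element by instead bounding $|R| \le t$ at the last step.

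\textbf{Order and the step.} We order the leaves by a DFS of $H$. Then for each prefix $J$, the nodes not yet covered can be organized along the root-to-leaf path to the next leaf $\ell$. The step from $J$ to $J \cup \set{\ell}$ is one gluing at an inner node $i \notin I$ on this path. We split $H$ at $i$ into $H_0$ (containing $\rho$ and the leaves $J$) and $H_1$ (the subtree below $i$ containing $\ell$ and the other unprocessed leaves). We call \algolight on $H_1$, rooted at $i$, with threshold $\kappa$. This returns the vertices $v_i$ with at most $\kappa$ partial extensions, together with those extensions. The matrix $A$ is indexed by pairs ($\tup{r} \in R$ together with the extension information on $H_0$) against $V_i$, and the matrix $B$ is indexed by $V_i$ against the light partial solutions of $H_1$. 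We compute the product with \algomult, using \Cref{lem:mm-restriction} to handle heavy $v_i$. Light rows give all solutions extending $\tup{r}$ through light $v_i$, and we list them. Rows that are heavy, or that contain a heavy $v_i$, certify more than $\kappa$ solutions. For these we keep only the projection to $J \cup \set{\ell}$ in the new residual set. To compute which extensions $w_\ell$ are reachable from each $\tup{r}$ we use \algoreach on the relevant subtree, costing $O(m)$ per heavy element.

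\textbf{Accounting.} Solutions are listed once: we deduplicate by projection, and since the new residual tuples are exactly those not yet resolved, the listed sets stay disjoint. The invariant we aim for is $|R| \le t/\kappa$. Every residual tuple has more than $\kappa$ extensions, but different residual tuples may share solutions. To control this we pass to the projection onto the next leaf, and each new residual tuple extends a heavy one, which gives $|R_{\text{new}}| \le |R| \cdot n$ and also $|R_{\text{new}}| \le t/\kappa$. With this invariant, the \algomult call costs, by \Cref{lem:sparse-mm}, $\tOh(\text{input} + s + y\, s^{1-\alfa}(x^{\alfa}+\kappa^{\alfa}))$ with $y = n$, $x \le |R| \cdot n$, and $s \le t$. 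The \algolight call costs $\tOh(n^{2-\alfa}\kappa + n^2\kappa^{1-\alfa})$, and the \algoreach calls cost $O(m|R|) \le O(n^2 t/\kappa)$.

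\textbf{Choosing $\kappa$.} We guess $t$ by doubling. We then choose $\kappa$ to balance $n^{2-\alfa}\kappa$ against $n^2 t/\kappa$, which gives $\kappa \approx n^{\alfa/2} t^{1/2}$. The resulting terms, all of the form $n^{a} t^{1-\beta}$, are absorbed by weighted AM--GM into $\tOh(t + n^{3/\alfa})$, the exponent $3/\alfa$ being what this balance should produce. The main obstacle I expect is making the residual bound $|R| \le t/\kappa$ hold in the presence of overlaps between extensions of different residual tuples. It must also be consistent with the matrix dimensions: rows of $A$ must be bounded by roughly $t/\kappa$ times the degree-limited extensions, not by $|R| \cdot n$. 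Getting this right may require sparsifying by first discarding dangling vertices with \Cref{lem:remove-ghost} and choosing the DFS split node $i$ as the lowest common ancestor of $J$ and $\ell$.
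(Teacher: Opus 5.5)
There is a genuine gap in your gluing step. When $J$ is a DFS prefix and $\ell$ is the next leaf, there is in general no single node $i$ that separates $J \cup \set{\ell}$ from the remaining leaves.

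For example, take the complete binary tree with leaves $1,\dots,8$ in DFS order. The path between leaves $3$ and $4$ and the path between leaves $1$ and $5$ are vertex-disjoint, so no node separates $\set{1,2,3}$ from $\set{4,\dots,8}$. Hence the set of global solutions extending a row $(\tup{r}, w_\ell)$ is not the support of a product over one inner index set $V_i$. It depends jointly on the vertices at several nodes: here, the parent $y$ of leaves $3,4$ and the node $z$ whose subtree has leaves $1,\dots,4$.

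In your concrete split, the unprocessed leaves hanging off ancestors of $i$ lie in $H_0$. They appear in neither the rows nor the columns. So \algolight and \algomult count extensions of a sub-pattern: the light/heavy test refers to the wrong quantity, and the printed tuples are not global solutions.

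The unspecified ``extension information on $H_0$'' cannot repair this cheaply. If it means adding vertices of uncovered nodes to the row index, then distinct rows can share global solutions and $|R| \le t/\kappa$ fails. That is the real source of the overlap you worry about. For tuples on a fixed leaf set there is no overlap at all: distinct $\tup{r} \in \prod_{j \in J} V_j$ have disjoint sets of global extensions, so $|R| < t/\kappa$ is immediate.

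The missing idea is to follow the tree structure rather than a single prefix. The paper's \algolist recurses bottom-up and keeps one residual set $R_i$ per subtree: the tuples on that subtree's leaves $I_i$ with more than $\kappa$ global extensions. At an internal node $i$ with children $a,b$, node $i$ alone separates $I_i = I_a \cup I_b$ from $I \setminus I_i$. So the rows $S = R_a \times R_b$ are linked to $V_i$ through the \algoreach sets $X_{\tup{u}} \cap Y_{\tup{v}}$. They are then glued via $V_i$ to the light columns that \algolight returns on $H - H_a - H_b$ rerooted at $i$, with \Cref{lem:mm-restriction} handling heavy $v_i$.

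Since $|R_a|,|R_b| < t/\kappa$, choosing $\kappa = \Theta(t^{2/3})$ (not $n^{\alfa/2}t^{1/2}$) gives $|S| \le \kappa$ rows. This is what bounds the \algomult cost by $\tOh(n\kappa + t + n t^{1-\alfa}\kappa^{\alfa})$, and its last term yields $n^{3/\alfa}$ by AM--GM.
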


\begin{proof}
    Write $H = ([k], \mathcal{E})$ and let $G = (V,E)$ with partition $V = V_1 \cup \ldots \cup V_k$ be the given host graph.
    Denote by $\sol$ the set of solutions to $(H,I)$-Listing on $G$.
    We let $\kappa \geq 1$ be a parameter to be fixed later.
    For each node $i \in [k]$, we introduce the following notation:
    \begin{itemize}
        \item $J_i$ is the set of nodes in the subtree of $H$ rooted at $i$;
        \item $H_i := H[J_i]$ is the subtree rooted at $i$;
        \item $I_i := I \cap J_i$ is the set of leaves in $H_i$;
        \item $R_i := \Set{ \tup{u} \in \prod_{l \in I_i} V_l \;:\; |\sol[\tup{u}]| > \kappa}$ is the set of tuples restricted to the leaves $I_i$ that extend to more than $\kappa$ global solutions.
    \end{itemize}

    We devise a recursive algorithm \algolist{$i, \kappa$} with the following guarantee (G): The algorithm lists all solutions in $\sol[\tup{u}]$ for all $\tup{u} \in \left(\prod_{l \in I_i} V_l\right) \setminus R_i$, and returns the set $R_i$. The algorithm works as follows:
    
    \begin{description}[leftmargin=5pt]
        \item[Case 1:] $i$ is a leaf.
        \begin{itemize}[--]
            \item Let $H'$ be the reorientation of $H$ such that $i$ becomes the root. Call \algolight{$H',I \setminus \set{i},i \mid G, \kappa$} and obtain a set $L \subseteq V_i$ as well as $\sol[v_i]$ for all $v_i \in L$.
            \item For each $v_i \in L$, print $\sol[v_i]$. Then return $V_i \setminus L$.
        \end{itemize}
        \item[Case 2:] $i$ has two children $a, b$.
        \begin{itemize}[--]
            \item Recursively call $R_a := \text{\algolist{$a, \kappa$}}$ and $R_b := \text{\algolist{$b, \kappa$}}$. Let $S := R_a \times R_b$.
            \item For each $\tup{u} \in R_a$, compute $X_{\tup{u}} := \text{\algoreach{$H[J_a\cup\{i\}], I_a, i \mid G, \tup{u}$}}$.
            \item For each $\tup{v} \in R_b$, compute $Y_{\tup{v}} := \text{\algoreach{$H[J_b\cup\{i\}], I_b, i \mid G, \tup{v}$}}$.
            \item Let $H'$ be the reorientation of $H - H_a - H_b$ such that $i$ is the root. Call \algolight{$H',I \setminus (I_a \cup I_b),i \mid G, \kappa$} and obtain a set $L \subseteq V_i$ as well as $\sol'[v_i]$ for all $v_i \in L$. Here, $\sol'$ denotes the set of solutions to $(H',I \setminus (I_a \cup I_b) \cup \set{i})$-Listing on $G$.
            \item Build a matrix $A \in \set{0,1}^{S \times L}$ where $A_{(\tup{u},\tup{v}), v_i} = 1$ if and only if $v_i \in X_{\tup{u}} \cap Y_{\tup{v}}$.
            \item Build a matrix $B \in \set{0,1}^{L \times \prod_{l \in I \setminus (I_a \cup I_b)} V_l}$ where $B_{v_i, \tup{w}} = 1$ if and only if $(v_i,\tup{w}) \in \sol'[v_i]$.
            \item Compute $C := \text{\algomult{$A,B,\kappa$}}$.
            \item Compute $L_i := \Set{(\tup{u},\tup{v}) \in \rows(C) : X_{\tup{u}} \cap Y_{\tup{v}} \subseteq L}$.
            \item For each $(\tup{u},\tup{v}) \in L_i$, print all $(\tup{u},\tup{v},\tup{w})$ such that $C_{(\tup{u},\tup{v}),\tup{w}} > 0$.
            \item Return $R_i = S \setminus L_i$.
        \end{itemize}
    \end{description}
    
    Let us argue that \algolist{$i$} satisfies the claimed guarantee (G). In case~1 this follows from \Cref{lem:partial-list}. In case~2, for the sake of analysis, we extend the matrices $A,B$ to $A^+ \in \set{0,1}^{S \times V_i}$ and $B^+ \in \set{0,1}^{V_i \times (\prod_{l \in I \setminus (I_a \cup I_b)} V_l)}$, with the same entry definition as written in the algorithm. It is clear that $(A^+B^+)_{(\tup{u},\tup{v}),\tup{w}} > 0$ if and only if $(\tup{u},\tup{v}) \in S$ and $(\tup{u},\tup{v}, \tup{w}) \in \sol$.
    
    Recall from \Cref{lem:partial-list} that $L$ is the set of vertices $v_i$ that extend to at most $\kappa$ solutions in $\sol'$. In other words, $L$ is the set of light rows of $B^+$. Since $A,B$ are the restrictions of $A^+,B^+$ to $L$, \Cref{lem:mm-restriction} implies that
    \begin{enumerate}[(i)]
        \item the set of light rows in $A^+B^+$ is $\Set{(\tup{u},\tup{v}) : \text{$(\tup{u},\tup{v})$ is a light row in $AB$ and $A_{(\tup{u},\tup{v}),\tup{w}} = 0$ for all $\tup{w} \notin L$ }}$;
        \item $A^+B^+$ and $AB$ agree on these rows.
    \end{enumerate}
    The algorithm computes a matrix $C$ that is equal to $AB$ restricted to its light rows, by \Cref{lem:sparse-mm}. Hence $L_i$ is exactly the set of light rows in $A^+B^+$ by (i), or in other words, $L_i = \Set{(\tup{u},\tup{v}) \in S : \sol[(\tup{u},\tup{v})] \leq \kappa}$. Meanwhile, the algorithm lists all non-zero entries in $C$, which by (ii) are the non-zeros entries in $A^+B^+$ restricted to the light rows, which in turn correspond to $\sol[\tup{(u,v)}]$ for each $(\tup{u},\tup{v}) \in L_i$.
    
    It remains to argue that $S \setminus L_i = R_i := \Set{(\tup{u}, \tup{v}) \in  \prod_{l \in I_i} V_l : |\sol[(\tup{u},\tup{v})]| > \kappa}$. It is clear that $S \setminus L_i \subseteq R_i$. For the other direction, all tuples in $R_i$ must satisfy $\tup{u} \in R_a$, $\tup{v} \in R_b$, and $|\sol[(\tup{u},\tup{v})]| > \kappa$ simultaneously.
    
    We have thus shown that \algolist{$i,\kappa$} satisfies the claimed guarantee (G). 

    \medskip
    Calling \algolist{$i,\kappa$} on the root of $H$, the algorithm returns the residual set $R := \Set{\tup{u} \in \prod_{l \in I} V_l : |\sol[\tup{u}]| > \kappa}$. However, $|\sol[\tup{u}]| \leq 1$ for all tuples $\tup{u} \in \prod_{l \in I} V_l$. This implies that $R = \emptyset$. By the guarantee (G), the algorithm must have listed all solutions.

    \medskip
    It remains to analyze the time complexity. In what follows we will bound the running time assuming $t^{2/3} \leq  \kappa \leq 2t^{2/3}$. As the actual value of $t$ is unknown, we have to run the algorithm on logarithmically many scales $\kappa = 1, 2, 4, \dots, n^d$ in parallel, and abort once one of the parallel calls terminates. This worsens the analysis below by only an $O(\log n)$ factor.
    
    We focus on case~2, as the cost for case~1 is dominated by case~2. 
    Since each tuple in $R_a$ and $R_b$ extends to more than $\kappa$ global solutions, and these global solutions are distinct, we have $|R_a|, |R_b| < t/\kappa \le t^{1/3} \le \sqrt{\kappa}$. As a result, $|S| = |R_a| \cdot |R_b| \leq \kappa$.
    
    For each $\tup{v} \in R_a$, computing $X_{\tup{v}}$ takes time $O(m) \leq O(n^2)$ by \Cref{lem:reachability}. Similarly, for each $\tup{w} \in R_b$, computing $Y_{\tup{w}}$ takes time $O(n^2)$. Hence, computing all these sets takes time $O(n^2\sqrt{\kappa})$ in total. The subroutine \algolight takes time $\Otilde\left(n^{2-\alfa} \kappa + n^2 \kappa^{1-\alfa}\right)$ by \Cref{lem:partial-list}. The matrix $A$ has $|S| \leq \kappa$ rows and $n$ columns, so it can be built in $O(\kappa n)$ time. The matrix $B$ has $n$ rows, and each row contains at most $\kappa$ non-zeros by the definition of $L$, so it can built in $O(n \kappa)$ time as well. The subroutine \algomult takes time $\Otilde(n\kappa + t + n t^{1-\alfa} \kappa^{\alfa})$ by \Cref{lem:sparse-mm} since $|AB| \le t$ and $A$ has at most $\kappa$ rows. The time to compute $L_i$ and to print solutions is $O(t)$.
    
    Gathering all terms and recalling $\kappa = \Theta(t^{2/3})$ and $0 < \alfa \leq 1$, the running time is
    \begin{align*}
        &\Otilde\left( n^2 \kappa^{1/2} + n^{2-\alfa} \kappa + n^2 \kappa^{1-\alfa} + nt^{1-\alfa}\kappa^\alfa + t \right)\\
        &= \Otilde\left( n^2 t^{1/3} + n^{2-\alfa} t^{2/3} + n^2 t^{2(1-\alfa)/3} + nt^{1-\alfa/3} + t \right).
    \end{align*}
    We bound each term by the weighted AM-GM inequality:
    \begin{align*}
        n^2 t^{1/3} &\leq \frac{2}{3} n^3 + \frac{1}{3} t,\\
        n^{2-\alfa} t^{2/3} &\leq \frac{1}{3} n^{3(2-\alfa)} + \frac{2}{3} t,\\
        n^2 t^{2(1-\alfa)/3} &\leq \frac{1+2\alfa}{3} n^{6/(1+2\alfa)} + \frac{2(1-\alfa)}{3} t,\\
        nt^{1-\alfa/3} &\leq \frac{\alfa}{3} n^{3/\alfa} + \left(1-\frac{\alfa}{3}\right) t.
    \end{align*}
    It is easy to verify that $\frac{3}{x} \geq \max\Set{3, 3(2-x), \frac{6}{1+2x}}$ for all $0 < x \leq 1$. Hence the running time is bounded by $\Otilde(t + n^{3/\alfa})$ as claimed.
    It can be checked that the dependence on $k$ hidden by $\Otilde$-notation is polynomial.
\end{proof}

\section{From Restricted Trees to General Trees}
\label{sec:reductions1}

The previous section forms the core of our algorithm, showing that if $H$ is a rooted full binary tree and $I$ is its set of leaves then $(H,I)$-Listing is in time $\tOh(t + n^{3/\alfa})$. In this and the next sections we generalize it further and further by a sequence of reductions. By the end of this section, we will be able to handle general projected trees, i.e., $H$ is any tree and $I$ is any subset of its nodes. Parts of this reduction are similar to arguments by Hu~\cite{Hu24}.

As our first step, we generalize from rooted full binary trees to general trees $H$ while insisting that $I$ is still the subset of leaves. In the reduction, we replace nodes with more than two children by a path of nodes each of which has two children (for each new edge in~$H$, we introduce a perfect matching in the host graph), and we attach a new dummy child to each node with one child (corresponding to a fully connected singleton vertex in the host graph).

\begin{lemma}
    \label{lem:reduce-arity}
    Assume that for every rooted full binary tree $H$ and $I$ being its set of leaves we can solve $(H,I)$-Listing in time $\Otilde(t + f(n,m))$. Then for every tree $H$ and $I$ being its set of leaves we can solve $(H,I)$-Listing in time $\Otilde(t + f(n,m) + n + m)$.
\end{lemma}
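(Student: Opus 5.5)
Given a tree $H$ whose set of (unrooted) leaves is $I$, together with a host graph $G$, the plan is to build in linear time a rooted full binary tree $H^*$ with leaf set $I^*$ and a host graph $G^*$. We want a bijection between the solutions of $(H,I)$ on $G$ and those of $(H^*,I^*)$ on $G^*$ that can be undone in $O(1)$ time per solution. We also need $n(G^*) = O(n+m)$-ish. More precisely, we aim for $n(G^*) \le O(n)$ and $m(G^*) \le O(m+n)$, so that $f(n(G^*),m(G^*)) \le O(f(n,m))$ up to constants; we assume $f$ is monotone and polynomially bounded, as in every application. The degenerate cases are handled directly in $O(n+m)$ time. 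If $k=1$, the solutions are just $V_1$. If $k=2$, both nodes are leaves, so the solutions are exactly the edges.

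Otherwise we root $H$ at an arbitrary non-leaf node $\rho$. Every unrooted leaf is then a rooted leaf, and no internal node becomes a rooted leaf, so the rooted leaves are exactly $I$. We then apply two local transformations, each preserving the solution set.

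\emph{Splitting high degree.} Consider a node $u$ with children $c_1,\dots,c_r$ where $r\ge 3$. We replace $u$ by a path $u=u_1,u_2,\dots,u_{r-1}$. The node $u_j$ gets $c_j$ as one child and $u_{j+1}$ as the other, except that $u_{r-1}$ gets $c_{r-1}$ and $c_r$. Each new node $u_j$ receives a copy $V_{u_j}$ of $V_u$. The edges between $V_{u_j}$ and $V_{u_{j+1}}$ form the perfect matching between copies of the same vertex. The edges from $V_{u_j}$ to $V_{c_j}$ copy the edges between $V_u$ and $V_{c_j}$. Any colored copy of the new tree must use copies of one vertex $v_u$ along the whole path, so colored copies correspond bijectively. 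Projections to $I$ are unchanged because the new nodes are internal.

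\emph{Padding single children.} Consider a node $u$ with exactly one child. We attach a new child $d_u$ with $V_{d_u}=\{x_u\}$, where $x_u$ is adjacent to all of $V_u$. This adds a leaf to $I^*$. Each solution of the old instance extends uniquely, by appending $x_u$, so the bijection is again trivial, and decoding just drops the dummy coordinates.

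Next we check sizes. Each node of $H$ is copied at most $\deg(u)$ times, so the number of vertices becomes $O(kn)=O(n)$. The matching edges contribute $O(kn)$, the copied child edges contribute $O(m)$, and the dummy edges contribute $O(kn)$. Hence $n^*=O(n)$ and $m^*=O(n+m)$, with constants depending only on $H$. The construction takes time $O(n+m)$. The number of solutions $t^*$ equals $t$. We then run the assumed algorithm and translate each output. The total time is $\tOh(t+f(n^*,m^*)+n+m)$.

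The main obstacle is replacing $f(n^*,m^*)$ by $f(n,m)$. For the vertex count this is fine, since $f$ polynomial in $n$ with constant blow-up gives $O(f(n,m))$. The edge count, however, grows to $m+O(n)$. I would handle this by first deleting isolated vertices via \Cref{lem:remove-ghost}, which takes $O(n+m)$ time and removes every vertex that is in no solution. After this deletion every remaining vertex has an incident edge, because $k\ge 2$ and $H$ is connected. Then $n\le 2m$, so $m^*=O(m)$, and $f(O(n),O(m))=O(f(n,m))$ for the polynomially bounded $f$ we use.
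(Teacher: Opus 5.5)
Your proposal is correct and follows essentially the same construction as the paper: root at a non-leaf node, replace each node with $r \ge 3$ children by a chain of copies of $V_u$ linked by perfect matchings (the paper does this one split at a time, which produces the same chain), and give each single-child node a dummy leaf whose unique vertex is adjacent to all of $V_u$. Your extra step of first discarding dangling vertices via \Cref{lem:remove-ghost}, so that $n \le 2m$ and hence $m^* = O(m)$, is care the paper's proof leaves implicit: it tracks only pattern size and construction time, not how $f(n^*,m^*)$ compares to $f(n,m)$.
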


\begin{proof}
    Let $H = ([k], \mathcal{E})$ be a tree and $G = (\bigcup_{i \in [k]} V_i, E)$ be a host graph.
    
    If $k = 1$ then $H$ is a single node, so the problem is about listing the vertices of $G$, which can be done in time $O(n)$. If $k = 2$ then $H$ is a single edge, so the problem is about listing the edges of $G$, which can be done in time $O(m)$.
    
    From now on we assume $k \ge 3$. Then there exists a node $\rho \in [k]$ that is not a leaf of $H$. We pick $\rho$ as the root. Note that $\rho$ has at least two children, so the rooted and unrooted version of~$H$ have the same set of leaves $I$. 

    In what follows, we turn the rooted tree $H$ with leaves $I$ step by step into a rooted full binary tree $H'$ with leaves $I'$. We also turn the host graph $G$ accordingly to $G'$. We will ensure that (i) $(H,I)$-Subgraphs of $G$ are in one-to-one correspondence with $(H',I')$-Subgraphs of $G'$; (ii) $H'$ consists of $|H'| \le 2|H|$ nodes; and (iii) $G'$ can be computed from $G$ in time~$O(m)$.

    Write $c(i)$ for the number of children of node $i \in [k]$. We consider two cases.
    \begin{description}[leftmargin=0pt]
        \item[Case 1:] There is a node $i$ with $c(i) = 1$. Say the child of $i$ is $j$. We construct the new pattern $H'$ by modifying $H$ as follows: We add a new node $i'$ and attach it as a child to $i$. Thus, the set of leaves becomes $I' = I \cup \{i'\}$. We construct the new host graph $G'$ by modifying $G$ as follows: Create a new part $V_{i'} = \{ 0 \}$ containing a single vertex $0$; and put an edge between $0$ and every $v_i \in V_i$. 
        Observe that for every $(H,I)$-Subgraph $\tup{v}$ of $G$, $(v_{i'},\tup{v})$ for $v_{i'} := 0 \in V_{i'}$ is an $(H',I')$-Subgraph of $G'$, and vice versa. In particular, $(H,I)$-Subgraphs of $G$ are in one-to-one correspondence with $(H',I')$-Subgraphs of~$G'$. 
    
        \item[Case 2:] There is a node $i$ with $c(i) \geq 3$. Say the children of $i$ are $j_1, \dots, j_{c(i)}$. We construct the new pattern $H'$ by modifying $H$ as follows: Create a new node $i'$, designate it as the parent of $j_1$ and $j_2$, and designate $i$ as the parent of $i',j_3,\ldots,j_{c(i)}$. This does not change the set of leaves, $I' = I$. We construct the new host graph $G'$ by modifying $G$ as follows: Create a new part $V_{i'}$ as a copy of $V_i$; for each $j \in \set{j_1,j_2}$ rewire the edges between $V_i, V_j$ so that they lie between $V_{i'}, V_j$ instead; and connect every pair of corresponding vertices in $V_i, V_{i'}$ by an edge (note that this adds $O(n)$ edges). Observe that $(H,I)$-Subgraphs of $G$ are in one-to-one correspondence with $(H',I')$-Subgraphs of~$G'$.
    \end{description}
    
    In both cases, the modification reduces $|c(i) - 2|$ by one, does not change $c(x)$ for any other node~$x$, and adds a new node $i'$ with $c(i') \in \{0, 2\}$ children. By repeatedly applying this modification, we eventually obtain a rooted full binary tree. The number of repetitions is exactly $\sum_{i \in [k] : c(i) \geq 1} |c(i)-2| \leq \sum_{i \in [k]} c(i) = k-1$. In each repetition, we add one node to the pattern and one part to the host graph, spending time $O(m)$. The bounds on pattern size and running time thus follow.
\end{proof}

As our next step, we generalize from $I$ being exactly the set of leaves of $H$ to an arbitrary subset of leaves. This reduction removes leaves that are not part of $I$. It is a rephrasing of the \texttt{cleanse} procedure of \cite{Hu24}.
\begin{lemma}
    \label{lem:cleanse}
    Assume that for every tree $H$ and $I$ being its set of leaves we can solve $(H,I)$-Listing in time $\Otilde(t + f(n,m))$. Then for every tree $H$ and every subset of its leaves $I$ we can solve $(H,I)$-Listing in time $\Otilde(t + f(n,m) + n + m)$.
\end{lemma}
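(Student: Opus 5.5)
We reduce to the case where $I$ is exactly the leaf set by repeatedly deleting a leaf of $H$ that is not in $I$. Let $\ell \notin I$ be a leaf of $H$ with unique neighbor $p$; if $H$ is a single node or a single edge, the problem is trivial in time $O(n+m)$. First we apply \Cref{lem:remove-ghost} to $H$ and $G$. This deletes, in time $O(n+m)$, every vertex of $G$ that does not extend to a colored $H$-copy. After this step every surviving vertex $v_p \in V_p$ has at least one neighbor in $V_\ell$ (in fact, every surviving vertex lies in some full copy).

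We then set $H' := H - \ell$ and let $G'$ be $G$ with the part $V_\ell$ and its incident edges deleted. We claim that the $(H',I)$-solutions in $G'$ are exactly the $(H,I)$-solutions in $G$. Clearly, projecting an $H$-copy of $G$ gives an $H'$-copy of $G'$. Conversely, take an $H'$-copy $\tup{v}$ of $G'$. The vertex $v_p$ survived the filtering, so it has a neighbor $v_\ell\in V_\ell$, and $(\tup{v},v_\ell)$ is an $H$-copy. Since $\ell\notin I$, both copies have the same projection onto $I$. The number $t$ of solutions is therefore unchanged, and $n$ and $m$ do not increase.

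Deleting $\ell$ may make $p$ a new leaf. If $p\in I$, it is now a leaf in $I$, which is what we want. If $p\notin I$, it is a new leaf outside $I$ and we iterate. At every iteration we rerun \Cref{lem:remove-ghost} on the current pattern. The correctness argument only uses the current pattern and the current filtered graph, so filtering once at the start would not suffice: we must filter again at each step. There are at most $k$ iterations, for a total cost of $O(k(n+m))$.

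Once no leaf outside $I$ remains, every leaf of the current tree $H^*$ lies in $I$. However, $I$ was assumed to be a subset of the leaves of the original $H$, and the deletions only removed nodes outside $I$. So every element of $I$ is still a leaf of $H^*$, hence $I$ is exactly the leaf set of $H^*$. We then invoke the assumed algorithm on $(H^*,I)$ and $G^*$ in time $\Otilde(t+f(n,m))$, using monotonicity of $f$. The main subtlety is the correctness step: projecting away the leaf is sound only because the dangling vertices of $V_p$ were removed. The edge case where $H^*$ shrinks to a single edge or a single node is handled directly.
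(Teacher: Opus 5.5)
Your argument is correct and is essentially the paper's proof: repeatedly delete a leaf $\ell \notin I$ after first making sure that every vertex of $V_p$ has a neighbor in $V_\ell$. The paper gets this with only a local filter (drop each $v_p$ that has no neighbor in $V_\ell$) rather than the full \Cref{lem:remove-ghost}. Your remark that a single initial filtering would not suffice is false, though harmless since you refilter anyway: after one global dedangling, every surviving vertex lies in a full $H$-copy, and every later restriction $G[V_l : l \in V(H^*)]$ inherits this property.
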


\begin{proof}
    We use induction on $k=|H|$. For $k = 1$, $(H,I)$-Listing is trivially in time $O(n)$. For $k \geq 2$, if $I$ is exactly the set of leaves of $H$, then the claim follows from the assumption. Otherwise, there exists a leaf $i \notin I$. Let $j$ be the unique neighbor of $i$. We remove all vertices $v_j \in V_j$ that do not have a neighbor in $V_i$, and then solve $(H-i,I)$-Listing on $G' := G[V_l : l \neq i]$. Note that every $(H-i)$-copy in $G'$ extends to an $H$-copy in $G$. Hence, every solution to the new instance is a solution to the old instance, and vice versa. The running time is clearly $\tOh(t + f(n,m) + n + m)$.
\end{proof}

Finally, we generalize to $I$ being any subset of the nodes of $H$. To this end, for any non-leaf $i \in I$ we observe that the problem splits at $i$ into subproblems on the connected components of $H-i$. This lemma is a rephrasing of the \texttt{decompose} procedure of \cite{Hu24}.
\begin{lemma}
    \label{lem:decompose}
    Assume that for every tree $H$ and every subset of its leaves $I$ we can solve $(H,I)$-Listing in time $\Otilde(t + f(n,m))$. Then for every tree $H$ and every subset of its nodes $I$ we can solve $(H,I)$-Listing in time $\Otilde(t + f(n,m) + n + m)$.
\end{lemma}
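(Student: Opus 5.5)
Let $H$ be a tree, $I\subseteq V(H)$ arbitrary, and $G$ the host graph. The plan is to split the instance at the projected internal nodes. First run \Cref{lem:remove-ghost} in time $O(n+m)$ and delete every vertex that does not extend to a colored $H$-copy. If every node of $I$ is a leaf of $H$, we apply the assumption directly. Otherwise pick a non-leaf $i \in I$ (more generally all such nodes at once). Cutting $H$ at $i$ gives subtrees $H^{(1)},\dots,H^{(c)}$, one for each connected component $C_j$ of $H-i$; here $H^{(j)} := H[C_j \cup \{i\}]$ and $I^{(j)} := (I \cap C_j) \cup \{i\}$. In $H^{(j)}$ the node $i$ is a leaf, since it has exactly one neighbour in $C_j$. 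Hence $i$ no longer obstructs the ``subset of leaves'' condition there.

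The key structural fact is conditional independence given $v_i$. A tuple $\mathbf v_I$ is a solution exactly when, for every $j$, the projection $(v_i, \mathbf v_{I\cap C_j})$ is a solution of $(H^{(j)},I^{(j)})$ on $G[V_l : l\in C_j\cup\{i\}]$. This holds because copies of the different branches can be chosen independently once $v_i$ is fixed. Therefore $\sol[v_i] = \prod_j \sol^{(j)}[v_i]$ up to reordering coordinates.

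We recurse (induction on the number of non-leaf nodes in $I$, or on $k$) on each $(H^{(j)},I^{(j)})$, list all of its solutions, and group them by $v_i$, for example by bucketing over $V_i$. We then output the Cartesian products $\prod_j \sol^{(j)}[v_i]$ over all $v_i\in V_i$. Since all non-ghost $v_i$ survive, each factor is nonempty for those $v_i$, and each factor is nonempty only if the product is.

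The main obstacle is the running-time accounting. The branch outputs $t_j$ must be bounded by $t$, and the product step must cost $O(t)$. After dedangling, every $v_i$ that appears in some branch solution extends to a full copy. Consequently each branch solution $(v_i,\mathbf w)$ extends to at least one global solution, and distinct branch solutions extend to distinct global ones (their projections differ). So $t_j \le t$, and the recursive calls cost $\Otilde(t + f(n,m) + n+m)$ each, with only $O(k)$ calls in total. The product enumeration is linear in its output, which is exactly $t$. Another subtlety is that $H^{(j)}$ may still contain projected internal nodes, so the induction must measure progress by the number of projected internal nodes, which strictly decreases. Alternatively we split at all of them simultaneously, so that every piece has $I$ contained in its leaves. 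The trivial base cases $|H^{(j)}|\le 2$ take time $O(n+m)$.
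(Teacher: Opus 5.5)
Your proposal is correct and follows essentially the same route as the paper: dedangle via \Cref{lem:remove-ghost}, split at a non-leaf $i \in I$ into the pieces $H[C_a \cup \{i\}]$ (where $i$ becomes a leaf), recurse on each piece, and output $\bigcup_{v_i \in V_i} \prod_a \sol_a[v_i]$. The differences are cosmetic. You bound each branch's output by $t$ via an injection into global solutions, whereas the paper observes that $\sum_{v_i} \prod_a |\sol_a[v_i]| \ge |\sol_a|$ because all factors are nonempty. You also induct on the number of projected internal nodes instead of on $k$.
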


\begin{proof}
    We use induction on $k=|H|$. If $I$ is a subset of the leaves, then the claim follows from the assumption. Otherwise, there exists an non-leaf $i \in I$. The node $i$ separates the tree into $r \geq 2$ connected components, whose node sets we denote by $C_1, \dots, C_r$. To solve $(H,I)$-Listing, we do the following:
    \begin{enumerate}
        \item Apply \Cref{lem:remove-ghost} to remove all vertices in $G$ that do not extend to any solution.
        \item For each $a \in [r]$, let $C_a^+ := C_a \cup \set{i}$ and notice that $|C_a^+| \leq k-1$, so we can apply induction to solve $(H[C_a^+], I \cap C_a^+)$-Listing in $G[V_j : j \in C_a^+]$ and obtain a set of solutions $\sol_a$. We partition $\sol_a = \bigcup_{v_i \in V_i} \sol_a[v_i]$ where $\sol_a[v_i]$ collects all solutions in $\sol_a$ that use the vertex $v_i$. Importantly, none of $\sol_a[v_i]$ is empty, as otherwise we would have removed $v_i$ in the first step.
        \item Compute the solutions to $(H,I)$-Listing in $G$ by $\sol = \bigcup_{v_i \in V_i} \prod_{a \in [r]} \sol_a[v_i]$.
    \end{enumerate}
    
    The correctness of the algorithm is clear. Let us analyze the time complexity. Step 1 takes time $\tau_1 = O(n + m)$ by \Cref{lem:remove-ghost}. In step 2, each iteration $a$ takes time $\tau_{2,a} = \Otilde(|\sol_a| + f(n,m) + n + m)$ by induction. Step 3 takes time $O(\tau_3)$ for $\tau_3 = \sum_{v_i \in V_i} \prod_{a \in [r]} |\sol_a[v_i]|$. Note that for any fixed $a \in [r]$ we have
    \[ \tau_3 \geq \sum_{v_i \in V_i} |\sol_a[v_i]| = |\sol_a|, \]
    where the inequality uses that $\sol_b[v_i] \neq \emptyset$ for all $b$.
    Hence, $\sum_{a \in [r]} |\sol_a| \le r \cdot \tau_3 = O(\tau_3)$, since $r \le k$ is constant.
    Thus, $\tau_2 := \sum_{a \in [r]} \tau_{2,a} \leq \Otilde(\tau_3 + f(n,m) + n + m)$. Finally, observe that the terms in $\tau_3$ correspond to distinct solutions in $\sol$, thus $\tau_3 \leq t$. Putting everything together, the total running time is bounded by $\tau_1 + \tau_2 + \tau_3 = \Otilde(t + f(n,m) + n + m)$.
\end{proof}

By chaining the above three lemmas with the algorithm for restricted trees from the previous section, we are able to handle general projected trees:

\begin{theorem}[\Cref{thm:main1} on Colored Listing] \label{thm:main1-colored-listing}
    For every tree $H$ and every $I$, Colored $(H,I)$-Listing can be solved in time $\Otilde(t + n^{3/\alfa})$.
    The dependence on $k$ hidden by $\tOh$-notation is $\poly(k)$.
\end{theorem}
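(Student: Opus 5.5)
The plan is to chain \Cref{thm:tree} through the three reductions of this section, working from the innermost hypothesis outwards. \Cref{thm:tree} states that for every rooted full binary tree $H$ with $I$ equal to its set of leaves, Colored $(H,I)$-Listing runs in time $\Otilde(t + f(n,m))$ with $f(n,m) := n^{3/\alfa}$. This is exactly the hypothesis of \Cref{lem:reduce-arity}. That lemma then gives, for every tree $H$ whose leaf set is $I$, time $\Otilde(t + f(n,m) + n + m)$.

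Next I would set $f'(n,m) := f(n,m) + n + m$. With this choice the conclusion of \Cref{lem:reduce-arity} is precisely the hypothesis of \Cref{lem:cleanse}. Applying \Cref{lem:cleanse} handles every subset $I$ of the leaves, again within $\Otilde(t + f'(n,m) + n + m) = \Otilde(t + f'(n,m))$. Feeding this into \Cref{lem:decompose} then covers every subset $I \subseteq V(H)$ in time $\Otilde(t + n^{3/\alfa} + n + m)$.

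To simplify the bound, I would use $m \le n^2 \le n^{3/\alfa}$, which holds because $\alfa \le 1$. Hence the additive $n+m$ terms are absorbed and the final time is $\Otilde(t + n^{3/\alfa})$.

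The one step needing care is keeping track of $n$, $m$ and $k$ through the reductions.
\begin{itemize}
\item In \Cref{lem:reduce-arity} the host graph grows: each step adds one part and $O(n)$ matching edges, over at most $k-1$ steps. So $n' \le O(kn)$ and $m' \le O(m + kn)$, and since $f$ is polynomial, $f(n',m') = \poly(k)\, f(n,m)$.
\item The induction in \Cref{lem:cleanse} and \Cref{lem:decompose} has depth at most $k$ and, at each level, at most $k$ subinstances. For \Cref{lem:decompose} the charge is $\sum_a |\sol_a| \le r \tau_3 \le k t$; this linear factor multiplies through the levels.
\end{itemize}
So the overhead looks like it could reach $k^{O(k)}$, and the main obstacle is to certify $\poly(k)$ rather than that. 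I would handle it by unrolling the recursion of \Cref{lem:decompose} directly instead of inducting. Splitting at all non-leaf nodes of $I$ at once produces subtrees whose node sets, apart from the shared split nodes, are disjoint, so there are at most $k$ subproblems in total. Each subproblem's output is charged to $t$ with a factor at most $k$, because every partial solution extends to a full solution after \Cref{lem:remove-ghost} has removed vertices that extend to nothing. This gives an overall $\poly(k)$ dependence, which, combined with the $\poly(k)$ factor from \Cref{thm:tree}, proves the claim.
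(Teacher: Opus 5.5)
Your proposal is correct and follows the paper's proof: the paper likewise chains \Cref{thm:tree} through \Cref{lem:reduce-arity}, \Cref{lem:cleanse} and \Cref{lem:decompose}, and simply asserts that only a $\poly(k)$ overhead arises. Your bookkeeping (host growth $n' = O(kn)$ in \Cref{lem:reduce-arity}; at most $k-1$ pieces in \Cref{lem:decompose}, each with at most $t$ solutions after \Cref{lem:remove-ghost}) is a valid way to fill in that ``it can be checked'' step, and since each subproblem individually satisfies $|\sol_a| \le t$, the paper's induction as written already has only linear-in-$k$ overhead, so unrolling is just one way of seeing this.
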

\begin{proof}
    Combine \Cref{thm:tree} with \Cref{lem:reduce-arity}, \Cref{lem:cleanse}, and \Cref{lem:decompose}. It can be checked that these lemmas only incur a $\poly(k)$ blowup in the factor hidden by $\tOh$-notation.
\end{proof}

\section{From Trees to Hypergraphs}
\label{sec:reductions2}

We generalize our algorithm further, from trees to arbitrary hypergraphs. 

First, we generalize from trees to acyclic hypergraphs without isolated nodes. A hypergraph $H$ is \EMPH{acyclic} if there exists a tree $T$ with $V(T) = E(H)$ such that for every $i \in V(H)$, the hyperedges $J \in E(H)$ with $i \in J$ form a connected subtree of $T$. A hypergraph does not have isolated nodes if every node is contained in some hyperedge.

\begin{lemma}
    \label{lem:connected-acyclic-to-tree}
    Assume that for every tree $H$ and every $I$ we can solve $(H,I)$-Listing in time $\Otilde(t + f(n,m))$. Then for every acyclic hypergraph $H$ without isolated node and every $I$, we can solve $(H,I)$-Listing in time $\Otilde(t + f(m,m^2) + m^2)$.
\end{lemma}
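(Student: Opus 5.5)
We use the reduction sketched in the technical overview. Each hyperedge of $H$ becomes a node of a tree, and each edge of the join tree $T$ carries the shared variables. Concretely, let $T$ be a join tree of $H$, so $V(T) = E(H)$. We build a tree pattern $H^\ast$ with node set $E(H)$ and edge set $E(T)$. For the host graph $G^\ast$, the part $V^\ast_e$ for a node $e \in E(H)$ is the set of host hyperedges in the colored relation for $e$, i.e.\ the tuples $\tup{x} \in \prod_{i \in e} V_i$ with $\tup{x} \in E(G)$. Its size is at most $m$, so $n^\ast \le m$ (summing over the $O(1)$ hyperedges of $H$). For each tree edge $\{e,e'\} \in E(T)$ we connect $\tup{x} \in V^\ast_e$ and $\tup{y} \in V^\ast_{e'}$ iff they agree on $e \cap e'$, i.e.\ $\tup{x}_{e\cap e'} = \tup{y}_{e \cap e'}$. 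This gives $m^\ast \le O(m^2)$ edges. We construct it in time $O(m^2)$ by bucketing both sides by their projection to $e\cap e'$ and emitting all pairs inside each bucket.

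Correctness comes from the running-intersection property. A family $(\tup{x}^e)_{e \in E(H)}$ that is consistent along every edge of $T$ is globally consistent: for each $i \in V(H)$ the hyperedges containing $i$ form a connected subtree of $T$, and adjacent members of that subtree agree on $i$. Because $H$ has no isolated nodes, every $i$ is covered. Hence colored $H^\ast$-copies in $G^\ast$ correspond bijectively to colored $H$-copies in $G$, via $\tup{v} \mapsto (\tup{v}_e)_{e}$.

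The projection needs more care, since $I$ is a set of nodes of $H$, not of $H^\ast$. For each $i \in I$ pick one hyperedge $e(i) \ni i$, and let $I^\ast := \{e(i) : i \in I\}$. A solution of $(H^\ast, I^\ast)$ determines $\tup{v}_I$, but it can carry extra information: the coordinates of $\tup{x}^{e}$ for $e \in I^\ast$ that lie outside $I$. Distinct starred solutions could then map to the same $\tup{v}_I$, which spoils the output-linear bound $t$. We remove this by adding, for each $i \in I$, a fresh leaf node $\ell_i$ attached to $e(i)$. Its part is $V^\ast_{\ell_i} := V_i$ (at most $n \le O(m)$ vertices, assuming no isolated vertices, which we ensure by a linear-time cleanup), with $\tup{x} \in V^\ast_{e(i)}$ adjacent to $v \in V_i$ iff $\tup{x}_i = v$, giving $O(m)$ extra edges. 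We set $I^\ast := \{\ell_i : i \in I\}$. The augmented pattern is still a tree, and its global consistency still holds because $\ell_i$ only pins down variable $i$, which is already represented in $e(i)$. Now $(H^\ast,I^\ast)$-solutions in $G^\ast$ are in bijection with $(H,I)$-solutions in $G$, so the number of solutions is the same $t$.

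Finally we apply the assumed tree algorithm to $(H^\ast, I^\ast)$ on $G^\ast$, which has $O(m)$ vertices and $O(m^2)$ edges. This takes time $\Otilde(t + f(m,m^2))$, plus $O(m^2)$ for the construction, and we output each solution as the tuple $(v_{\ell_i})_{i \in I} = \tup{v}_I$. Since $f$ is monotone we may substitute the bounds $n^\ast \le O(m)$ and $m^\ast \le O(m^2)$, where the constants depend only on $H$.

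The main obstacle is the projection step. The fresh leaves $\ell_i$ are needed so that the solution count does not inflate, and we must check that adding them preserves both the tree shape and the bijection. The $O(m^2)$ edge bound also needs care: we avoid materialising anything larger by emitting bucket pairs directly.
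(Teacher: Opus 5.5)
Your proof is correct and uses the same construction as the paper: the join tree $T$ becomes the pattern, the host hyperedges of each type form the parts, edges along tree edges encode agreement on $e \cap e'$, and the running-intersection property gives the bijection of full copies. Your fresh leaves $\ell_i$ are a worthwhile addition. The paper's proof only establishes the bijection of full copies and never says which nodes of $T$ play the role of $I$, and projecting onto hyperedges that contain nodes of $I$ could inflate the solution count beyond $t$, exactly as you note. Your step therefore supplies a part of the argument that the paper leaves implicit.
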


\begin{proof}
    Let $H$ be an acyclic hypergraph without isolated nodes, and $G$ a host hypergraph. We will construct a tree $H'$ and a graph $G'$ such that $H$-copies in $G$ are in one-to-one correspondence with $H'$-copies in~$G'$. Moreover, $G'$ has $O(m)$ vertices and can be computed from $G$ in time $O(m^2)$.
    
    Write $H = ([k], \mathcal{E})$ and $G = \left( \bigcup_{i \in [k]} V_i, E \right)$.
    Let $T$ be a tree that witnesses the acyclicity of $H$, in particular $V(T) = \mathcal{E}$. We choose $H' := T$.
    We construct a graph $G' = (V', E')$ where $V' := \bigcup_{I \in \mathcal{E}} E^I$ with $E^I := E \cap \prod_{i \in I} V_i$. The edge set $E'$ is defined as follows. For every tree edge $\set{I, J} \in E(T)$, we put an edge between $(u_i)_{i \in I} \in E^I$ and $(v_i)_{i \in J} \in E^J$ if and only if $u_i = v_i$ for all $i \in I \cap J$. Clearly the graph has $O(m)$ vertices and can be constructed in time $O(m^2)$.
    
    We show a bijection between $H$-copies in $G$ and $H'$-copies in $G'$. Specifically, we map each $H$-copy $(v_1, \dots, v_k)$ to the tuple $((v_i)_{i \in I})_{I \in \mathcal{E}}$. The mapping is injective because there are no isolated nodes. Also note that the tuple is an $H'$-copy in $G'$. Indeed, $(v_i)_{i \in I} \in E^I$ for all $I \in \mathcal{E}$ because $(v_1, \dots, v_k)$ is an $H$-copy; moreover, $(v_i)_{i \in I}$ and $(v_j)_{j \in J}$ are adjacent for all $\set{I, J} \in E(T)$ by definition of $E'$.
    
    It remains to argue that the mapping is surjective. To this end, consider an arbitrary $H'$-copy $( (v^I_i)_{i \in I} )_{I \in \mathcal{E}}$. By definition, $v^I_i = v^J_i$ for all $\set{I,J} \in E(T)$ and all $i \in I \cap J$. Now fix any $i \in [k]$ and recall the property of the tree $T$: the hyperedges $I \in \mathcal{E}$ with $i \in I$ are connected by $E(T)$. These together imply the consistency of the assignments. That is, there exists some $v_i$ such that $v^I_i = v_i$ for all $I \in \mathcal{E}$ containing $i$. In particular, $v_i \in V_i$. Therefore, the $H'$-copy in consideration has the form $((v_i)_{i \in I})_{I \in \mathcal{E}}$. Then $(v_1, \dots, v_k)$ is an $H$-copy in~$G$ since $(v_i)_{i \in I} = (v^I_i)_{i \in I} \in E^I \subseteq E$ for all $I \in \mathcal{E}$. Therefore, every $H'$-copy is an image of an $H$-copy. This shows that the mapping is surjective and finishes the proof.
\end{proof}

Note that the reduction above changes the running time bound, as we replace $n$ by $m$. Next, we get rid of isolated nodes via a straightforward reduction:
\begin{lemma}
    \label{lem:acyclic-to-tree}
    Assume that for every acyclic hypergraph $H$ without isolated node and every $I$, we can solve $(H,I)$-Listing in time $\Otilde(t + f(m))$. Then for every acyclic hypergraph $H$ and every $I$ we can solve $(H,I)$-Listing in time $\Otilde(t + f(m))$.
\end{lemma}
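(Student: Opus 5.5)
The plan is to handle isolated nodes of $H$ directly, since they are independent of the rest of the pattern. Write $V(H) = [k]$. Let $Z \subseteq [k]$ be the set of isolated nodes of $H$, and let $H_0 := H - Z$, the hypergraph on $[k]\setminus Z$ with the same hyperedge set $\mathcal{E}$. Then $H_0$ is still acyclic, because the witnessing tree $T$ on $\mathcal{E}$ is unchanged and the connectivity condition only concerns nodes that lie in hyperedges. Moreover, $H_0$ has no isolated nodes. In the colored setting, a colored $H$-copy is exactly a pair consisting of a colored $H_0$-copy in $G_0 := G[V_l : l \notin Z]$ and an arbitrary tuple in $\prod_{z \in Z} V_z$, since no constraint involves the nodes of $Z$. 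Hence the solution set factors as
\[ \sol = \sol_0 \times \prod_{z \in Z \cap I} V_z , \]
where $\sol_0$ is the solution set of $(H_0, I \setminus Z)$-Listing on $G_0$. Projected isolated nodes simply disappear.

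The algorithm proceeds in three steps. First, if some $V_z$ with $z \in Z$ is empty, there are no solutions and we output nothing. Second, otherwise we solve $(H_0, I\setminus Z)$-Listing on $G_0$ using the assumption, in time $\Otilde(t_0 + f(m))$ with $t_0 = |\sol_0|$. Third, we output the product above by nested loops.

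The only point needing care is the accounting when $Z \cap I \neq \emptyset$. If $\sol_0 = \emptyset$, then $t = 0$, and the cost is just $\Otilde(f(m))$ plus the emptiness checks. If $\sol_0 \neq \emptyset$, then every $V_z$ is nonempty, so $t = t_0 \cdot \prod_{z \in Z\cap I}|V_z| \ge t_0$. The output loop then costs $O(t)$, and the total time is $\Otilde(t + f(m))$. We also need $O(n)$ time to scan the parts $V_z$. This is absorbed either by reading the input, or, if $n$ is not bounded by $m$ (isolated host vertices living only in the $V_z$), by the enumeration itself, since we only touch $V_z$ when we output its elements. It is enough to check nonemptiness in $O(1)$ per part and to iterate over $V_z$ only during output.

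A degenerate case is $H_0$ with no nodes, i.e.\ $\mathcal{E} = \emptyset$. Then $\sol_0$ is the singleton empty tuple, and the output is just the product of the $V_z$, listable in $O(t+1)$ time. I expect this to be straightforward, with the main subtlety being exactly this bookkeeping that the product output is charged to $t$ and not to $n$.
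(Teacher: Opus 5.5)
Your proof is correct and follows the paper's argument: delete the isolated nodes, solve $(H_0, I\setminus Z)$-Listing on $G_0$ via the assumption, and output the Cartesian product with $\prod_{z\in Z\cap I}V_z$, charging this output to $t \ge t_0$. Your preliminary emptiness check on the isolated parts is a worthwhile addition. The paper asserts the factorization and the bound $t \ge |\sol'|$ unconditionally, and both fail when some isolated part $V_z$ is empty. In that case $t=0$, while $t_0$, and hence the cost of the recursive call, may be large.
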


\begin{proof}
    Let $H = ([k], \mathcal{E})$ be an acyclic hypergraph, $I \subseteq [k]$, and $G = \left( \bigcup_{i \in [k]} V_i, E \right)$ a host graph. Without loss of generality, we assume that the isolated nodes in $H$ are $1, \dots, i$. Let $H' := H[\set{i+1, \dots, k}]$, $I' := I \cap \set{i+1, \dots, k}$ and $G' = G\left[ \bigcup_{i<j\leq k} V_j \right]$. Clearly $H'$ does not have isolated nodes. Let $\sol, \sol'$ be the sets of solutions to $(H,I)$-Listing in $G$ and $(H',I')$-Listing in $G'$, respectively. Observe that $\sol = \sol' \times \prod_{j \in I \cap \set{1, \dots, i}} V_j$ and thus $t = |\sol| \geq |\sol'|$. Hence we can list all solutions in $\sol$ by constructing $H',I',G'$ in time $O(m)$, computing $\sol'$ in time $\Otilde(|\sol'| + f(m)) \leq \Otilde(t + f(m))$, and then printing $(\tup{u},\tup{v})$ for every $\tup{u} \in \sol'$ and $\tup{v} \in \prod_{j \in I \cap \set{1, \dots, i}} V_j$ in time $O(t)$.
\end{proof}

For the generalization from acyclic hypergraphs to general hypergraphs we rely on the seminal PANDA algorithm~\cite{KNS17,KNS25}, see also~\cite{KNS26,KC26}. We use the following phrasing of the PANDA algorithm as a reduction, cf.~\cite[Section 5]{Hu24}. Here, $\subw(H)$ is the submodular width, see \Cref{sec:subw} for a definition.

\begin{theorem}[Rephrasing \cite{KNS17,KNS25}]
    \label{thm:PANDA}
    For every pattern hypergraph $H$, there exist acyclic hypergraphs $H_1,\ldots,H_r$ where $r = r(|V(H)|$ is a constant, $V(H_i) = V(H)$ for each $i \in [r]$, with the following property. Given a host hypergraph $G$, in time $\tOh(m^{\subw(H)})$ we can compute host hypergraphs $G_1,\ldots,G_r$ such that $\sol = \bigcup_{i \in [r]} \sol_i$, where $\sol$ is the set of colored $H$-copies in $G$ and $\sol_i$ is the set of colored $H_i$-copies in $G_i$.
\end{theorem}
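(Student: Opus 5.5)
The plan is to derive the statement from PANDA's guarantee for \emph{disjunctive} rules, in the same way as \cite[Section 5]{Hu24}, so the argument is a repackaging of PANDA's output rather than a new algorithm. First I fix, as a function of $H$ alone, the finite list $T_1,\ldots,T_r$ of all non-redundant tree decompositions of $H$; there are at most $2^{2^{|V(H)|}}$ of them, so $r=r(|V(H)|)$. Write $\mathcal{B}_i$ for the bag set of $T_i$. The definition of $\subw(H)$ (\Cref{sec:subw}) says: for every polymatroid $h$ with $h(e)\le \log m$ for all $e\in E(H)$, some $T_i$ satisfies $\max_{B\in\mathcal{B}_i} h(B)\le \subw(H)\log m$. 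For every selector $\beta=(B_1,\ldots,B_r)\in\mathcal{B}_1\times\cdots\times\mathcal{B}_r$ this gives an inequality $\min_i h(B_i)\le \subw(H)\log m$, which is exactly the hypothesis PANDA needs for the disjunctive rule $\bigvee_{i} R^\beta_{B_i}(\tup{v}_{B_i}) \leftarrow \bigwedge_{e} E_e(\tup{v}_e)$.

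Step one runs PANDA once for each selector $\beta$; there are constantly many. Each run takes time $\tOh(m^{\subw(H)})$ and outputs tables $R^\beta_{B_i}\subseteq\prod_{j\in B_i}V_j$. The guarantee is that every colored $H$-copy $\tup{v}$ of $G$ satisfies $\tup{v}_{B_i}\in R^\beta_{B_i}$ for at least one $i$. Let $R_{i,B}$ be the union of the tables $R^\beta_{B}$ over all $\beta$ whose $i$-th entry is $B$. Then $|R_{i,B}|\le\tOh(m^{\subw(H)})$.

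Step two is the combinatorial core. Fix an $H$-copy $\tup{v}$ and suppose that for every $i$ some bag $B_i\in\mathcal{B}_i$ has $\tup{v}_{B_i}\notin R_{i,B_i}$. The selector $\beta=(B_1,\ldots,B_r)$ then contradicts the guarantee of its own rule. Hence some $i$ has $\tup{v}_B\in R_{i,B}$ for all $B\in\mathcal{B}_i$; this is the distributivity argument.

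Step three builds the output. Let $H_i$ have vertex set $V(H)$ and hyperedges $\mathcal{B}_i\cup E(H)$. $H_i$ is acyclic: each $e\in E(H)$ lies in some bag, so attaching $e$ as a leaf to that bag of $T_i$ keeps the connectivity property. Let $G_i$ carry $R_{i,B}$ as its hyperedges on each bag $B$ and the original edges $E$ on each $e$. Then every colored $H_i$-copy in $G_i$ is a colored $H$-copy in $G$, because the original edges are enforced; this gives $\sol_i\subseteq\sol$. Conversely, step two gives $\sol\subseteq\bigcup_i\sol_i$. The size and time bounds are inherited from PANDA, since $E$ itself has size $m$.

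The main obstacle is PANDA's disjunctive-rule guarantee itself. It requires the Shannon-flow proof-sequence machinery to split the input into heavy and light parts so that each output tuple is charged to one disjunct within $\tOh(m^{\subw(H)})$. I would cite it as a black box from \cite{KNS17,KNS25}, checking only two points. First, the constants depend only on $H$. Second, the edge-cardinality constraints $|E_e|\le m$ are the only degree constraints used, which is what makes the bound $\subw(H)$ rather than a degree-aware width.
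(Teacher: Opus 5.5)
Your argument is correct and follows the route the paper relies on. The paper gives no proof of its own; it simply invokes PANDA as a reduction in the phrasing of \cite[Section 5]{Hu24}, which is exactly your bag-selector disjunctive rules followed by the distributivity step. One small detail needs fixing. When a bag $B\in\mathcal{B}_i$ coincides with an edge $e\in E(H)$, a colored host constrains that hyperedge by a single relation, so placing both tables there would admit their union. Put $R_{i,B}\cap E$ there instead. Alternatively, filter each $R_{i,B}$ against the edges $e\subseteq B$ and drop $E(H)$ from $H_i$ altogether, in which case $T_i$ itself witnesses acyclicity.
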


\begin{lemma}
    \label{lem:reduce-PANDA}
    Assume that for every acyclic hypergraph $H$ and every $I$ we can solve $(H,I)$-Listing in time $\Otilde(t + f(m))$. Then for every hypergraph $H$ and every $I$ we can solve $(H,I)$-Listing in time $\Otilde(t + f(\tOh(m^{\subw(H)})))$.
\end{lemma}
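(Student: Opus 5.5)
We combine \Cref{thm:PANDA} with the assumed listing algorithm for acyclic hypergraphs. Given $H$, $I$ and a host hypergraph $G$ with $m$ edges, we first apply \Cref{thm:PANDA}. In time $\tOh(m^{\subw(H)})$ this produces host hypergraphs $G_1,\ldots,G_r$ for the acyclic patterns $H_1,\ldots,H_r$, such that the set $\sol$ of colored $H$-copies in $G$ equals $\bigcup_{i\in[r]}\sol_i$. Since each $G_i$ is output within this time bound, each has at most $m_i \le \tOh(m^{\subw(H)})$ edges. Projecting to $I$ commutes with taking unions, so the solution set of $(H,I)$ on $G$ is $\bigcup_i \pi_I(\sol_i)$, where $\pi_I(\sol_i)$ is exactly the solution set of $(H_i,I)$-Listing on $G_i$. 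This uses $V(H_i)=V(H)$, so $I\subseteq V(H_i)$ and the colour classes match up.

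Next, we run the assumed algorithm for $(H_i,I)$-Listing on each $G_i$. Let $t_i := |\pi_I(\sol_i)|$. The crucial observation is that $\pi_I(\sol_i)\subseteq \pi_I(\sol)$, because every $H_i$-copy in $G_i$ is an $H$-copy in $G$. Hence $t_i\le t$, and the call on $G_i$ costs $\tOh(t_i + f(m_i)) \le \tOh(t + f(\tOh(m^{\subw(H)})))$. Here we assume $f$ is monotone, as is implicit throughout. Summing over the constantly many $i\in[r]$ keeps this bound.

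Finally, we remove duplicates from the union of the $r$ output lists. A single solution may appear in up to $r$ lists. We can sort the concatenated list, of total length at most $rt$, in time $\tOh(t)$, or use a trie or radix sort on the $|I|$-tuples, and then print each distinct tuple once. The PANDA preprocessing time $\tOh(m^{\subw(H)})$ is absorbed into $f(\tOh(m^{\subw(H)}))$, since $f(m')\ge m'$ can be assumed because any listing algorithm must read its input.

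The only delicate point is justifying that the sizes of the $G_i$ are bounded by $\tOh(m^{\subw(H)})$ and that the $G_i$ produce no spurious solutions, i.e., that $\sol_i\subseteq\sol$. Both follow directly from the statement of \Cref{thm:PANDA}: the first holds because the output size is bounded by the running time, and the second because $\sol=\bigcup_i\sol_i$ is an equality, not merely a covering. With these in hand, the rest is routine bookkeeping.
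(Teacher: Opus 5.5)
Your proposal is correct and follows the same route as the paper. You apply \Cref{thm:PANDA}, bound $|E(G_i)|$ by the $\tOh(m^{\subw(H)})$ running time, run the acyclic listing algorithm for $(H_i,I)$ on each $G_i$, and output the union of the projected solution sets, and you are more explicit than the paper about $t_i \le t$ (via the equality $\sol=\bigcup_i \sol_i$), deduplication, and absorbing the PANDA preprocessing time into $f$.
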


\begin{proof}
    Given a host hypergraph $G$ for $H$, we construct the patterns $H_1,\ldots,H_r$ and hosts $G_1,\ldots,G_r$ from \Cref{thm:PANDA}. 
    For a tuple $\tup{v} = (v_i)_{i \in [k]}$ we denote its projection to $I$ by $\tup{v}|_I := (v_i)_{i \in I}$; for a set of tuples $\sol$ we denote its projection to $I$ by $\sol|_I := \set{ \tup{v}|_I \;:\; \tup{v} \in \sol}$. 
    Note that the goal of $(H,I)$-Listing is to compute the projection $\sol|_I$, where $\sol$ is the set of $H$-copies in $G$. By \Cref{thm:PANDA}, we can compute this set as $\sol|_I = \bigcup_{i \in [r]} \sol_i |_I$, where $\sol_i$ is the set of $H_i$-copies in $G_i$. 
    That is, we solve $(H_i,I)$-Listing on each $G_i$ and return the union of all listed solutions; this solves $(H,I)$-Listing on $G$. 
    Since $G_1,\ldots,G_r$ can be computed in time $\tOh(m^{\subw(H)})$, their number of edges is bounded by the same term. The running time bound thus follows. 
\end{proof}

Chaining the above two lemmas extends the algorithm for general trees from the previous section to arbitrary hypergraphs.

\begin{theorem}[\Cref{thm:algohypergraphs} on Colored Listing] \label{thm:algohypergraphs-colored-listing}
    For every hypergraph $H$ and every $I$, Colored $(H,I)$-Listing can be solved in time $\Otilde(t + m^{\subw(H) \cdot 3/\alfa})$.
\end{theorem}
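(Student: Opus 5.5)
The plan is to chain the reductions of this section with the tree algorithm, following the same pattern as \Cref{thm:main1-colored-listing}.

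We start from \Cref{thm:main1-colored-listing}. For every tree $H$ and every $I$ it gives Colored $(H,I)$-Listing in time $\tOh(t + n^{3/\alfa})$, so we may take $f(n,m) = n^{3/\alfa}$.

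\textbf{Step 1: acyclic hypergraphs without isolated nodes.} Apply \Cref{lem:connected-acyclic-to-tree}. For every acyclic hypergraph without isolated nodes this yields time $\tOh(t + f(m,m^2) + m^2) = \tOh(t + m^{3/\alfa} + m^2)$. Since $\alfa \le 1$, we have $3/\alfa \ge 3 > 2$, so this simplifies to $\tOh(t + m^{3/\alfa})$.

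\textbf{Step 2: all acyclic hypergraphs.} \Cref{lem:acyclic-to-tree} removes isolated nodes without changing the bound. Hence every acyclic hypergraph and every $I$ is handled in time $\tOh(t + f'(m))$ with $f'(m) = m^{3/\alfa}$.

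\textbf{Step 3: arbitrary hypergraphs.} Apply \Cref{lem:reduce-PANDA} to get time $\tOh(t + f'(\tOh(m^{\subw(H)})))$. Since $f'$ is a fixed polynomial, the polylogarithmic factor inside its argument becomes $\polylog^{3/\alfa}$. This is still polylogarithmic because $\alfa$ is a constant, giving $\tOh(t + m^{\subw(H)\cdot 3/\alfa})$.

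The main subtlety is bookkeeping in Step 3. The $t$ there refers to the output size of the original instance, while each subinstance $(H_i, I)$ on $G_i$ has its own output size $t_i = |\sol_i|_I|$. Because $\sol = \bigcup_i \sol_i$, each $\sol_i|_I$ is contained in $\sol|_I$, so $t_i \le t$. With $r$ constant, the total listing cost is therefore $O(r\,t)$ plus the $f'$ terms. The union itself can be deduplicated in $\tOh(rt)$ time by sorting.

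One further point needs checking: \Cref{lem:connected-acyclic-to-tree} uses the edge count $m_i$ of $G_i$, which is bounded by $\tOh(m^{\subw(H)})$. Note that $n$ plays no role in the bounds here, since the hypergraph reduction already replaces $n$ by $m$.
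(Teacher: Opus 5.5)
Your proposal is correct and follows the paper's proof exactly: it chains \Cref{thm:main1-colored-listing} through \Cref{lem:connected-acyclic-to-tree}, \Cref{lem:acyclic-to-tree}, and \Cref{lem:reduce-PANDA}, so the bound goes from $\tOh(t + n^{3/\alfa})$ via $\tOh(t + m^{3/\alfa})$ to $\tOh(t + m^{\subw(H)\cdot 3/\alfa})$. Your extra bookkeeping is sound and is exactly what the paper leaves implicit: absorbing $m^2$ since $3/\alfa \ge 3$, keeping the polylogarithmic factor inside $f'$ polylogarithmic, and using $t_i \le t$ because $\sol_i \subseteq \sol$.
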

\begin{proof}
    Combine \Cref{thm:main1-colored-listing} with \Cref{lem:connected-acyclic-to-tree}, \Cref{lem:acyclic-to-tree}, and \Cref{lem:reduce-PANDA}. The running time changes from $\tOh(t + n^{3/\alfa})$ via $\tOh(t + m^{3/\alfa})$ to $\tOh(t + m^{\subw(H) \cdot 3/\alfa})$.
\end{proof}

\section{From Listing to Enumeration}
\label{sec:enumtolist}

For search and decision problems that are classically studied in complexity theory, a simple, yet fundamental result is a search-to-decision reduction, which we will review in \Cref{sec:searchtodecision}. For listing and enumeration problems, we prove an analogous result, namely an enumeration-to-listing reduction. It works in almost the same general setting as the classic search-to-decision reduction, except that the running time bounds also need to be constructible.

\subsection{Technical Setup}
\label{sec:technicalsetup}

Throughout \Cref{sec:enumtolist}, we consider a generic setup of a computational problem where any instance~$I$ is associated to a search space $U = \set{0,1}^d$ and a set of solutions $S \subseteq U$. We denote by $t = t(I) := |S|$ the number of solutions and we call $d = d(I)$ the depth. 

The following definition formalizes that we can split a given instance $I$ by fixing the first bit of the search space to $0$ or $1$, generating corresponding subinstances $I_0$ and $I_1$.

\begin{definition} \label{def:splitting}
A \EMPH{splitting algorithm} is an algorithm that given an instance $I$ of depth $d > 0$ computes instances $I_0$ and $I_1$ of depth $d-1$ such that for every $b, x_1, \dots, x_{d-1} \in \set{0,1}$ the tuple $(x_1, \dots, x_{d-1})$ is a solution to $I_b$ if and only if $(b, x_1, \dots, x_{d-1})$ is a solution to $I$.

For a splitting algorithm \algosplit, a function $f$ that maps instances to nonnegative integers is \EMPH{\algosplit-monotone} if $f(I_0) \le f(I)$ and $f(I_1) \le f(I)$ for every instance $I$ and $(I_0, I_1) = \text{\algosplit{$I$}}$.

We call \algosplit an \EMPH{$s$-splitting algorithm}, for a function $s$, if \algosplit is a splitting algorithm, its running time on instance $I$ is bounded by $s(I)$, and $s$ is \algosplit-monotone.
\end{definition}

\subsection{Classic Search-to-Decision Reduction} \label{sec:searchtodecision}

Let us start by reviewing the classic search-to-decision reduction. A \emph{decision algorithm} decides whether the given instance $I$ has a solution, i.e., it returns true if $t > 0$ and false otherwise. A \emph{search algorithm} returns an arbitrary solution if there exists one, and $\bot$ otherwise. Clearly, a search algorithm running in time $f(I)$ yields a decision algorithm running in time $O(f(I))$. Conversely, it is folklore that a decision algorithm running in time $f(I)$ yields a search algorithm running in time $O((f(I) + s(I)) \cdot d)$:

\begin{theorem}[Search-to-Decision Reduction]
\label{thm:searchtodec}
Let \algosplit be an $s$-splitting algorithm and $f$ be a \algosplit-monotone function. 
If there is a decision algorithm running in time $O(f(I))$, then there is a search algorithm running in time $O\left((f(I) + s(I)) \cdot d\right)$.
\end{theorem}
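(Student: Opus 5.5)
We use the standard self-reduction along the search space, fixing one bit at a time. Given an instance $I$ of depth $d$, we first run the decision algorithm on $I$; if it returns false, we output $\bot$. Otherwise we maintain a current instance $J$ (initially $J := I$) with the invariant that $J$ has a solution, together with a prefix $(b_1,\dots,b_j)$ of already fixed bits. As long as the depth of $J$ is positive, we call \algosplit{$J$} to obtain $(J_0,J_1)$ and run the decision algorithm on $J_0$. If it returns true we set $b_{j+1} := 0$ and $J := J_0$; otherwise we set $b_{j+1} := 1$ and $J := J_1$. Once $J$ has depth $0$, we return $(b_1,\dots,b_d)$.

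For correctness, we show by induction that $J$ always has a solution and that $(x_1,\dots,x_{d-j})$ is a solution to $J$ exactly when $(b_1,\dots,b_j,x_1,\dots,x_{d-j})$ is a solution to $I$. The second statement follows by composing the defining property of a splitting algorithm (\Cref{def:splitting}) along the chain of splits. For the first, note that the solutions of $J$ partition according to their first bit into those of $J_0$ and $J_1$. Hence if $J_0$ has no solution, then $J_1$ must have one. At depth $0$ the search space is $\set{()}$, so the nonempty solution set means the empty tuple is a solution. By the invariant, $(b_1,\dots,b_d)$ is then a solution of $I$.

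For the running time, there are exactly $d$ iterations. Each consists of one call to \algosplit on the current instance and one decision call. Since $s$ and $f$ are \algosplit-monotone, an induction along the chain gives $s(J)\le s(I)$, $f(J_0)\le f(I)$, and $f(J)\le f(I)$ for every instance encountered. So each iteration costs $O(s(I)+f(I))$, plus the initial decision call, for $O((f(I)+s(I))\cdot d)$ in total.

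The only subtle point is this use of monotonicity. Without it, the cost of the subinstances could blow up, so it is crucial that monotonicity is assumed for both $f$ and $s$. We also need to make sure the hidden constant in the $O(f)$ bound of the decision algorithm is uniform over all instances, which it is by assumption.
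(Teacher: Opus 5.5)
Your proof is correct and matches the paper's argument: the paper writes the same bit-by-bit self-reduction (split, run the decision algorithm on $I_0$, continue into $I_0$ or $I_1$) as a recursive function rather than a loop, and bounds the running time by induction on the depth using monotonicity of $f$ and $s$, just as you do. The invariant you state explicitly, that the current instance always has a solution, is the correctness argument the paper calls immediate.
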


\begin{proof}
For a given instance $I$ of depth $d$, we first run the decision algorithm to check if $I$ has a solution; if not then we return $\bot$. Otherwise $I$ has at least one solution, and we describe a recursive function that returns a solution:
\begin{pseudocode}
\Function{\algosearch{$I$}}{
    \tcp{finds a solution to $I$, assuming one exists}
    \If{$d(I) = 0$}{
        \Return the empty string\;
    }
    construct subinstances $(I_0, I_1) := \text{\algosplit{$I$}}$\;
    use the decision algorithm to decide whether $I_0$ has a solution\;
    \If{$I_0$ has a solution}{
        \Return \algosearch{$I_0$} prepended by $0$\;
    }\Else{
        \Return \algosearch{$I_1$} prepended by $1$\;
    }
}
\end{pseudocode}

Correctness is immediate. We show by induction on the depth $d = d(I)$ that the algorithm runs in time $O((f(I) + s(I)) \cdot d)$. For $d = 0$ the claim is trivial. For $d \geq 1$, the subinstances are constructed in time $O(s(I))$, and the decision algorithm runs in time $O(f(I))$. The recursive call takes time $O\left(\max_{b \in \set{0,1}}(f(I_b) + s(I_b)) \cdot (d-1)\right)$ by induction. Since both $f$ and $s$ are \algosplit-monotone, we have $f(I_b) \leq f(I)$ and $s(I_b) \leq s(I)$, so the time is $O\left((f(I) + s(I)) \cdot (d-1)\right)$. Summing up all terms, the algorithm runs in time $O\left( (f(I) + s(I)) \cdot d \right)$ as claimed.
\end{proof}

The proof demonstrates that the notion of splitting algorithm is indeed natural, as it allows the search procedure to fix a bit in the search space. Note that the requirements that $f$ and $s$ are \algosplit-monotone are necessary to avoid blowing up the running time of the decision algorithm and the splitting algorithm in the recursion.

\subsection{Our Result: Enumeration-to-Listing Reduction}

Analogously to the search-to-decision reduction, we prove an enumeration-to-listing reduction. We start by introducing different meanings of listing and enumeration.

We work with the same technical setup defined in \Cref{sec:technicalsetup}.
A \EMPH{listing algorithm} is given an instance~$I$ and computes the set of solutions $S$. A \EMPH{strong listing algorithm} is given an instance $I$ and a number~$t'$ and computes $\min\set{t, t'}$ solutions to $I$. For listing algorithms and strong listing algorithms we are interested in bounding the total running time.

An \EMPH{enumeration algorithm} has the same correctness guarantee as a listing algorithm, but instead of studying the total running time, we focus on bounding the \EMPH{preprocessing time} (the time until the first solution is printed) and the \EMPH{delay} (the maximum time between two consecutive solutions are printed). The time after printing the last solution can also be assumed to be bounded by the delay.

It is clear that any enumeration algorithm with preprocessing time $f(I)$ and delay $g(I)$ yields a strong listing algorithm running in time $O(f(I) + \min\set{t', t+1} \cdot g(I)) \le O(f(I) + g(I) + \min\set{t', t} \cdot g(I))$, by simply running the enumeration algorithm until it has printed $t'$ solutions (or until it has printed all $t$ solutions and terminates). Moreover, a strong listing algorithm running in time $O(f(I) + \min\set{t, t'} \cdot g(I))$ yields a listing algorithm running in time $O(f(I) + t \cdot g(I))$, by simply setting $t' := 2^d$. Our result is a converse to these simple reductions.

\begin{definition}
A pair of functions $(f,g)$, each mapping instances to nonnegative integers, is \EMPH{constructible} if $f(I)$ and $g(I)$ can be computed in time $O(f(I) + g(I) + 1)$, for a given instance $I$.
\end{definition}

\begin{theorem}[Enumeration-to-Listing Reduction]
\label{thm:enumtolist}
Let \algosplit be an $s$-splitting algorithm, $f,g$ be constructible, \algosplit-monotone functions, and $\beta \in [0,1]$ be a constant. 
If there is a listing algorithm running in time $O(f(I) + t \cdot g(I))$, then there is an enumeration algorithm with preprocessing time $O((f(I) + g(I) + s(I)) \cdot d^{2 - \beta})$ and delay $O(g(I) \cdot d^\beta)$.
\end{theorem}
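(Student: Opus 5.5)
The plan follows the overview in three stages. First, build a budgeted listing procedure. Second, use it in a recursive splitting scheme and bound the work through a charging argument, which yields an ``amortized'' enumeration algorithm. Third, convert the amortized guarantee into a true delay bound by buffering, choosing parameters so the $d^{2-\beta}$ versus $d^{\beta}$ tradeoff appears.

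\textbf{Step 1 (budgeted listing).} Let $c$ be the constant of the listing algorithm, whose running time is at most $c(f(I)+t\,g(I))$. Using constructibility, compute $f(I),g(I)$ in time $O(f(I)+g(I)+1)$ and define a budget $T(I) := c\,(2f(I)+g(I)+1)$. Then run the listing algorithm on $I$ for at most $T(I)$ steps. There are two outcomes. If it finishes, we obtain all $t$ solutions of $I$. If it aborts, then $c(f+tg) > T$, which forces $t > f(I)/g(I) =: b(I)$; when $g(I)=0$ an abort is impossible. So every aborted instance has many solutions.

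\textbf{Step 2 (recursive scheme and charging).} Define a recursive procedure on instance $I$. It first runs the budgeted listing. If that finishes, it outputs the obtained solutions into a global output queue. Otherwise it calls \algosplit{$I$} to get $(I_0,I_1)$ and recurses on $I_0$ and then on $I_1$; at depth $0$ the instance is trivially decided. By \algosplit-monotonicity of $f,g,s$, each node of the recursion tree costs $O(f(I)+g(I)+s(I)+1)$, measured with respect to the root $I$. In addition, each leaf that finishes costs $O(g(I))$ for each solution it lists.

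The key counting claim concerns the moment the $j$-th solution is emitted. At that time the explored recursion tree has $O(d + d\,j/b)$ nodes, where $b=f(I)/g(I)$ (using the monotone bound $f(I')/g(I')$ needs care, see below). The argument is as follows. Every aborted internal node has more than $b(I')$ solutions. The internal nodes whose subtrees are already fully processed form a forest, and its maximal disjoint subtrees have disjoint solution sets. Counting along root-to-node paths, each of length at most $d$, then bounds their number by $d\cdot j/b$. The nodes on the current root path contribute $O(d)$. This gives a total time of $O((f+g+s)(d + dj/b)) = O((f+g+s)d + j\,g\,d)$ after bounding $s$ suitably; where $s$ dominates, $s\cdot d\cdot j/b$ must be absorbed, so I would fold $s$ into the budget by charging it as if $f$ were $f+s$. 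That is, amortized enumeration costs $O((f+g+s)d)$ preprocessing plus $O(g\,d)$ per solution.

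\textbf{Step 3 (de-amortization and the $\beta$ tradeoff).} This is the main obstacle. The standard conversion simulates the amortized algorithm and, after each solution it releases, runs $\Theta(g d)$ simulation steps while keeping produced solutions in a buffer. After preprocessing $\Theta((f+g+s)d)$, the buffer never empties. That already proves the case $\beta=1$. To reach delay $O(g\,d^\beta)$ at the cost of preprocessing $(f+g+s)d^{2-\beta}$, I would sharpen the counting in Step 2. The refinement is to charge each internal node not to $b$ solutions at depth granularity $d$ but more finely: inflate the budget to $T' = T\cdot d^{1-\beta}$, so that aborted nodes have at least $b\,d^{1-\beta}$ solutions. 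Then the $d j/b$ term becomes $d^{\beta} j/b$, so the per-solution amortized cost is $O(g\,d^\beta)$, while each node's cost and hence the preprocessing rises to $O((f+g+s)d^{1-\beta}\cdot d)=O((f+g+s)d^{2-\beta})$. After that, the buffering conversion gives the claimed bounds. The delicate points, which I expect to need the most care, are three. First, the ratio $f(I')/g(I')$ is not monotone, so I would argue with the root's $f(I),g(I)$ throughout, using the root budget at every node; this remains valid by monotonicity, since $f(I')+t'g(I')\le f(I)+t'g(I)$. Second, the budgeted listing output at a leaf must be released gradually rather than all at once. Third, the buffering invariant must be verified against the refined charging.
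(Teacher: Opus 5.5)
\textbf{Your $\beta=1$ case is fine, but the refinement to $\beta<1$ fails.} For $\beta=1$ your argument is essentially the paper's warmup: a budgeted run at each node with threshold $b\approx f/g$, at most $O(d(j/b+1))$ started calls when the $j$-th solution is printed, then the Cheater's Lemma. Folding $s$ into $f$ also matches the paper's $f'=c(f+g+s)$.

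The problem is the inflated budget. In your sequential scheme every aborted node pays its full budget $T'\approx (f+g+s)\,d^{1-\beta}=\Theta(b'g)$, where $b':=b\,d^{1-\beta}$ is the new abort threshold. There are $O(d+dj/b')$ aborted nodes. The $j$-dependent part of the running time is therefore $(dj/b')\cdot\Theta(b'g)=\Theta(jdg)$. The factor $d^{1-\beta}$ you save in the node count is cancelled by the factor $d^{1-\beta}$ you add to the per-node cost; you applied that factor to the preprocessing term but not to the per-solution term.

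This is inherent to the scheme, not an accounting slip, and it holds for every budget. Take $f,g$ constant and a listing algorithm that always uses its full time $c(f+tg)$. Take $K$ clusters of about $2b'$ solutions each. Each cluster is a complete subtree near the bottom of the search tree, reached from a distinct top-level branch by a path of length $\Theta(d)$ whose siblings contain no solutions. Each of the $\Theta(Kd)$ path nodes aborts and pays $\Theta(b'g)$. Printing $j=\Theta(Kb')$ solutions then costs $\Omega(jdg)$, and for large $K$ no preprocessing term absorbs this. So your scheme gives delay $\Theta(gd)$, never $O(gd^\beta)$. In particular it cannot give the $\beta=0$ case that \Cref{thm:main2-simplified} relies on.

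\textbf{The missing idea is to make non-branching nodes cheap.} The paper splits first. It then runs the listing algorithm on both children $I_{u0},I_{u1}$ interleaved, under a joint budget $O(f+bg)$ (\texttt{pBudgetedList}). This costs only $O(f+\min\{t(u0),t(u1),b\}\cdot g)$. If one child finishes, its solutions are printed and only the other child is recursed on.

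With this change, only two kinds of nodes pay $O(f+bg)$: branching nodes (both children recursed, so both have more than $b$ solutions) and the at most $d$ nodes on the current root path. The number of branching nodes is at most the number of recursion leaves, which is at most $j/b+1$, with \emph{no} factor $d$. To see this, charge each leaf to its nearest ancestor whose parent is branching. These ancestors are pairwise incomparable, all finished except possibly one, and each has more than $b$ solutions.

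Every other node pays $O(f)$ plus $O(t_u g)$ for the $t_u$ solutions it listed. These solutions are disjoint and already printed, so they sum to less than $j$. The time to the $j$-th solution becomes $O\bigl((j/b+1+d)(f+bg)+d(j/b+1)f+jg\bigr)$, where $d$ now multiplies only $f$ and not $bg$. Choosing $b=\lceil (f/g)\,d^{1-\beta}\rceil$ gives $O(fd^{2-\beta}+jgd^{\beta})$, and the Cheater's Lemma finishes the proof.
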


We stress that \Cref{thm:enumtolist} assumes that the given listing algorithm is deterministic, and the resulting enumeration algorithm is also deterministic.

The choice of $\beta \in [0,1]$ describes a tradeoff. At one extreme $\beta = 1$, we obtain preprocessing time $O((f(I) + g(I) + s(I)) \cdot d)$ and delay $O(g(I) \cdot d)$; at the other extreme $\beta = 0$, we obtain preprocessing time $O((f(I) + g(I) + s(I)) \cdot d^2)$ and delay $O(g(I))$.

Let us discuss the requirements of \Cref{thm:enumtolist}. First, it assumes splittability (specifically, that there is an $s$-splitting algorithm \algosplit and $f,g$ are \algosplit-monotone). This is a well-motivated requirement, because it is also used in the classic search-to-decision reduction (\Cref{thm:searchtodec}), and because Colored $(H,I)$-Subhypergraph has this property as we will verify soon.
In addition to splittability, \Cref{thm:enumtolist} requires the time bounds $f,g$ to be constructible, which is used in our reduction to set certain parameters. This requirement is also satisfied for typical listing algorithms, as most time bounds are expressed in terms of the input size $n$, and it is typically assumed that $n$ can be read from the input instance in constant time, so if $f$ and $g$ are simple enough then $f(I)$ and $g(I)$ can even be computed in constant time. 

\subsection{Application to $(H,I)$-Subhypergraphs}

We next demonstrate that our general enumeration-to-listing reduction is applicable to the Colored $(H,I)$-Subhypergraph problem, thus proving \Cref{thm:main2-simplified} as a corollary of \Cref{thm:enumtolist}.

\MainSimplifiedTheorem*

\begin{proof}
    We assume that the host hypergraph $G$ has vertex set $\{0,\ldots,n-1\}$; if this is not satisfied then it can be ensured in a simple $O((n+m) \log n)$-time preprocessing.
    So each vertex corresponds to a $\lceil \log_2 n \rceil$-bit number, and a solution corresponds to a $d$-bit number for $d := k \cdot \lceil \log_2 n \rceil$, where $k = |I|$. Thus, the search space is $\set{0,1}^d$, which fits to our technical setup. 

    With respect to this search space, the Colored $(H,I)$-Listing problem has a simple splitting algorithm \algosplit. Indeed, fixing a bit of the search space to $b \in \set{0,1}$ corresponds to removing from $V_i$ all vertices whose $j$-th bit is not equal to $b$ (for some easily computable $i$ and $j$). In particular, this is again a valid instance. To perform this removal, it suffices to iterate once over all vertices and hyperedges to filter out the surviving ones. This runs in time $O(n + m)$; for concreteness suppose that the running time is bounded by $s(G) := c \cdot (n(G) + m(G))$ for some constant $c > 0$. Note that this procedure \algosplit computes a subhypergraph $G_b$ of $G$ for each $b \in \set{0,1}$, and thus $n(G_b) \le n(G)$ and $m(G_b) \le m(G)$. This monotonicity implies that the function $s(G) = c \cdot (n(G) + m(G))$ is \algosplit-monotone. It follows that \algosplit is an $s$-splitting algorithm.

    By assumption, $f(n,m)$ and $g(n,m)$ can be computed in time $O(f(n,m) + g(n,m))$, and thus $f,g$ is constructible. Since \algosplit constructs subhypergraphs and $f,g$ are monotone, $f$ and $g$ are also \algosplit-monotone. Hence, the requirements of \Cref{thm:enumtolist} are satisfied. With $\beta := 0$, it yields the claimed enumeration algorithm.
\end{proof}

Now we can prove the statements in \Cref{thm:main1,thm:algohypergraphs} on Colored $(H,I)$-Enumeration.

\begin{theorem}[\Cref{thm:main1} on Colored Enumeration] \label{thm:main1-colored-enumeration}
    For every tree $H$ and every $I$, Colored $(H,I)$-Enumeration can be solved in preprocessing time $\Otilde(n^{3/\alfa})$ and delay $\tOh(1)$.
    The dependence on $k$ hidden by $\tOh$-notation is $\poly(k)$.
\end{theorem}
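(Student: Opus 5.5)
The plan is to combine the colored listing algorithm of \Cref{thm:main1-colored-listing} with the simplified enumeration-to-listing reduction \Cref{thm:main2-simplified}. By \Cref{thm:main1-colored-listing}, Colored $(H,I)$-Listing runs in time $\Otilde(t + n^{3/\alfa})$ for every tree $H$ and every $I$. I will make the hidden constants explicit and write this bound as $O(f(n,m) + t\cdot g(n,m))$ with
\[ f(n,m) := c_k\, n^{3/\alfa} \log^{c} (n+2), \qquad g(n,m) := c_k \log^{c}(n+2), \]
for suitable constants $c$ and $c_k = \poly(k)$.

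Both functions are monotone in $n$ and $m$. They also need to be computable in time $O(f+g)$, which holds since evaluating these closed-form expressions (to within a constant factor, e.g.\ rounding to powers of two) takes $O(\log n)$ time, well within budget. A subtle point is that \Cref{thm:main2-simplified} needs the bound in the precise form $O(f+t\cdot g)$ with the same $\log$ factor applied to $t$ and to $n^{3/\alfa}$. The listing analysis of \Cref{thm:tree} gives $\Otilde(t + n^{3/\alfa})$ with polylog factors multiplying both terms, so bounding both by a common $\log^c n$ suffices. Here $t \le n^{|I|}$, so $\log t = O(k\log n)$, and that factor can be absorbed into $c_k$ or into the polylog.

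Applying \Cref{thm:main2-simplified} then gives preprocessing time $O((f+g+n+m)\log^2 n) = \Otilde(n^{3/\alfa})$, using $m \le n^2 \le n^{3/\alfa}$, and delay $O(g) = \tOh(1)$. The $\poly(k)$ dependence is preserved, because the reduction adds only $\log^2 n$ factors, and the depth $d = k\lceil\log n\rceil$ contributes only a linear factor in $k$ inside \Cref{thm:enumtolist}.

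The main obstacle I expect is checking that the listing algorithm's time bound is genuinely of the required deterministic, worst-case form and holds for every subinstance produced by the splitting. Those subinstances are induced colored subgraphs, so \Cref{thm:main1-colored-listing} applies directly to each. The parallel-scales trick over $\kappa$ in \Cref{thm:tree} is deterministic, so no issue arises there.
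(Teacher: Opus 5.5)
Your proposal is correct and takes the same route as the paper's proof, which simply combines \Cref{thm:main2-simplified} with \Cref{thm:main1-colored-listing}; your extra bookkeeping (explicit $f,g$, constructibility, $m \le n^2 \le n^{3/\alfa}$, determinism of the listing algorithm) is sound. One minor imprecision: with $\beta = 0$ the depth $d = k\lceil\log_2 n\rceil$ enters the preprocessing time as $d^2$, so it contributes a factor $k^2$ rather than a linear factor in $k$, which is still $\poly(k)$.
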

\begin{proof}
    Combine \Cref{thm:main2-simplified} with \Cref{thm:main1-colored-listing}. It can be checked that this blows up the factor hidden by $\tOh$-notation by $\poly(k)$.
\end{proof}

\begin{theorem}[\Cref{thm:algohypergraphs} on Colored Enumeration] \label{thm:algohypergraph-colored-enumeration}
    For every hypergraph $H$ and every $I$, Colored $(H,I)$-Enumeration can be solved in preprocessing time $\Otilde(m^{\subw(H) \cdot 3/\alfa})$ and delay $\tOh(1)$.
\end{theorem}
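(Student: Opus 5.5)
The plan is to combine \Cref{thm:main2-simplified} with \Cref{thm:algohypergraphs-colored-listing}, in the same way \Cref{thm:main1-colored-enumeration} was obtained from \Cref{thm:main1-colored-listing}. By \Cref{thm:algohypergraphs-colored-listing}, Colored $(H,I)$-Listing runs in time $\tOh(t + m^{\subw(H)\cdot 3/\alfa})$. We write this as $O(f(n,m) + t\cdot g(n,m))$ with explicit monotone choices $f(n,m) := c\, m^{\subw(H)\cdot 3/\alfa}\log^{c}(n+m)$ and $g(n,m) := c \log^{c}(n+m)$, for a constant $c$ depending only on $(H,I)$. This makes the hidden polylogarithmic factors explicit, so the listing algorithm's time is genuinely bounded by these expressions.

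The hypotheses of \Cref{thm:main2-simplified} are then easy to check. Since $H$ is fixed, $\subw(H)$ and hence the exponent is a fixed constant. Both $f$ and $g$ are nondecreasing in $n$ and $m$. They can be evaluated (say, to within a constant factor, rounding to integers) in time $O(\log(n+m))$, which is at most $O(f+g)$. The only mild point is that the computation must be exact enough for the budgets used in \Cref{thm:enumtolist}. Overestimating by a constant factor is harmless, so we can compute a power-of-two upper bound on $f$ and $g$ via bit lengths and use that as the function instead.

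Applying \Cref{thm:main2-simplified} then gives preprocessing time
\[ O\bigl((f(n,m)+g(n,m)+n+m)\log^2 n\bigr) = \tOh(m^{\subw(H)\cdot 3/\alfa}) \]
and delay $O(g(n,m)) = \tOh(1)$. Here we use $n \le O(m)$ after discarding isolated vertices in linear time, which can be done without affecting solutions except on patterns with isolated nodes. Patterns with isolated nodes are handled as in \Cref{lem:acyclic-to-tree}: a product with full parts can be enumerated with constant delay by nested loops around the enumeration of the reduced instance. We also use $\subw(H)\ge 1$ for nonempty edge sets, so the $n+m$ term is dominated.

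The main obstacle is bookkeeping rather than new ideas. We must ensure the listing time bound is stated as a concrete, monotone, constructible function, and not merely as a $\tOh$ expression. We must also ensure that the $n$-dependence from isolated vertices or isolated pattern nodes does not break the bound purely in terms of $m$.
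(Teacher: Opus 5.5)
Your proposal is correct and matches the paper's proof, which consists exactly of combining \Cref{thm:main2-simplified} with \Cref{thm:algohypergraphs-colored-listing}. The paper leaves the surrounding bookkeeping implicit, and the steps you add for it are sound: explicit monotone, constructible choices of $f$ and $g$, and absorbing the $(n+m)\log^2 n$ term via $n \le O(m)$ (with isolated pattern nodes split off) and $\subw(H) \ge 1$.
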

\begin{proof}
    Combine \Cref{thm:main2-simplified} with \Cref{thm:algohypergraphs-colored-listing}.
\end{proof}

\begin{remark}
    In \Cref{thm:main2-simplified}, we assumed that $f$ and $g$ are monotone functions in terms of $n$ and~$m$. Beyond the parameters $n$ and $m$, the same argument works for any sub(hyper)graph-monotone parameters, e.g., the maximum degree $\Delta = \Delta(G)$. If $f$ and $g$ are monotone functions in terms of sub(hyper)graph-monotone parameters such that $f,g$ is constructible, and Colored $(H,I)$-Listing is in time $O(f(G) + t \cdot g(G))$, then \Cref{thm:enumtolist} yields an enumeration algorithm with preprocessing time $O((f(G) + g(G) + n + m) \cdot d^{2 - \beta})$ and delay $O(g(G) \cdot d^\beta)$, for any constant $\beta \in [0,1]$. 
\end{remark}

\subsection{Open Directions}

We leave the following open questions on the enumeration-to-listing reduction.

\paragraph{Improvements}
Can the overhead factors be further improved? Specifically, is it possible to improve the overheads in preprocessing time and delay from the current tradeoff $(O(d^{2-\beta}), O(d^\beta))$ to, e.g., $(O(d^{1.99}), O(1))$ or $(O(d), O(d^{0.99}))$ or even $(O(d), O(1))$? 
    
It is natural to expect that the overhead in preprocessing time must be at least $d$, at least in the general setting considered here, as this is the overhead of the classic search-to-decision reduction. Can one prove this?

\paragraph{Non-constructible time bounds}
Is there a similar reduction without the assumption that the time bounds are constructible? Note that, e.g., treewidth is a subgraph-monotone parameter, but it cannot be computed efficiently, so time bounds involving treewidth are not constructible, which shows a limitation of the current reduction.

\paragraph{Uncolored problems}
While we demonstrated our reduction for Colored $(H,I)$-Listing, it is not directly applicable to Uncolored $(H,I)$-Listing, since the uncolored problem does not have a self-reduction and is not splittable in the sense of \Cref{def:splitting}. Could there be a variant of \Cref{thm:enumtolist} that would work for uncolored problems?

\paragraph{More applications}
Does our reduction have applications beyond subgraph listing and conjunctive queries, such as enumerating spanning trees, matchings, etc.? Or is there some variant of the reduction that would admit more applications?

\paragraph{Randomized algorithms}
\Cref{thm:enumtolist} assumes that the given listing algorithm is deterministic, and yields a deterministic enumeration algorithm. In some settings of randomized algorithms it might be possible to avoid the factor $d$ overhead altogether.

\subsection{Proof Preparations}

We will make use of the Cheater's Lemma from~\cite{CK21}, which formalizes ways in which an enumeration algorithm may cheat so that it can still be converted into a proper enumeration algorithm. The original statement of this lemma uses very specific running time bounds, therefore here we rework it to allow more general time bounds. (Later in this paper we will need yet another variant, which we call variant II.)

\begin{lemma}[Cheater's Lemma, Variant I] \label{lem:cheaterI}
Let $f,g$ be constructible functions with $f(I) \ge g(I)$ for all~$I$. Suppose there is a listing algorithm that for any $1 \le j \le t$ prints the $j$-th solution in time at most $f(I) + j \cdot g(I)$. Then there exists an enumeration algorithm with preprocessing time $O(f(I))$ and delay $O(g(I))$.
\end{lemma}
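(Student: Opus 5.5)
The plan is the standard buffering simulation. We run the given listing algorithm $\mathcal{A}$ as a coroutine inside a wrapper that interleaves simulation steps with output. At the start, the wrapper computes $f(I)$ and $g(I)$; since $(f,g)$ is constructible and $f \ge g$, this costs $O(f(I)+g(I)+1) = O(f(I))$ (if $f(I)=0$ we treat it as $1$). Every solution that $\mathcal{A}$ prints is intercepted and appended to a FIFO queue $Q$, not printed directly. The wrapper does this in phases.

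\textbf{Preprocessing.} Simulate $\mathcal{A}$ for $f(I)+g(I) = O(f(I))$ steps. By hypothesis the first solution is produced by time $f(I)+g(I)$, so if $t\ge 1$ the queue is nonempty afterwards. If instead $\mathcal{A}$ terminates with an empty queue, then $t=0$ and we stop.

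\textbf{Main loop.} Repeatedly dequeue and print one solution, then simulate $\mathcal{A}$ for $g(I)$ further steps and enqueue any new outputs. Each iteration costs $O(g(I))$, which gives the delay. Once $\mathcal{A}$ has terminated, we simply flush the queue at $O(1)$ per solution.

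\textbf{Invariant.} The key step is to show that the queue never runs empty before all $t$ solutions are printed. Just before the $j$-th print, the wrapper has simulated $f(I)+g(I)+(j-1)g(I) = f(I)+j\,g(I)$ steps of $\mathcal{A}$. By hypothesis, at least $\min\{j,t\}$ solutions have been produced by then. Only $j-1$ of them have been dequeued, so the queue is nonempty whenever $j\le t$. This is a one-line induction on $j$.

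\textbf{Correctness and termination.} Every solution is printed exactly once, because we print exactly what $\mathcal{A}$ outputs and $\mathcal{A}$ is a correct listing algorithm. After the last print, the wrapper may still need to finish simulating $\mathcal{A}$ to learn that no more solutions come. By hypothesis the $t$-th solution appears by time $f(I)+t\,g(I)$. However, $\mathcal{A}$ might keep running afterwards, so this tail is not bounded by the delay. I would handle it by noting that the wrapper itself knows $t$ only implicitly. The fix is to use the given time bound: once $f(I)+(j)g(I)$ steps have passed without any new output, we cannot yet conclude termination. Instead, I would rely on the convention stated in the setup that the time after the last solution may be charged loosely, or equivalently assume that $\mathcal{A}$'s total time is $O(f(I)+t\,g(I))$, which is the listing guarantee. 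Then the post-$t$ tail is at most $O(f(I)+t\,g(I))$ steps. We can spread these over the iterations by simulating $2g(I)$ steps per print instead of $g(I)$, so the simulation is ahead of schedule and $\mathcal{A}$ terminates by the time the queue drains.

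\textbf{Main obstacle.} The delicate point is this termination/tail accounting together with the invariant. Everything else is routine bookkeeping of the per-step simulation overhead, which is $O(1)$ per step of $\mathcal{A}$.
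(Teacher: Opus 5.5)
Your buffering simulation with a step counter is the same construction as the paper's. Your invariant is also correct: after $f(I)+j\,g(I)$ simulated steps at least $\min\{j,t\}$ solutions are in the queue, so it is nonempty for $j\le t$.

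The gap is in termination. You say that once $f(I)+j\,g(I)$ steps pass without a $j$-th solution, ``we cannot yet conclude termination''. You then patch this by assuming that $\mathcal{A}$'s total running time is $O(f(I)+t\,g(I))$. That assumption is not part of the lemma. The hypothesis only bounds the time until the $j$-th solution for $1\le j\le t$, so $\mathcal{A}$ may run arbitrarily long after its last output. In the paper's application (the cheating algorithms in \Cref{lem:enumtolistsimple} and \Cref{lem:enumtolistmain}), only these per-$j$ bounds are proved. The convention about the time after the last solution does not help either. It is a requirement on the enumeration algorithm you are building (the tail must be $O(g(I))$), not a relaxation. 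The same problem already appears when $t=0$: your preprocessing stops only if $\mathcal{A}$ happens to terminate within $f(I)+g(I)$ steps.

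The missing idea is that the hypothesis itself certifies termination, via its contrapositive. Suppose that after $f(I)+j\,g(I)$ simulated steps fewer than $j$ solutions have been produced, i.e., the queue is empty when you try to make the $j$-th print. Then $j>t$. Since $j-1$ distinct solutions have already been printed, $t=j-1$ and everything has been output. So the wrapper can stop immediately, without waiting for $\mathcal{A}$ to finish. This is exactly how the paper's proof concludes. It gives preprocessing $O(f(I))$, including the case $t=0$, which is detected after $f(I)+g(I)$ steps. It also gives a tail of $O(g(I))$ after the last print, with no assumption on $\mathcal{A}$'s total running time.
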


\begin{proof}
We augment the listing algorithm with a counter that is incremented after each step, so it counts the number of steps performed by the algorithm. We run this augmented algorithm, gathering its listed solutions. After the $(f(I) + j \cdot g(I))$-th time step, we print the $j$-th gathered solution. If at this point less than $j$ solutions have been gathered, then we terminate. 

Note that to run this algorithm we need to know $f(I)$ and $g(I)$, and we can indeed compute $f(I)$ and $g(I)$ in time $O(f(I) + g(I))$, since $f,g$ is constructible.

The guarantee of the listing algorithm ensures that $j$ solutions are gathered in time at most $f(I) + j \cdot g(I)$, so after the $(f(I) + j \cdot g(I))$-th time step we can print the $j$-th gathered solution. This holds for any $1 \le j \le t$. Thus, if at this point less than $j$ solutions have been gathered, then $j > t$, in which case we have already listed all $j-1 = t$ solutions and correctly terminate.  

Clearly, the first solution is printed in time $O(f(I) + g(I)) = O(f(I))$ and the time between printing two solutions is $O(g(I))$. The time between printing the last solution and terminating is also $O(g(I))$. That is, we obtained an enumeration algorithm with preprocessing time $O(f(I))$ and delay $O(g(I))$.
\end{proof}

For the remainder of \Cref{sec:enumtolist} we fix an $s$-splitting algorithm \algosplit.
We now introduce some notation by considering the following trivial listing algorithm.

\begin{pseudocode}
\Function{\algotrivlist{$J, u$}}{
    \tcp{lists all solutions to instance $J$, prepended by $u$}
    \eIf{$d(J) = 0$}{
        \lIf{$J$ is satisfied}{print $u$}
    }{
        use \algosplit to construct subinstances $J_0, J_1$\;
        \algotrivlist{$J_0, u0$}\;
        \algotrivlist{$J_1, u1$}\;
    }
}
\end{pseudocode}

Let $I$ be a given instance of depth $d$ and consider a call to \textsc{TrivialRecListing}($I, \epsilon$), where $\epsilon$ is the empty string. The algorithm uses \algosplit to construct subproblems $I_0, I_1$ and recursively solves both subproblems. In the base case, we have a subproblem of depth $0$, which is either satisfied or unsatisfied. If it is satisfied, then the path $u \in \set{0,1}^d$ we have taken to reach this subproblem is a solution to the original instance $I$, otherwise it is not. Accordingly, the algorithm lists all solutions to~$I$. 

Note that with the setup as described so far, it is not clear how to check whether an instance of depth 0 is satisfied. Later we will have access to a listing algorithm, which can perform this task for us.

We introduce some additional notation that helps to analyze subsequent variants of this algorithm. We denote by $\set{0,1}^{\le d}$ the bitstrings of length at most $d$, including the empty string $\epsilon$. We denote the subproblems constructed by the trivial listing algorithm by $I_u$, for $u \in \set{0,1}^{\le d}$; they can be defined by setting $I_\epsilon := I$ and $(I_{u0},I_{u1}) := \text{\algosplit{$I_u$}}$ for any $u \in \set{0,1}^{\le d-1}$. We denote the number of solutions to subproblem $I_u$ by $t(u) := t(I_u)$, and its depth by $d(u) := d(I_u) = d - |u|$. 

The bitstrings $u \in \set{0,1}^{\le d}$ are in one-to-one correspondence with the nodes of the perfect binary tree of height $d$ and $2^d$ leaves, where $u0$ (resp. $u1$) is the left (resp. right) child of $u$. For any $u \in \set{0,1}^{\le d}$, we let $A(u)$ be the set of ancestors of $u$ in this tree (including $u$ itself); in other words, $A(u)$ is the set of prefixes of the bitstring $u$.

\subsection[Warmup: \Cref{thm:enumtolist} for $\beta = 1$]{Warmup: \Cref{thm:enumtolist} for \boldmath$\beta = 1$}

To build intuition for the general argument, in this section we present a simplified proof for the special case of \Cref{thm:enumtolist} where $\beta = 1$. Readers primarily interested in the full generality of \Cref{thm:enumtolist} may skip directly to \Cref{sec:proofenumtolist}.

We will need the following tool.

\begin{lemma}
\label{lem:budgetlistsimple}
Let $f,g$ be constructible functions. If there is a listing algorithm running in time at most $f(I) + t \cdot g(I)$, then there is an algorithm \algobudgetlist{$I, b$} that, given an instance $I$ and an integer $b \ge 1$, in time $O(f(I) + b \cdot g(I))$ either lists all solutions to $I$ or reports that it has more than $b$ solutions.
\end{lemma}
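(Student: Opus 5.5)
The plan is to run the given listing algorithm as a black box under a hard step budget, in the same spirit as the proof of \Cref{lem:cheaterI}. Given $I$ and $b$, we first compute $f(I)$ and $g(I)$. Since $(f,g)$ is constructible, this takes time $O(f(I)+g(I)+1)$. We then set the budget $T := f(I) + b\cdot g(I)$, which can be computed from these values in constant additional time (assuming arithmetic on the numbers involved is unit cost). Next, we simulate the listing algorithm on $I$, augmented with a step counter, and collect the solutions it outputs into a buffer rather than printing them immediately.

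We distinguish two cases.
\begin{itemize}
\item If the simulation terminates within $T$ steps, it has produced the full solution set $S$. We print the buffered solutions and return.
\item If the counter reaches $T$ before termination, we abort the simulation and report that $I$ has more than $b$ solutions.
\end{itemize}

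Correctness of the second case is the key step. Suppose $t \le b$. Then the listing algorithm terminates within $f(I) + t\cdot g(I) \le f(I) + b\cdot g(I) = T$ steps, so it cannot be aborted. Hence an abort implies $t > b$. The first case is correct by the correctness of the listing algorithm.

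For the running time, the simulation with counter costs $O(T)$, and printing the buffered solutions costs at most the number of steps the simulation took, hence $O(T)$. Adding the cost $O(f(I)+g(I)+1)$ of computing $f(I)$ and $g(I)$, and using $b\ge 1$ so that $g(I) \le b\cdot g(I)$, the total is $O(f(I) + b\cdot g(I))$. The additive $+1$ is absorbed under the usual convention that the bounds are at least constant. If that convention is not available, we state the bound as $O(f(I)+b\cdot g(I)+1)$.

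I do not expect a real obstacle here. The only care points are that the budget is an exact step count, which constructibility provides, and that we buffer output so that an aborted run prints nothing partial. Alternatively, we may print solutions on the fly and accept that partial output has been produced before an abort; which choice is right depends on how \algobudgetlist is used later, and buffering is the safer one.
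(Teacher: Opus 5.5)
Your proposal is correct and follows essentially the same route as the paper. It simulates the listing algorithm with a step counter and aborts after $f(I)+b\cdot g(I)$ steps, a budget computable by constructibility, concluding $t>b$ from an abort via the running-time guarantee; your remarks on buffering output and on the additive $+1$ term are harmless refinements the paper leaves implicit.
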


\begin{proof}
We augment the listing algorithm with a counter that counts the number of steps performed by the algorithm. After $f(I) + b \cdot g(I)$ steps we abort and report that there are more than $b$ solutions. If the algorithm terminates before we abort it, then we print all solutions that it has listed.

Note that to run this algorithm we need to know $f(I)$ and $g(I)$, and we can indeed compute $f(I)$ and $g(I)$ in time $O(f(I) + g(I))$, since $(f,g)$ is constructible.

To see correctness, note that if we abort the algorithm then it took more than $f(I) + b \cdot g(I)$ steps, so by its running time guarantee there must be more than $b$ solutions, so we correctly report that there are more than $b$ solutions. If we don't abort the algorithm, then it finished and thus successfully listed all solutions.
\end{proof}

The following lemma essentially proves the extreme case $\beta = 1$ of \Cref{thm:enumtolist}.

\begin{lemma}
\label{lem:enumtolistsimple}
Let $f,g$ be constructible, \algosplit-monotone functions such that $f(I) \ge \max\{s(I), g(I)\}$ for all $I$. If there is a listing algorithm running in time at most $f(I) + t \cdot g(I)$, then there exists an enumeration algorithm with preprocessing time $O(f(I) \cdot d)$ and delay $O(g(I) \cdot d)$.
\end{lemma}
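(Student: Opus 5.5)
The plan is to build an amortized listing procedure that prints the $j$-th solution within $O(f(I)\cdot d + j\cdot g(I)\cdot d)$ steps, and then apply the Cheater's Lemma (\Cref{lem:cheaterI}) with $f'(I) := c\cdot f(I)\cdot d$ and $g'(I) := c\cdot g(I)\cdot d$. These functions are constructible because $f,g$ are and $d$ is readable, and $f' \ge g'$ holds since $f\ge g$. The procedure is a recursive \algobudgetlist-based search over the tree of subinstances $I_u$. Fix the threshold $b := \lceil f(I)/g(I)\rceil$, computed once at the root. On a node $u$, call \algobudgetlist{$I_u$, $b$} from \Cref{lem:budgetlistsimple}. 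If it lists all solutions, print each one prepended by $u$ and return. Otherwise $t(u) > b$; in that case $d(u) > 0$, because a depth-$0$ instance has at most $1 \le b$ solutions. We then split with \algosplit and recurse on $u0$ and $u1$ in that order, so solutions come out in lexicographic order.

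By \algosplit-monotonicity, $f(I_u)\le f(I)$, $g(I_u)\le g(I)$ and $s(I_u)\le s(I)\le f(I)$. Hence each node costs $O(f(I) + b\,g(I) + s(I)) = O(f(I))$, using $b\,g(I)\le f(I)+g(I)\le 2f(I)$.

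The key step is counting how many nodes have been visited by the time the $j$-th solution is printed. Call a node \emph{internal} if its budgeted call aborted, i.e.\ $t(u) > b$. Consider the moment the $j$-th solution is printed, at node $u^*$. Every visited node is either an ancestor of $u^*$, a child of such an ancestor, or lies in a subtree that has been fully finished. I would charge the internal nodes whose subtree is finished to their solutions. Internal nodes at the same depth are disjoint subtrees, each with more than $b$ solutions, all of them already printed. So there are at most $j/b$ finished internal nodes per depth level, and at most $d\cdot j/b$ overall. Every non-internal visited node is a child of an internal node, so the total number of visited nodes is at most $O(d + d\,j/b)$. The $d$ term covers the ancestors of $u^*$ and their siblings.

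Multiplying by the per-node cost $O(f(I))$ gives a total time of $O(f(I)\,d + j\,d\,f(I)/b) = O(f(I)\,d + j\,d\,g(I))$. This is exactly the input form required by \Cref{lem:cheaterI}. That lemma also needs the algorithm to terminate within $f'(I) + (t+1)g'(I)$ steps; the same counting with $j=t$ plus the trailing siblings gives this, which is what lets the lemma detect termination. The result is preprocessing time $O(f(I)\,d)$ and delay $O(g(I)\,d)$.

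The main obstacle I expect is making this charging precise. The worry is that the solutions printed inside the node currently in progress are handled correctly, and that none are double counted across depths. Restricting the charge per depth level to \emph{finished} internal subtrees, which are pairwise disjoint within a level, resolves this.
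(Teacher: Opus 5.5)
Your proposal is correct and is essentially the paper's proof. It uses the same recursion driven by \algobudgetlist{$I_u, b$} with $b = \lceil f(I)/g(I) \rceil$, the same $O(f(I))$ cost per call via monotonicity under splitting, the same $O(f(I)\, d + j \cdot g(I)\, d)$ bound on the time to print the $j$-th solution, and the same final appeal to \Cref{lem:cheaterI}. The differences are cosmetic: you count started calls by charging finished aborted calls depth level by depth level, whereas the paper charges the leaves $L'$ of the pruned tree $R \setminus L$ and shows $|L'| \le j/b + 1$ and $|R| \le 3d|L'| + 1$. Also, your extra check that the algorithm halts within $f'(I) + (t+1) g'(I)$ steps is harmless but unnecessary, since \Cref{lem:cheaterI} detects $j > t$ on its own.
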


\begin{proof}
We first present a ``cheating'' enumeration algorithm in the sense of \Cref{lem:cheaterI}.

\begin{pseudocode}
\Function{\algocheatenum{$J, u$}}{
    \tcp{lists all solutions to instance $J$, prepended by $u$}
    \eIf{\algobudgetlist{$J, b$} lists all solutions to $J$}{ 
        prepend $u$ to each one and print them\;
    }{
        $(J_0, J_1) := \protect\algosplit{J}$\;
        \algocheatenum{$J_0, u0$}\;
        \algocheatenum{$J_1, u1$}\;
    }
}
\end{pseudocode}

To list solutions to an instance $I$, we compute $f(I),g(I)$, set the parameter $b := \lceil f(I)/g(I) \rceil$ globally, and then call \algocheatenum{$I, \epsilon$}. Observe that all solutions are listed because the algorithm proceeds similarly as the trivial listing algorithm, except that a subproblem $J$ returns immediately if \algobudgetlist has listed all of its solutions. Note that for each recursive call ($J, u$) we have $J = I_u$, and thus prepending $u$ to the solutions to $J$ yields solutions to the original instance $I$. 

Fix any $1 \le j \le t$. Our goal is to bound the time until the algorithm prints its $j$-th solution. So consider the moment when the $j$-th solution is printed. This happens during the recursive call \algocheatenum{$I_{u^*}, u^*$} for some $u^* \in \set{0,1}^{\le d}$. Note that the set of ancestors $A(u^*)$ is the set of recursive calls that have been started but not yet finished at the moment.
Let $R \subseteq \set{0,1}^{\le d}$ be the set of all started recursive calls, i.e., $u \in \set{0,1}^{\le d}$ is in $R$ if \algocheatenum{$I_u, u$} has been started before printing the $j$-th solution.
Let $L \subseteq R$ be the leaves of the recursion, i.e., $u \in R$ is in $L$ if neither $u0$ nor $u1$ is in~$R$.
Let $R' := R \setminus L$, and let $L'$ be the leaves of the resulting subtree, i.e., $u \in R'$ is in $L'$ if neither $u0$ nor $u1$ is in $R'$.
See \Cref{fig:recursion-tree} for an illustration.

\begin{figure}[hbt]
    \centering
    \includegraphics{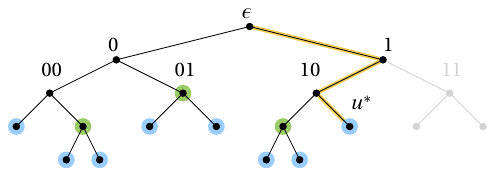}
    \caption{The recursion tree has root $\epsilon$ and nodes labeled by bitstrings. The tree $R$ (drawn in black) corresponds to the recursive calls that have started. The ancestors $A(u^*)$ (drawn as the yellow path) correspond to the recursive calls that have started but not yet finished. The leaves $L$ of $R$ are marked in blue. The leaves $L'$ of $R' = R \setminus L$ are marked in green.}
    \label{fig:recursion-tree}
\end{figure}

\begin{claim}
$|L'| \le j/b + 1$.
\end{claim}
\begin{proof}
Since $L'$ are the leaves of $R'$, different $u,v \in L'$ satisfy that $u$ is not an ancestor of $v$ and $v$ is not an ancestor of $u$, and thus the set of solutions to $I_u$ and $I_v$ are disjoint. Moreover, since any $u,v \in A(u^*)$ satisfy that $u$ is an ancestor of $v$ or $v$ is an ancestor of $u$, at most one $l \in L'$ is contained in $A(u^*)$ and thus corresponds to an unfinished recursive call. The remaining nodes $u \in L' \setminus \{l\}$ correspond to finished recursive calls, so all of their solutions have been listed at this moment. Since these solutions are disjoint and listed, we infer $\sum_{u \in L' \setminus \{l\}} t(I_u) < j$. Finally, each $u \in L'$ is not a leaf of the recursion tree, so \algobudgetlist reported that $I_u$ has more than $b$ solutions, i.e., $t(I_u) > b$. This implies $b \cdot |L' \setminus \{l\}| < \sum_{u \in L' \setminus \{l\}} t(I_u) < j$, and thus $|L'| = |L' \setminus \{l\}| + 1 \le j/b + 1$.
\end{proof}

\begin{claim}
$|R| \le 3d |L'| + 1$.
\end{claim}

\begin{proof}
We have $|R'| \le \sum_{u \in L'} |A(u)| \le d \, |L'|$, since each $v \in R'$ is an ancestor of some leaf $u \in L'$, and since $|A(u)|$ is at most the recursion depth $d$.
Moreover, we have $|L| \le 2 |R'| + 1$, since the parent of each $u \in L$ is in $R'$ unless $u$ is the root of the tree, and each $v \in R'$ has at most two children.
Together, we obtain $|R| = |R'| + |L| \le 3 |R'| + 1 \le 3d |L'| + 1$.
\end{proof}

Combining the above two claims yields that the number of recursive calls started before printing the $j$-th solution is at most $3d \cdot (j/b + 1) + 1$.

Now we bound the cost of each recursive call \algocheatenum{$I_u, u$}. The invocation of \algobudgetlist runs in time $O(f(I_u) + b \cdot g(I_u)) \le O(f(I) + b \cdot g(I)) \le O(f(I))$, by \algosplit-monotonicity of $f$ and $g$, our choice of $b = \lceil f(I)/g(I) \rceil$, and the assumption $f(I) \ge g(I)$. The time to construct the subinstances $J_0 = I_{u0}$ and $J_1 = I_{u1}$ is $O(s(I_u)) \le O(s(I)) \le O(f(I))$, by \algosplit-monotonicity of $s$ and the assumption $f(I) \ge s(I)$. Hence, the direct cost of each recursive call is $O(f(I))$.

Finally, since $(f,g)$ is constructible, the global parameter $b$ can be computed in time $O(f(I) + g(I) + 1) = O(f(I))$. Putting all bounds together, the time until we print the $j$-th solution is
\[ O\left( f(I) + g(I) + f(I) \cdot d \cdot (j/b + 1) \right) = O(f(I) \cdot d + j \cdot g(I) \cdot d ). \]
This finishes the analysis of \algocheatenum. Applying \Cref{lem:cheaterI} on \algocheatenum converts it to a proper enumeration algorithm and finishes the proof.
\end{proof}

\subsection{Proof of \Cref{thm:enumtolist}} \label{sec:proofenumtolist}

To prove \Cref{thm:enumtolist} in full generality, we first extend \Cref{lem:budgetlistsimple} to the following.

\begin{lemma}
Let $f,g$ be constructible functions such that $f(I) \ge g(I)$ for all $I$. If there is a listing algorithm running in time at most $f(I) + t \cdot g(I)$, then there is an algorithm \algoparbudgetlist{$I_0, I_1, b$} that, given instances $(I_0, I_1) = \protect\algosplit{I}$ and an integer $b \ge 1$, in time $O(f(I) + \min\{t(I_0), t(I_1), b\} \cdot g(I))$ lists all solutions to $I_0$ or lists all solutions to $I_1$ or reports that $I_0$ and $I_1$ have more than $b$ solutions. 
\end{lemma}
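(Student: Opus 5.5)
The approach is to run the given listing algorithm on $I_0$ and $I_1$ \emph{in parallel}, alternating single steps between the two simulations, each augmented with a step counter. Before starting, we compute $f(I_0), g(I_0), f(I_1), g(I_1)$ and also $f(I), g(I)$; by constructibility this costs $O(f(I) + g(I) + 1)$, and since $f(I) \ge g(I)$, and for $b$-independent pieces monotonicity gives $f(I_c) \le f(I)$ and $g(I_c)\le g(I)$, this is $O(f(I))$. We then fix the common budget $T := f(I) + b \cdot g(I)$.

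The interleaved simulation proceeds as follows. Whenever one of the two simulations, say on $I_c$, terminates, we stop immediately and output the solutions it listed; these are all solutions to $I_c$. If both simulations have executed $T$ steps without terminating, we abort and report that both $I_0$ and $I_1$ have more than $b$ solutions. Correctness of the abort follows exactly as in \Cref{lem:budgetlistsimple}. The simulation on $I_c$ would terminate within $f(I_c) + t(I_c)\, g(I_c) \le f(I) + t(I_c)\, g(I)$ steps by \algosplit-monotonicity, so exceeding $T$ steps forces $t(I_c) > b$ for both $c$.

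For the running time, let $c^* \in \{0,1\}$ attain $\min\{t(I_0), t(I_1)\}$. The simulation on $I_{c^*}$ terminates after at most $f(I) + t(I_{c^*})\, g(I)$ of its own steps. Because of the alternation, the total number of simulated steps is at most twice
\[ \min\bigl\{f(I) + t(I_{c^*})\, g(I),\; T\bigr\} = f(I) + \min\{t(I_0), t(I_1), b\} \cdot g(I). \]
Printing the output of the finished simulation costs no more than its step count. Adding the $O(f(I))$ setup gives the claimed bound $O(f(I) + \min\{t(I_0), t(I_1), b\}\cdot g(I))$.

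The only delicate point is making sure that all time measures are taken with respect to $I$ rather than the subinstances. This is handled by using $f(I), g(I)$ in the budget and relying on monotonicity to bound the true runtimes on $I_0$ and $I_1$. One caveat: computing $f(I), g(I)$ requires access to $I$. If only $(I_0, I_1)$ is available, we instead use $F := \max_c f(I_c)$ and $G := \max_c g(I_c)$ in place of $f(I), g(I)$; both are bounded by $f(I), g(I)$ through monotonicity, so the same argument goes through verbatim.
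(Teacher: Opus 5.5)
Your proposal is correct and matches the paper's proof: both alternate single steps of the listing algorithm on $I_0$ and $I_1$, stop as soon as either run terminates, abort after a per-run budget of $f(I)+b\cdot g(I)$ steps (the paper's $2(f(I)+b\cdot g(I))$ total steps), and bound the cost by twice $f(I)+\min\{t(I_0),t(I_1),b\}\cdot g(I)$. You explicitly invoke split-monotonicity ($f(I_c)\le f(I)$, $g(I_c)\le g(I)$), which the lemma statement omits as a hypothesis; the paper uses it implicitly when it asserts that the run on $I_0$ finishes within $f(I)+t(I_0)\cdot g(I)$ steps.
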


\begin{proof}
We run the listing algorithm in parallel on $I_0$ and $I_1$. That is, in each even time step we perform the next computation step on $I_0$ and in each odd time step we perform the next computation step on $I_1$. If one of the two parallel runs terminates, then we print all solutions listed by that run. After $2(f(I) + b \cdot g(I))$ time steps we abort both runs and report that $I_0$ and $I_1$ have more than $b$ solutions.

Note that to run this algorithm we need to know $f(I)$ and $g(I)$, and indeed, since $f,g$ is constructible, $f(I)$ and $g(I)$ can be computed in time $O(f(I) + g(I)) = O(f(I))$, by the assumption $f(I) \ge g(I)$.

We analyze this algorithm as follows.
If $t(I_0) \le t(I_1)$ and $t(I_0) \le b$, then the run on $I_0$ terminates within $f(I) + t(I_0) \cdot g(I)$ steps, which happens before we abort. Thus, the run that terminates first requires at most $f(I) + t(I_0) \cdot g(I)$ steps, so we list all solutions to $I_0$ or $I_1$ within time $O(f(I) + t(I_0) \cdot g(I)) = O(f(I) + \min\{t(I_0), t(I_1), b\} \cdot g(I))$. 

Symmetrically, if $t(I_1) \le t(I_0)$ and $t(I_1) \le b$ then we list all solutions to $I_0$ or $I_1$ in time $O(f(I) + t(I_1) \cdot g(I)) = O(f(I) + \min\{t(I_0), t(I_1), b\} \cdot g(I))$. 

If $b < t(I_0)$ and $b < t(I_1)$, then it could still happen that $I_0$ or $I_1$ terminate before we abort. Thus, we either list all solutions to $I_0$ or list all solutions to $I_1$ or abort and correctly report that $I_0$ and $I_1$ have more than $b$ solutions. Since we abort after $2(f(I) + b \cdot g(I))$ time steps, the running time is $O(f(I) + b \cdot g(I)) = O(f(I) + \min\{t(I_0), t(I_1), b\} \cdot g(I))$.
\end{proof}

We can then extend \Cref{lem:enumtolistsimple}.

\begin{lemma} \label{lem:enumtolistmain}
Let $\beta \in [0,1]$ be a constant and let $f,g$ be constructible, \algosplit-monotone functions such that $f(I) \ge \max\{s(I), g(I)\}$ for all $I$. 
If there is a listing algorithm running in time at most $f(I) + t \cdot g(I)$, then there is an enumeration algorithm with preprocessing time $O(f(I) \cdot d^{2 - \beta})$ and delay $O(g(I) \cdot d^\beta)$.
\end{lemma}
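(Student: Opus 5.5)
The plan is to build a ``cheating'' enumeration algorithm \algoparcheatenum in the style of \algocheatenum, but using \algoparbudgetlist so that the expensive budget $b \cdot g(I)$ is paid only at genuinely branching nodes. Then I would finish with \Cref{lem:cheaterI}, applied with preprocessing bound $F(I) := f(I) \cdot d^{2-\beta}$ and delay bound $G(I) := g(I) \cdot d^{\beta}$. This pair is constructible and satisfies $F \ge G$, since $f \ge g$.

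The recursive call \algoparcheatenum{$J,u$} works as follows. If $d(J)=0$, it decides satisfiability of $J$ by the listing algorithm in time $O(f(I))$. Otherwise it computes $(J_0,J_1) := \protect\algosplit{J}$ and runs \algoparbudgetlist{$J_0,J_1,b$}. If this lists all solutions of some $J_c$, the call prints them (prepended by $uc$) and recurses only on $J_{1-c}$ with prefix $u(1-c)$. If it aborts, both children have more than $b$ solutions, and the call recurses on both. The global parameter is
\[ b := \left\lceil f(I)\, d^{1-\beta} / g(I) \right\rceil , \]
which is computable in time $O(f(I))$ by constructibility. Correctness is as for \algocheatenum. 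All parameters are controlled by those of $I$ via \algosplit-monotonicity, so each call costs $O(f(I) + \min\{t(J_0),t(J_1),b\}\cdot g(I))$ plus $O(s(I)) \le O(f(I))$ for splitting.

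For the analysis, fix $j$ and the moment the $j$-th solution is printed. Let $R$ be the set of started calls, and call a node of $R$ \emph{branching} if its \algoparbudgetlist aborted. I will carry out three steps.
\begin{enumerate}
\item Every branching node $u$ that has fully finished has had more than $2b$ of its solutions printed. By the disjointness and ancestor argument of the first claim in \Cref{lem:enumtolistsimple}, the number of branching nodes is at most $O(j/b + d)$. More precisely, the branching nodes form a binary tree whose leaves, except those on the current root path $A(u^*)$, are finished. Its internal branching nodes are at most its leaves, and nodes on $A(u^*)$ add at most $d$.
\item Every non-branching node has a single child in $R$. Hence $R$ decomposes into at most (number of branching nodes $+1$) chains, each of length at most $d$. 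This gives $|R| \le O(d \cdot (j/b + d))$. I need to handle the additive $d$ more carefully. The branching nodes on $A(u^*)$ each have an extra started subtree only if it is finished, so I expect the bound $|R| \le O(d(j/b+1))$ as in the second claim of \Cref{lem:enumtolistsimple}.
\item Charge costs. Each node pays $O(f(I))$. A non-branching node pays $\min\{t(J_0),t(J_1)\}\cdot g(I) \le t(J_c)\cdot g(I)$, which is charged to the solutions of $J_c$ printed immediately afterwards; these are distinct across nodes, so the total is at most $j\cdot g(I)$. Only the $O(j/b+1)$ branching nodes pay $b\cdot g(I)$.
\end{enumerate}

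The time until the $j$-th print is therefore
\[ O\bigl(d(j/b+1) f(I) + (j/b+1)\, b\, g(I) + j\, g(I)\bigr) = O\bigl(f(I) d^{2-\beta} + j\, g(I)\, d^{\beta}\bigr) \]
for the chosen $b$. Here I use $d f/b \le g d^{\beta}$ and $b g \le f d^{1-\beta} + g$. Invoking \Cref{lem:cheaterI} then gives the claimed enumeration algorithm.

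The main obstacle is step 1 together with the additive-$d$ issue in step 2. The difficulty is bounding the number of branching nodes and started chains by $O(j/b+1)$, not $O(j/b + d)$, while accounting for unfinished calls on the current path. I would first try to adapt the $L'$/$R'$ argument directly, with $L'$ replaced by the maximal branching nodes off the path $A(u^*)$.
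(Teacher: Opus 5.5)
Your algorithm, the choice $b=\lceil f(I)d^{1-\beta}/g(I)\rceil$ and the final appeal to the Cheater's Lemma match the paper. However, the counting step the whole bound rests on is not established, and the argument you sketch for it fails.

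\textbf{Step 1.} You charge more than $2b$ printed solutions to each finished branching node and claim that, in the tree formed by the branching nodes, internal nodes are at most leaves. That tree need not be full. A branching node may have further branching nodes below only one child, while the other child starts a chain of non-branching calls ending in a depth-$0$ call. A nested sequence of such nodes is a long path in the contracted tree with a single leaf. All of these nodes can be finished and off $A(u^*)$. So ``internal $\le$ leaves'' is false, and charging to pairwise incomparable finished branching nodes bounds only an antichain.

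\textbf{Step 2.} This step gives only $|R|=O(d(j/b+d))$, i.e.\ a $d^2 f(I)$ term, which exceeds $f(I)d^{2-\beta}$ for $\beta>0$.

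\textbf{Step 3.} The hope that only $O(j/b+1)$ branching (aborted) calls exist is false. Already at $j=1$ the current path can contain many aborted calls whose second child has not started: roughly $d-\log_2 b$ of them if every string in $\{0,1\}^d$ is a solution. Your displayed bound therefore omits a $d\cdot b\,g(I)$ term.

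\textbf{The missing idea.} Count the leaves $L$ of the started recursion tree $R$, not the branching nodes, and charge each leaf to the top of its chain.
\begin{itemize}
\item Let $B$ be the nodes with both children in $R$. Every other node has at most one child in $R$, so $|B|\le|L|$.
\item For $u\in L$, let $a_u$ be the ancestor of $u$ closest to $u$ whose parent lies in $B$. All nodes from $a_u$ down to $u$ have at most one child in $R$, so the $a_u$ are pairwise incomparable.
\item $t(a_u)>b$, because the parent's \algoparbudgetlist aborted.
\item At most one $a_u$ lies on $A(u^*)$; the others are finished, so their solutions are printed and disjoint. This gives $|L|\le j/b+1$.
\item Every node of $R$ is an ancestor of a leaf, so $|R|\le d|L|\le d(j/b+1)$, with no additive $d^2$.
\end{itemize}
The up to $d$ calls on $A(u^*)$, including aborted ones with an unstarted second child, are then paid separately at $O(f(I)+b\,g(I))$ each. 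This contributes $O(d\,b\,g(I))=O(f(I)d^{2-\beta})$, which is exactly where the $d^{2-\beta}$ in the preprocessing time comes from.
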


\begin{proof}
We present a ``cheating'' enumeration algorithm in the sense of \Cref{lem:cheaterI}.

\begin{pseudocode}
\Function{\algoparcheatenum{$J, u$}}{
    \tcp{lists all solutions to instance $J$, prepended by $u$}
    \lIf{$d(J) = 0$}{
        run the listing algorithm on $J$, prepend $u$ to each solution and print it
    }\Else{
        $(J_0, J_1) := \protect\algosplit{J}$\;
        \algoparbudgetlist{$J_0, J_1, b$}\; 
        \eIf{\algoparbudgetlist listed all solutions to $J_0$}{
            prepend $u0$ to each one and print them\; \label{alg:parreclisting:line:print0}
        }{
            \algoparcheatenum{$J_0, u0$}\; 
        }
        \eIf{\algoparbudgetlist listed all solutions to $J_1$}{
            prepend $u1$ to each one and print them\; \label{alg:parreclisting:line:print1}
        }{
            \algoparcheatenum{$J_1, u1$}\; 
        }
    }
}
\end{pseudocode}

Given an instance $I$, we compute $f(I),g(I)$, set the parameter $b := \lceil f(I)/g(I) \cdot d^{1-\beta} \rceil$ globally, and then call \algoparcheatenum{$I, \epsilon$}. Observe that this lists all solutions to instance $I$. Indeed, the algorithm proceeds similarly as the trivial listing algorithm, except that a recursive call is skipped if \algoparbudgetlist succeeds in listing all of its solutions. In the base case, $J$ is an instance of depth 0, which has at most 1 solution and thus can be solved by the listing algorithm in time $O(f(J) + g(J)) \le O(f(I) + g(I)) \le O(f(I))$, by \algosplit-monotonicity of $f$ and $g$ and the assumption $f(I) \ge g(I)$.

Fix any $1 \le j \le t$. Our goal is to bound the time until the algorithm prints its $j$-th solution. So consider the moment when the $j$-th solution is printed. This happens during a recursive call \algoparcheatenum{$I_{u^*}, u^*$} for some $u^* \in \set{0,1}^{\le d}$. Note that the set of ancestors $A(u^*)$ is the set of recursive calls that have been started but not yet finished at this point in time.
Let $R \subseteq \set{0,1}^{\le d}$ be the set of all started recursive calls, i.e., $u \in \set{0,1}^{\le d}$ is in $R$ if \algoparcheatenum{$I_u, u$} has been started before printing the $j$-th solution.
Let $L \subseteq R$ be the leaves of the recursion, i.e., $u \in R$ is in $L$ if neither $u0$ nor $u1$ is in~$R$.
Let $B \subseteq R$ be the branching nodes, i.e., $u \in R$ is in $B$ if both $u0$ and $u1$ are in $R$.
So each node $u \in R \setminus (L \cup B)$ has exactly one child in $R$.

\begin{claim}
$|B| \le |L|$.
\end{claim}
\begin{proof}
This is essentially the fact that every full binary tree has as least as many leaves as inner nodes. For a quick proof, note that on the one hand the total number of $R$-children of nodes in $B$ is equal to $2|B|$, since each node in $B$ has two children in $R$. On the other hand, starting from an $R$-child of a node $u \in B$ and skipping nodes in $R \setminus (L \cup B)$ we arrive at a descendant $u'$ of $u$ such that $u' \in B$ or $u' \in L$, and the node $u'$ is unique. Therefore, the total number of $R$-children of nodes in $B$ is at most $|B| + |L|$. Hence, $2|B| \le |B| + |L|$, which shows $|B| \le |L|$.
\end{proof}

\begin{claim}
$|L| \le j/b + 1$.
\end{claim}
\begin{proof}
Let $u \in L$. If no ancestor of $u$ is in $B$, then each ancestor of $u$ is in $R \setminus B$ and thus has at most one child in $R$. In this case, the recursion tree is a path and has exactly one leaf, so we have $|L| \le 1$ and the claim holds.

So we can assume that each $u \in L$ has some ancestor in $B$. Let $a_u$ be the ancestor of $u$ of smallest depth $d(a_u)$ such that the parent of $a_u$ is in $B$ (recall that smallest depth means closest to the leaves). Then all nodes on the path from $a_u$ to $u$ are in $R \setminus B$, so they have at most one child in $R$. Also, $u$ has no child since $u \in L$. It follows that $a_u$ has a unique descendant in $L$, namely $u$. Thus, for different leaves $u,v \in L$ we have chosen ancestors $a_u, a_v$ such that $a_u$ is not an ancestor of $a_v$ and $a_v$ is not an ancestor of $a_u$ (indeed, if $a_u$ would be an ancestor of $a_v$ then by transitivity it would be an ancestor of $v$, and thus it would be an ancestor of two leaves $u$ and $v$). 

Since for any two nodes $u,v \in A(u^*)$ it holds that $u$ is an ancestor of $v$ or $v$ is an ancestor of $u$, at most one leaf $l \in L$ has $a_l \in A(u^*)$. The remaining leaves $u \in L \setminus \{l\}$ have $a_u \notin A(u^*)$, so $a_u$ is a finished recursive call, so the $t(a_u)$ solutions to $a_u$ have already been listed when we print the $j$-th solution. It follows that $\sum_{u \in L \setminus \{l\}} t(a_u) < j$. Moreover, for each $u \in L$ we have $t(a_u) > b$, since $a_u$ is a child of a branching node, and a node can only be branching if \algoparbudgetlist reported that both children have more than $b$ solutions. Hence, $b \cdot |L \setminus \{l\}| \le \sum_{u \in L \setminus \{l\}} t(a_u) < j$, which implies $|L| = |L \setminus \{l\}| + 1 \le j/b + 1$.
\end{proof}

\begin{claim}
$|R| \le d \cdot (j/b + 1)$.
\end{claim}
\begin{proof}
We have $|R| \le \sum_{u \in L} |A(u)| \le d \cdot |L| \le d \cdot (j/b + 1)$. Here, the first step used that each $v \in R$ is an ancestor of a leaf in $L$. The second step used that $|A(u)|$ is at most the recursion depth $d$. The last step used the previous claim.
\end{proof}

Now we bound the cost of each recursive call \algoparcheatenum{$I_u, u$}. The invocation of \algoparbudgetlist runs in time $O \left(f(I_u) + \min\Set{t(u0), t(u1), b} \cdot g(I_u) \right) \le O\left( f(I) + \min\set{t(u0), t(u1), b} \cdot g(I) \right)$, by \algosplit-monotonicity. The time to construct the subinstances $J_0 = I_{u0}$ and $J_1 = I_{u1}$ is $O(s(I_u)) \le O(s(I)) \le O(f(I))$, by \algosplit-monotonicity of $s$ and the assumption $f(I) \ge s(I)$. Hence, the direct cost of the recursive call is $O(f(I) + \min\{t(u0), t(u1), b\} \cdot g(I))$.

If $u \in B \cup A(u^*)$, then we can bound this direct cost by $O(f(I) + b \cdot g(I))$.

If $u \in L$, then we can bound it by $O(f(I))$ as argued above.

If $u \in R \setminus (L \cup B \cup A(u^*))$, then \algoparbudgetlist listed all solutions to exactly one of $I_{u0}$ and $I_{u1}$. We denote the number of listed solutions by $t_u \in \{ t(u0), t(u1) \}$. We can bound the direct cost of~$u$ by $O(f(I) + t_u \cdot g(I))$. Since node $u$ is a finished recursive call (as $u \notin A(u^*)$), these $t_u$ solutions have already been listed when we print the $j$-th solution. These solutions are also disjoint: When \algoparbudgetlist lists all solutions to a subinstance, we do not recurse on that subinstance, and we cannot create the same solutions in a different way.
It follows that $\sum_{u \in R \setminus (L \cup B \cup A(u^*))} t_u < j$.

Hence, we can bound the total time until printing the $j$-th solution by:
\begin{align*}
&O\left((|B| + |A(u^*)|) \cdot (f(I) + b \cdot g(I)) + |R \setminus (B \cup A(u^*))| \cdot f(I) + \sum_{u \in R \setminus (L \cup B \cup A(u^*))} t_u \cdot g(I)\right) \\
&\le O\left((j/b + 1 + d) \cdot (f(I) + b \cdot g(I)) + d \cdot (j/b + 1) \cdot f(I) + j \cdot g(I)\right) \\
&= O\left( d \cdot f(I) + d \cdot b \cdot g(I) + j \cdot \left(\frac{d \cdot f(I)}{b} + g(I)\right) \right),
\end{align*}
where in the first step we used the above claims to bound $|B|$ and $|R|$, plugged in $|A(u^*)| \le d$, and used $\sum_{u \in R \setminus (L \cup B \cup A(u^*))} t_u < j$.

Recall that $b := \lceil f(I)/g(I) \cdot d^{1-\beta} \rceil$. Plugging it into the bound, we conclude that the the time until \algoparcheatenum prints the $j$-th solution is $O(f(I) \cdot d^{2 - \beta} + j \cdot g(I) \cdot d^\beta)$.

Finally, we can apply \Cref{lem:cheaterI} to turn \algoparcheatenum into an enumeration algorithm with preprocessing time $O(f(I) d^{2-\beta})$ and delay $O(g(I) d^\beta)$.
\end{proof}

\Cref{thm:enumtolist} now follows easily from \Cref{lem:enumtolistmain}.

\begin{proof}[Proof of \Cref{thm:enumtolist}]
Let \algosplit be an $s$-splitting algorithm, $f,g$ be constructible, \algosplit-monotone functions, and $\beta \in [0,1]$ be a constant. 
Define $f'(I) := c \cdot (f(I) + g(I) + s(I))$ and $g'(I) := c \cdot g(I)$ for a constant $c > 0$, and note that $f'(I) \ge \max\set{s(I), g(I)}$ for all~$I$. 
If there is a listing algorithm running in time $O(f(I) + t \cdot g(I))$, then for sufficiently large~$c$ this algorithm runs in time at most $f'(I) + t \cdot g'(I)$. Hence, \Cref{lem:enumtolistmain} is applicable on $f',g'$ and yields an enumeration algorithm with preprocessing time $O(f'(I) \cdot d^{2 - \beta}) = O((f(I) + g(I) + s(I)) \cdot d^{2 - \beta})$ and delay $O(g'(I) \cdot d^\beta) = O(g(I) \cdot d^\beta)$, proving the theorem.
\end{proof}

\section{From Colored to Uncolored Problems}
\label{sec:uncoloredalgo}

In this section, we show how to convert a listing or enumeration algorithm for Colored $(H,I)$-Subgraph into one for Uncolored $(H,I)$-Subgraph, with only a factor $\polylog(n)$ overhead (in preprocessing time and delay).

The following lemma phrases the classic color coding technique~\cite{AYZ95} (specifically, its deterministic version) as a reduction.
\begin{lemma}
    \label{lem:color-coding}
    Let $H = ([k], \mathcal{E})$ be a pattern hypergraph. Given a host hypergraph $G = (V,E)$, we can compute $r = O(\log n)$ partitions $\mathcal{P}_1,\ldots,\mathcal{P}_r$ of $V$, in time $O(n \log n)$, with the following property. Let $\sol$ be the set of uncolored $H$-copies in $G$. Let $\sol_l$ be the set of colored $H$-copies in $G$ with respect to partition $\mathcal{P}_l$. Then $\sol = \bigcup_{l \in [r]} \sol_l$.
\end{lemma}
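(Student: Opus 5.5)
The plan is to build the partitions from a perfect hash family. We use the deterministic construction of Alon, Yuster and Zwick~\cite{AYZ95}, or equivalently the splitter of Naor, Schulman and Srinivasan. Since $k$ is constant, there is a family $\mathcal{F}$ of functions $h \colon V \to [k]$ with $|\mathcal{F}| = 2^{O(k)} \log n = O(\log n)$. For every set $X \subseteq V$ with $|X| = k$, some $h \in \mathcal{F}$ is injective on $X$. Each function can be evaluated in $O(1)$ time per vertex after near-linear preprocessing, so all $r = |\mathcal{F}|$ functions can be tabulated on all of $V$ in time $O(n \log n)$.

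A single injective $h$ is not yet enough, because an uncolored copy $(v_1,\dots,v_k)$ requires $v_i$ to land in part $V_i$ for its specific index $i$, not merely in some distinct part. To handle this, for each $h \in \mathcal{F}$ and each permutation $\pi$ of $[k]$, we define the partition $\mathcal{P}_{h,\pi}$ by $V_i := \{ v \in V : \pi(h(v)) = i\}$. There are $k!$ permutations, which is a constant, so we still have $r = O(\log n)$ partitions. Writing them down takes $O(k! \cdot n \log n) = O(n \log n)$ time.

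Both inclusions of $\sol = \bigcup_{l} \sol_l$ then follow.

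\emph{$\bigcup_l \sol_l \subseteq \sol$.} Let $\tup{v}$ be a colored copy with respect to some partition. Its entries lie in distinct parts, so they are distinct vertices. Every edge condition $\tup{v}_e \in E$ holds by definition. Hence $\tup{v}$ is an uncolored $H$-copy.

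\emph{$\sol \subseteq \bigcup_l \sol_l$.} Let $\tup{v} = (v_1,\dots,v_k)$ be an uncolored copy. Its vertices are distinct, so $X := \{v_1,\dots,v_k\}$ has size $k$, and some $h \in \mathcal{F}$ is injective on $X$. Then $i \mapsto h(v_i)$ is a bijection of $[k]$. Let $\pi$ be its inverse, so that $\pi(h(v_i)) = i$. This places $v_i \in V_i$ for every $i$, so $\tup{v} \in \sol_{h,\pi}$.

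The only nontrivial ingredient is the deterministic perfect hash family with $O(\log n)$ members that can be computed in $O(n\log n)$ time. We import it from \cite{AYZ95} as a black box rather than reprove it. The single point to check is the time bound: the size $2^{O(k)}\log n$ and the evaluation cost are stated for constant $k$, and both the $2^{O(k)}$ factor and the $k!$ factor are absorbed into the $O$-notation.
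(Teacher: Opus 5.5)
Your proof is correct and takes the same route as the paper: a deterministic $k$-perfect hash family $V \to [k]$ of size $2^{O(k)}\log n$ from \cite{AYZ95}, where each hash function induces partitions of $V$. Your extra step of composing each $h$ with all $k!$ permutations is actually needed, because a colored copy requires $v_i$ to lie in the specific part $V_i$. The paper uses only $\mathcal{P}_l = \{h_l^{-1}(1),\dots,h_l^{-1}(k)\}$ and asserts that injectivity of $h_l$ on $\{v_1,\dots,v_k\}$ already suffices, so it silently skips this relabeling; its $k!$ cost is absorbed into the constant, as you note.
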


We remark that the constant factor hidden in $r = O(\log n)$ is of the form $k^k$.

\begin{proof}
    We compute a $k$-perfect family of hash functions $\set{h_l}_{l \in [r]}$, where each function $h_l$ maps $V \to [k]$, and for every tuple $(v_1, \dots, v_k) \in \binom{V}{k}$ there is an $l \in [r]$ such that $h_l(v_1), \dots, h_l(v_k)$ are distinct. It is known that such such a family exists for $r = O(\log n)$ and can be computed deterministically in time $O(n \log n)$; moreover, every $h_l$ can be evaluated in constant time. See Section 4 in \cite{AYZ95}. 

    Each function $h_l$ induces a partition of $V$, namely, $\mathcal{P}_l = \Set{h_l^{-1}(1), \dots, h_l^{-1}(k)}$. Computing these partitions takes time $O(rn)$ in total. By the definition of the family, for every uncolored $H$-copy $(v_1,\dots,v_k) \in V^k$, there is an $l \in [r]$ such that $h_l(v_1), \dots, h_l(v_k)$ are distinct, so $(v_1, \dots, v_k)$ is a colored $H$-copy in $G$ with respect to partition $\mathcal{P}_l$. Hence, $\sol \subseteq \bigcup_{l \in [r]} \sol_l$. For the converse direction~$\supseteq$, note that every colored $H$-copy in $G$ with respect to any partition is also an uncolored $H$-copy in~$G$.
\end{proof}

A direct application of \Cref{lem:color-coding} allows us to obtain listing algorithms for uncolored problems from listing algorithms for colored problems. 

\begin{lemma} \label{lem:listing_colored_to_uncolored}
    If Colored $(H,I)$-Listing can be solved in time $\tOh(t + f(n,m))$ then Uncolored $(H,I)$-Listing can be solved in time $\tOh(t + f(n,m))$.
\end{lemma}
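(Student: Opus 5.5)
We would derive this directly from the color-coding reduction of \Cref{lem:color-coding}. Given an uncolored host hypergraph $G$, we first compute the $r = O(\log n)$ partitions $\mathcal{P}_1,\ldots,\mathcal{P}_r$ in time $O(n\log n)$. For each $l \in [r]$ we form the colored instance $G_l$: the same host $G$, now equipped with the partition $\mathcal{P}_l = \set{h_l^{-1}(1),\dots,h_l^{-1}(k)}$. Each such instance has $n$ vertices and at most $m$ edges. On each $G_l$ we run the assumed Colored $(H,I)$-Listing algorithm, which outputs the set $\sol_l|_I$ of colored copies projected to $I$.

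\textbf{Correctness.} Since $\sol = \bigcup_l \sol_l$ by \Cref{lem:color-coding}, and projection commutes with union, we get $\sol|_I = \bigcup_l \sol_l|_I$. This union is exactly the solution set of Uncolored $(H,I)$-Listing. One small check is that a colored copy with respect to $\mathcal{P}_l$ uses distinct vertices, since the parts are disjoint. Hence it is a genuine uncolored copy, which is the $\supseteq$ direction already argued in the lemma.

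\textbf{Running time.} Let $t_l := |\sol_l|_I|$. Every colored solution is an uncolored solution, so $t_l \le t$, where $t$ is the number of uncolored solutions. The $l$-th call therefore costs $\tOh(t_l + f(n,m)) \le \tOh(t + f(n,m))$. Summing over $r = O(\log n)$ calls gives $\tOh(t + f(n,m))$. The hidden $k^k$ factor in $r$ is absorbed by the convention that $O$ suppresses factors depending only on $(H,I)$.

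\textbf{Deduplication.} The only step needing care is removing duplicates, because a solution may be listed by several $l$. The concatenated list has total length $\sum_l t_l \le r t = \tOh(t)$, with each entry a $|I|$-tuple of vertex identifiers. We would sort it lexicographically, or radix-sort over the vertex ids in $[n]$, and then discard repeats, in time $O(rt\log(rt))$ or $O(rt + n)$. Both bounds are $\tOh(t + n)$. The additive $n$, and the $O(n\log n)$ cost of computing the partitions, are absorbed by assuming $f(n,m) \ge n$, which holds since any listing algorithm must read its input, or else by noting that $\tOh$ permits these near-linear terms. We do not expect any real obstacle here beyond this bookkeeping.
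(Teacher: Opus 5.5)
Your proposal is correct and matches the paper's proof: compute the $r = O(\log n)$ color-coding partitions, run the colored listing algorithm on each, and form the union by sorting and deduplicating. Because every colored solution is also an uncolored one, each call costs $\tOh(t + f(n,m))$, and you account for the deduplication cost and the near-linear additive overhead more explicitly than the paper, which leaves them implicit.
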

\begin{proof}
    Use \Cref{lem:color-coding} to construct partitions $\mathcal{P}_1,\ldots,\mathcal{P}_r$. Run the algorithm for Colored $(H,I)$-Listing on $G$ with respect to partition $\mathcal{P}_l$ to compute the set of solutions $\sol_l$. Finally, compute $\sol = \bigcup_{l \in [r]} \sol_l$, e.g., by sorting and deduplicating. 
\end{proof}

In order to obtain uncolored \emph{enumeration} algorithms, we additionally need the help of the Cheater's Lemma~\cite{CK21}. Compared to the variant we used in \Cref{sec:enumtolist} (\Cref{lem:cheaterI}), we now need a variant where the initial algorithm may print duplicates of solutions.

\begin{lemma}[Cheater's Lemma, Variant II]
    \label{lem:cheaterII}
    Let $f,g,h$ be functions such that $f(n,m), g(n,m), h(n,m)$ can be computed in time $O(f(n,m) + g(n,m) h(n,m))$. Suppose there is an algorithm that, given an instance $G$ of Colored $(H,I)$-Subgraph, (1)~prints all solutions to $G$, (2) the time to print the $j$-th answer is bounded by $f(n,m) + j \cdot g(n,m)$, for all $j \in [t]$, and (3) every solution is printed at most $h(n,m)$ times. Then there exists an enumeration algorithm with preprocessing time $\tOh(f(n,m) + g(n,m)h(n,m))$ and delay $\tOh(g(n,m)h(n,m))$.
\end{lemma}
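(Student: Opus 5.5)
We run the given (duplicating) algorithm and filter duplicates with a dictionary, then regularize the output times with a buffer and a step counter, as in the proof of \Cref{lem:cheaterI}.

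\emph{Step 1 (deduplication).} We simulate the algorithm step by step and maintain a dictionary $D$ of the solutions printed so far. Each solution is a tuple of $|I| = O(1)$ vertex identifiers from $\{0,\dots,n-1\}$, so $D$ can be a balanced search tree or a trie with $O(\log n)$ time per operation. Whenever the simulated algorithm prints a tuple, we look it up in $D$. If it is new, we insert it and append it to a FIFO buffer $Q$ of fresh solutions; otherwise we discard it. This slows the simulation down by only an $O(\log n)$ factor.

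\emph{Step 2 (counting fresh solutions over time).} Consider the moment when the simulated algorithm has printed $j \cdot h$ answers, where $h := h(n,m)$. By (3), at least $j$ distinct solutions have appeared among these answers, as long as $j \le t$. By (2), this happens within $f + jh\cdot g$ simulated steps, i.e.\ within $\tOh(f + j\cdot gh)$ real time; if fewer than $jh$ answers are ever printed, the algorithm terminates within time $f + t\cdot g$. Hence the $j$-th fresh solution enters $Q$ by real time $c\log n\,(f + j\cdot gh)$ for a suitable constant $c$, provided $j \le t$.

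\emph{Step 3 (Cheater's schedule).} First we compute $f$, $g$, $h$; by assumption this takes time $O(f + gh)$. We then run the simulation, and at real time $T_j := c\log n\,(f + j\cdot gh)$ we pop and print the $j$-th element of $Q$. If $Q$ is empty at time $T_j$, then by Step~2 we have $j > t$, so all $t$ solutions have already been printed and we terminate. To realize this schedule, we interleave the simulation with a step counter and check the counter after every $O(\log n)$-time block. This yields preprocessing time $\tOh(f + gh)$ and delay $T_{j+1} - T_j = \tOh(gh)$.

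\emph{Main obstacle.} The delicate step is making the counting argument of Step~2 rigorous. The $j$-th \emph{fresh} solution need not appear by the time the $j$-th \emph{answer} is printed; it only appears once $jh$ answers have been produced. Moreover, condition (2) is stated only for $j \in [t]$, while the index $jh$ may exceed $t$.

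We handle this by interpreting (2) as a bound on the time to print the $i$-th answer for every $i \le$ the total number of answers. If fewer than $jh$ answers are printed in total, the algorithm halts within time $f + t g$, and at that point all $t \ge j$ solutions are already in $Q$.

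There is one further gap: the bound $t \cdot g$ itself needs $t \le jh$. This holds here, because if fewer than $jh$ answers are printed in total, then the total number of answers, and hence $t$, is less than $jh$.

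Since every lookup and insertion in $D$ costs $O(\log n)$, the final bounds are $\tOh(f + gh)$ for the preprocessing time and $\tOh(gh)$ for the delay, as claimed.
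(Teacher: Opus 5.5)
Your proof is correct and is essentially the paper's argument. You simulate the duplicating algorithm into a buffer and use (2) and (3) to show that after $f + j\cdot gh$ simulated steps at least $j$ distinct solutions (or all of them) have appeared, reading (2) as holding for every answer index just as the paper tacitly does; you then pay an $O(\log n)$ factor for a search-tree dictionary.

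The one real difference works in your favor. You deduplicate eagerly as answers are produced and print on the fixed schedule $T_j$, whereas the paper discards duplicates lazily from the front of the buffer before each print. That loop can skip copies of many earlier solutions in a single round, so its cost is only bounded in an amortized sense, while your per-step overhead is worst-case.

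The only thing to tidy is the case of fewer than $jh$ answers. There you should say that all $A < jh$ answers, and hence all $t$ solutions, appear by simulated time $f + A\cdot g < f + jh\cdot g$. Do not say the algorithm halts by time $f + tg$: halting is neither given by (2) nor needed, since you stop as soon as $Q$ is empty at some $T_j$.
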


\begin{proof}
    Suppose that algorithm $\mathfrak{A}$ satisfies (1), (2), and (3). We will simulate $\mathfrak{A}$ and redirect its output stream to a buffer. Initially, the buffer is empty. We run $\mathfrak{A}$ for $f(n,m) + g(n,m)h(n,m)$ steps and aggregate some solutions(s) in the buffer. Then we repeat until the buffer depletes:
    \begin{itemize}
        \item While the next item in the buffer is a duplicate to what was printed before, discard it.
        \item Print the next solution in the buffer.
        \item Run $\mathfrak{A}$ for another $g(n,m)h(n,m)$ steps.
    \end{itemize}

    We claim that the buffer never depletes unless we have printed all solutions. Consider the beginning of the $i$-th repetition. We have already run $\mathfrak{A}$ for a total of $f(n,m) + i \cdot g(n,m)h(n,m)$ steps. Unless we have printed all solutions, property (2) implies that $\mathfrak{A}$ has fed at least $i \cdot h(n,m)$ items to the buffer, hence at least $i$ distinct solutions by property (3). On the other hand, we have printed only $i-1$ distinct solutions from the buffer (one for each previous repetition). The claim thus follows.

    Since we have listed every solution without duplicate, this gives an enumeration algorithm. Note that checking whether the next item in the buffer is a duplicate requires a data structure such as a binary search tree, which results in an additional factor $\log(t) \le \log(n^k) = O(\log n)$ in the running time, where $k = |V(H)|$.
    Hence, the resulting enumeration algorithm has preprocessing time $\tOh(f(n,m) + g(n,m)h(n,m))$ and delay $\tOh(g(n,m)h(n,m))$.
\end{proof}

By combining \Cref{lem:color-coding} and \Cref{lem:cheaterII} we can transfer enumeration algorithms from colored problems to uncolored problems.

\begin{lemma}
    \label{lem:enum_colored_to_uncolored}
    Let $f,g$ be functions such that $f(n,m),g(n,m)$ can be computed in time $O(f(n,m) + g(n,m))$. 
    If Colored $(H,I)$-Enumeration can be solved in preprocessing time $f(n,m)$ and delay $g(n,m)$, then Uncolored $(H,I)$-Enumeration can be solved in preprocessing time $\tOh(f(n,m) + g(n,m))$ and delay $\tOh(g(n,m))$.
\end{lemma}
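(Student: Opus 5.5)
The proof will combine color coding (\Cref{lem:color-coding}) with the Cheater's Lemma, Variant II (\Cref{lem:cheaterII}). Given an uncolored host hypergraph $G$, we first compute in time $O(n \log n)$ the $r = O(\log n)$ partitions $\mathcal{P}_1,\ldots,\mathcal{P}_r$ from \Cref{lem:color-coding}. By that lemma, the uncolored $H$-copies are exactly the union of the colored $H$-copies with respect to the $\mathcal{P}_l$. Projecting to $I$ gives the same union statement for solutions of Uncolored $(H,I)$-Subhypergraph: every uncolored solution $\tup{v}_I$ is the projection of some copy that is colored for some $l$, and every colored solution is an uncolored one.

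Next we define a ``cheating'' enumeration algorithm $\mathfrak{A}$. It runs the colored enumeration algorithm sequentially on the instances $(G,\mathcal{P}_1),\ldots,(G,\mathcal{P}_r)$, each of which has at most $n$ vertices and $m$ edges, and prints everything it outputs. Property (1) of \Cref{lem:cheaterII} holds by the union statement above. Property (3) holds with $h := r = O(\log n)$, because each colored run prints each of its solutions exactly once. For property (2), consider the $j$-th printed item. Before it, $\mathfrak{A}$ has fully executed some runs and partially executed one more. A run $l$ that prints $t_l$ solutions costs at most $f + (t_l+1) g$, where we count the time after its last output as one delay. Summing over at most $r$ runs, with $\sum t_l \le j$, the $j$-th item is printed within time
\[ O(n\log n) + r(f + g) + j \cdot g = \tOh(f + g + n) + j \cdot g. \]
The additive $n\log n$ from color coding is absorbed since $f \ge n$ may be assumed; otherwise the algorithm cannot even read the input, or we fold the $O(n \log n)$ term into $\tOh$ under that convention. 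So property (2) holds with $f' := \tOh(f+g)$ and $g' := g$.

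Applying \Cref{lem:cheaterII} with $f', g', h$ yields preprocessing time $\tOh(f + g + g\log n) = \tOh(f+g)$ and delay $\tOh(g \log n) = \tOh(g)$, as claimed. The computability hypothesis of \Cref{lem:cheaterII} follows from the assumed computability of $f,g$, since $h = r$ is explicit.

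The step I expect to need the most care is checking that \Cref{lem:cheaterII} really applies to the uncolored problem. It is stated for Colored instances, but its proof only uses the three properties, plus the fact that the number of solutions is at most $n^{|I|}$, which bounds the cost of the dictionary used for deduplication. Both hold equally in the uncolored setting, so I would note that the lemma transfers verbatim. Apart from this, a small amount of bookkeeping is needed to make sure the time spent between consecutive colored runs, including the terminating delay of each run, is charged as one $g$ per run and therefore contributes only $r \cdot g = \tOh(g)$ to the preprocessing.
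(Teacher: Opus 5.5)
Your proposal is correct and follows the paper's proof step for step: run the colored enumeration algorithm sequentially on the $r = O(\log n)$ color-coding partitions, bound the time of the $j$-th output by $r(f+g) + j\cdot g$ with each solution printed at most $r$ times, and then apply the Cheater's Lemma (Variant II) with $h = r$. Your extra remarks (accounting for the $O(n \log n)$ cost of computing the partitions, and noting that Variant II transfers verbatim to the uncolored setting) cover points the paper's proof leaves implicit, but they do not change the argument.
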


\begin{proof}
    Construct the partitions $\mathcal{P}_1,\ldots,\mathcal{P}_r$ from \Cref{lem:color-coding} and run the colored enumeration algorithm on $G$ with respect to partition~$\mathcal{P}_l$, for each $l$ one after the other.

    Let $\sol$ denote the set of uncolored solutions in $G$, and $\sol_l$ the set of colored solutions in $G$ with respect to $\mathcal{P}_l$. \Cref{lem:color-coding} shows that $\sol = \bigcup_{l \in [r]} \sol_l$, so the algorithm above lists all solutions. Moreover, for all $j \in [t]$ the algorithm prints at least $j$ solutions in the first $r f(n,m) + (j + r) \cdot g(n,m)$ time steps, because each of the $r$ calls to the enumeration algorithm may take time $f(n,m)$ for preprocessing, time $g(n,m)$ to terminate after printing the last solution, and otherwise time $g(n,m)$ per printed solution. 
    Finally, the algorithm prints each solution at most $r$ times.

    Applying \Cref{lem:cheaterII} with $h(I) = r$ converts this algorithm into an algorithm for Uncolored $(H,I)$-Enumeration with preprocessing time $\tOh(f(n,m) \cdot r + g(n,m) \cdot r) = \tOh(f(n,m) + g(n,m))$ and delay $\tOh(g(n,m) \cdot r) = \tOh(g(n,m))$. We remark that here the $\tOh$-notation hides a factor $k^k$, where $k = |V(H)|$.
\end{proof}

We are finally ready to prove \Cref{thm:main1} and \Cref{thm:algohypergraphs} in full generality.

\MainOneTheorem*
\begin{proof}
    Colored $(H,I)$-Listing is handled in \Cref{thm:main1-colored-listing}.
    For Uncolored $(H,I)$-Listing, combine \Cref{thm:main1-colored-listing} with \Cref{lem:listing_colored_to_uncolored}.
    Colored $(H,I)$-Enumeration is handled in \Cref{thm:main1-colored-enumeration}.
    For Uncolored $(H,I)$-Enumeration, combine \Cref{thm:main1-colored-enumeration} with \Cref{lem:enum_colored_to_uncolored}.

    In these theorems, we have expressed the exponent as $3/\alfa$. Using the bound $3/\alfa \le 17.42$ from \Cref{sec:mm} yields the claimed time. If $n \times n$ matrix multiplication would be in time $\tOh(n^2)$, then we could take $\alfa = 1$, so the exponent would improve to $3$.
\end{proof}

\AlgoHypergraphsTheorem*
\begin{proof}
    Colored $(H,I)$-Listing is handled in \Cref{thm:algohypergraphs-colored-listing}.
    For Uncolored $(H,I)$-Listing, combine \Cref{thm:algohypergraphs-colored-listing} with \Cref{lem:listing_colored_to_uncolored}.
    Colored $(H,I)$-Enumeration is handled in \Cref{thm:algohypergraph-colored-enumeration}.
    For Uncolored $(H,I)$-Enumeration, combine \Cref{thm:algohypergraph-colored-enumeration} with \Cref{lem:enum_colored_to_uncolored}.
    The resulting exponent is $3/\alfa \le 17.42$, see \Cref{sec:mm}.
\end{proof}

\section{Faster Algorithm for Projected Stars}
\label{sec:star-m2}
This section aims to prove \Cref{thm:star-listing}. We do so by solving a more general counting problem:
\begin{theorem}
    \label{thm:star-counting}
    There is a deterministic algorithm \algocount{$\mu_1, \dots, \mu_d$} that, given functions $\mu_i: [N_0] \times [N_i] \to \N$ for $i \in [d]$, outputs all tuples $\tup{v} = (v_1, \dots, v_d) \in \prod_{i \in [d]} [N_i]$ with non-zero value $\mu(\tup{v})$, where
    \[ \mu(\tup{v}) := \sum_{v_0 \in [N_0]} \prod_{i \in [d]} \mu_i(v_0, v_i). \]
    These tuples are called \EMPH{solutions}. In addition, the algorithm computes $\mu(\tup{v})$ for each solution $\tup{v}$. The algorithm runs in time $\Otilde(t + m^{2/\alfa})$, where $t$ is the number of solutions and $m$ is the number of non-zeros in the input functions.
\end{theorem}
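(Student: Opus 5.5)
Our plan is to generalize the simple projected-$d$-star algorithm from the technical overview, replacing the $n$-based balancing with an $m$-based one and using \Cref{lem:sparse-mm} for output-sensitive multiplication. We first guess $t$ up to a factor $2$ (run scales in parallel and abort, as in \Cref{thm:tree}). Then we apply degree uniformization: for each $v_0 \in [N_0]$ let $D_i(v_0)$ be the number of $v_i$ with $\mu_i(v_0,v_i) \neq 0$. We bucket $[N_0]$ into $O(\log^d m)$ classes in which each $D_i$ is within a factor $2$ of some $D_i$. Vertices $v_0$ with some $D_i(v_0)=0$ contribute nothing and are discarded. Since each remaining class is a restriction of the sum over $v_0$, and $\mu$ is a sum of nonnegative terms, the support and values of $\mu$ are obtained by summing the per-class outputs. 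Within a class with $N'_0$ centers, every center yields at least $\prod_i D_i$ distinct solutions and hence $\prod_i D_i \le t$. We also have $N'_0 D_i \le m$ for each $i$.

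Next we balance. We split $[d]=L\cup R$ greedily so that $P_L:=\prod_{i\in L}D_i$ and $P_R$ are each at most $\sqrt{P\cdot \max_i D_i}$, where $P=\prod_i D_i$. We build the sparse matrix $A_L$ with rows indexed by tuples $(v_i)_{i\in L}$ and columns by centers, with entry $\prod_{i\in L}\mu_i(v_0,v_i)$; we build $A_R$ symmetrically. Then $A_LA_R$ has exactly the values $\mu(\tup v)$, and its support has at most $t$ entries. The matrices have $|A_L|\le N'_0P_L$ and $|A_R| \le N'_0P_R$ non-zeros, with inner dimension $N'_0$ and outer dimensions at most those counts after deleting zero rows. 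We call \algomult{$A_L,A_R,\kappa$} with $\kappa=\infty$, i.e.\ plain output-sensitive multiplication, costing $\tOh(|A_L|+|A_R|+t+N'_0 t^{1-\alfa}(N'_0P_L+N'_0P_R)^{\alfa})$. Using $N'_0\le m/\max_i D_i$ and $P\le t$ gives $N'_0P_L\le N'_0\sqrt{t\max_i D_i}\le m\sqrt{t/\max_i D_i}$. The leading term is then bounded by $\tOh(m^{1+\alfa}t^{1-\alfa/2})$, and by weighted AM-GM this is at most $\tOh(t+m^{2/\alfa+2})$.

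This is off from the claimed $m^{2/\alfa}$, and we expect this to be the main obstacle. The crude bounds $N'_0\le m$ and $|A_L|\le N'_0P_L$ waste too much. To close the gap, we would use the input-size terms more carefully: $|A_L|$ is itself at most about $t$ times a small factor when the leaf degrees are large, and we would use Shearer-type bounds relating $t$ to the edge counts. This is what the overview calls detecting the dense case via Shearer's Lemma. The sparse case would be handled by hashing the leaf coordinates to densify the instance, computing $\mu$ on the compressed instance (which is why we insist on computing values and not just supports), and recovering the true solutions iteratively as in \algorecover{} and \algomult{}. We would attempt to choose the split and the densification so that the inner dimension is at most $m^{\,\cdot}$ with total cost $\tOh(t+ y\,t^{1-\alfa}x^{\alfa})$, with $xy \lesssim m^2$-type constraints, and then conclude with AM-GM at the exponent $2/\alfa$. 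Getting these exact exponents to line up is the step we are least sure about.
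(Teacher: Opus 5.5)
Your first two paragraphs correctly analyze a weaker algorithm: degree bucketing, a balanced split $L\cup R$, and output-sensitive multiplication of $A_L$ and $A_R$. That algorithm yields $\tOh(t+m^{2/\alfa+2})$. The statement claims $\tOh(t+m^{2/\alfa})$, and your last paragraph lists ingredients rather than an argument, so there is a genuine gap; the missing part is essentially the paper's whole proof. The loss is not an artifact of crude accounting. The product $A_LA_R$ has $x_Lx_R$ cells but only $t$ non-zeros, and $x_Lx_R$ can exceed $t$ by a factor equal to the number of centers. Take $N'_0$ centers with pairwise disjoint neighborhoods of size $D$ in every coordinate. Then $x_Lx_R=N'_0\,t$ for every split. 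With $N'_0=D^{\alfa d/2+1}$, the bound $N'_0\,t^{1-\alfa}(x_L^{\alfa}+x_R^{\alfa})$ from \Cref{lem:sparse-mm} is at least $D\cdot t$, and it also exceeds $m^{2/\alfa}$ once $d>8/\alfa^2$. So output-sensitive multiplication over the original leaf universes cannot reach the claimed exponent, whatever split you choose.

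The paper instead compresses the leaf coordinates until the product is dense, and then computes it non-output-sensitively. It has four ingredients.

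(i) A candidate set $S\supseteq\sol$ with $|S|\le 2t$ is obtained recursively. One folds a leaf coordinate in half via $\eta_j(v_0,u)=\mu_j(v_0,2u-1)+\mu_j(v_0,2u)$, solves the folded instance, and unfolds.

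(ii) For the not-yet-recovered candidates $P$, \Cref{thm:contract} builds $\tOh(1)$ \emph{coordinate-wise} contractions $h=(h_1,\dots,h_d)$, with $h_i:[N_i]\to[n_i]$ and volume $\prod_i n_i=\tOh(|P|)$. Together they isolate all but $|P|/4$ points of $P$. Coordinate-wise is essential: only then is $\mu'_i(v_0,v'_i)=\sum_{h_i(v_i)=v'_i}\mu_i(v_0,v_i)$ again a star instance with $\mu'(h(\tup v))=\sum_{h(\tup u)=h(\tup v)}\mu(\tup u)$. This is where Shearer's inequality enters (\Cref{lem:type-captured-half}). Combined with Farkas' lemma on the types $(\lceil\log C_I(q)\rceil)_{I\subseteq[d]}$, it shows that most points admit per-coordinate budgets $x_i$ with $\sum_i x_i\le\log|P|+O(\log\log|P|)$ and $C_I(q)\le 2^{\sum_{i\notin I}x_i}$. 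It is not used to relate $t$ to edge counts.

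(iii) All $\prod_i n_i$ values of the contracted instance are computed by plain rectangular multiplication after the balancing \Cref{lem:dense} with $c=1/\alfa$. This costs $\tOh(\prod_i n_i+(N_0\Delta)^{2/\alfa})$ by \Cref{lem:dense-star-counting}, provided $\prod_{i\in I}n_i\le N_0\Delta^{|I|}$ for all $I\subseteq[d]$. This density condition is a property of a good contraction, not of the input. If $h$ isolates at least $|P|/(4|\mathcal H|)$ solutions, the contracted instance has at least that many non-zero values. It also has at most $N_0D^{|I|}\prod_{i\notin I}n_i$ of them when all degrees are at most $D$, which forces $\prod_{i\in I}n_i=\tOh(N_0D^{|I|})$. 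With $\Delta=\tOh(D)$ and $N_0D\le 2m$, a round costs $\tOh(t+m^{2/\alfa})$.

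(iv) Isolated candidates are recovered exactly by subtracting the already-recovered values of colliding points, and their siblings are recovered via $\eta$. Each round therefore removes at least $|P|/(4|\mathcal H|)$ candidates, and $\tOh(1)$ rounds suffice.
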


As usual, we assume that the functions are sparsely represented: each $\mu_i$ is given as a list of key-value pairs, and only non-zero values appear in the list.

Observe that the counting problem specializes to Colored projected-$d$-star-Listing, by encoding the host graph in the functions $\mu_i$. This allows us to derive \Cref{thm:star-listing} from \Cref{thm:star-counting}:

\begin{proof}[Proof of \Cref{thm:star-listing}]
    In Colored projected-$d$-star-Listing, the pattern graph is $H$ with $V(H) = \set{0,1,\dots,d}$, $E(H) = \Set{\set{0,i} : i \in [d]}$ and $I = [d]$. We are given a host graph $G = \left(\bigcup_{0 \leq i \leq d} V_i, E\right)$ as input. Without loss of generality we assume $V_i = [N_i]$ for $0 \leq i \leq d$. For each $i \in [d]$, we define a function $\mu_i : [N_0] \to [N_i]$ that encodes the adjacency between $V_0$ and $V_i$: assign $\mu_i(v_0, v_i) := 1$ if $\set{v_0,v_i} \in E$, and 0 otherwise. We invoke \algocount{$\mu_1, \dots, \mu_d$} and output the set of solutions, i.e., the tuples $\tup{v}$ with non-zero value $\mu(\tup{v})$.
    
    Observe that $\mu(\tup{v})$ counts the number of $H$-copies in $G$ that extend $\tup{v}$. In particular, $\tup{v}$ is a solution in the counting problem if and only if it is a solution to $(H,I)$-Listing on $G$, and thus the algorithm is correct. Regarding time complexity, note that the parameter $m$ (number of edges versus number of non-zeros in the functions) agrees in both problems, so does the parameter $t$ (number of solutions). Hence \algocount takes time $\Otilde(t + m^{2/\alfa})$ by \Cref{thm:star-counting}. The construction of the functions takes time $O(m)$, and printing the solutions takes time $O(t)$. Overall the running time is $\Otilde(t + m^{2/\alfa})$.

    To handle Uncolored projected-$d$-star-Listing, we apply \Cref{lem:listing_colored_to_uncolored}. To handle Colored projected-$d$-star-Enumeration, we apply \Cref{thm:enumtolist}. Finally, to handle Uncolored projected-$d$-star-Listing, we apply \Cref{lem:enum_colored_to_uncolored}.
\end{proof}

We emphasize that if $\mm(n,n,n) = \Otilde(n^2)$, then $\alfa = 1$ and the running time reduces to $\Otilde(t + m^2)$. This would be optimal under fine-grained complexity assumptions; see Section~\ref{sec:lower-bounds}.

At a high level, we prove \Cref{thm:star-counting} by designing an algorithm for  ``dense'' instances via matrix multiplication, and then reducing general instances to dense ones via hashing and iterative recovery. The two pieces are discussed in the next two subsections.

\subsection{Counting in Dense Instances}
Write $N_I = \prod_{i \in I} N_i$. We show that we can efficiently count in the ``dense'' case, where the dimensions $N_1, \dots, N_d$ are relatively small.

\begin{lemma}
\label{lem:dense-star-counting}
Let $\Delta \geq 1$ and assume that $N_I \leq N_0 \Delta^{|I|}$ for all $I \subseteq [d]$. There is a deterministic algorithm \algodcount{$\mu_1, \dots, \mu_d$} that, given functions $\mu_i: [N_0] \times [N_i] \to \N$ for $i \in [d]$, computes $\mu(\tup{v})$ for each tuple $\tup{v} \in \prod_{i \in [d]} [N_i]$ and runs in time \smash{$\Otilde(N_{[d]} + (N_0 \Delta)^{2/\alfa})$}. 
\end{lemma}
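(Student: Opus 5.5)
We need to compute the full tensor $\mu(\tup v)$ for all $\tup v\in\prod_i[N_i]$. The plan is to split the leaves into two groups, form the two incidence matrices $A_L$ and $A_R$, and compute $A_L A_R$ with the dense rectangular bound of \Cref{prp:mm} (no output sensitivity is needed, since we pay $N_{[d]}$ anyway).

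Concretely, let $A_L\in\N^{N_L\times N_0}$ with $A_L[\tup v_L,v_0]=\prod_{i\in L}\mu_i(v_0,v_i)$, and $A_R\in\N^{N_0\times N_R}$ defined analogously. Then $(A_LA_R)[\tup v_L,\tup v_R]=\mu(\tup v)$. Writing the product out costs $O(N_{[d]})$. Building $A_L$ and $A_R$ sparsely costs at most their dimensions, $N_L N_0+N_0N_R$. By the density assumption this is at most $N_0^2(\Delta^{|L|}+\Delta^{|R|})$, which is fine only if $|L|,|R|$ are small. So I instead bound the entry count by $O(N_0(N_L+N_R))$ and use that the larger of $N_L,N_R$ can be controlled.

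The key step is choosing the split. Order the leaves and take $L$ to be a prefix chosen so that $N_L\le\sqrt{N_{[d]}}$ while $N_{L\cup\{i\}}>\sqrt{N_{[d]}}$ for the next leaf $i$. Then both $N_L$ and $N_R$ are at most $\sqrt{N_{[d]}}\cdot N_i$ for a single leaf $i$. Using $N_i\le N_0\Delta$ (the assumption with $|I|=1$), this gives $x,z\le \sqrt{N_{[d]}}\,N_0\Delta$. The AM-GM-type bound $N_0\sqrt{N_{[d]}}\,N_0\Delta\le N_{[d]}+(N_0^2\Delta)^2$ then handles the construction cost.

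For the multiplication I apply \Cref{prp:mm} with $x=N_L$, $y=N_0$, $z=N_R$. The first term is $xz=N_{[d]}$. The second term is at most $N_0\,N_{[d]}^{1-\alfa}\,(\sqrt{N_{[d]}}\,N_0\Delta)^{\alfa}=N_0^{1+\alfa}\Delta^{\alfa}N_{[d]}^{1-\alfa/2}$. By weighted AM-GM with weights $1-\alfa/2$ and $\alfa/2$, this is bounded by $N_{[d]}+(N_0^{1+\alfa}\Delta^{\alfa})^{2/\alfa}=N_{[d]}+N_0^{2/\alfa+2}\Delta^2$. That exceeds the target $(N_0\Delta)^{2/\alfa}$, so the naive split is too lossy. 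This is the main obstacle.

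To fix it I would exploit the density hypothesis for all subsets rather than only singletons. The hypothesis gives $N_{[d]}\le N_0\Delta^d$, and more importantly allows a more careful balance. Choose $L$ so that $N_L$ and $N_R$ are both about $\sqrt{N_{[d]}}$ up to a factor $N_0\Delta$, then apply the refined dimension bounds $x\le N_0\Delta^{|L|}$ and $z\le N_0\Delta^{|R|}$. If needed, I would instead split into several groups and handle separately the case where $N_{[d]}$ is small compared with $(N_0\Delta)^{2/\alfa}$. Alternatively, I would degree-bucket $v_0$ as in the technical overview, so that the number of nonzero rows of $A_L$ is at most $N_0\Delta^{|L|}$ after removing zero rows. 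I expect this careful choice of split, giving a final AM-GM step that lands exactly on exponent $2/\alfa$, to be the crux. The remaining steps are routine bookkeeping with \Cref{prp:mm}.
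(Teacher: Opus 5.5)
\textbf{There is a genuine gap.} Your framework is the right one and matches the paper's: split the leaves as $[d]=L\cup R$, multiply the $N_L\times N_0$ and $N_0\times N_R$ matrices with \Cref{prp:mm}, and finish with weighted AM-GM. Your computation that the prefix split only gives $\max\{N_L,N_R\}\le\sqrt{N_{[d]}}\,N_0\Delta$, and hence the too-large term $N_0^{2/\alfa+2}\Delta^2$, is also correct.

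However, the proof stops exactly where the work lies. Your last paragraph lists candidate fixes but carries none of them out, so the lemma is not established. What you need is a split with
\[
\max\{N_L,N_R\}\;\le\;\frac{N_{[d]}}{N_0^{c}}+N_0^{c}\Delta^{2c},\qquad c:=1/\alfa\ge 1.
\]
With such a split, the second term of \Cref{prp:mm} is at most $N_0N_{[d]}^{1-\alfa}\max\{N_L,N_R\}^{\alfa}\le N_{[d]}^{1-\alfa}\bigl(N_{[d]}+N_0^{2/\alfa}\Delta^{2/\alfa}\bigr)^{\alfa}$. AM-GM then gives $\tOh(N_{[d]}+(N_0\Delta)^{2/\alfa})$, and the cost $N_0(N_L+N_R)$ of building the matrices is dominated as well.

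The paper obtains this split (\Cref{lem:dense}) by induction on $d$. The idea your attempt lacks is that the coordinate added last must be the \emph{smallest} one, because its size is controlled by the hypothesis for the whole set $[d]$, not just for singletons.
\begin{itemize}
\item If $d<2c$, split $[d]$ into two halves of size at most $\lceil c\rceil$. The hypothesis for these subsets gives $N_0\Delta^{\lceil c\rceil}\le N_0^c\Delta^{c+1}\le N_0^c\Delta^{2c}$.
\item If $d\ge 2c$, let $N_d$ be the minimum. Recursively split $[d-1]=J\cup\overline J$ with $N_J\le N_{\overline J}$, and put $d$ into $J$. The side $\overline J$ is bounded by induction, and $N_{J\cup\{d\}}\le\sqrt{N_{[d-1]}}\,N_d$.
\item If moreover $N_{[d-1]}>N_0^{2c}$, then $\sqrt{N_{[d-1]}}\,N_d=N_{[d]}/\sqrt{N_{[d-1]}}$ is below $N_{[d]}/N_0^{c}$.
\item Otherwise $N_d\le N_{[d]}^{1/d}\le N_0^{1/d}\Delta$. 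Interpolating $N_{[d-1]}\le N_0\Delta^{d}$ with $N_{[d-1]}\le N_0^{2c}$ using weight $\gamma=2c/d$ gives $N_{[d-1]}\le N_0^{2c-2/d}\Delta^{2c}$. Hence $\sqrt{N_{[d-1]}}\,N_d\le N_0^{c}\Delta^{c+1}$.
\end{itemize}

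None of your suggested alternatives replaces this argument. The bounds $N_L\le N_0\Delta^{|L|}$ and $N_R\le N_0\Delta^{|R|}$ only help when $|L|$ and $|R|$ are at most about $c$, which is just the base case. Degree bucketing is not available, since the lemma makes no sparsity assumption on the $\mu_i$. It is also not relevant: the cost here is governed by the dimensions $N_L,N_0,N_R$, and all $N_{[d]}$ entries must be output anyway.
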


The proof of \Cref{lem:dense-star-counting} relies on the following partition lemma.

\begin{lemma}
    \label{lem:dense}
    Let $\Delta \geq 1$ and assume that $N_I \leq N_0 \Delta^{|I|}$ for all $I \subseteq [d]$. For every $c \geq 1$ there is a partition $[d] = I \cup \overline I$ such that
    \begin{equation*}
        \max\Set{ N_I, N_{\overline I} } \;\leq\; \frac{N_{[d]}}{N_0^c} + N_0^c \Delta^{c+1} \;\leq\; \frac{N_{[d]}}{N_0^c} + N_0^c \Delta^{2c}
    \end{equation*}
\end{lemma}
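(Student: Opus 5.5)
The plan is a greedy prefix construction over the indices sorted by size. Without loss of generality all $N_i \ge 1$. Relabel so that $N_1 \ge N_2 \ge \dots \ge N_d$, and let $P_r := N_{[r]}$ denote the prefix products. The only tool is the hypothesis $N_J \le N_0 \Delta^{|J|}$, applied to prefixes $J = [r]$; no earlier result of the paper is needed. The second inequality in the statement is immediate from $\Delta \ge 1$ and $c \ge 1$ (so $c+1 \le 2c$), so all the work is in the first one.

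\textbf{Step 1 (the split).} If $N_{[d]} < N_0^c$, take $I = [d]$ and $\overline I = \emptyset$. Then both products are at most $\max\{N_{[d]}, 1\} \le N_0^c \le N_0^c\Delta^{c+1}$, and we are done. Otherwise let $r$ be the smallest index with $P_r \ge N_0^c$, and set $I := [r]$, $\overline I := [d]\setminus[r]$. Then $N_{\overline I} = N_{[d]}/P_r \le N_{[d]}/N_0^c$, which gives the first term of the bound. It remains to show $N_I = P_r \le N_0^c \Delta^{c+1}$ (up to the slack allowed by the additive form).

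\textbf{Step 2 (bounding $P_r$).} By minimality, $P_{r-1} < N_0^c$, so $P_r < N_0^c N_r$. There are two regimes.
\begin{itemize}
\item If $r \le c+1$, the hypothesis applied to $J = [r]$ gives $P_r \le N_0 \Delta^r \le N_0^c \Delta^{c+1}$, and we are done.
\item If $r > c+1$, I use the sorting: $N_r$ is at most the geometric mean of $N_1,\dots,N_{r-1}$, so $N_r \le P_{r-1}^{1/(r-1)} < N_0^{c/(r-1)} \le N_0$. I also combine $N_r^r \le P_r \le N_0\Delta^r$, which gives $N_r \le N_0^{1/r}\Delta$, with $P_r < N_0^c N_r$. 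Multiplying out the two bounds on $N_r$ with suitable weights should show that $P_r$ is at most $N_0^c\Delta^{c+1}$, possibly times a bounded factor.
\end{itemize}

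\textbf{Main obstacle.} The hard part is the regime $r > c+1$. The naive bound there is $P_r < N_0^{c+1/r}\Delta$, which is off by a factor $N_0^{1/r}$. If the weighted combination in Step 2 does not close this gap, I will change the rule that picks the prefix. The first alternative is to cut at the threshold $N_0^{c}$ with the last element added only fractionally, which means picking between $[r-1]$ and $[r]$ according to which side is smaller. In that version, when we stop at $[r-1]$, the complement is bounded by $N_{[d]}/P_{r-1} \le N_{[d]}N_r/N_0^c$. I then need to absorb the extra factor $N_r$ using $N_r \le N_0^{1/r}\Delta$ together with the fact that $r$ is large. If this still fails, I will instead choose the cut point to minimise $\max\{P_r, N_{[d]}/P_r\}$ over all $r$. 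I would then verify that one of the two additive terms dominates, using the hypothesis on both $[r]$ and its complement.
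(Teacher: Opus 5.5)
Your proposal has a genuine gap, and none of your fallbacks closes it. The second inequality, the case $r\le c+1$, and the bound $N_{\overline I}\le N_{[d]}/N_0^c$ are all fine. The problem is structural: every split you consider is a prefix of the indices sorted in decreasing order. That includes Step 1, Step 2, the ``$[r-1]$ or $[r]$'' variant and the best cut point, and such splits cannot achieve the bound in general.

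Here is a concrete instance. Take $c=1$, $d=11$, an integer $Y\ge 2$, and
$N_0=Y^{100}$, $\Delta=Y^{10}$, $N_1=N_2=Y^{44}$, $N_3=Y^{34}$, $N_4=\dots=N_{11}=Y^{11}$.
\begin{itemize}
\item The hypothesis holds. The product of the $j$ largest values is $Y^{44}$, $Y^{88}$, and $Y^{89+11j}$ for $3\le j\le 11$. Each is at most $N_0\Delta^j=Y^{100+10j}$.
\item The target is $N_{[d]}/N_0+N_0\Delta^2=Y^{110}+Y^{120}<Y^{122}$.
\item The prefix products have exponents $0,44,88,122,133,\dots,210$, out of a total of $210$. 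So every prefix split has $\max\{N_I,N_{\overline I}\}\ge Y^{122}$.
\item In particular, your Step 2 stops at $r=3$ with $P_3=Y^{122}$, and $[2]$ leaves a complement of $Y^{122}$.
\end{itemize}
The lemma still holds here, but only through a non-prefix split such as $I=\{1,2,4\}$, which gives $Y^{99}$ and $Y^{111}$.

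The interpolation you hope for in Step 2 cannot help at $c=1$. The two bounds $P_{r-1}\le N_0\Delta^{r-1}$ and $P_{r-1}<N_0^c$ both have $N_0$-exponent exactly $1$. So no weighted mean removes the extra factor $N_0^{1/r}$ contributed by $N_r\le N_0^{1/r}\Delta$.

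The missing idea is to balance the two sides and charge the imbalance to the \emph{smallest} element. The paper argues by induction on $d$:
\begin{itemize}
\item If $d<2c$, split $[d]$ into two halves of size at most $\lceil c\rceil$ and apply the hypothesis directly.
\item Otherwise let $N_d$ be the minimum, partition $[d-1]$ inductively, and add $d$ to the side with the smaller product. Unrolled, this essentially inserts the elements in decreasing order into the currently smaller side.
\item The bound on $N_{\overline I}$ is inherited from the induction, and $N_I\le\sqrt{N_{[d-1]}}\,N_d$.
\item If $N_{[d-1]}>N_0^{2c}$, this is below $N_{[d]}/N_0^c$.
\item Otherwise $N_d\le N_{[d]}^{1/d}\le N_0^{1/d}\Delta$. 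The weighted geometric mean of $N_{[d-1]}\le N_0\Delta^d$ and $N_{[d-1]}\le N_0^{2c}$, with weight $\gamma=2c/d\le 1$, gives $N_{[d-1]}\le N_0^{2c-2/d}\Delta^{2c}$. This step uses $c(2c-1)\ge 1$. Hence $N_I\le N_0^c\Delta^{c+1}$.
\end{itemize}
The square root is what makes the relevant threshold $N_0^{2c}$ rather than your $N_0^{c}$. That is exactly why the interpolation gains a factor $N_0^{-2/d}$ there, while it gains nothing at $c=1$ in your setup.
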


\begin{proof}
    The second inequality follows just from the assumption $c \geq 1$, so we focus on showing the first inequality. If $d < 2c$ then the partition is trivial. Indeed, split $[d]$ arbitrarily into two parts~$I, \overline I$ of size at most $\ceil{\frac{d}{2}} \leq \ceil{c}$. Under the premise of the lemma, this partition satisfies
    \begin{equation*}
        \max\Set{N_I, N_{\overline I}} \leq N_0 \Delta^{\ceil{c}} \leq N_0^c \Delta^{c+1}.
    \end{equation*}
    
    So suppose that $d \geq 2c$. In this case we construct the partition by induction. Reorder the numbers so that $N_d = \min\set{N_1, \dots, N_d}$, and inductively obtain a partition $[d-1] = J \cup \overline J$ such that
    \begin{equation*}
        \max\Set{ N_J, N_{\overline J} } \leq \frac{N_{[d-1]}}{N_0^c} + N_0^c \Delta^{c+1}.
    \end{equation*}
    Without loss of generality assume that $N_J \leq N_{\overline J}$. Then let $I = J \cup \set{d}$. Note that $\overline{I} = \overline{J}$ and thus $N_{\overline I} = N_{\overline J} \leq \frac{N_{[d]}}{N_0^c} + N_0^c \Delta^{c+1}$ as required. It remains to analyze $N_I$. We have
    \begin{equation*}
        N_I = N_J \cdot N_d \leq \sqrt{N_J N_{\overline J}} \cdot N_d = \sqrt{N_{[d-1]}} \cdot N_d,
    \end{equation*}
    so in the following we will bound $\sqrt{N_{[d-1]}} \cdot N_d$. We distinguish two cases:
    \begin{itemize}
        \item If $N_{[d-1]} > N_0^{2c}$, then
        \begin{equation*}
            \sqrt{N_{[d-1]}} \cdot N_d = \frac{N_{[d]}}{\sqrt{N_{[d-1]}}} < \frac{N_{[d]}}{N_0^c}.
        \end{equation*}
        \item If $N_{[d-1]} \leq N_0^{2c}$, then
        \begin{equation*}
            N_d \leq (N_{[d]})^{\frac{1}{d}} \leq \left(N_0 \Delta^d\right)^{\frac{1}{d}} = N_0^{\frac{1}{d}} \Delta
        \end{equation*}
        by the assumption $N_d = \min\set{N_1, \dots, N_d}$ and the premise of the lemma. Besides this bound, we have $N_{[d-1]} \leq N_0 \Delta^d$ (again by the premise of the lemma) and $N_{[d-1]} \leq N_0^{2c}$ (by the case assumption). Let $\gamma = \frac{2c}{d}$, and note that $0 < \gamma \leq 1$. Taking a geometric mean of both bounds with weights $\gamma$ and $1 - \gamma$,
        \begin{equation*}
            N_{[d-1]} \leq (N_0 \Delta^d)^\gamma \cdot N_0^{2c(1-\gamma)} = N_0^{2c+\gamma(1-2c)} \Delta^{2c} \leq N_0^{2c-\frac{2}{d}} \Delta^{2c},
        \end{equation*}
        where the last step used $c \geq 1$. Therefore,
        \begin{equation*}
            \sqrt{N_{[d-1]}} \cdot N_d \leq N_0^{c-\frac{1}{d}} \Delta^{c} \cdot N_0^{\frac{1}{d}} \Delta = N_0^c \Delta^{c+1}.
        \end{equation*}
    \end{itemize}
    This completes the proof.
\end{proof}

\begin{proof}[Proof of \Cref{lem:dense-star-counting}]
    Let $[d] = I \cup \overline I$ be a partition guaranteed by \Cref{lem:dense}; we compute this partition in constant time by enumerating and testing all partitions of $[d]$. Construct a matrix $A \in \N^{(\prod_{i \in I} [N_i]) \times [N_0]}$ where $A_{\tup{v}, v_0} = \prod_{i \in I} \mu_i(v_0, v_i)$. Similarly, construct a matrix $B \in \N^{[N_0] \times (\prod_{i \in \overline{I}} [N_i])}$ where $B_{v_0, \tup{w}} = \prod_{i \in \overline I} \mu_i(v_0, w_i)$. Compute their product $C = A B$ via \Cref{prp:mm}. By definition, each entry $C_{\tup{v},\tup{w}}$ in the product is exactly $\mu((\tup{v},\tup{w}))$.
    
    The running time is dominated by the matrix multiplication. By Proposition~\ref{prp:mm} and \Cref{lem:dense} applied with $c = 1/\alfa \geq 1$,
    \begin{align*}
        \mm(N_I, N_0, N_{\overline I})
        &= \Otilde\left( N_I N_{\overline I} + N_0 \left(N_I N_{\overline I}\right)^{1-\alfa} \left(N_I^\alfa + N_{\overline I}^\alfa \right) \right) \\
        &= \Otilde\left( N_{[d]} + N_0 N_{[d]}^{1-\alfa} \max\set{N_I, N_{\overline I}}^\alfa \right) \\
        &\leq \Otilde\left( N_{[d]} + N_0 N_{[d]}^{1-\alfa} \left(\frac{N_{[d]}}{N_0^{1/\alfa}} + N_0^{1/\alfa} \Delta^{2/\alfa}\right)^\alfa \right) \\
        &= \Otilde\left( N_{[d]} + N_{[d]}^{1-\alfa} \left(N_{[d]} + N_0^{2/\alfa} \Delta^{2/\alfa} \right)^\alfa \right).
    \end{align*}
    By the weighted AM-GM inequality $x^{1-\alfa} y^\alfa \leq O(x + y)$, the time can be further bounded by $\Otilde(N_{[d]} + N_0^{2/\alfa} \Delta^{2/\alfa})$, as the lemma claimed.
\end{proof}

\subsection{Densification via Contractions}
To densify a general instance, the key definition is that of a contraction, which compresses the universe $[N_1] \times \cdots \times [N_d]$ independently along each dimension into a smaller universe $[n_1] \times \cdots \times [n_d]$.

\begin{definition}
    A \EMPH{contraction} is a tuple of functions $h = (h_1, \dots, h_d)$ where $h_i : [N_i] \to [n_i]$. The \EMPH{volume} of the contraction is $\prod_{i \in [d]} n_i$. The \EMPH{image} of a point $\tup{v} = (v_1, \dots, v_d) \in \prod_{i \in [d]} [N_i]$ under $h$ is defined as
    \[ h(\tup{v}) = (h_1(v_1), \dots, h_d(v_d)) \;\in\; \prod_{i \in [d]} [n_i]. \]
    Two points are said to \EMPH{collide} under $h$ if they have the same image under $h$.
\end{definition}

\begin{definition}
    Let $P \subseteq \prod_{i \in [d]} [N_i]$ and let $h$ be a contraction. A point $q \in \prod_{i \in [d]} [N_i]$ is \EMPH{isolated} from $P$ under $h$ if it does not collide with any $p \in P \setminus \set{q}$ under $h$. For a family of contractions~$\mathcal H$, we say that a point is \EMPH{isolated} from $P$ under $\mathcal H$ if it is isolated from $P$ under some $h \in \mathcal H$.
\end{definition}

\begin{theorem}
\label{thm:contract}
    There is a deterministic algorithm \algocontract{$P$} that, given a set $P \subseteq \prod_{i \in [d]} [N_i]$, computes a family $\mathcal H$ of $\Otilde(1)$ contractions, each of volume $\Otilde(|P|)$, such that all but $|P|/4$ points in $P$ are isolated from $P$ under $\mathcal H$. The algorithm runs in time $\Otilde\left(|P| + \sum_{i \in [d]} N_i\right)$.
\end{theorem}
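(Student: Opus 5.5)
The plan is a deterministic hashing construction: first find a family of product contractions of volume $O(|P|)$ under which most points are isolated, then derandomize it with $\Otilde(1)$ families of explicit hash functions. The driving observation is that two distinct points $p \neq q$ collide under $h = (h_1,\dots,h_d)$ only if $h_i(p_i) = h_i(q_i)$ for every coordinate $i$. In particular, they must collide in some coordinate $i$ where $p_i \neq q_i$. Collisions must happen simultaneously in all coordinates, but we only need to ``pay'' for one coordinate where the points differ.

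To set the target sizes $n_i$, I would look at the projections $P_i := \{p_i : p \in P\}$. The naive choice $n_i \approx |P_i|$ has volume $\prod_i |P_i|$, which can be far larger than $|P|$. The fix is Shearer-type balancing, which is presumably why the overview mentions Shearer's Lemma. For a random point $X$ uniform on $P$, we have $\log|P| = H(X) = \sum_i H(X_i \mid X_{<i})$. I would choose $n_i \approx 2^{H(X_i \mid X_{<i})}$, up to constant and $\Otilde(1)$ factors, so that the volume is $\Otilde(|P|)$.

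To make the conditional entropies usable deterministically, I would discretize in chain-rule order. For each prefix $p_{<i}$, the fiber $\{p_i : p \in P,\ p_{<i} \text{ fixed}\}$ has some size $s$. Grouping points by $\lceil \log s\rceil$ per coordinate splits $P$ into $\Otilde(1)$ classes (with $d$ constant, $\log^d$ many). Within a class, the fiber sizes $s_i \approx 2^{e_i}$ are uniform, so the class has size about $\prod_i 2^{e_i}$. Now set $n_i := C \cdot 2^{e_i}$ for a large constant $C$.

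The isolation argument then goes coordinate by coordinate. Take $q$ in a class and another point $p \in P$, and let $i$ be the first coordinate with $p_i \neq q_i$. Then $p$ and $q$ agree on the prefix $q_{<i}$, so $p_i$ lies in the fiber of $q_{<i}$, which has size about $2^{e_i}$. A pairwise-independent $h_i$ into $[C 2^{e_i}]$ makes $q_i$ collide with some other element of that fiber with probability at most $1/C$. Summing over $i$ (a constant number of coordinates), the probability that $q$ fails to be isolated is at most $d/C \le 1/8$. Points $p$ outside the class are handled the same way, because the fiber argument only uses the prefix of $q$ and the total fiber sizes. The subtlety is that the fiber of $q_{<i}$ within all of $P$ must have size matched to $q$'s own class exponent. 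That is true by construction, since the exponent is defined from $q$'s own prefix fiber in $P$.

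For derandomization, within each class and each coordinate I would need one $h_i$ that works for most points. The method of conditional expectations handles this: the objective is the count of non-isolated points, which is computable via the fibers, and choosing $h_1,\dots,h_d$ greedily yields one contraction for the class in which at most $|P_{\text{class}}|/8$ points fail. I would expect a cleaner alternative to be a small explicit family, such as $h(x) = (ax \bmod p) \bmod n_i$ over $\Otilde(1)$ primes, combined with an averaging argument.

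I expect the main obstacle to be keeping each contraction's volume $\Otilde(|P|)$ rather than $\Otilde(|\text{class}|)$ summed improperly, while keeping the running time $\Otilde(|P| + \sum_i N_i)$. Taking one contraction per class gives $\Otilde(1)$ contractions, each of volume $O(C^d \cdot |\text{class}|) \le \Otilde(|P|)$, which works. Computing the fibers costs $\Otilde(|P|)$ via sorting, and evaluating the hashes costs $\Otilde(\sum_i N_i)$.
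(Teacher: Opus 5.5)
Your isolation criterion is sound. If $i$ is the first coordinate in which $p \neq q$, a collision forces $h_i(p_i) = h_i(q_i)$ with $p_i$ in the fiber of $q_{<i}$. So $n_i \ge C s_i(q)$ for all $i$, where $s_i(q)$ is that fiber's size, makes $q$ isolated with probability at least $1 - d/C$.

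The gap is the volume bound. The claim that a class ``has size about $\prod_i 2^{e_i}$'' is false. What the prefix trie actually gives you is $\sum_{q \in P} \prod_i s_i(q)^{-1} = 1$, since a uniform random walk down the trie reaches $q$ with exactly this probability. Hence you only get $|\text{class}| \le 2^{\sum_i e_i}$, which is the wrong direction. The chain rule $\log |P| = \sum_i H(X_i \mid X_{<i})$ controls an average of conditional entropies, not the pointwise support sizes $\log s_i(q)$.

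Concretely, let $d = 2$ and $P = [a]^2 \cup \{(a+k, a+k) : k \in [a^2]\}$, so $|P| = 2a^2$. For each of the $a^2 = |P|/2$ grid points, in either coordinate order, $s_1(q)\, s_2(q) = (a^2 + a) \cdot a = \Theta(|P|^{3/2})$. So the contraction for the grid class has volume $\Theta(|P|^{3/2})$. Dropping that class leaves $|P|/2 > |P|/4$ points unisolated, and trying all $d!$ orders does not help.

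The loss comes precisely from your ``driving observation'' of paying for only one differing coordinate. It forces $n_{\sigma(1)} \ge C\,|P_{\sigma(1)}|$ for the first coordinate $\sigma(1)$ of the order. In the example, the $a^2$ diagonal points differ from a grid point in both coordinates, so a collision needs both coordinates to collide. A random contraction with $n_1 = n_2 = \Theta(a)$, of volume $O(|P|)$, already isolates each grid point with constant probability.

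The paper exploits the full collision probability $\prod_{i \notin \eq(p,q)} 1/n_i$. This gives the sufficient condition $C_I(q) := |\{p \in P : p_I = q_I\}| \le \prod_{i \notin I} n_i$ for all $I \subseteq [d]$ (\Cref{lem:isolation}). That lemma is derandomized by conditional expectations, bit by bit, using orthogonal range counting.

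Points are then classified by their type $(\lceil \log C_I(q) \rceil)_{I \subseteq [d]}$. The real work is showing that for all but $|P|/4$ points there is a solution $x \ge 0$ to the system
$\sum_{i \notin I} x_i \ge \lceil \log C_I(q) \rceil$ for all $I$, and $\sum_i x_i \le \log |P| + O(\log\log |P|)$.
If the system is infeasible for a type, Farkas' lemma produces a distribution over subsets $I$ with $\Pr(i \in I) \ge 1 - 1/M$ for every $i$, for some $1 < M \le d$. Shearer's inequality over these projections, combined with $|Q_I| \le 2|P|/2^{l_I}$, then shows that the type contains at most $|P|/(2\log(|P|+1))^{2^d}$ points. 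Shearer along a single chain order, as in your plan, cannot substitute for this step.
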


Let us emphasize again that $d$ is considered a constant throughout. The $\Otilde$-notation in the above theorem hides polylogarithmic factors of the form $(\log |P|)^{2^d}$.

Our algorithm \algocontract employs incremental \emph{orthogonal range searching}. This is a deterministic data structure that maintains a point set $P \subset \R^{d'}$ under insertions. Upon an \emph{orthogonal range query} of the form $B = [a_1, b_1] \times \cdots \times [a_{d'}, b_{d'}]$, the data structure reports the count $|P \cap B|$. It is known that such a data structure with update and query time $\Otilde(1)$ exists.

Towards a proof of \Cref{thm:contract}, we start with a lemma that generalizes~\cite[Lemma~3.7]{ABFK24}. It deterministically constructs a family of $\Otilde(1)$ contractions that isolates every point in a given point set. Note, however, that the volume bound in this lemma depends on the structure of the point set. So to close the proof of \Cref{thm:contract} we still need an argument about the point set structure, which will come later.

\begin{lemma}
    \label{lem:isolation}
    Let $Q \subseteq P \subseteq \prod_{i \in [d]} [N_i]$. Let $n_1, \dots, n_d$ be integers such that $\card{\set{p \in P : p_I = q_I}} \leq \prod_{i \notin I} n_i$ for all $q \in Q$ and $I \subseteq [d]$. There is a deterministic algorithm that computes a family $\mathcal H$ of $O(\log |Q|)$ contractions, each with volume \smash{$O(\prod_{i \in [d]} n_i)$}, such that every $q \in Q$ is isolated from $P$ under $\mathcal H$. The algorithm runs in time $\Otilde\left(|P| + \sum_{i \in [d]} N_i\right)$.
\end{lemma}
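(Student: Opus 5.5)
We follow the strategy of \cite[Lemma~3.7]{ABFK24}: build each contraction coordinatewise from hash functions $h_i:[N_i]\to[n_i']$ with $n_i' = c\cdot n_i$ for a suitable constant $c = c(d)$, chosen from a small explicit family. We then derandomize by the method of conditional expectations, and use incremental orthogonal range searching to evaluate the relevant collision counts. Concretely, we maintain the set $Q'\subseteq Q$ of points not yet isolated and repeatedly find one contraction $h$ that isolates at least half of $Q'$ from $P$. After $O(\log|Q|)$ rounds all of $Q$ is isolated. Each contraction has volume $\prod_i c\,n_i = O(\prod_i n_i)$, since $d$ is constant.

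\textbf{Probabilistic analysis.} Fix $q\in Q$ and a random contraction with independent, pairwise-independent coordinate hashes $h_i:[N_i]\to[cn_i]$. A point $p\in P\setminus\{q\}$ collides with $q$ only if $h_i(p_i)=h_i(q_i)$ for every coordinate $i\notin I$, where $I=\{i: p_i=q_i\}$. Since $p\neq q$, we have $I\neq[d]$, so this event has probability at most $\prod_{i\notin I}(cn_i)^{-1}$. The number of $p$ with agreement set exactly $I$ is at most $|\{p\in P: p_I=q_I\}|\le\prod_{i\notin I}n_i$ by hypothesis. Summing over the $2^d$ sets $I$, the expected number of colliders is at most $\sum_{I\ne[d]} c^{-(d-|I|)}\le 2^d/c\le 1/2$ for $c=2^{d+1}$. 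Hence by Markov each $q$ is isolated with probability at least $1/2$, and so some $h$ isolates at least half of $Q'$.

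\textbf{Derandomization.} We fix the hashes coordinate by coordinate, and within a coordinate bit by bit (or value by value), in each step keeping the pessimistic estimator
\[ \Phi=\sum_{q\in Q'}\sum_{p\in P\setminus\{q\}}\Pr[p,q\text{ collide}\mid\text{choices so far}] \]
from increasing. A simpler alternative is to assign $h_i$ greedily, one element of $[N_i]$ at a time, processing coordinates sequentially. The estimator decomposes over the agreement pattern of $(p,q)$ on the already-fixed coordinates. Each term is therefore a product of factors that are either $1$ or $0$ (for fixed coordinates, according to whether the hashed values agree) or $1/(cn_i)$ (for unfixed ones). So $\Phi$ is a sum over patterns $I$ of $\#\{(p,q)\text{ with a given pattern}\}$ times a known weight.

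These pair counts are exactly what orthogonal range searching provides. We encode each point by its original coordinates together with its hashed values on fixed coordinates; counting the $p$ that agree with $q$ on a prescribed set of original or hashed coordinates is then a degenerate box query. Thus each candidate value for $h_i(v)$ can be evaluated by updating the contributions of the points whose $i$-th coordinate is $v$, at cost $\tOh(1)$ per affected point per candidate. Ranging over the $O(n_i)$ candidates would be too expensive, so instead we fix $h_i$ bit by bit over $O(\log n_i)$ binary choices. The total time per round is then $\tOh(|P|+\sum_i N_i)$. After the round, we remove from $Q'$ the points now isolated, which is checked by the same range-count queries.

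\textbf{Main obstacle.} The main difficulty is this derandomization step. We need an estimator that upper-bounds the collision indicator, is exactly computable via range counts in $\tOh(1)$ per update, and stays nonincreasing under the greedy choices. The pairwise-independence bound must also survive conditioning. The most promising route is to use the fully random-assignment conditional expectation with uniform values on unfixed entries, which is a clean product form and does not need pairwise independence at all. If that is too delicate to maintain, the fallback is the family from \cite{ABFK24} of maps $x\mapsto x\bmod \pi$ over $O(\log)$ primes, applied per coordinate, with the counting hypothesis used to bound the number of bad primes for each $q$.
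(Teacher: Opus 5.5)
Your proposal is correct and follows essentially the paper's own proof. Both use independent uniformly random coordinate hashes into ranges of size $\Theta(n_i)$, obtain the per-$q$ collision bound by summing the counting hypothesis over agreement patterns $I \subsetneq [d]$, and derandomize by conditional expectations, fixing each $h_i(x)$ bit by bit with incremental orthogonal range counting so that only pairs with an endpoint whose $i$-th coordinate equals $x$ need re-evaluation. Your estimator $\Phi$, the expected number of colliding pairs, is exactly what the paper evaluates. The one detail you leave implicit is that a partially fixed value contributes the factor $\mathbb{1}[h_i(\cdot)\in Y_b]\cdot 2^l/M_i$, which is why the paper rounds the ranges up to powers of two and uses an interval query on the $i$-th hashed coordinate.
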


\begin{proof}
For each $i \in [d]$, let $M_i$ be the smallest power of two such that $M_i \geq 5dn_i$. We start with a simple randomized algorithm:
\begin{pseudocode}
$\mathcal{H} := \emptyset$\;
\While{$Q \neq \emptyset$}{
    sample $h = (h_1, \dots, h_d)$ where each $h_i : [N_i] \to [M_i]$ is drawn uniformly at random\;
    add $h$ to $\mathcal{H}$\;
    remove all points in $Q$ that are isolated from $P$ under $h$\;
}
\Return $\mathcal{H}$
\end{pseudocode}

Consider one round of the loop. Let $Z$ be the number of points in $Q$ not isolated from $P$ under~$h$. We claim that $\Pr(Z \geq |Q|/2) < 1/2$. To see this, write $\eq(p,q) := \set{i \in [d] : p_i = q_i}$ and note that two points $q \in Q$ and $p \in P \setminus \set{q}$ collide with probability $\prod_{i \notin \eq(p,q)} \Pr(h_i(p_i) = h_i(q_i)) = \prod_{i \notin \eq(p,q)} \frac{1}{M_i}$. Hence the expected number of points in $P \setminus \set{q}$ colliding with $q$ is at most
\begin{equation*}
    \sum_{I \subsetneq [d]} \frac{|\set{p \in P : p_I = q_I}|}{\prod_{i \notin I} M_i}
    \leq \sum_{I \subsetneq [d]} \frac{1}{(5d)^{d-|I|}}
    = \left(1+\frac{1}{5d} \right)^d - 1
    < \frac{1}{4}.
\end{equation*}
Here, we used the premise of the lemma and $M_i \geq 5dn_i$. This implies $\E Z < |Q|/4$, so $\Pr(Z \geq |Q|/2) < 1/2$ by Markov's inequality. In other words, with probability $1/2$ at least half of the points in $Q$ are isolated from $P$ under $h$ (and thus are removed from $Q$ at the end of this round). Hence with high probability, the algorithm terminates in $O(\log |Q|)$ rounds and returns a family $\mathcal H$ of $O(\log |Q|)$ contractions, each of volume $\prod_{i \in [d]} M_i = O(\prod_{i \in [d]} n_i)$, such that every $q \in Q$ is isolated from $P$.

The only randomness of the algorithm comes from sampling $h$. It is convenient to refine this sampling process as follows:
\begin{pseudocode}
\For{$i = 1, \dots, d$}{
    \For{$x = 1, \dots, N_i$}{
        initialize an interval $Y := [M_i]$\;
        \For{$l = 1, \dots, \log M_i$}{
            partition the interval $Y$ into two halves $Y_0 \cup Y_1$\;
            sample a random bit $b$ and let $Y := Y_b$\;
        }
        assign $h_i(x)$ to be the unique element in $Y$\;
    }
}
\end{pseudocode}
\noindent Observe that the refined process generates uniformly random functions $h_i : [N_i] \to [M_i]$ indeed.

To derandomize the process, we follow the method of condition expectations. Consider the time when iteration $(i,x,l)$ starts. Note that $h_1, \dots, h_{i-1}$ were determined, $h_{i+1}, \dots, h_d$ are fully random, and $h_i$ is just partially random: $h_i(1), \dots, h_i(x-1)$ were fixed, $h_i(x+1), \dots, h_i(N_i)$ are fully random, and $h_i(x)$ is uniformly distributed over the current interval $Y$. Let random variable $Z^{i,x,l}$ count the number of points in $Q$ not isolated from $P$ under the (partially random) contraction $h = (h_1, \dots, h_d)$ at this time. By the law of total expectation,
\[ \E\left(Z^{i,x,l} \right) = \frac{1}{2} \E\left(Z^{i,x,l+1} \mid b=0 \right) + \frac{1}{2} \E\left(Z^{i,x,l+1} \mid b=1 \right). \]
Hence, there exists a choice of $b \in \set{0,1}$ that makes the conditional expectation of $Z^{i,x,l+1}$ at most $\E(Z^{i,x,l})$. With this insight, we can derandomize the process by letting $b := 0$ if $\E\left(Z^{i,x,l+1} \mid b=0 \right) \leq \E\left(Z^{i,x,l+1} \mid b=1 \right)$, and $b := 1$ otherwise. This ensures that the expected number of non-isolated points never increases over time, so in the end we deterministically obtain a contraction $h$ under which at most $|Q|/4$ points are not isolated. Replacing the sampling process in the original algorithm with this derandomized process, the number of rounds is bounded by $O(\log |Q|)$ deterministically.

It remains to show how to compare $\E\left(Z^{i,x,l+1} \mid b = 0 \right)$ with $\E\left(Z^{i,x,l+1} \mid b = 1 \right)$ efficiently. Let us first analyze the probability that a point $q \in Q$ collides with another point $p \in P \setminus \set{q}$ under $h$.
\begin{itemize}
    \item If $h_j(p_j) \neq h_j(q_j)$ for some $j < i$, then the probability is clearly zero.
    \item Otherwise, the probability can be expressed as $\gamma(p,q) \cdot \theta(p,q)$, where
    \[ \gamma(p,q) = \prod_{\substack{j>i \\ j \notin \eq(p,q)}} \frac{1}{M_j} \quad\text{and}\quad \theta(p,q) = \Pr\left[h_i(p_i) = h_i(q_i)\right]. \]
    We further analyze $\theta(p,q)$ by a case distinction. If $p_i = q_i$ then $\theta(p,q)=1$. Otherwise:
    \begin{itemize}
        \item If $p_i, q_i < x$ then both $h(p_i)$ and $h(q_i)$ have been fixed, so $\theta(p,q) = \mathbb{1}[h_i(p_i) = h_i(q_i)]$.
        \item If $p_i > x$ or $q_i > x$, then $h_i(p_i)$ or $h_i(q_i)$ is uniform over $[M_i]$, so $\theta(p,q) = 1/M_i$.
        \item If $p_i < q_i = x$ then $h_i(q_i)$ is uniform over $Y_b$. Since $|Y_b| = M_i/2^l$, we have $\theta(p,q) = \mathbb{1}[h_i(p_i) \in Y_b] \cdot 2^l / M_i$. Symmetrically, if $q_i < p_i = x$ then $\theta(p,q) = \mathbb{1}[h_i(q_i) \in Y_b] \cdot 2^l / M_i$.
    \end{itemize}
\end{itemize}
Observe that the bit $b$ only matters in the last case of $\theta(p,q)$, so for the sake of comparing the conditional expectations under different choices of $b$, it suffices to compute $\sum_{(p,q) \in S \cup S'} \gamma(p,q)$ where
\[
    S := \left\{ (p,q) \;\;\middle|\;\; \begin{aligned}
        & q \in Q,\;
        p \in P \setminus \set{q}\\
        & \forall j < i,\; h_j(p_j) = h_j(q_j)\\
        & p_i < q_i = x,\;
        h_i(p_i) \in Y_b
    \end{aligned}\right\},\quad
    S' := \left\{ (p,q) \;\;\middle|\;\; \begin{aligned}
        & q \in Q,\;
        p \in P \setminus \set{q}\\
        & \forall j < i,\; h_j(p_j) = h_j(q_j)\\
        & q_i < p_i = x,\;
        h_i(q_i) \in Y_b
    \end{aligned}\right\}.
 \]

We will focus on computing the sum over $S$; the sum over $S'$ can be computed analogously. To this end, for every $J \subseteq [d]$ we define a subset $S_J := \Set{(p,q) \in S : \eq(p,q) = J}$. Then
\[
    \sum_{(p,q) \in S} \gamma(p,q)
    = \sum_{J \subseteq [d]} \sum_{(p,q) \in S_J} \gamma(p,q)
    = \sum_{J \subseteq [d]} |S_J| \cdot \prod_{\substack{j > i \\ j \notin J}} \frac{1}{M_j}.
\]
Hence the problem reduces to computing $|S_J|$ for each $J \subseteq [d]$. This can be done as follows. At the start of iteration $i$, we initialize a $(d+i)$-dimensional range searching data structure, and then proceed with the inner iterations. As soon as we determine $h_i(x)$ at the end of iteration $x$, we insert the point $(p_1, \dots, p_d, h_1(p_1), \dots, h_i(p_i))$ into the data structure, for each $p \in P$ with $p_i = x$. To compute $|S_J|$ at iteration $(i,x,l)$, we enumerate all $q \in Q$ with $q_i = x$ and aim to count the number of $p \in P$ such that $(p,q) \in S_J$. Laying out all the constraints explicitly, we require
\begin{itemize}
    \item $h_j(p_j) = h_j(q_j)$ for all $j < i$;
    \item $h_i(p_i) \in Y_b$;
    \item $p_i < x$;
    \item $p_j = q_j$ for all $j \in J$;
    \item $p_j \neq q_j$ for all $j \notin J$.
\end{itemize}
Recall that $q$ is fixed and $p \in P$ is a free variable. All but the last constraint are orthogonal range constraints on $(p_1, \dots, p_d, h_1(p_1), \dots, h_i(p_i))$. The last constraint, on the other hand, can be compiled to the disjunction of $2^{|J|}$ orthogonal range constraints by trying all possible comparisons between $p_j$ and $q_j$. Therefore, we can extract the count by querying the orthogonal range searching data structure $2^{|J|} = O(1)$ times.

Let us bound the running time. In each iteration $i$, we insert and query the data structure $O\left(\sum_{x \in [N_i]} |P[x]|\right) = O(|P|)$ times, where $P[x] = \set{p \in P : p_i = x}$. Each insertion and query takes $\Otilde(1)$ time. Therefore, the overall running time is $\Otilde\left(|P| + \sum_{i \in [d]} N_i \right)$.
\end{proof}

Now we turn to study the structure of a point set $P$. Our plan is to classify points $q \in P$ into a small number of types based on the measures $|\set{p \in P : p_I = q_I}|$ for $I \subseteq [d]$. We say that a type is ``captured'' if it results in a volume bound of $\Otilde(|P|)$ in \Cref{lem:isolation}. The main challenge is in showing that most points have captured types, which we argue via projection inequalities and LP duality. 

\begin{definition}
    Let $P \subseteq \prod_{i \in [d]} [N_i]$ be a point set and denote $\lambda := d 2^d \log \left( 2 \log (|P|+1) \right)$. A \EMPH{type} is a tuple $l = (l_I : I \subseteq [d])$ where $l_{[d]} = 0$ and $l_I \in \set{0, \dots, \ceil{\log |P|}}$ for $I \subset [d]$. The type is \EMPH{captured} if there exist $x_1, \dots, x_d \geq 0$ such that
    \begin{enumerate}[(i)]
        \item $\sum_{i \in [d]} x_i \leq \log |P| + \lambda$;
        \item $\sum_{i \notin I} x_i \geq l_I$ for all $I \subseteq [d]$.
    \end{enumerate}
    
    For a point $q \in P$, we let $\EMPH{C_I(q)} := \Card{\set{p \in P : p_I = q_I}}$ and associate $\EMPH{\type(q)} := \left( \ceil{\log C_I(q)}: I \subseteq [d] \right)$.
\end{definition}

\begin{lemma} \label{lem:type-hash}
There is a deterministic algorithm with running time $\Otilde\left( |P| + \sum_{i \in [d]} N_i \right)$ and the following guarantees. The input is a point set $P$ and a captured type $l$ along with the witnesses $x_1, \dots, x_d$. The algorithm computes a family $\mathcal H$ of $O(\log |P|)$ contractions, each with volume $\Otilde(|P|)$, such that all points in $P$ of type $l$ are isolated from $P$ under $\mathcal{H}$.
\end{lemma}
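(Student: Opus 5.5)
The plan is to apply \Cref{lem:isolation} with $Q := \set{q \in P : \type(q) = l}$ and with dimensions $n_i := \lceil 2^{x_i} \rceil$, where $x_1,\dots,x_d$ are the given witnesses. First we compute $Q$. For every $I \subseteq [d]$ we sort or group $P$ by the projection $p_I$, which gives all counts $C_I(q)$ and therefore $\type(q)$ for every $q \in P$. This costs $\Otilde(|P| + \sum_i N_i)$ time (for instance via radix sort), and there are only $2^d = O(1)$ choices of $I$. We then keep exactly the points whose type equals $l$.

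Next we check the precondition of \Cref{lem:isolation}. Take $q \in Q$ and $I \subseteq [d]$. Since $\type(q) = l$, we have $\ceil{\log C_I(q)} = l_I$, so $C_I(q) \le 2^{l_I}$. Witness condition (ii) gives $l_I \le \sum_{i \notin I} x_i$. Hence
\[ C_I(q) \le 2^{\sum_{i \notin I} x_i} = \prod_{i \notin I} 2^{x_i} \le \prod_{i \notin I} n_i. \]
The case $I = [d]$ is trivial: both sides equal $1$. So \Cref{lem:isolation} applies. It yields $O(\log |Q|) \le O(\log |P|)$ contractions that isolate every $q \in Q$ from $P$, each of volume $O(\prod_i n_i)$, and it runs in time $\Otilde(|P| + \sum_i N_i)$.

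The remaining step is the volume bound. Using $n_i \le 2 \cdot 2^{x_i}$ and witness condition (i),
\[ \prod_i n_i \le 2^d \, 2^{\sum_i x_i} \le 2^d \, 2^{\log |P| + \lambda} = 2^d |P| \, 2^{\lambda}. \]
With $\lambda = d 2^d \log(2\log(|P|+1))$ we get $2^\lambda = (2\log(|P|+1))^{d2^d}$, which is polylogarithmic because $d$ is constant. The volume is therefore $\Otilde(|P|)$.

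The main point needing care is this volume accounting. One has to confirm that the slack $\lambda$ only costs a $\polylog(|P|)$ factor (this matches the $(\log|P|)^{2^d}$-type factors mentioned after \Cref{thm:contract}). One also has to handle rounding: the $n_i$ must be integers, and if some $x_i$ is non-integral we use $n_i = \lceil 2^{x_i}\rceil$, which loses at most the factor $2^d$ used above. Logarithms are base $2$ throughout.
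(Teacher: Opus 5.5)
Your proposal is correct and follows essentially the same route as the paper. You compute $Q$, choose $n_i$ from the witnesses, verify the precondition of \Cref{lem:isolation} via $C_I(q) \le 2^{l_I} \le 2^{\sum_{i\notin I} x_i} \le \prod_{i\notin I} n_i$, and bound the volume by $2^{\log|P|+\lambda}$ up to a constant factor. The only differences are cosmetic: the paper obtains the counts $C_I(q)$ by querying an orthogonal range-searching structure rather than by sorting, and it rounds as $n_i = 2^{\lceil x_i\rceil}$ instead of $\lceil 2^{x_i}\rceil$. Both choices work.
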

\begin{proof}
    First we compute $Q := \set{ q \in P : \type(q) = l}$. To this end, we build an orthogonal range searching data structure on $P$. For each $q \in P$ and $I \subseteq [d]$ we can compute $C_I(q)$ by querying the data structure, and as a result we also obtain $\type(q)$. Therefore, $Q$ can be computed in time $\Otilde(|P|)$.
    
    Next we let $n_i = 2^{\ceil{x_i}}$ for $i \in [d]$. By condition (ii), for all $q \in Q$ and $I \subseteq [d]$ we have
    \[
        C_I(q) \leq 2^{l_I}
        \leq 2^{\sum_{i \notin I} x_i}
        \leq \prod_{i \notin I} n_i.
    \]
    We may therefore apply \Cref{lem:isolation} to compute a family $\mathcal H$ of $O(\log |Q|)$ contractions that isolates $Q$ from $P$. Each contraction has volume $O\left(\prod_{i \in [d]} n_i\right) \leq O\left(2^{\sum_{i \in [d]} x_i}\right) \leq O(2^{\log |P| + \lambda}) = \Otilde(|P|)$ by property~(i). This takes time $\Otilde\left(|P|+\sum_{i \in [d]} N_i\right)$.
\end{proof}

The remaining task is to argue that most points have captured types. We need the following lemma known as Shearer's inequality (see \cite{CGFS86} for the original formulation, and e.g., \cite{Shearer} for the distributional version).

\begin{lemma}[Shearer]
    \label{lem:shearer}
    Let $q = (q_1, \dots, q_d)$ be a discrete random variable. If $\mathcal{I}$ is a probability distribution over subsets of $[d]$ such that $\Pr_{I \sim \mathcal{I}}(i \in I) \geq r$ for all $i \in [d]$, then $\E_{I \sim \mathcal{I}}[\entropy(q_I)] \geq r \entropy(q)$.
\end{lemma}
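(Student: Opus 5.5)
The plan is the standard entropy proof of Shearer's inequality: the chain rule for entropy, plus the fact that conditioning on fewer variables cannot decrease entropy. I assume $\entropy$ denotes Shannon entropy of the joint distribution of the listed coordinates, with $\entropy(q_\emptyset) = 0$.

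\emph{Step 1: decompose both sides by the chain rule.} First expand the full entropy as
\[
  \entropy(q) = \sum_{i=1}^{d} \entropy\left(q_i \mid q_1, \dots, q_{i-1}\right).
\]
Next, fix a subset $I = \set{i_1 < \dots < i_s}$. Applying the chain rule in the same coordinate order gives
\[
  \entropy(q_I) = \sum_{a=1}^{s} \entropy\left(q_{i_a} \mid q_{i_1}, \dots, q_{i_{a-1}}\right).
\]
\emph{Step 2: compare term by term.} The conditioning set $\set{i_1, \dots, i_{a-1}}$ is contained in $\set{1, \dots, i_a - 1}$. Since conditioning on more variables does not increase entropy,
\[
  \entropy\left(q_{i_a} \mid q_{i_1}, \dots, q_{i_{a-1}}\right) \ge \entropy\left(q_{i_a} \mid q_1, \dots, q_{i_a - 1}\right).
\]
Summing over $a$ yields, for every fixed $I$,
\[
  \entropy(q_I) \ge \sum_{i \in I} \entropy\left(q_i \mid q_{<i}\right),
\]
where $q_{<i} := (q_1, \dots, q_{i-1})$.

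\emph{Step 3: average over $I$.} Take expectation over $I \sim \mathcal{I}$ and exchange the sum and the expectation. This gives
\[
  \E_{I \sim \mathcal{I}}\left[\entropy(q_I)\right] \ge \sum_{i \in [d]} \Pr_{I \sim \mathcal{I}}(i \in I) \cdot \entropy\left(q_i \mid q_{<i}\right).
\]
Each conditional entropy is nonnegative and each probability is at least $r$, so the right-hand side is at least $r \sum_{i} \entropy(q_i \mid q_{<i}) = r\,\entropy(q)$, which is the claim.

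The only delicate point is Step 2, the monotonicity of conditional entropy. It follows from $\entropy(X \mid Y, Z) \le \entropy(X \mid Y)$, which is equivalent to the nonnegativity of the mutual information $I(X; Z \mid Y)$, i.e.\ to Jensen's inequality. Discreteness of $q$ ensures that all these entropies are finite. Everything else is bookkeeping, so I expect no real obstacle beyond choosing a consistent coordinate order in both expansions.
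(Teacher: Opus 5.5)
Your proof is correct. The paper does not prove this lemma; it cites Shearer's inequality (the original Chung--Graham--Frankl--Shearer result and a distributional formulation) and uses it only through Corollary~\ref{cor:shearer}, where $q$ is uniform on a finite set $Q$. Your argument is the standard textbook proof of exactly this distributional form. You expand $\entropy(q_I)$ by the chain rule in increasing coordinate order and enlarge each conditioning set to the full prefix $q_{<i}$. You then average over $I$, using $\Pr(i\in I)\ge r$ and nonnegativity of the conditional entropies. This makes the imported step self-contained at no extra cost.

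One correction to your closing remark: discreteness alone does not make entropies finite, since a distribution on a countably infinite set can have infinite entropy. In that case, justifying Step~2 via $\entropy(X\mid Y)-\entropy(X\mid Y,Z)\ge 0$ can involve $\infty-\infty$. This does not affect the argument. You can restrict to finite support, which is all the corollary needs. Alternatively, note that the chain rule and the monotonicity of conditional entropy hold directly in $[0,\infty]$, the latter by pointwise concavity of $x\mapsto -x\log x$, so your proof goes through unchanged.
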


\begin{corollary}
    \label{cor:shearer}
    Let $Q \subseteq \Z^d$. If $\mathcal{I}$ is a probability distribution over subsets of $[d]$ so that $\Pr_{I \sim \mathcal{I}}(i \in I) \geq r$ for all $i \in [d]$, then $\E_{I \sim \mathcal{I}}[\log |Q_I|] \geq r \log |Q|$, where $Q_I = \set{q_I : q \in Q}$ is the projection of $Q$ to the coordinates in $I$.
\end{corollary}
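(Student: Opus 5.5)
The plan is to derive the corollary from Shearer's inequality (\Cref{lem:shearer}) by choosing the uniform distribution on $Q$. Concretely, let $q$ be a random variable distributed uniformly over the finite set $Q$; if $Q$ is infinite the statement is vacuous or meaningless, so we assume $Q$ is finite and nonempty. Then $\entropy(q) = \log |Q|$, since the entropy of a uniform distribution equals the logarithm of the support size.

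Next, for each fixed $I \subseteq [d]$, the projected random variable $q_I$ takes values only in the projection $Q_I$. Because entropy is maximized by the uniform distribution on the support, we get $\entropy(q_I) \le \log |Q_I|$. Note that $q_I$ itself need not be uniform on $Q_I$; this is exactly why the bound holds only as an inequality.

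To combine the two facts, we apply \Cref{lem:shearer} to $q$ and the given distribution $\mathcal{I}$, whose hypothesis $\Pr_{I \sim \mathcal{I}}(i \in I) \ge r$ is assumed verbatim. This gives
\[ r \log |Q| = r\,\entropy(q) \le \E_{I\sim\mathcal I}[\entropy(q_I)] \le \E_{I\sim\mathcal I}[\log |Q_I|], \]
where the last step averages the pointwise bound $\entropy(q_I) \le \log |Q_I|$ over $I$.

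There is no real obstacle here. The only care needed is to use the same logarithm base for entropy as for $\log$, and to handle the edge case $I=\emptyset$: there $q_\emptyset$ is constant, so $\entropy(q_\emptyset)=0$ and $|Q_\emptyset|=1$, which is consistent with the inequality.
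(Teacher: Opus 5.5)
Your proof is correct and follows the same argument as the paper. The paper also takes $q$ uniform on $Q$, so its entropy is $\log|Q|$. It then bounds the entropy of $q_I$ by $\log|Q_I|$ because $q_I$ takes values in $Q_I$, and chains these two facts through Shearer's lemma.
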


\begin{proof}
    Let $q$ be a uniformly random point from $Q$. Its entropy is $\entropy(q) = \log |Q|$. Its projection $q_I$ is a random variable taking values in $Q_I$, and we can trivially bound $\entropy(q_I) \leq \log |Q_I|$. Therefore,
    \begin{equation*}
        \E_{I \in \mathcal{I}}[\log |Q_I|] \geq \E_{I \in \mathcal{I}}[\entropy(q_I)] \geq r \entropy(q) = r \log |Q|,
    \end{equation*}
    where we applied \Cref{lem:shearer} in the second step.
\end{proof}

We also recall Farkas' Lemma of linear programming (see e.g. \cite[Proposition 6.4.3]{GM07}): If a linear system $A\tup{x} \leq \tup{b}$ does not have a solution $\tup{x} \geq \tup{0}$, then there exists $\tup{y} \geq \tup{0}$ such that $\tup{y}^\mathrm{T} \tup{b} < \tup{0}$ and $\tup{y}^{\mathrm{T}} A \geq \tup{0}$. Now we are ready to show the following:

\begin{lemma} \label{lem:type-captured-half}
    There are at most $|P|/4$ points $p \in P$ for which $\type(p)$ is uncaptured.
\end{lemma}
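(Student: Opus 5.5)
The plan is to bound, separately for each uncaptured type $l$, the size of $Q^{(l)} := \set{q \in P : \type(q) = l}$, and then take a union bound over all types. There are at most $(\ceil{\log|P|}+1)^{2^d-1}$ types. So it suffices to show $|Q^{(l)}| \le |P| \cdot 2^{-\Omega(\lambda/2^d)}$, with constants tuned so that this is at most $|P|/(4 \cdot \#\text{types})$. Recall $\lambda = d2^d\log(2\log(|P|+1))$; the slack in $\lambda$ is meant to absorb this union bound.

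\emph{Step 1: dualize.} Fix an uncaptured type $l$. Then the system (i), (ii) in $x \ge 0$ is infeasible, and Farkas' Lemma gives multipliers $y_0 \ge 0$ and $y_I \ge 0$ with two properties. First, $y_0(\log|P|+\lambda) < \sum_I y_I l_I$. Second, $\sum_{I \not\ni i} y_I \le y_0$ for every $i \in [d]$. The case $y_0 = 0$ is impossible: it would force $y_I = 0$ for all $I \ne [d]$, and since $l_{[d]} = 0$ the strict inequality would fail. So we normalize $y_0 = 1$. We may drop $y_{[d]}$, since $l_{[d]} = 0$. Every other $I$ misses some $i$, so $y_I \le 1$, and hence $Y := \sum_I y_I \le 2^d$. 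Since $l_I \le \log|P|+1$, the first inequality also forces $Y > 1$.

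\emph{Step 2: projection sizes.} Let $Q = Q^{(l)}$. For each $q \in Q$, the fiber $\set{p \in P : p_I = q_I}$ has size $C_I(q) > 2^{l_I-1}$. Fibers of distinct projections $q_I$ are disjoint, so $|Q_I| < |P|/2^{l_I-1}$. Summing with the weights $y_I$ gives
\[ \sum_I y_I \log|Q_I| \le Y\log|P| - \sum_I y_I l_I + Y < (Y-1)\log|P| - \lambda + Y. \]

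\emph{Step 3: Shearer.} Consider the distribution $\mathcal{I}$ that picks $I$ with probability $y_I/Y$. For every $i$ we have $\Pr(i \in I) = 1 - \sum_{I\not\ni i} y_I/Y \ge (Y-1)/Y$. Corollary~\ref{cor:shearer} therefore gives $\sum_I y_I \log|Q_I| \ge (Y-1)\log|Q|$. Combining with Step 2,
\[ \log|Q| < \log|P| - \frac{\lambda - Y}{Y-1} \le \log|P| - \frac{\lambda - 2^d}{2^d}, \]
using $1 < Y \le 2^d$ and $\lambda \ge 2^d$.

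The main obstacle is the final bookkeeping. The per-type saving $2^{(\lambda-2^d)/2^d}$ must dominate $4(\ceil{\log|P|}+1)^{2^d-1}$. With the crude bound $Y \le 2^d$, the saving is only about $(2\log|P|)^{d}$, which is not enough when $d < 2^d-1$. I would first try to tighten the analysis in one of two ways. One option is to bound $Y$ better, for instance by using a vertex of the Farkas polytope, which has support size at most $d+1$; this gives $Y \le d+1$ and a saving of roughly $(2\log|P|)^{d2^d/(d+1)}$. The other option is to count only the types that can actually occur, namely those monotone in $I$. If neither suffices, I would use the freedom to rescale $\lambda$ by a constant factor inside $\tOh$, since it only enters the volume bound via $2^\lambda = \polylog$.
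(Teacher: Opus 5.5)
You follow the same route as the paper: a Farkas certificate normalized to $y_0=1$, the distribution $y_I/Y$, the Shearer corollary, the projection bound $|Q_I|\le |P|/2^{l_I-1}$, and a union bound over types. Your Steps 1--3 are correct up to $\log|Q| < \log|P| - \frac{\lambda-Y}{Y-1}$, and your handling of $y_0=0$ is more careful than the paper's.

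The gap is the one you flag, and it comes from the estimate $Y\le 2^d$. You already observed that every $I\neq[d]$ misses some coordinate. Summing that observation, instead of applying it one set at a time, gives
\[ Y=\sum_{I\neq[d]} y_I \;\le\; \sum_{i\in[d]}\ \sum_{I\not\ni i} y_I \;\le\; d . \]
This is the paper's bound $1<M\le d$, and the leading factor $d$ in $\lambda=d2^d\log(2\log(|P|+1))$ is tuned to it. Since $\frac{\lambda-Y}{Y-1}=\frac{\lambda-1}{Y-1}-1$ is decreasing in $Y$, you get $\frac{\lambda-Y}{Y-1}\ge\frac{\lambda-d}{d-1}\ge 2^d\log(2\log(|P|+1))$. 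The last inequality is equivalent to $2^d\log(2\log(|P|+1))\ge d$. Hence each uncaptured type contains at most $|P|/(2\log(|P|+1))^{2^d}$ points. Summing over the at most $(\lceil\log|P|\rceil+1)^{2^d-1}$ types gives at most $|P|/4$. For $d=1$ the condition $1<Y\le 1$ is contradictory, so every type is captured. With $Y\le 2^d$ instead, the saving exponent is only $d2^d/(2^d-1)$ even keeping the $Y-1$ denominator, and this is below $2^d-1$ for every $d\ge 2$. So the union bound genuinely fails with the given $\lambda$.

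None of your three fallbacks repairs this as written.
\begin{itemize}
\item \emph{Vertex argument.} It is valid but unnecessary, since the double counting above gives $Y\le d$ for every certificate. With the loosened denominator you use, $Y\le d+1$ gives the exponent $d2^d/(d+1)$ you quote, which is still below $2^d-1$ for $d\ge 2$. It only works if you keep $Y-1$ in the denominator.
\item \emph{Monotone types.} Restricting the union bound to these does not change the exponent. The number of order-reversing maps from the $2^d-1$ proper subsets of $[d]$ into $\{0,\dots,\lceil\log|P|\rceil\}$ is still $\Theta((\log|P|)^{2^d-1})$, because the order polynomial of a poset has degree equal to its number of elements.
\item \emph{Rescaling $\lambda$.} Enlarging $\lambda$ by a $d$-dependent constant would make your $Y\le 2^d$ estimate sufficient. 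It would also be harmless for the contraction theorem, where only $2^\lambda$ being polylogarithmic in $|P|$ matters. But it changes the definition of a captured type, so it proves a modified statement rather than this lemma.
\end{itemize}
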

\begin{proof}
    Let $l = (l_I : I \subseteq [d])$ be an uncaptured type, and let $Q := \Set{q \in P: \type(q) = l}$. Our goal is to upper bound $|Q|$ in terms of $|P|$.
    
    Consider an arbitrary $I \subseteq [d]$. By definition, each point $q_I \in Q_I$ extends to at least $C_I(q) \geq 2^{l_I - 1}$ points in $P$. Therefore, $|P| \geq |Q_I| \cdot 2^{l_I - 1}$. Taking logarithm on both sides, we obtain
    \begin{equation}
        \label{eq:type-projection}
        \forall I \subseteq [d], \;\; \log |Q_I| \leq \log |P| + 1 - l_I.
    \end{equation}
    
    Next we relate $\log |Q|$ with $\log |Q_I|$. Recall that $l$ is uncaptured, so the linear system (i)(ii) does not have a solution $x_1, \dots, x_d \geq 0$. By Farkas' Lemma, there exist $(y_I)_{I \subseteq [d]} \geq \mathbf{0}, z \geq 0$ such that
    \[
        (\log |P| + \lambda) \cdot z -\sum_{I \subseteq [d]} l_I y_I < 0
        \qquad\text{and}\qquad
        \forall i \in [d],\;\; z - \sum_{I \subseteq [d] \setminus \set{i}} y_I \geq 0.
    \]
    We may assume $y_{[d]} = 0$ since $l_{[d]} = 0$. By moving the terms and scaling the variables, we can further ensure $(y_I)_{I \subseteq [d]} \geq \mathbf{0}, z = 1$ and
    \[
        \sum_{I \subseteq [d]} l_I y_I > \log |P| + \lambda
        \qquad\text{and}\qquad
        \forall i \in [d],\;\; \sum_{I \subseteq [d] \setminus \set{i}} y_I \leq 1.
    \]
    Let us define a probability distribution $\mathcal{I} (I) := y_I / M$ with normalizing constant $M := \sum_{I \subset [d]} y_I$. Observe that $1 < M \leq d$. Indeed, since $l_I \leq \log |P| + 1$ for all $I \subseteq [d]$, the first constraint implies $M > 1$. On the other hand, $M \leq \sum_{i \in [d]} \sum_{I \subseteq [d] \setminus \set{i}} y_I \leq \sum_{i \in [d]} 1 = d$ by the second constraint.
    
    Rephrasing the constraints in terms of this distribution, we have
    \[
        \E_{I \sim \mathcal{I}}(l_I) > \frac{\log |P| + \lambda}{M}
        \qquad\text{and}\qquad
        \forall i \in [d],\;\; \Pr_{I \sim \mathcal{I}}(i \notin I) \leq \frac{1}{M},
    \]
    In particular, $\Pr_{I \sim \mathcal{I}}(i \in I) \geq 1 - \frac{1}{M}$, so Corollary \ref{cor:shearer} implies that $\left( 1 - \frac{1}{M} \right) \log |Q| \leq \E_{I \sim \mathcal{I}}[\log |Q_I|]$. Plugging (\ref{eq:type-projection}) into this bound, we get
    \begin{align*}
        \left( 1 - \frac{1}{M} \right) \log |Q|
        &\leq \log |P| + 1 - \E_{I \sim \mathcal{I}}(l_I)\\
        &< \log |P| + 1 - \frac{\log |P| + \lambda}{M}\\
        &= \left( 1 - \frac{1}{M} \right) \log |P| + \frac{M-\lambda}{M}.
    \end{align*}
    Rearranging, we obtain $\log \frac{|Q|}{|P|} \leq \frac{M-\lambda}{M-1}$. Recall that $1 < M \leq d$ and $\lambda = d 2^d \log \left( 2 \log (|P|+1) \right)$, so the numerator $M-\lambda$ is negative and the denominator $M-1$ is positive. Hence
    \[
        \log \frac{|Q|}{|P|}
        \leq \frac{d-\lambda}{d-1}
        \leq -2^d \log \left( 2 \log (|P|+1) \right).
    \]
    In other words, $|Q| \leq |P| / \left( 2 \log (|P|+1) \right)^{2^d}$.
    
    Since the argument works for any uncaptured type $l$, by a union bound over all $\left(\log (|P|+1)\right)^{2^d}$ types, we conclude that at most $|P|/2^{2^d} \leq |P|/4$ points have uncaptured types.
\end{proof}

\begin{proof}[Proof of \Cref{thm:contract}]
For each type $l$, we solve the linear system (i)(ii) by the simplex method (which takes constant time as the number of variables and constraints are both constant). If the system has a solution $x_1, \dots, x_d$, then $l$ is captured and we call \Cref{lem:type-hash} to compute a family of contractions $\mathcal H_l$ with volume $\Otilde(|P|)$, isolating all points of type $l$ from~$P$. In the end, output $\mathcal H := \bigcup_l \mathcal H_l$. By \Cref{lem:type-captured-half} all but $|P|/4$ points in $P$ have a captured type, and consequently all but $|P|/4$ points in $P$ are isolated from $P$ under $\mathcal H$. The running time is clearly as claimed.
\end{proof}

\subsection{The Full Algorithm}
We are finally ready to complete our algorithm \algocount and prove \Cref{thm:star-counting}. For $v_0 \in [N_0]$, the \EMPH{degree} of $v_0$ in the instance $\mu_1, \dots, \mu_d$ is defined as $\max_{i \in [d]} |\set{v_i \in [N_i] : \mu_i(v_0,v_i) > 0}|$. Assume without loss of generality that $N_1, \dots, N_d$ are powers of two. We also assume that $m \geq N_0$, as we can remove all elements $v_0 \in [N_0]$ that do not participate in any non-zero entry of the input functions.

If $N_1 = \cdots = N_d = 1$, then we simply compute the number $\sum_{v_0 \in [N_0]} \prod_{i \in [d]} \mu_i(v_0, 1)$, and return it if it is positive. Otherwise, the idea is similar to that of sparse matrix multiplication (\Cref{lem:sparse-mm}): Fold the space in half and solve the subinstance recursively; unfold the solutions from the subinstance into a superset of solutions in the original instance; and finally, recover the actual solutions.

In detail, let $\phi(i) : \N \to \N$ be the permutation that swaps each odd number $2x-1$ with the even number $2x$. For example, $\phi(5) = 6$ and $\phi(6) = 5$. The algorithm \algocount{$\mu_1, \dots, \mu_d$} works recursively:
\begin{itemize}
    \item Fix an arbitrary $j \in [d]$ with $N_j \geq 2$.
    \item Construct a function $\eta_j : [N_0] \times [N_j/2] \to \N$ by $\eta_j(v_0, u_j) := \mu_i(v_0, 2u_j-1) + \mu_i(v_0, 2u_j)$. For $i \neq j$, let $\eta_i := \mu_i$.
    \item Recursively compute $(\mathcal{R},\eta) := \text{\algocount{$\eta_1, \dots, \eta_d$}}$.
    \item Let $S := \Set{(u_1, \dots, u_{j-1}, 2u_j-b, u_{j+1}, \dots, u_d) \::\: (u_1, \dots, u_d) \in \mathcal{R}, b \in \set{0,1}}$.
    \item Partition $[N_0]$ into $\ceil{\log n}$ parts such that each part contains elements with degree more than $D/2$ and at most $D$, for $D \in \set{2^0, 2^1, \dots, 2^{\ceil{\log n}}}$. We deal with these parts separately. To simplify notation, we focus on an arbitrary part in the following.
    \item Initialize a set $R := \emptyset$ and an all-zero function $f : S \to \N$. Repeat until $R = S$:
    \begin{itemize}
        \item Let $P = S \setminus R$ and $\Delta = 4D \cdot (\log |P|)^c$ for some sufficiently large constant $c > 0$.
        \item Let $\mathcal{H} := \text{\algocontract{$P$}}$.
        \item For each $h = (h_1, \dots, h_d) \in \mathcal H$:
        \begin{enumerate}[(a)]
            \item Write $h_i : [N_i] \to [n_i]$. Check if $\prod_{i \in [d]} n_i \leq N_0 \Delta^{|I|}$ for all $I \subseteq [d]$. If not then continue with the next $h$.
            \item Compute the set $Q \subseteq P$ of tuples that are isolated from $P$ under $h$.
            \item For each $i \in [d]$, define a function $\mu_i' : [N_0] \times [n_i] \to \N$ by
            \[ \mu_i'(v_0, v_i') := \sum_{v_i \in h_i^{-1}(v_i')} \mu_i(v_0, v_i). \]
            and compute $\mu' := \text{\algodcount{$\mu_1', \dots, \mu_d'$}}$.
            \item For each $\tup{v} = (v_1, \dots, v_d) \in Q$, let $\tup{w} := (v_1, \dots, v_{j-1}, \phi(v_j), v_{j+1}, \dots, v_d)$. Add both $\tup{v}$ and $\tup{w}$ to $R$, and assign
            \begin{align*}
                f(\tup{v}) &:= \mu'(h(\tup{v})) - \sum_{\substack{\tup{u} \in R \setminus \set{\tup{v}}\\h(\tup{u}) = h(\tup{v})}} f(\tup{u}) \\
                f(\tup{w}) &:= \eta\left( v_1, \dots, v_{j-1}, \floor{v_j/2}, v_{j+1}, \dots, v_d \right) - f(\tup{v})
            \end{align*}
        \end{enumerate}
    \end{itemize}
    \item Return $\set{\tup{v} \in S : f(\tup{v}) > 0}$, along with $f$ restricted to this set.
\end{itemize}

\paragraph{Correctness}
Let us denote the set of solutions by $\sol := \Set{\tup{v} \in \prod_{i \in [d]} [N_i] : \mu(\tup{v}) > 0}$. Observe that if $(v_1, \dots, v_d) \in \sol$, then $(v_1, \dots, v_{j-1}, \floor{v_j/2}, v_{j+1}, \dots, v_d) \in \mathcal{R}$. Conversely, if $(v_1,\dots,v_{j-1},u_j,v_{j+1},\dots,v_d) \in \mathcal{R}$, then at least one of $(v_1,\dots,v_{j-1},2u_j-1,v_{j+1},\dots,v_d)$ and $(v_1,\dots,v_{j-1},2u_j,v_{j+1},\dots,v_d)$ is in $\sol$. Therefore, $\sol \subseteq S$ and $|S| \leq 2 |\sol|$.
 
Next, we analyze the repeat-until loop, in particular step (d). We argue that the step assigns $f(\tup{v}) = \mu(\tup{v})$ and $f(\tup{w}) = \mu(\tup{w})$. Indeed, we have
\begin{align*}
    \mu'(h(\tup{v}))
    &= \sum_{v_0 \in [N_0]} \prod_{i \in [d]} \mu_i'(v_0, h_i(v_i)) \\
    &= \sum_{v_0 \in [N_0]} \prod_{i \in [d]} \sum_{u_i: h_i(u_i) = h_i(v_i)} \mu_i(v_0, u_i) \\
    &= \sum_{v_0 \in [N_0]} \sum_{\tup{u} : h(\tup{u}) = h(\tup{v})} \prod_{i \in [d]} \mu_i(v_0, u_i) \\
    &= \sum_{\tup{u} : h(\tup{u}) = h(\tup{v})} \sum_{v_0 \in [N_0]} \prod_{i \in [d]} \mu_i(v_0, u_i) \\
    &= \mu(\tup{v}) + \sum_{\substack{\tup{u} \neq \tup{v}\\h(\tup{u}) = h(\tup{v})}} \mu(\tup{u})
\end{align*}
Recall that $\mu(\tup{u}) > 0$ only when $\tup{u} \in \sol \subseteq S$. Also recall that the point $\tup{v} \in Q$ is isolated from $P$, so it can only collide with points in $S \setminus P = R$. Using these two facts, we have
\[
    \sum_{\substack{\tup{u} \neq \tup{v}\\h(\tup{u}) = h(\tup{v})}} \mu(\tup{u})
    = \sum_{\substack{\tup{u} \in R \setminus \set{\tup{v}}\\h(\tup{u}) = h(\tup{v})}} \mu(\tup{u})
\]
As previous iterations have assigned $f(\tup{u}) = \mu(\tup{u})$ correctly for the tuples $\tup{u}$ of concern, we conclude that the step correctly assigns $f(\tup{v}) = \mu(\tup{v})$.

Finally, recall from definition that
\begin{align*}
    \eta\left( v_1, \dots, v_{j-1}, \floor{v_j/2}, v_{j+1}, \dots, v_d \right)
    &= \sum_{v_0 \in [N_0]} \eta_j(v_0, \floor{v_j/2}) \cdot \prod_{i \neq j} \mu_i(v_0, v_i) \\
    &= \sum_{v_0 \in [N_0]} \left(\mu_j(v_0, v_j) + \mu_j(v_0,\phi(v_j)) \right) \cdot \prod_{i \neq j} \mu_i(v_0, v_i) \\
    &= \mu(\tup{v}) + \mu(\tup{w}).
\end{align*}
Hence, the step also correctly assigns $f(\tup{w}) = \mu(\tup{w})$.

When the algorithm terminates, the function $\mu$ is correctly computed and stored in $f$. So the algorithm is correct (if it terminates).

\paragraph{Parameters in the recursion}
The recursion has depth $O(\sum_{i \in [d]} \log N_i) = \tOh(1)$ because each level halves $N_j$ for some $j$. Also, note that the parameters $t,m$ do not increase as we descend in the recursion. Now we focus on analyzing a fixed level of the recursion.

\paragraph{Time outside the loop}
Computing $j$ takes constant time. Constructing the functions $\eta_i$ takes time $O(m)$. Computing $S$ takes time $O(|\mathcal{R}|) = O(t)$. Computing the degrees and partitioning $[N_0]$ takes time $O(m + N_0) = O(m)$.

\paragraph{Time of each round of the loop}
Computing $P$ takes time $O(|S|)$. Calling \algocontract takes time $\Otilde\left(|P| + \sum_{i \in [d]} N_i \right)$ by \Cref{thm:contract}. By the same theorem, the family $\mathcal H$ contains $\Otilde(1)$ contractions, each of volume $\Otilde(|P|)$. For each $h \in \mathcal H$, step (a) takes constant time; step (b) takes time $\Otilde(|P|)$ by querying the image of each point in a binary search tree containing all images; step (d) takes time $\Otilde(|Q| + |R|)$ by standard bookkeeping. Step (c) takes time $O(m)$ to construct the functions, and time $\Otilde\left(\prod_{i \in [d]} n_i + (N_0 \Delta)^{2/\alfa}\right)$ for \algodcount by \Cref{lem:dense-star-counting}. Recall that \smash{$\prod_{i \in [d]} n_i$} is the volume of $h$, so it is at most $\Otilde(|P|)$. Moreover, each element in $[N_0]$ has degree at least $D/2$, so $m \geq N_0 D / 2$ and thus $N_0 \Delta \leq \Otilde(m)$. Gathering all these bounds and recalling $|P|,|Q|,|R| \leq |S| \leq O(t)$, a single round runs in time $\Otilde(t + m^{2/\alfa})$.

\paragraph{Number of rounds}
We will bound the number of rounds by $\Otilde(1)$. First note that $|P \cap \sol| \geq |P| / 2$, as we always recover pairs $\tup{v},\tup{w}$ together and at least one of them is a solution. The family $\mathcal H$ is guaranteed to isolate all but $|P|/4$ tuples from $P$ by \Cref{thm:contract}. In particular, in $P \cap \sol$ there are at least $|P|/2 - |P|/4 = |P| / 4$ tuples isolated under $\mathcal H$. Hence by the pigeonhole principle, there is a function $h \in \mathcal H$ that isolates at least $|P| / (4 |\mathcal H|)$ tuples from $P \cap \sol$. We fix this function $h$ and consider the corresponding inner iteration.

We claim that the condition $\prod_{i\in I} n_i \leq N_0 \Delta^{|I|}$ holds for all $I \subseteq [d]$.

Let $\xi, \zeta > 0$ be sufficiently large constants such that $|\mathcal{H}| \leq (\log |P|)^\xi$ and $\prod_{i \in [d]} n_i \leq |P| \cdot (\log |P|)^\zeta$. Note that every $v_0 \in [N_0]$ still has degree at most $D$ in the subinstance. That is, $\mu_i'(v_0, \cdot)$ has at most $D$ non-zeros for all $i \in [d]$ and $v_0 \in [N_0]$. In particular, the number of solutions $t'$ in the subinstance $(\mu_1', \dots, \mu_d')$ can be bounded by $t' \leq N_0 D^{|I|} \prod_{i \not\in I} n_i$. On the other hand, $t' \geq |P| / (4 |\mathcal H|)$ because $h$ isolates at least this many solutions in the original instance, and by the definition of isolation each original solution leads to a distinct solution in the subinstance. Putting both bounds together,
\begin{equation*}
    \prod_{i \not\in I} n_i
    \geq \frac{|P|}{4 N_0 D^{|I|} |\mathcal H|}
    \geq \frac{|P|}{4 N_0 D^{|I|} \cdot (\log |P|)^\xi}.
\end{equation*}
As a result,
\begin{equation*}
    \prod_{i \in I} n_i = \frac{\prod_{i \in [d]} n_i}{\prod_{i \notin [d]} n_i}
    \leq \frac{|P| \cdot (\log |P|)^\zeta}{|P| / \left(4 N_0 D^{|I|} \cdot (\log |P|)^\xi \right)}
    = 4 N_0 D^{|I|} \cdot (\log |P|)^{\xi+\zeta}.
\end{equation*}
By choosing $\Delta := 4D \cdot (\log |P|)^{\xi + \zeta}$, the right hand side is at most $N_0 \Delta^{|I|}$. This establishes the claim.

The claim means that the algorithm executes (b)--(d) in iteration $h$, so the at least $|P|/(4 |\mathcal H|)$ tuples isolated under $h$ are added to $R$. Hence $|P|$ decreases by a factor of $1 - \frac{1}{4 |\mathcal{H}|} \leq 1 - \frac{1}{4 (\log |P|)^\xi}$ in each round, and thus the number of rounds is $O\left( (\log |P|)^{\xi + 1} \right) = \Otilde(1)$.

\medskip
Putting all the bounds together, the proof of \Cref{thm:star-counting} is now complete.

\section{Conditional Lower Bounds}
\label{sec:lower-bounds}

The starting point of our lower bounds is the $\clique{k, n_1, \dots, n_k}$ problem. Here, the input is a $k$-partite graph $G = (V_1 \cup \cdots \cup V_k, E)$ where $|V_i| = n_i$ for all $i \in [k]$, and the goal is to decide if $G$ contains a $k$-clique. Clearly, any $k$-clique in $G$ (if exists) must pick exactly one vertex from each part.

For $n_1 = \ldots = n_k = n$, the fastest known algorithm for the problem works by grouping the $k$ parts into 3 roughly balanced groups, which induce a tripartite graph on $n^{k/3+O(1)}$ vertices, then running a triangle detection algorithm on this tripartite graph. Since triangle detection can be solved in matrix multiplication time, this gives an $n^{\omega k/3 + O(1)}$-time algorithm for $\clique{k, n, \dots, n}$. If only ``combinatorial'' algorithms are allowed (i.e., fast matrix multiplication is disallowed), then the state of the art is essentially brute-force. This gives rise to the following frequently used hypotheses in fine-grained complexity; see e.g.~\cite{AbboudBW15a,AbboudBBK17,BringmannGL17}.

\begin{unnumbered}[Combinatorial $k$-Clique Hypothesis]
    Let $k \geq 3$ and $\delta > 0$. No combinatorial algorithm can solve $\clique{k,n,\dots,n}$ in time $O(n^{k-\delta})$.
\end{unnumbered}

\begin{unnumbered}[$k$-Clique Hypothesis]
    Let $k \geq 3$ and $\delta > 0$. No algorithm can solve $\clique{k,n,\dots,n}$ in time $O(n^{\omega k/3-\delta})$.
\end{unnumbered}

Next we pose a new hypothesis that has not been previously used. 

\begin{unnumbered}[Lopsided 4-Clique Hypothesis]
    Let $\beta \geq 1$ and $\delta > 0$. No algorithm can solve $\clique{4,n,n,n,n^\beta}$ in time $O(n^{2+\beta-\delta})$.
\end{unnumbered}

The known algorithms solve $\clique{4,n,n,n,n^\beta}$ by building a (possibly dense) $n^2 \times n^\beta$ matrix, and thus take time $\Omega(n^{2+\beta})$. The hypothesis formalizes this running time barrier, postulating that no algorithm can be faster. Naturally, the level of trust in a new hypothesis is significant lower. Nevertheless, we use it as a barrier towards faster listing algorithms. 

\medskip
After discussing the relevant hypotheses, we now present our main reduction. This is a generic reduction from $\clique{k,n_1,\dots,n_k}$ to listing projected-$(k-1)$-stars.

\begin{lemma}
    \label{lem:lower-bound-generic}
    Let $k \ge 3$. 
    \begin{itemize}
        \item If there is an algorithm for Colored projected-$(k-1)$-star-Listing in time $\Otilde(t + f(n,m))$, then there is an algorithm for $\clique{k,n,\dots,n,N}$ in time $\Otilde(n^{k-1} + f(N,nN) + nN)$.
        \item If there is an algorithm for Uncolored projected-$(k-1)$-star-Listing in time $\Otilde(t + f(n,m))$, then there is an algorithm for $\clique{k,n,\dots,n,N}$ in time $\Otilde((n+N)^{k-1} + f(N,nN))$.
    \end{itemize}
\end{lemma}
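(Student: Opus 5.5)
The plan is a direct encoding of the clique instance into a projected star. Let the clique instance have parts $V_1,\dots,V_{k-1}$ of size $n$ and $V_k$ of size $N$. We build a Colored projected-$(k-1)$-star instance with center part $V_0' := V_k$ (size $N$) and leaf parts $V_i' := V_i$ for $i \in [k-1]$. For every $i \in [k-1]$ we keep exactly the edges between $V_k$ and $V_i$ as the star edges between the center and leaf $i$. The host graph then has $O(N + n)$ vertices and at most $(k-1)nN$ edges. A solution $(v_1,\dots,v_{k-1})$ is precisely a tuple with a common neighbour in $V_k$. So $G$ contains a $k$-clique if and only if some solution $(v_1,\dots,v_{k-1})$ also forms a $(k-1)$-clique among $V_1,\dots,V_{k-1}$.

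The difficulty is that $t$ may be as large as $n^{k-1}$. This is acceptable, because we pay for it explicitly: $t \le \prod_{i<k}|V_i| = n^{k-1}$. The listing algorithm therefore runs in time $\tOh(n^{k-1} + f(O(n+N), O(nN)))$. For each listed tuple we then test in $O(k^2)$ time, using adjacency hashing, whether its $\binom{k-1}{2}$ pairs are edges of $G$. The overall time is $\tOh(n^{k-1} + f(N,nN) + nN)$, where we read the arguments of $f$ as $N$ and $nN$ up to constant factors. Strictly speaking the vertex count is $n+N$. We may assume $N \ge n$ in the intended applications, or else use monotonicity and absorb constants; I would state it this way and note that $f$ is assumed monotone/polynomial so constants disappear in $\tOh$. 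Building the instance costs $O(nN + n^2)$, and the $n^2$ term is dominated by $n^{k-1}$ for $k \ge 3$.

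For the uncolored variant we use the same host graph, forgetting the partition. Now the pattern may be embedded with an arbitrary vertex as the center, and leaves from any parts. The number of solutions is still at most $(n(k-1)+N)^{k-1} = O((n+N)^{k-1})$, which is why the bound becomes $(n+N)^{k-1}$. For every listed tuple (of distinct vertices) we check two things: that it contains exactly one vertex from each of $V_1,\dots,V_{k-1}$, which is done by looking up the part labels we kept aside; and that it forms a clique. If such a tuple is a solution, it has a common neighbour. That neighbour must lie in $V_k$, since the only edges involving $V_1\cup\dots\cup V_{k-1}$ in our graph go to $V_k$, so it certifies a $k$-clique. Conversely, every $k$-clique yields such a listed tuple. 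The main (minor) obstacle is precisely this bookkeeping in the uncolored case. We must ensure that no spurious star has a center outside $V_k$ while its leaves lie one in each $V_i$; this holds because we include no edges among $V_1,\dots,V_{k-1}$ in the host graph and check clique edges separately in the original graph.
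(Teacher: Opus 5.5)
Your proposal is correct and follows the paper's proof. Both keep only the edges incident to $V_k$, list the projected $(k-1)$-stars (paying $t \le n^{k-1}$, resp.\ $O((n+N)^{k-1})$ in the uncolored case), and test each listed tuple for being a $(k-1)$-clique in $G$. The one difference is in the uncolored case: you explicitly check that the tuple has one vertex in each of $V_1,\dots,V_{k-1}$, whereas the paper argues that a non-aligned solution must have two vertices in a common part and therefore fails the clique test automatically. Both arguments work, and your remark that the vertex count is really $(k-1)n+N$ rather than $N$ applies equally to the paper's own bound.
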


\begin{proof}
    We denote the projected-$(k-1)$-star by $H = ([k], \{(i,k) : 1 \le i < k\}, I = [k-1]$.
    Consider the following generic algorithm for $\clique{k,n,\dots,n,N}$.

    Given a graph $G = (V, E)$ with partition $V = V_1 \cup \cdots \cup V_k$, obtain a subgraph $G'$ by keeping only the edges adjacent to $V_k$. Solve Colored or Uncolored projected-$(k-1)$-star-Listing on $G'$ and let $\sol$ be the set of solutions. For each $\tup{v} \in \sol$, if $\tup{v}$ is a $(k-1)$-clique in $G$ then output true and terminate. At the end of the loop, output false.

    In the colored setting, we have $\sol \subseteq V_1 \times \ldots \times V_{k-1}$. A tuple $\tup{v} \in V_1 \times \ldots \times V_{k-1}$ can be completed to a $k$-clique if and only if $\tup{v}$ is a $(k-1)$-clique and $\tup{v} \in \sol$. Hence, the algorithm outputs true if and only if there is a $k$-clique in $G$.

    In the uncolored setting, we have $\sol \subseteq V^{k-1}$, and it might happen that a solution has more than one vertex in the same part. We say that a tuple $(v_1, \dots, v_{k-1}) \in V^{k-1}$ is \emph{aligned} if there exists a permutation $\pi : [k-1] \to [k-1]$ such that $v_i \in V_{\pi(i)}$ for all $i \in [k-1]$. We claim that, for every non-aligned solution $(v_1, \dots, v_{k-1}) \in \sol$, there exists $i \in [k]$ such that $|V_i \cap \set{v_1, \dots, v_{k-1}}| \geq 2$.

    To this end, let $u$ be a common neighbor of $v_1, \dots, v_{k-1}$. We distinguish two cases:
    \begin{itemize}
        \item If $u \in V_{k}$, then $u$ does not have any neighbor in $V_{k}$, thus $v_1, \dots, v_{k-1} \notin V_{k}$. In this case, there is a mapping $f: [k-1] \to [k-1]$ such that $v_i \in V_{f(i)}$ for all $i \in [k-1]$. Since the solution is not aligned, $f$ is not a permutation, so it is not injective and the claim holds.

        \item Otherwise, $u \in \bigcup_{i \in [k-1]} V_i$, so its neighbors $v_1, \dots, v_{k-1}$ have to be in $V_{k}$. Again, the claim holds.
    \end{itemize}
    The claim is established. Now observe that, for every tuple $\tup{v} \in V^{k-1}$, the following statements are equivalent:
    \begin{enumerate}[(1)]
        \item $\tup{v} \in \sol$ and it is a $(k-1)$-clique in $G$.
        \item $\tup{v} \in \sol$ is aligned and a $(k-1)$-clique in $G$.
        \item $(\tup{v},v_k)$ is a $k$-clique in $G$ for some $v_{k} \in V_{k}$.
    \end{enumerate}
    The implications $(1) \Leftarrow (2) \Leftrightarrow (3)$ are clear. To see $(1) \Rightarrow (2)$, suppose to contradiction that $\tup{v} \in \sol$ is not aligned. Then the claim above states that there is a part $V_i$ containing at least two of $v_1, \dots, v_{k-1}$. These two vertices are not adjacent in $G$, so $\tup{v}$ is not a $(k-1)$-clique after all.

    This equivalence implies that the algorithm outputs true if and only if there is a $k$-clique in $G$.

    Regarding time complexity, constructing $G'$ takes time $O(nN)$, solving projected-$(k-1)$-star-Listing on $G'$ takes time $\Otilde(|\sol| + f(N,nN))$, and the for-each loop takes time $O(|\sol|)$. In the colored setting, we can bound $|\sol| \leq n^{k-1}$; in the uncolored setting, we can bound $|\sol| \leq (dn+N)^{k-1} = O((n+N)^{k-1})$. The total running time is thus as stated.
\end{proof}

\LowerBoundCombinatorialTheorem*

\begin{proof}
    First, suppose to contradiction that such an algorithm with running time $\Otilde(t + n^{k-\eps})$ exists. \Cref{lem:lower-bound-generic} gives a combinatorial algorithm for $\clique{k,n,\dots,n}$ that runs in time $\Otilde(n^{k-1} + n^{k-\eps})$. This violates the Combinatorial $k$-Clique Hypothesis.

    \medskip
    Next, suppose to contradiction that such an algorithm with running time $\Otilde(t + m^{k-1-\eps})$ exists. We consider the following combinatorial algorithm for $\clique{k,n,\dots,n}$:
    \begin{itemize}
        \item Given a graph $G$ with parts $V_1, \dots, V_k$, evenly partition $V_k = V_k^1 \cup \cdots \cup V_k^r$ where $r := n^{1-\frac{\eps}{k}}$.

        \item For each $l \in [r]$, apply \Cref{lem:lower-bound-generic} to solve $\clique{k,n,\dots,n, \frac{n}{r}}$ on the graph $G^l := G[V_1, \dots, V_{k-1}, V_k^l]$. If it outputs true, then output true and terminate.

        \item In the end, output false.
    \end{itemize}

    Observe that $G$ contains a $k$-clique if and only if $G^l$ contains a $k$-clique for some $l \in [r]$. Hence, the algorithm is correct.

    Since each $G^l$ has at most $n^2/r$ edges, an invocation to \Cref{lem:lower-bound-generic} takes time $\Otilde(n^{k-1} + (n^2/r)^{k-1-\eps})$. Over all $r = n^{1-\frac{\eps}{k}}$ iterations the algorithm runs in time
    \[ \Otilde\left( n^{k-\frac{\eps}{k}} + n^{1-\frac{\eps}{k}} \cdot \left(n^{1+\frac{\eps}{k}}\right)^{k-1-\eps} \right) \leq \Otilde(n^{k - \frac{\eps}{k}}). \]
    This violates the Combinatorial $k$-Clique Hypothesis.
\end{proof}

\LowerBoundLopsidedTheorem*

\begin{proof}
   Applying \Cref{lem:lower-bound-generic} with $N = n^{1+\eps/3}$, we get a combinatorial algorithm for $\clique{4,n,n,n,N}$ in time $\Otilde\left(n^3 + n^{(1+\eps/3)(3-\eps)}\right) \leq \Otilde(n^3)$.
\end{proof}

Observe that the implication of the theorem above would violate the Lopsided 4-Clique Hypothesis. Hence, assuming the hypothesis, no algorithm can solve Colored projected-3-star-Listing in time $\Otilde(t + n^{3-\eps})$ for any $\eps > 0$.

\LowerBoundCliqueTheorem*

\begin{proof}
    Fix $\eps > 0$. Pick $k' := \Ceil{\frac{4\omega^2}{\eps(3-\omega)^2}}$ and $k := 1 + \Ceil{\frac{3k'}{\omega}}$. Suppose to contradiction that there exists an algorithm that solves projected-$(k'-1)$-star-Listing in time $\Otilde(t + m^{\frac{\omega}{3-\omega} - \eps})$. 
    
    We describe an algorithm for $\clique{k, n, \dots, n}$:
    Given $G = (V_1 \cup \cdots \cup V_{k}, E)$, construct a graph $G' = (V_1 \cup \cdots \cup V_{k'-1} \cup U, E')$, where $U$ is the set of all cliques in the graph $G[V_{k'}, \dots, V_{k}]$, and for $i \in [k'-1]$ put an edge between $v_i \in V_i$ and $(v_{k'}, \dots, v_{k}) \in U$ if and only if $(v_i, v_{k'}, \dots, v_{k})$ is a clique. Note that $|U| \leq n^{k-k'+1}$. Apply \Cref{lem:lower-bound-generic} to solve $\clique{k', n, \dots, n, n^{k-k'+1}}$ on $G'$, and output its answer.

    Observe that $G$ has a $k$-clique if and only if $G'$ has a $k'$-clique. Hence the algorithm is correct.

    The graph $G'$ has at most $(k-1) n \cdot n^{k-k'+1} \leq O\left(n^{4 + \frac{3-\omega}{\omega}k'}\right)$ edges. The invocation to \Cref{lem:lower-bound-generic} thus takes time $\Otilde\left(n^{k'-1} + n^{(4 + \frac{3-\omega}{\omega}k') \cdot (\frac{\omega}{3-\omega} - \eps)}\right)$. Note that
    \[ \left( 4 + \frac{3-\omega}{\omega}k' \right)\left( \frac{\omega}{3-\omega} - \eps \right) < k' + \frac{4\omega}{3-\omega} - \frac{3-\omega}{\omega} k'\eps \leq k' \]
    where the last step follows from $k' \geq \frac{4\omega^2}{\eps (3-\omega)^2}$. Hence, the running time is $\Otilde(n^{k'}) \leq \Otilde(n^{\omega (k-1)/3})$. This violates the $k$-Clique Hypothesis.
\end{proof}

\LowerBoundCliqueNTheorem*

\begin{proof}
    Follows from \Cref{thm:lower-bound-clique} by noticing that $n \le m$ holds in its proof.
\end{proof}

\paragraph{Acknowledgements} The authors thank Nofar Carmeli, Mahmoud Abo Khamis, and Florent Capelli for helpful comments on drafts of this paper.

\bibliography{references}

\appendix

\section{Further Comparison with Prior Work}
\label{sec:subw}
In this appendix, we discuss prior works from \cite{BDG07} and \cite{KNS17,KNS25,KNS26,KC26}, which solve conjunctive queries with a running time that is bounded in terms of the so-called free-connex treewidth and free-connex submodular width, respectively. We show that both algorithms only achieve the trivial running times $O(n^{k})$ and $O(m^{k-1})$ for the projected-$(k-1)$-star.

\subsection{Free-Connex Treewidth}
\begin{definition}
    A \EMPH{tree decomposition} of a hypergraph $H = (V,E)$ is a tree such that
    \begin{itemize}
        \item every tree node $x$ is associated with a subset $\bag(x) \subseteq V$;
        \item for every $e \in E$, there is a node $x$ such that $e \subseteq \bag(x)$;
        \item for every $v \in V$, the nodes $\set{x : v \in \bag(x)}$ form a connected subtree.
    \end{itemize}
\end{definition}

\begin{definition}
    Let $H = (V,E)$ be a hypergraph and $I \subseteq V$. A \EMPH{free-connex tree decomposition} of $(H,I)$ is a tree decomposition of $H$ which contains a connected subtree $T'$ with $\bigcup_{x \in T'} \bag(x) = I$.
\end{definition}

Denote by $T(H)$ the set of tree decompositions of $H$. The \EMPH{treewidth} of $H$ is defined as
\[ \tw(H) = \min_{T \in T(H)} \max_{x \in T} |\bag(x)| - 1. \]
Similarly, denote by $T(H,I)$ the set of free-connex tree decompositions of $(H,I)$. The \EMPH{free-connex treewidth} of $(H,I)$ is defined as
\[ \fctw(H,I) = \min_{T \in T(H,I)} \max_{x \in T} |\bag(x)| - 1. \]

It is well known that every tree, in particular the $(k-1)$-star, has treewidth $1$. In contrast, the free-connex treewidth can be much larger:
\begin{proposition}
    \label{prp:fctw}
    Consider the projected-$(k-1)$-star $H = ([k],\, \set{\set{i,k} : i \in [k-1]})$, $I = [k-1]$. Any free-connex tree decomposition of $(H,I)$ has a node $y$ with $[k] \subseteq \bag(y)$. Hence, $\fctw(H,I) = k-1$.
\end{proposition}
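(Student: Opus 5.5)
The plan is to use the defining property of a free-connex tree decomposition, namely a connected subtree $T'$ whose bags union to exactly $I=[k-1]$, together with the Helly property of subtrees of a tree. Fix a free-connex tree decomposition $T$ of $(H,I)$ with such a subtree $T'$. For each $i\in[k-1]$, let $T_i=\set{x : i\in\bag(x)}$; this is a subtree of $T$. Let $T_k=\set{x : k\in\bag(x)}$. Since $k\notin I$, no node of $T'$ contains $k$, so $T_k$ is disjoint from $T'$.

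The first step handles the edges. For each edge $\set{i,k}$ there is a node whose bag contains both $i$ and $k$, so $T_i\cap T_k\neq\emptyset$. Also, $T_i\cap T'\neq\emptyset$, because $i\in I=\bigcup_{x\in T'}\bag(x)$.

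The second step locates a common node. Since $T_k$ and $T'$ are disjoint connected subtrees of $T$, there is a unique tree edge leaving $T'$ on the path towards $T_k$. Let $y$ be the first node of $T_k$ on that path from $T'$, that is, the vertex of $T_k$ closest to $T'$. Every path from a node of $T'$ to a node of $T_k$ passes through $y$. Indeed, the $T'$–$T_k$ paths all share their final segment entering $T_k$, because $T_k$ is connected and we are in a tree: the path from any $a\in T'$ to any $b\in T_k$ goes from $a$ to the projection of $a$ onto $T_k$. That projection is the same node $y$ for all $a$ in the connected set $T'$, since $T'$ avoids $T_k$.

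Now fix $i\in[k-1]$. The subtree $T_i$ contains some $a\in T'$ and some $b\in T_k$, so it contains the path between them and hence contains $y$. Therefore $i\in\bag(y)$ for all $i\in[k-1]$. Since $y\in T_k$, also $k\in\bag(y)$, which gives $[k]\subseteq\bag(y)$.

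For the conclusion, this yields $\fctw(H,I)\ge k-1$. The matching upper bound comes from the two-node decomposition with bags $[k]$ and $[k-1]$, taking $T'$ to be the $[k-1]$-node; it is valid, so $\fctw(H,I)=k-1$.

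The main obstacle is the second step, namely justifying that the projection of $T'$ onto $T_k$ is a single node. The plan is to argue this via the standard fact that for disjoint subtrees $A,B$ of a tree there is a unique shortest $A$–$B$ path, and every $A$–$B$ path contains it. This fact follows because any two $A$–$B$ paths can be joined inside $A$ and $B$, which would otherwise create a cycle.
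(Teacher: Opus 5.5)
Your proof is correct and takes the same approach as the paper. The bags containing $k$ form a subtree disjoint from $T'$, and a node $y$ of that subtree separates it from $T'$. Each $i \in [k-1]$ occurs both in a bag of $T'$ and in a bag together with $k$, which forces $i \in \bag(y)$. You also make the upper bound $\fctw(H,I) \le k-1$ explicit via the two-node decomposition with bags $[k]$ and $[k-1]$, which the paper leaves implicit.
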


\begin{proof}
    Consider a free-connex tree decomposition $T \in T(H,I)$. By definition, it contains a connected subtree $T'$ such that $\bigcup_{x \in V(T')} \bag(x) = I$; in particular, $k \notin \bag(x)$ for all nodes $x \in V(T')$. 
    On the other hand, the third property of tree decomposition implies a connected subtree $T''$ of $T$ where $V(T'') = \set{x : k \in \bag(x)}$. Since $T'$ and $T''$ are disjoint subtrees of $T$, there is a node $y \in V(T'')$ that separates the two subtrees.
    
    For each $i \in I$, the second property of tree decompositions implies the existencce of some node $x_i$ with $\set{i,k} \subseteq \bag(x_i)$. Note that $x_i \in T''$, so node $y$ separates it from $T'$. Since $i \in \bigcup_{x \in V(T')} \bag(x)$ and $i \in \bag(x_i)$, we also have $i \in \bag(y)$ by the third property of tree decompositions. Moreover, since $y \in V(T'')$ we have $k \in \bag(y)$. Hence, $[k] \subseteq \bag(y)$.
\end{proof}

The $(H,I)$-enumeration algorithm from \cite{BDG07} has preprocessing time $O(n^{\fctw(H,I) + 1} + m)$ and constant delay. By Proposition~\ref{prp:fctw}, on the projected-$(k-1)$-star this algorithm requires preprocessing time $O(n^{k})$, which is the same as the trivial algorithm.

\subsection{Free-Connex Submodular Width}
Let $H = (V,E)$ be a hypergraph. Let $F(H)$ be the set of functions $h : 2^V \to \R^+$ that satisfy the following properties:
\begin{enumerate}[(i)]
    \item $h(\emptyset) = 0$.
    \item $h(X) \leq h(Y)$ for all $X \subseteq Y \subseteq V$.
    \item $h(X) + h(Y) \geq h(X \cap Y) + h(X \cup Y)$ for all $X,Y \subseteq V$.
    \item $h(e) \leq 1$ for all $e \in E$.
\end{enumerate}

The \EMPH{submodular width} of $H$ is defined as
\[ \subw(H) = \sup_{h \in F(H)} \min_{T \in T(H)} \max_{x \in T} h(\bag(x)). \]
The \EMPH{free-connex submodular width} of $(H,I)$ is the same restricted to free-connex tree decompositions:
\[ \fcsubw(H,I) = \sup_{h \in F(H)} \min_{T \in T(H,I)} \max_{x \in T} h(\bag(x)). \]

It is known that submodular width is upper bounded by treewidth; see for example \cite[Section 3]{Marx13}. Since the $(k-1)$-star has treewidth $1$, its submodular width is also $1$. In contrast, the free-connex submodular width can be much larger.

\begin{proposition}
    \label{prp:fcsubw}
    For the projected-$(k-1)$-star $H = ([k],\,\set{\set{i,k} : i \in [k-1]})$, $I = [k-1]$, we have $\fcsubw(H,I) = k-1$.
\end{proposition}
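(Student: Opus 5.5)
We need to show $\fcsubw(H,I) = k-1$ for the projected-$(k-1)$-star. The plan is to prove the upper bound $\fcsubw(H,I)\le k-1$ by exhibiting a single free-connex decomposition, and the lower bound $\fcsubw(H,I)\ge k-1$ by exhibiting a single function $h\in F(H)$ against which every free-connex decomposition contains an expensive bag. For the lower bound we reuse the structural fact from Proposition~\ref{prp:fctw}: every free-connex tree decomposition of $(H,I)$ has a node $y$ with $[k]\subseteq \bag(y)$.

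\emph{Upper bound.} Take the two-node tree with bags $[k-1]$ and $[k]$. The bag $[k-1]$ alone forms the connected subtree $T'$ whose bags union to $I$. The bag $[k]$ covers every edge $\set{i,k}$, and the vertex-connectivity condition is trivial. For any $h\in F(H)$, submodularity with $X\cap Y=\emptyset$ together with $h(\emptyset)=0$ gives subadditivity. Since $[k]$ is the union of the $k-1$ edges $\set{i,k}$, iterated subadditivity yields $h([k])\le \sum_{i\in[k-1]} h(\set{i,k})\le k-1$. Monotonicity then bounds $h([k-1])$ by the same quantity. Hence this decomposition has width at most $k-1$ under every $h$, so the supremum of the minimum is at most $k-1$.

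\emph{Lower bound.} Choose $h(X):=|X\cap[k-1]|$. This function satisfies $h(\emptyset)=0$, is monotone, and is modular, hence submodular. Each edge $\set{i,k}$ has $h(\set{i,k})=1$, so $h\in F(H)$. By Proposition~\ref{prp:fctw}, every $T\in T(H,I)$ has a bag containing $[k]$, and that bag satisfies $h(\bag(y))\ge k-1$. So for this $h$ the minimum over $T$ of the maximum bag value is at least $k-1$, and the supremum is at least $k-1$.

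The only step needing any care is choosing $h$ for the lower bound. The natural first candidate, the cardinality function $|X|$, fails because it gives edges value $2$, and scaling it by $1/2$ only yields $k/2$. Discounting the center node $k$ fixes this, after which everything else is routine.
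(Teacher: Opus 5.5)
Your lower bound is the paper's argument: the same function $h(X)=|X\setminus\{k\}|$, combined with Proposition~\ref{prp:fctw}. The paper's proof stops there and only establishes $\fcsubw(H,I)\ge k-1$. Your two-bag decomposition with bags $[k-1]$ and $[k]$ therefore supplies the matching upper bound that the paper leaves implicit.

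One justification in that upper bound needs repair. You derive subadditivity from submodularity only for disjoint $X,Y$, but the edges $\{i,k\}$ pairwise intersect in $k$, so that form does not apply to them. There are two easy fixes:
\begin{itemize}
\item Use the general form $h(X\cup Y)\le h(X)+h(Y)-h(X\cap Y)\le h(X)+h(Y)$, which holds because $h$ is nonnegative.
\item Write $[k]$ as the disjoint union $\{1,k\}\cup\{2\}\cup\cdots\cup\{k-1\}$, and bound $h(\{i\})\le h(\{i,k\})\le 1$ by monotonicity.
\end{itemize}
Either way $h([k])\le k-1$ follows, and the proof is complete.
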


\begin{proof}
    We define a function $h : 2^{[k]} \to \R^+$ by $h(X) := |X \setminus \set{k}|$. Clearly $h$ satisfies properties (i) and (ii). It satisfies (iii) since $h(X) + h(Y) = |X \setminus \set{k}| + |Y \setminus \set{k}| = |(X \cup Y) \setminus \set{k}| + |(X \cap Y) \setminus \set{k}| = h(X \cup Y) + h(X \cap Y)$. It satisfies (iv) since for every edge $e = \set{i,k}$, we have $h(e) = 1$. Therefore, $h \in F(H)$.

    For any $T \in T(H,I)$, there exists a node $y \in T$ such that $[k] \subseteq \bag(y)$ by Proposition~\ref{prp:fctw}. In particular, we have $h(\bag(y)) \geq k-1$, which witnesses that $\fcsubw(H,I) \geq k-1$.
\end{proof}

The PANDA algorithm~\cite{KNS17,KNS25} for $(H,I)$-listing runs in time $\Otilde(t + m^{\fcsubw(H,I)})$. (The authors called the parameter submodular width and denoted it by $\subw$, but in fact they refer to the free-connex submodular width as clarified in \cite{KNS25}.) By Proposition~\ref{prp:fcsubw}, on the projected-$(k-1)$-star this algorithm takes time $\Otilde(t + m^{k-1}) = \Otilde(m^{k-1})$, which is the same as the trivial algorithm.

\section{Recovering Matrix Product}
\label{sec:omitted-proofs}
We include a proof of \Cref{lem:recover} for completeness. We start by restating Definition 3.1 and Lemma 3.6 from \cite{ABFK24}. 

\begin{definition}
    Let $S$ be a set of two-dimensional points. We say that $(i,j)$ is \EMPH{isolated} from $S$ under a function $h$ if there is no $j' \neq j$ such that $(i,j') \in S$ and $h(j) = h(j')$.
\end{definition}

\begin{lemma}
    \label{lem:isolation-2d}
    There is a deterministic algorithm that, given $S \subseteq [x] \times [z]$ where $\Card{\set{j : (i,j) \in S}} \leq s$ for all $i \in [x]$, computes in time $\Otilde(z + |S|)$ a family $\mathcal H$ of functions $[z] \to [2s]$ with the following properties:
    \begin{enumerate}
        \item Each point $(i,j) \in S$ is isolated from $S$ under some $h \in \mathcal{H}$.
        \item $|\mathcal{H}| \leq \log |S| + 1$.
    \end{enumerate}
\end{lemma}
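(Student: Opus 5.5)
We will mirror the proof of \Cref{lem:isolation}, specialised to two dimensions where only the second coordinate is hashed. Starting with $Q := S$ and $\mathcal H := \emptyset$, we proceed in rounds. In each round we construct one function $h : [z] \to [2s]$ under which at least half of the current points of $Q$ are isolated from $S$. We add $h$ to $\mathcal H$ and remove the isolated points from $Q$. Since $|Q|$ at least halves every round, after at most $\log |S| + 1$ rounds $Q$ is empty. This gives property~2, and property~1 holds by construction.

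\textbf{Existence of a good $h$.} For a uniformly random $h : [z] \to [2s]$ and a fixed $(i,j) \in Q$, the number of $j' \ne j$ with $(i,j') \in S$ is at most $s-1$. Each such $j'$ collides with $j$ with probability $1/(2s)$, so the expected number of colliding $j'$ is below $1/2$. Hence $(i,j)$ fails to be isolated with probability less than $1/2$. By linearity, the expected number $Z$ of non-isolated points of $Q$ is less than $|Q|/2$, so some $h$ achieves $Z < |Q|/2$.

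\textbf{Derandomization.} We fix such an $h$ by the method of conditional expectations, as in \Cref{lem:isolation}. Padding $2s$ to a power of two if needed, we determine $h(1), \dots, h(z)$ in order, each one bit at a time by halving an interval $Y$, choosing the half that does not increase $\E[Z]$. The conditional expectation decomposes over pairs $(i,j),(i,j')$ in the same row, with $(i,j) \in Q$ and $(i,j') \in S$. The collision probability of such a pair depends only on whether $h(j)$ and $h(j')$ are fixed, partially fixed (known to lie in $Y$), or free.

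When processing value $x$, the only pairs whose collision probability depends on the current bit are those with one endpoint equal to $x$ and the other endpoint already fixed. So for each point $(i,x) \in S$ we need the number of $j' < x$ in row $i$ with $h(j') \in Y_b$, and symmetrically with the roles of $Q$ and $S$ exchanged. Strictly, the objective is the probability of non-isolation rather than the expected number of collisions. We will use the union-bound pessimistic estimator $\sum_{(i,j) \in Q} \min(1, \text{expected collisions})$, or simply the expected collision count, which is already below $|Q|/2$ at the start.

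The counts can be maintained by keeping, for each row $i$, a balanced search tree over the hashed values $h(j')$ of already-fixed entries of row $i$, and answering a range-count query over $Y_b$. Each point is inserted once and queried $O(\log s)$ times per round, giving $\tOh(|S| + z)$ per round and $\tOh(z + |S|)$ overall since there are $O(\log |S|)$ rounds.

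\textbf{Main obstacle.} The hard part is this bookkeeping. The pessimistic estimator's change must be computed in time proportional to the points with coordinate $x$, not by rescanning. This works because the estimator is a sum of per-pair terms, so only pairs touching $x$ change. The range-count structures per row make this local update efficient.
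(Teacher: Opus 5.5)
Your argument is correct. The paper gives no proof of its own here: it restates this lemma from \cite{ABFK24} (their Lemma~3.6) and uses it as a black box inside \Cref{lem:recover}. What you wrote is the two-dimensional instance of the derandomized halving argument the paper spells out for \Cref{lem:isolation}, so the only difference from the paper is that you are self-contained. The correspondence is as follows. The codomain $[2s]$, set against at most $s-1$ row-mates, plays the role of $M_i \ge 5dn_i$, so fewer than half of $Q$ survive each round, which is exactly what the bound $\log|S|+1$ needs. Per-row range-count trees replace the $(d+i)$-dimensional orthogonal range searching, because in two dimensions the only constraint besides $h(j')\in Y_b$ is lying in the same row.

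Two points should be tightened.

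First, commit to the linear estimator. Let $W$ be the number of ordered pairs $((i,j),(i,j'))$ with $(i,j)\in Q$, $(i,j')\in S$, $j'\neq j$ and $h(j)=h(j')$. This is a genuine random variable with $Z\le W$ pointwise and $\E[W]\le |Q|(s-1)/(2s)<|Q|/2$. So the method of conditional expectations applies to $W$ directly, and the final $h$ satisfies $Z\le W<|Q|/2$. This is also what the paper's sum $\sum\gamma(p,q)$ actually tracks in \Cref{lem:isolation}. The $\sum\min(1,\cdot)$ variant is a valid pessimistic estimator by concavity, but it is not a sum of per-pair terms. Your range-count bookkeeping evaluates only the linear one, so drop the other option.

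Second, padding $2s$ to a power of two enlarges the codomain to some $M<4s$, which contradicts the required $[z]\to[2s]$. Instead, split $Y$ into parts of sizes $\lfloor|Y|/2\rfloor$ and $\lceil|Y|/2\rceil$, chosen with probability proportional to their sizes. Then $h(x)$ stays uniform on $[2s]$, and $\E[W]$ is the corresponding weighted average of the two branches. At each step you compare, for $b\in\{0,1\}$, the number of affected pairs whose fixed endpoint hashes into $Y_b$, divided by $|Y_b|$, instead of the raw counts. In the paper's only application $s=2^{l+1}$, so this issue does not arise there.
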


Now we are ready to prove \Cref{lem:recover}.

\RecoverProduct*

\begin{proof}
    \LinesNotNumbered
    \begin{algorithm}[htb]
        \caption{\protect\algorecover{$A,B,S,\kappa$}}
        \label{alg:recover}
        \ForEach{$i \in [x]$}{
            compute $z_i := \Card{\set{j \in [z] : (i,j) \in S}}$\;
        }
        \ForEach{$l = 0, \dots, \log \kappa$}{
            let $I_l := \set{i \in [x] : 2^l \leq z_i < 2^{l+1}}$\;
        }
        initialize $C \in \N^{x \times z}$ as a zero matrix\;
        \For{$l = 0, \dots, \log \kappa$}{
            let $A'$ be the matrix $A$ restricted to rows $I_l$\;
            let $S'$ be the set $S$ restricted to rows $I_l$\;
            let $s := 2^{l+1}$; note that $z_i < s$ for all $i \in I_l$\;
            compute a family $\mathcal H$ of functions $[z] \to [2s]$ that isolates points from $S'$ by \Cref{lem:isolation-2d}\;
            \ForEach{$h \in \mathcal H$}{
                construct $B' \in \N^{y \times 2s}$ by $B'_{ib} := \sum_{j \in h^{-1}(b)} B_{ij}$\;
                compute $C' := A'B'$\;
                \ForEach{$(i,j) \in S'$ isolated under $h$}{
                    $C_{ij} := C'_{i,h(j)}$\;
                }
            }
        }
        \Return $C$\;
    \end{algorithm}
    
    The algorithm is shown as \Cref{alg:recover}.
    We claim that iteration $l$ correctly assigns $C_{ij} = (AB)_{ij}$ for all $(i,j) \in I_l \times [z]$. To this end, consider a point $(i,j) \in I_l \times [z]$. If $(i,j) \notin S'$, then the algorithm never assigns $C_{ij}$, so the entry remains zero in the end. This is correct due to the assumption $\supp(AB) \subseteq S$. If $(i,j) \in S'$, then we recall that $(i,j)$ is isolated from $S'$ under some $h \in \mathcal H$ by \Cref{lem:isolation-2d}. In particular, it is isolated from $\supp(AB)$ under some $h \in \mathcal H$. In the inner loop corresponding to such an $h$, the algorithm assigns
    \[
        C_{ij} = C'_{i,h(j)} = (A'B')_{i,h(j)}
        = \sum_{k \in [y]} A'_{ik} B'_{k,h(j)}
        = \sum_{k \in [y]} A_{ik} \sum_{j' : h(j) = h(j')} B_{kj'}
        = \sum_{k \in [y]} A_{ik} B_{kj}
        = (AB)_{ij}.
    \]
    Here, the penultimate step used that $(i,j)$ is isolated from $\supp(AB)$ under $h$. Therefore, the assignment is also correct. This finishes the proof of the claim.
    
    Recall from the premise of the lemma that $z_i \leq \kappa$ for all $i$. Hence, the $I_l$'s form a partition of $[x]$ (except for the zero rows), thus the claim implies that the algorithm outputs the correct product in the end.
    
    It remains to analyze the running time. Computing $z_i$'s and $I_l$'s takes time $O(x + |S|)$. Now consider iteration $l$ in the main loop, and write $x' := |I_l|$, $z' := 2s = 2^{l+2}$. Notice that $x' \leq x$, $z' \leq 4\kappa$ and $x'z' \leq |I_l| \cdot 4z_i \leq 4|S|$. Building the matrix $A'$ takes time $O(|A|)$. Computing the family $\mathcal H$ takes time $\Otilde(z + |S|) \leq \Otilde(|B| + |S|)$. Each iteration of the inner loop takes time $O(|B|) + \mm(x', y, z') + \Otilde(|S|)$, so over $|\mathcal{H}| = \Otilde(1)$ inner iterations we spend time $\Otilde(\mm(x',y,z') + |B| + |S|)$. Summing up these terms, every outer iteration takes time
    \[
        \tOh\Bigg( |A| + |B| + |S| +
        \max_{\substack{x' \leq x, z' \leq 4\kappa,\\ x'z' \leq 4|S|}}
        \mm(x', y, z') \Bigg).
    \]
    As there are $1 + \log \kappa = \Otilde(1)$ outer iterations, the total running time is of the same form.
\end{proof}

\end{document}